\tikzset{every picture/.style={line width=0.75pt}}
\def\BibTeX{{\rm B\kern-.05em{\sc i\kern-.025em b}\kern-.08em
		T\kern-.1667em\lower.7ex\hbox{E}\kern-.125emX}}
\newcommand{\tagg}[1]{%
	\colorbox{green}{#1}
}
\theoremstyle{definition}
\newtheorem{theorem}{Theorem}
\newtheorem{lemma}{Lemma}
\newtheorem{proposition}{Proposition}
\newtheorem{corollary}{Corollary}
\tikzset{
    database/.style={
        path picture={
            \draw (0, 1.5*\database@segmentheight) circle [x radius=\database@radius,y radius=\database@aspectratio*\database@radius];
            \draw (-\database@radius, 0.5*\database@segmentheight) arc [start angle=180,end angle=360,x radius=\database@radius, y radius=\database@aspectratio*\database@radius];
            \draw (-\database@radius,-0.5*\database@segmentheight) arc [start angle=180,end angle=360,x radius=\database@radius, y radius=\database@aspectratio*\database@radius];
            \draw (-\database@radius,1.5*\database@segmentheight) -- ++(0,-3*\database@segmentheight) arc [start angle=180,end angle=360,x radius=\database@radius, y radius=\database@aspectratio*\database@radius] -- ++(0,3*\database@segmentheight);
        },
        minimum width=2*\database@radius + \pgflinewidth,
        minimum height=3*\database@segmentheight + 2*\database@aspectratio*\database@radius + \pgflinewidth,
    },
    database segment height/.store in=\database@segmentheight,
    database radius/.store in=\database@radius,
    database aspect ratio/.store in=\database@aspectratio,
    database segment height=0.1cm,
    database radius=0.25cm,
    database aspect ratio=0.35,
}
\tikzset{every picture/.style={line width=0.75pt}}
\tikzset{every picture/.style={line width=0.75pt}}
\tikzset{cross/.style={cross out, draw=black, fill=none, minimum size=2*(#1-\pgflinewidth), inner sep=0pt, outer sep=0pt}, cross/.default={2pt}}
\newcommand{\ie}{\textit{i.e.~}}
\def\myvspacebeforesec{0cm}
\def\myvspacebeforesubsec{0cm}
\DeclarePairedDelimiterX{\Iintv}[1]{\llbracket}{\rrbracket}{\iintvargs{#1}}
\NewDocumentCommand{\iintvargs}{>{\SplitArgument{1}{,}}m}
{\iintvargsaux#1} %
\def\ben{\begin{equation*}}
\def\een{\end{equation*}}
\def\barr{\begin{array}}
\def\earr{\end{array}}
\DeclareMathAlphabet{\mathpzc}{OT1}{pzc}{m}{it}
\newcommand{\asymCloud}[4]{
\begin{scope}[shift={#1},scale=#3]
\draw (-1.6,-0.7) .. controls (-2.3,-1.1)
and (-2.7,0.3) .. (-1.7,0.3)coordinate(asy1) .. controls (-1.6,0.7)
and (-1.2,0.9) .. (-0.8,0.7) .. controls (-0.5,1.5)
and (0.6,1.3) .. (0.7,0.5) .. controls (1.5,0.4)
and (1.2,-1) .. (0.4,-0.6)coordinate(asy2) .. controls (0.2,-1)
and (-0.2,-1) .. (-0.5,-0.7) .. controls (-0.9,-1)
and (-1.3,-1) .. cycle;
\node at ($(asy1)!0.5!(asy2)$) (#4) {#2};
\end{scope}
}
\begin{document}
	%
	% paper title
	% Titles are generally capitalized except for words such as a, an, and, as,
	% at, but, by, for, in, nor, of, on, or, the, to and up, which are usually
	% not capitalized unless they are the first or last word of the title.
	% Linebreaks \\ can be used within to get better formatting as desired.
	% Do not put math or special symbols in the title.
	\title{Worst-case Delay Bounds in Time-Sensitive Networks with Packet Replication and Elimination}
	
	%Analysis of FRER in TSN
	% author names and IEEE membershipsAnalysis of FRER in TSN Effects with
	% note positions of commas and nonbreaking Analysis of FRER in TEffectswithSNAnalysis of FRER in TSNeliminationspaces ( ~ ) LaTeX will not break
	% a structure at a ~ so this keeps an author's name from being broken across
	% two lines.Effectswith
	% use \thanks{} to gain access to the first footnote area
	% a separate \thanks must be used for each paragraph as LaTeX2e's \thanks
	% was not built to handle multiple paragraphs
	%
	%
	%\IEEEcompsocitemizethanks is a special \thanks that produces the bulleted
	% lists the Computer Society journals use for "first footnote" author
	% affiliations. Use \IEEEcompsocthanksitem which works much like \item
	% for each affiliation group. When not in compsoc mode,
	% \IEEEcompsocitemizethanks becomes like \thanks and
	% \IEEEcompsocthanksitem becomes a line break with idention. This
	% facilitates dual compilation, although admittedly the differences in the
	% desired content of \author between the different types of papers makes a
	% one-size-fits-all approach a daunting prospect. For instance, compsoc
	% journal papers have the author affiliations above the "Manuscript
	% received ..."  text while in non-compsoc journals this is reversed. Sigh.
	
	\author{Ludovic Thomas, Ahlem Mifdaoui, Jean-Yves Le Boudec% <-this % stops a space
		\IEEEcompsocitemizethanks{\IEEEcompsocthanksitem L. Thomas and A. Mifdaoui are with ISAE-Supaéro, Université de Toulouse. Toulouse, France.\protect\\
			E-mail: ludovic.thomas@isae-supaero.fr
			\IEEEcompsocthanksitem J-Y. Le Boudec is with the I\&C Departement, EPFL. Lausanne, Switzerland.}% <-this % stops an unwanted space
		\thanks{Manuscript received someday; revised some other day}}

	\IEEEtitleabstractindextext{%
		\begin{abstract}
			% !TeX spellcheck = en_US
% !TeX root = main
%
% 1. What is the problem ?
% As of today, we do not know any Network Calculus characterization of PREF and we don't know how PREF interacts with shaping mechanisms.
%
% 2. Why is it a problem ?
% Because it means that we cannot compute deterministic delay bounds on networks that contain PREF nodes (at least not with Network Calculus, a widely used framework)
%
% 3. Why is it important to solve it ?
% Because PREF is used in many safety-critical networks (as with \ac{FRER} in \ac{TSN}) that require deterministic delay bounds.
%
% 4. Give your main findings, the one you want people to remember ?
% - We compute a tight output arrival curve for the \ac{PEF}
% - We show that shaping after a \ac{PEF} is not for-free: With a \ac{PFR}, the delay penalty of the shaping equals the maximum jitter between the \ac{PRF} and the \ac{PEF}. ATS is unstable after \ac{PEF}
% - We provide an algorithm (name) to compute delay bounds in both feed-forward and non-feed-forward networks containing \ac{PREOF}
%
%
Packet replication and elimination functions are used by time-sensitive networks (as in the context of \acs{IEEE} \acs{TSN} and \acs{IETF} \acs{DetNet}) to increase the reliability of the network.
Packets are replicated onto redundant paths by a replication function.
Later the paths merge again and an elimination function removes the duplicates.
This redundancy scheme has an effect on the timing behavior of time-sensitive networks and many challenges arise from conducting timing analyses.
The replication can induce a burstiness increase along the paths of replicates, as well as packet mis-ordering that could increase the delays in the crossed bridges or routers.
The induced packet mis-ordering could also negatively affect the interactions between the redundancy and scheduling mechanisms such as traffic regulators (as with \aclp{PFR} and \aclp{IR}, implemented by \acs{TSN} \emph{\acl{ATS}}).
Using the network calculus framework, we provide a method of worst-case timing analysis for time-sensitive networks that implement redundancy mechanisms in the general use case, \emph{i.e.}, at end-devices and/or intermediate nodes.
We first provide a network calculus toolbox for bounding the burstiness increase and the amount of reordering caused by the elimination function of duplicate packets.
We then analyze the interactions with traffic regulators and show that their shaping-for-free property does not hold when placed after a packet elimination function.
We provide a bound for the delay penalty when using per-flow regulators and prove that the penalty is not bounded with interleaved regulators.
Finally, we use an industrial use-case to show the applicability and the benefits of our findings.

		\end{abstract}
		
		% Note that keywords are not normally used for peerreview papers.
		\begin{IEEEkeywords}
			% !TeX spellcheck = en_US
% !TeX root = main
Network Calculus, Time-Sensitive Networking (TSN), Deterministic Networking (DetNet), Packet Replication Elimination and Ordering Functions (PREOF), Frame Replication and Elimination for Redundancy (FRER), Asynchronous Traffic Shaping (ATS)
	\end{IEEEkeywords}}

	% make the title area
	\maketitle
	\thispagestyle{firststyle}
	\pagestyle{secondstyle}

	% To allow for easy dual compilation without having to reenter the
	% abstract/keywords data, the \IEEEtitleabstractindextext text will
	% not be used in maketitle, but will appear (i.e., to be 	\bibliographystyle{ieeetr}"transported")
	% here as \IEEEdisplaynontitleabstractindextext when the compsoc
	% or transmag modes are not selected <OR> if conference mode is selected
	% - because all conference papers position the abstract like regular
	% papers do.
	\IEEEdisplaynontitleabstractindextext
	% \IEEEdisplaynontitleabstractindextext has no effect when using
	% compsoc or transmag under a non-conference mode.

	% For peer review papers, you can put extra information on the cover
	% page as needed:
	% \ifCLASSOPTIONpeerreview
	% \begin{center} \bfseries EDICS Category: 3-BBND \end{center}
	% \fi
	%
	% For peerreview papers, this IEEEtran command inserts a page break and
	% creates the second title. It will be ignored for other modes.
	\IEEEpeerreviewmaketitle

	\IEEEraisesectionheading{\section{Introduction}\label{sec:introduction}}
	% Computer Society journal (but not conference!) papers do something unusual
	% with the very first section heading (almost always called "Introduction").
	% They place it ABOVE the main text! IEEEtran.cls does not automatically do
	% this for you, but you can achieve this effect with the provided
	% \IEEEraisesectionheading{} command. Note the need to keep any \label that
	% is to refer to the section immediately after \section in the above as
	% \IEEEraisesectionheading puts \section within a raised box.
	
	%\tableofcontents	\newpage

	% !TeX spellcheck = en_US
% !TeX root = main
% The very first letter is a 2 line initial drop letter followed
% by the rest of the first word in caps (small caps for "comp-soc").
%
% form to use if the first word consists of a single letter:
% \IEEEPARstart{A}{demo} file is ....
%
% form to use if you need the single drop letter followed by
% normal text (unknown if ever used by the IEEE):
% \IEEEPARstart{A}{}demo file is ....
%
% Some journals put the first two words in caps:
% \IEEEPARstart{T}{his demo} file is ....
%
% Here we have the typical use of a "T" for an initial drop letter
% and "HIS" in caps to complete the first word.
% You must have at least 2 lines in the paragraph with the drop letter
% (should never be an issue)

%what is the problem

\IEEEPARstart{T}{ime-sensitive networks} were specified by the \ac{DetNet} working group of the \ac{IETF}, as well as the \ac{TSN} task group of the \ac{IEEE}, for supporting safety-critical applications in several domains, such as aerospace \cite{ieeeP8021DPTSN}, automation \cite{iecIECIEEE608022019} and automotive \cite{ieeeDraftStandardLocal2019b}.

As opposed to the \emph{best-effort} service, safety-critical applications require a \emph{deterministic} service \cite[\S 3.1]{finnDeterministicNetworkingArchitecture2019} \cite{farkasTSNBasicConcepts2018a}
%defined 
%by \ac{DetNet} and \ac{TSN} 
%as follows:
with
zero congestion loss, high levels of reliability, bounded out-of-order delivery and 
%a 
guarantees on the
%on the minimum and maximum 
end-to-end latency of each flow.
% to these applications.
Time-sensitive networks provide this service by relying on a set of redundancy and scheduling mechanisms.
The former reduce the probability of end-to-end losses whereas the later aim to guarantee latency bounds
%, as required when providing a deterministic service~
\cite{farkasTSNBasicConcepts2018a}.

% But the coexistence of both sets of mechanisms within a network raises several issue when computing the aforementionned delay bounds 
% But computing worst-case delay bounds is challenging 

% delay of flows through  

% But the coexistence of redundancy and scheduling mechanisms raise challenges when computing worst-case delay bounds, as required for the 

% provide a deterministic service with bounded latency \cite{farkasTSNBasicConcepts2018a} and high levels of reliability through a set of scheduling and redundancy mechanisms.

% The performance of a time-sensitive network must be bounded in worst-case, not in average \cite[Page~4]{farkasTSNBasicConcepts2018a}.
% They are
% %Time-sensitive networks are considered
% an appealing solution for such safety-critical applications, as they offer a set of scheduling and redundancy mechanisms that bring determinism and reliability for different traffic classes.
% %
% The use of these scheduling and redundancy mechanisms could concurrently affect the timing behavior of time-sensitive networks. Therefore, to prove their efficiency and safety in the aforementioned real-time domains, an appropriate worst-case timing analysis of such mechanisms has to be considered.

Verifying these bounds on a network is a known intractable issue for simulators and real-life experiments because worst-case situations are not captured by stochastic metrics \cite[\S 1]{bouillard2018deterministic}.
Therefore, both the \ac{TSN} and the \ac{DetNet} working groups recommend using analytical tools for conducting worst-case timing analyses and for proving the determinism of the network's service \cite[\S L.3, \S N.2]{ieee8021Q} \cite{ietf-detnet-bounded-latency-08}.
Among them, the network-calculus framework \cite{leboudecNetworkCalculusTheory2001} computes latency, jitter, and backlog bounds, assuming that the sources [resp., the servers] respect some contract of maximum traffic generation [resp., of minimum service]. It has been used to prove certification requirements in avionics \cite{AFDX}.
% based on scheduling and redundancy mechanisms to enable determinism and reliability, key requirements for safety-critical traffic.
%
%
%Time-sensitive networks are specified and designed by working groups such as the \ac{DetNet} working group of the \ac{IETF} as well as the \ac{TSN} task group of the \ac{IEEE}.
%The charters of both groups \cite{DeterministicNetworkingDetnet,TimeSensitiveNetworkingTSNa} mention the goal of enabling determinism and reliability, key requirements for safety-critical applications.
%
%a deterministic service together with a low overall packet loss.
%The deterministic service offers bounded latencies and jitters, as well as zero congestion loss \cite[Page 4]{farkasTSNBasicConcepts2018} to safety-critical flows.
%
%
%\ac{TSN} is based on various scheduling and redundancy mechanisms to favor both requirements (...).
The worst-case timing performance of time-sensitive networks, when focusing only on scheduling mechanisms, has been widely analyzed in the literature ~\cite{mohammadpourLatencyBacklogBounds2018,leboudecTheoryTrafficRegulators2018, thomasCyclicDependenciesRegulators2019,thomasTimeSynchronizationIssues2020,maile2020network,zhaoQuantitativePerformanceComparison2021}.

\begin{table}[b]
    \caption{\label{tab:introduction:comparison-names} Main Acronyms Used in the Paper and Comparison with the Terms of the Working Groups.}
    \resizebox*{\linewidth}{!}{
        \input{./figures/2021-06-tab-comparison-terms.tex}
    }
\end{table}

The main issue addressed in this paper is the effect of the redundancy mechanisms on the delay guarantees in time-sensitive networks.
These redundancy mechanisms, such as \emph{\acf{FRER}} \cite{IEEEStandardLocal2017a} in \ac{TSN} and \emph{\acf{PREOF}} \cite{finnDeterministicNetworkingArchitecture2019} in \ac{DetNet}, 
%specify procedures for bridges, routers and end-stations that provide the identification and replication of packets for redundant transmissions, as well as the elimination of duplicate packets.
%
%Packet replication and elimination functions (\acs{PREF}) can help stringent requirements on packet-loss ratios to be met: these redundancy mechanisms 
decrease the end-to-end packet-loss ratio by distributing ``\emph{the contents of [\dots] flows over multiple paths in time and/or space, so that the loss of some of the paths does need not cause the loss of any packets}''\cite{finnDeterministicNetworkingArchitecture2019}.
To do so, the \ac{DetNet} \acf{PRF}
%placed in the network, 
replicates each incoming packet into several outgoing packets that can take different paths (Fig.~\ref{fig:prob-statement:three-sections}).
The paths then merge and multiple copies of the packet (the replicates) reach a \acf{PEF} that forwards only the first replicate and eliminates the subsequent ones (the duplicates).
The \ac{PEF} generally relies on a sequence number in the packet header to identify the replicates \cite{finnDeterministicNetworkingArchitecture2019}.
%The \ac{TSN} task group of the \
In Table~\ref{tab:introduction:comparison-names}, we compare the terms used in \ac{IETF} \ac{DetNet} and \ac{IEEE} \ac{TSN}.
The main acronyms used thorough the paper are also listed in Table~\ref{tab:introduction:comparison-names}.
%\ac{IEEE} \ac{TSN} provides similar functions with different names, listed in Table~\ref{tab:introduction:comparison-names}.

The \ac{PREF} could increase the worst-case end-to-end latency of the flows: \cite[\S 3.1]{finnDeterministicNetworkingArchitecture2019} recalls that their use is
``\emph{constrained by the need to meet the users' latency requirements}''.
Therefore, understanding how \ac{PREF} affect the worst-case latency guarantees is fundamental to: (i) determine the applicability of \ac{PREF} in industrial networks; (ii) perform trade-offs between latency and loss-ratio requirements; and (iii) design networks with stringent requirements on both aspects.

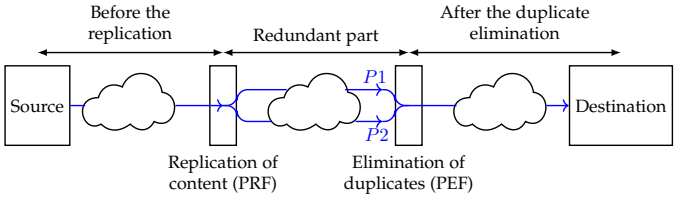
\begin{figure}\centering
    \resizebox*{\linewidth}{!}{\begin{tikzpicture}
    \tikzstyle{n} = [draw, minimum height=1.5cm]
    \tikzstyle{g} = [blue]
    \tikzstyle{r} = [blue]
    \def\mscale{0.5}
    \tikzstyle{ms} = [xshift=1cm]
    \node[n] at (0,0) (source) {Source};
    \asymCloud{(2,0)}{}{\mscale}{c1};
    \node[n, minimum width=0.5cm, label=below:\makecell{Replication of\\content (PRF)}] at (3.5,0) (prf) {};
    \asymCloud{(5.5,0)}{}{\mscale}{c2};
    \node[n, minimum width=0.5cm, label=below:\makecell{Elimination of\\duplicates (PEF)}] at (7,0) (pef) {};
    \asymCloud{(9,0)}{}{\mscale}{c3};
    \node[n] at (11,0) (destination) {Destination};
    \draw[-, blue] (source.east) -- (0.9,0); 
    \draw[->, blue] (2.6,0) -- (prf.center); 
    \draw[-, blue] (pef.center) -- (7.9,0);
    \draw[->, blue] (9.6,0) -- (destination.west);

    \draw[-, g, rounded corners=0.2cm] (prf.center) -- (prf.east) |- (4.7,0.3);
    \draw[-, r, rounded corners=0.2cm] (prf.center) -- (prf.east) |- (4.4,-0.3);

    \draw[->, g] (5.8,0.3) -- (6.5,0.3) node[pos=0.8,above] {$P1$};
    \draw[->, r] (6,-0.3) -- (6.5,-0.3) node[pos=0.8, below] {$P2$};
    \draw[-, g, rounded corners=0.2cm] (6.5,0.3) -| (pef.west) -- (pef.center);
    \draw[-, r, rounded corners=0.2cm] (6.5,-0.3) -| (pef.west) -- (pef.center);

    \draw[latex-latex] (0,1) -- (3.5,1) node[pos=0.5, above] {\makecell{Before the\\replication}};
    \draw[latex-latex] (3.5,1) -- (7,1) node[pos=0.5, above] {Redundant part};
    \draw[latex-latex] (7,1) -- (11,1) node[pos=0.5, above] {\makecell{After the duplicate\\elimination}};
\end{tikzpicture}%}%
    \caption{\label{fig:prob-statement:three-sections} 
    The three sections of a replicate's path: before, in, and after the redundant part.}
\end{figure}
% why is it interesting/important/challenging

Three main challenges arise from conducting worst-case timing analyses of \acs{PREF}. 
First, the replication of packets through the network can induce a burstiness increase along the paths of replicate packets, which leads to increasing delay and backlog bounds in the crossed nodes.
Second, the traffic exiting the \ac{PEF} can exhibit both an increased burstiness and a mis-ordering of the packets.
This can lead to increased delay bounds in the nodes placed after the \ac{PEF}.
Third, the coexistence of the packet mis-ordering with the burstiness increase could negatively affect the behavior of the devices that have been designed for tackling each issue individually.
For example, \acfp{POF} have been specified in \ac{IETF} \ac{DetNet} for removing only the packet mis-ordering.
Similarly, traffic regulators (also called \emph{shapers}) are scheduling mechanisms designed for removing only the burstiness increase.
If a traffic regulator (as in \ac{TSN} \acs{ATS}, \emph{\acl{ATS}}) is placed after the \ac{PEF} for removing the burstiness increase caused by the redundancy, then the packet mis-ordering that coexists with this burstiness increase could negatively affect the behavior of the traffic regulator.
The existing worst-case timing analyses of redundancy mechanisms in time-sensitive networks
\cite{heiseRealtimeGuaranteesDependability2018,heiseSAFDXDeterministicHighavailability2014a}
are limited to the assumption of using redundancy mechanisms at the end-systems as with \emph{\ac{AFDX}} \cite{AFDX} and \emph{\ac{PRP}} \cite{iecIEC6243932016}.
%
%In the area of timing analysis of redundancy mechanisms in time-sensitive networks, the main existing analyses focusing on the worst-case \cite{heiseRealtimeGuaranteesDependability2018,grieuAnalyseEvaluationTechniques2004a,heiseSAFDXDeterministicHighavailability2014a} are limited to the assumption of using redundancy mechanisms at the end-devices as with \emph{\ac{AFDX}} \cite{AFDX} and with \emph{\ac{PRP}} \cite{iecIEC6243932016}.
This assumption discards the main challenges detailed above.
More recent works \cite[\S 4.3.2]{heiseRealtimeGuaranteesDependability2018} \cite{pahlevanRedundancyManagementSafetyCritical2018} based on simulation consider redundancy mechanisms at intermediate nodes.
However, firm conclusions are difficult to draw from simulations as they not cover the worst-case behavior.
%However, it is well-known that it is difficult to draw firm conclusions concerning the timing performance based on simulation, as it does not cover the worst-case behavior, a key point for safety-critical applications.

%what are our contributions

Therefore, our primary goal in this paper is to bridge these gaps and to provide a method of worst-case timing analysis for time-sensitive networks that implement redundancy mechanisms in the general use-case, i.e., at end-systems and/or intermediate nodes. Specifically:
%Hence, to cope with these identified limitations, our main contributions are as follows:
%The primary goal of this paper is to bridge theses gaps and provide a method of analysis for deterministic networks that have \ac{PEF} implemented at intermediate nodes. Specifically:
\begin{itemize}
% TODO: are these first times to the best of our knowledge ??
\item 
%\noindent$\bullet$ 
We provide a network-calculus toolbox that enables the computation of upper bounds on the burstiness increase (Theorem~\ref{thm:toolbox:pef-oac}) and on the amount of reordering (Theorem~\ref{thm:toolbox:reordering}) due to the elimination of duplicate packets. 
Theorem~\ref{thm:toolbox:pef-oac} is useful for computing delay bounds in the nodes located after the elimination of duplicates, whereas the bound from Theorem~\ref{thm:toolbox:reordering} can be compared to the application's requirements to decide if the packets should be reordered prior to their delivery.
% TODO: From Holly: "for the first time" -> necessary ?? (I remove it (LT))
\item
%\noindent$\bullet$ 
We analyze the interactions between redundancy mechanisms and traffic regulators. % in time-sensitive networks. 
We show that the packet mis-ordering due to the elimination of duplicates leads to a bounded increase of the worst-case delay with \acp{PFR} (Theorem~\ref{thm:regulators:pfr-2d}) and to unbounded delays with \acp{IR} (\emph{e.g.}, \ac{TSN} \acs{ATS}) (Theorem~\ref{thm:regulators:ir-instable}).
The problem goes away if the packets are re-ordered after the elimination and before the regulator (Theorem~\ref{thm:regulators:preof-for-free}).
\item 
%\noindent$\bullet$ 
We conduct  performance analyses for an industrial use-case that highlight the interest of our introduced approach to tighten the delay bounds in comparison to intuitive computation approaches.
\end{itemize}

In Section~\ref{sec:prob-statement}, we illustrate the issues posed by \ac{PREF} using a toy example.
In Section~\ref{sec:related-work}, we relate our proposed approach to the state of the art, %main existing ones in the literature, 
and we describe the system model in Section~\ref{sec:system-model}.
Our main theoretical contributions are detailed in Sections~\ref{sec:toolbox} and \ref{sec:regulators}, that cover the network-calculus toolbox for redundancy mechanisms and the analysis of the interaction between such mechanisms and traffic regulators.
%Our main contributions at the theoretical level are detailed in Sections \ref{sec:toolbox} and \ref{sec:regulators}, that cover the network calculus toolbox for redundancy mechanisms and the analysis of the interaction between such mechanisms and traffic regulators.
Finally, we validate our approach on an industrial use-case in Section~\ref{sec:evaluation}.

\vspace{\myvspacebeforesec}
\section{Illustration of the Issues Posed by Packet Replication and Elimination}\label{sec:prob-statement}

In this section, we illustrate the issues posed by \acf{PREF} in time-sensitive networks, as identified in the Introduction.
We first detail the burstiness increase and the mis-ordering introduced by \ac{PREF}.
Afterwards, we focus on the problems arising from the interactions between \ac{PREF} and traffic regulators.

\vspace{\myvspacebeforesubsec}
\subsection{Burstiness and Misordering Introduced by PREFs}
\label{sec:prob-statement:pref-issues}

\begin{figure}[b]\centering
    \resizebox*{\linewidth}{!}{\begin{tikzpicture}
	\node[draw] at (0,0) (s) {\makecell{Packet\\Replication\\Function (PRF)}};
	\node[draw] at (3,0.75) (a) {$[0,1]$};
	\node[draw] at (3,-0.75) (b) {$[6,7]$};
	\node[draw] at (6.75,0) (c) {\makecell{Packet\\Elimination\\Function (PEF)}};
	\node[draw, red, rounded corners=0.5cm, fit={(a)}, inner sep=0.3cm] (cNode) {};
	\node[draw, red, rounded corners=0.5cm, fit={(b)}, inner sep=0.3cm] (dNode) {};
	\node[anchor=west, red] at ([xshift=-0.2cm] dNode.south east) {$D$};
	\node[anchor=west, red] at ([xshift=-0.2cm] cNode.north east) {$C$};

	\draw[->, rounded corners=0.2cm] ([yshift=0.2cm]s.east) -- ([xshift=0.5cm,yshift=0.2cm] s.east) -- ([xshift=-0.5cm] a.west) -- (a.west);
	\draw[->, rounded corners=0.2cm] ([yshift=-0.2cm]s.east) -- ([xshift=0.5cm,yshift=-0.2cm] s.east) -- ([xshift=-0.5cm] b.west) -- (b.west);
	
	\draw[->, rounded corners=0.2cm] (a.east) -- ([xshift=0.5cm] a.east) -- ([xshift=-1.5cm,yshift=0.2cm] c.west) -- ([yshift=0.2cm]c.west) node[pos=0.5,anchor=center] (tA) {};
	\draw[->, rounded corners=0.2cm] (b.east) -- ([xshift=0.5cm] b.east) -- ([xshift=-1.5cm,yshift=-0.2cm] c.west) -- ([yshift=-0.2cm]c.west) node[pos=0.5,anchor=center] (tB) {};
	\draw[-,red] ([yshift=-0.1cm] tA.center) -- ([yshift=0.1cm] tA.center) node[pos=1,above] {\texttt{outC}};
	\draw[-,red] ([yshift=-0.1cm] tB.center) -- ([yshift=0.1cm] tB.center) node[pos=0,below] {\texttt{outD}};

	\draw[red] ([xshift=0.5cm] c.west) ++(90:-0.9) arc (90:-90:-0.9) node[pos=0, anchor=center] (mt) {} node[pos=1, anchor=center] (mtt) {};
	\draw[red] (mt.center) -- ++(1.5cm,0);
	\draw[red, dotted] (mt.center) ++(1.5cm,0) -- ++(0.5cm,0);
	\draw[red] (mtt.center) -- ++(1.5cm,0);
	\draw[red, dotted] (mtt.center) ++(1.5cm,0) -- ++(0.5cm,0);
	\node[red, anchor=south] at (mtt.center) {$F$};

	% OPTION A : Replication inside node A
	\draw[red] ([xshift=-1cm] s.west) ++(90:0.5) arc (90:-90:0.5) node[pos=0, anchor=center] (mmt) {} node[pos=1, anchor=center] (mmtt) {};
	\draw[red] (mmt.center) -- ++(-0.1cm,0);
	\draw[red, dotted] (mmt.center) ++(-0.1cm,0) -- ++(-0.5cm,0);
	\draw[red] (mmtt.center) -- ++(-0.1cm,0);
	\draw[red, dotted] (mmtt.center) ++(-0.1cm,0) -- ++(-0.5cm,0);
	\node[red, anchor=south] at (mmt.center) {$B$};

	% OPTION B : Replication not inside node B
	%\draw[red] ([xshift=-1cm] s.west) ++(90:0.8) arc (90:-90:0.8) node[pos=0, anchor=center] (mmt) {} node[pos=1, anchor=center] (mmtt) {};
	%\draw[red, dotted] (mmt.center) -- ++(-0.5cm,0);
	%\draw[red, dotted] (mmtt.center) -- ++(-0.5cm,0);
	%\node[red, anchor=south] at (mmt.center) {$B$};

	\draw[->] ([xshift=-1.5cm] s.west) -- (s.west) node[pos=0.8,anchor=center] (tIn) {};
	\draw[-,red] ([yshift=0.1cm] tIn.center) -- ([yshift=-0.1cm] tIn.center) node[pos=0,above] {\texttt{in}};
	\draw[->] (c.east) -- ([xshift=1.5cm] c.east) node[pos=0.5,anchor=center] (tOut) {};
	\draw[-,red] ([yshift=-0.1cm] tOut.center) -- ([yshift=0.1cm] tOut.center) node[pos=1,above] {\texttt{out}};	
\end{tikzpicture}%}%
    \caption{\label{fig:prob-statement:toy-example} Toy example used thorough the paper. A flow is replicated on two paths, $C$ and $D$, with different delay bounds. The paths then merge into $F$, that removes the duplicates.}
\end{figure}
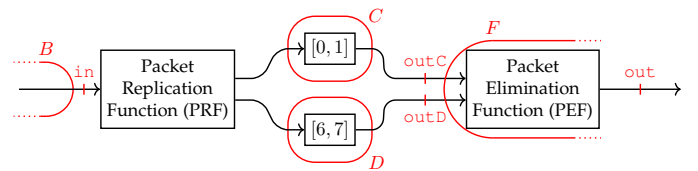

% To the best of our knowledge, there exists no tight approach for modeling the effect of the \ac{PEF} on the flow's burstiness.

%Some intuitions suggest that the issue is not straightforward.
%For example the document that defines \acs{TSN} \acs{FRER} gives an example where the \acl{PEF} is ``\emph{passing packets at twice the [source] rate}'' for some period of time \cite[\S C.9]{IEEEStandardLocal2017a}.
To highlight the effect of \ac{PREF} on the burstiness and packet order, we consider the toy example in Fig.~\ref{fig:prob-statement:toy-example}:
a periodic flow with a rate $r_0$ of one packet per every time unit is replicated at the output of the vertex $B$ and sent over two paths: $C$ [resp., $D$], with a minimum delay of zero time units [resp., six time units] and a maximum delay of one time unit [resp., seven time units].
A possible trace of packets for the toy example is given in Fig.~\ref{fig:prob-statement:double-rate}.
%The source at ``\texttt{in}'' generates a periodic flow with a rate $r_0$ of one packet per every time unit (t.u.).
Here, the path through $C$ drops all  Data Units 1 to 6: they are only received through $D$ with a latency of seven time units (7 t.u.).
After 7 t.u., the link through $C$ is available again and the Data Units 7 to 14 are received through both $C$ and $D$, with a latency of 1 [resp., 7] t.u.
The \acs{PEF} receives the sum of ``\texttt{outC}'' and ``\texttt{outD}''.
It drops the duplicates but forwards the packets that contain not-already-seen data units.
Its output is on the Line ``\texttt{out}''.

We observe that the traffic after the \ac{PEF} is much more bursty than before the replication function: between t.u. 8 and t.u. 13, the \ac{PEF} simultaneously outputs the ``older'' packets 1-6 received through the ``long'' path and the ``newer'' packets 7-12 received through the now-active ``short`` path.
This increases the load on the downstream link with a doubled rate, $2r_0$, for a duration of 6 t.u.

The toy example hence suggests that \acf{PREF} can significantly increase the flows' burstiness, which could further worsen the congestion and the worst-case delay in the downstream nodes.
Obtaining a bound on this burstiness increase is important for computing the end-to-end latency of the flow.
%Combined with minimum-service constraints, such a burstiness bound 
Indeed, a delay bound for the flow in the third section of Fig.~\ref{fig:prob-statement:three-sections} (after the \ac{PEF}) can be obtained from such an upper bound on the flow's worst-case traffic at the output of the \ac{PEF} and from a lower bound on the minimum service provided by the nodes located after the \ac{PEF}.

% Indeed, summing the deterministic delay bounds within each of the three sections of Fig.~\ref{fig:prob-statement:three-sections} (before, in and after the redundant part) gives an end-to-end latency bound for the flow that can then be compared to the flow's requirement.
% This requires the knowledge of a minimum-service constraint for every network element present in the three sections but it also requires the knowledge of the maximum burstiness that the flow has at the input of each section.
% In particular, it requires a bound for the burstiness at the beginning of the last section, \emph{i.e.}, after the \ac{PEF}.

A first approach for bounding the traffic of the flow after the \ac{PEF}, which we denote as \emph{intuitive}, consists in doing as if the \ac{PEF} would never drop a packet (i.e., even the duplicates are forwarded).
This approach requires the network engineer to dimension all the downstream nodes in order to support a sustained double rate.
In Theorem~\ref{thm:toolbox:pef-oac}, we provide a better bound for the traffic at the output of the \ac{PEF}.
It leads to better end-to-end latency bounds, as we show in Sec.~\ref{sec:evaluation}.

We also observe that \ac{PREF} can create a mis-ordering: In the toy example, Data Unit 6 exits the \ac{PEF} five time units after Data Unit 7.
Obtaining an upper bound for this mis-ordering is important for comparing it to the application's requirements. 
We provide such bound in Theorem~\ref{thm:toolbox:reordering}.

%it will be demonstrated in this paper.

\begin{figure}\centering
    \resizebox*{0.95\linewidth}{!}{\begin{tikzpicture}
	\tikzstyle{lw} = [line width=2pt]
	\tikzstyle{every node}=[font=\Large]
	
	% % IN
	\draw [->,lw] (-1,0) -- (15,0);
	
	\draw (0,-0.2) -- (0,0.2);
	\draw (1,-0.2) -- (1,0.2);
	\draw (2,-0.2) -- (2,0.2);
	\draw (3,-0.2) -- (3,0.2);
	\draw (4,-0.2) -- (4,0.2);
	\draw (5,-0.2) -- (5,0.2);
	\draw (6,-0.2) -- (6,0.2);
	\draw (7,-0.2) -- (7,0.2);
	\draw (8,-0.2) -- (8,0.2);
	\draw (9,-0.2) -- (9,0.2);
	\draw (10,-0.2) -- (10,0.2);
	\draw (11,-0.2) -- (11,0.2);
	\draw (12,-0.2) -- (12,0.2);
	\draw (13,-0.2) -- (13,0.2);
	\draw (14,-0.2) -- (14,0.2);
	
	\draw[lw,blue,->] (0,0) -- (0,1) node[pos=1,above, black] {1};
	\draw[lw,blue,->] (1,0) -- (1,1) node[pos=1,above, black] {2};
	\draw[lw,blue,->] (2,0) -- (2,1) node[pos=1,above, black] {3};
	\draw[lw,blue,->] (3,0) -- (3,1) node[pos=1,above, black] {4};
	\draw[lw,blue,->] (4,0) -- (4,1) node[pos=1,above, black] {5};
	\draw[lw,blue,->] (5,0) -- (5,1) node[pos=1,above, black] {6};
	\draw[lw,blue,->] (6,0) -- (6,1) node[pos=1,above, black] {7};
	\draw[lw,blue,->] (7,0) -- (7,1) node[pos=1,above, black] {8};
	\draw[lw,blue,->] (8,0) -- (8,1) node[pos=1,above, black] {9};
	\draw[lw,blue,->] (9,0) -- (9,1) node[pos=1,above, black] {10};
	\draw[lw,blue,->] (10,0) -- (10,1) node[pos=1,above, black] {11};
	\draw[lw,blue,->] (11,0) -- (11,1) node[pos=1,above, black] {12};
	\draw[lw,blue,->] (12,0) -- (12,1) node[pos=1,above, black] {13};
	\draw[lw,blue,->] (13,0) -- (13,1) node[pos=1,above, black] {14};

	\node[anchor=south, red] at (15,0)  {\texttt{in}};
	
	% % A
	\draw [->,lw] (-1,-2) -- (15,-2);
	\draw (0,-2.2) -- (0,-1.8);
	\draw (1,-2.2) -- (1,-1.8);
	\draw (2,-2.2) -- (2,-1.8);
	\draw (3,-2.2) -- (3,-1.8);
	\draw (4,-2.2) -- (4,-1.8);
	\draw (5,-2.2) -- (5,-1.8);
	\draw (6,-2.2) -- (6,-1.8);
	\draw (7,-2.2) -- (7,-1.8);
	\draw (8,-2.2) -- (8,-1.8);
	\draw (9,-2.2) -- (9,-1.8);
	\draw (10,-2.2) -- (10,-1.8);
	\draw (11,-2.2) -- (11,-1.8);
	\draw (12,-2.2) -- (12,-1.8);
	\draw (13,-2.2) -- (13,-1.8);
	\draw (14,-2.2) -- (14,-1.8);
	\node[anchor=south, red] at (15,-2)  {\texttt{outC}};
	
	\draw[lw,blue,->] (7,-2) -- (7,-1) node[pos=1, anchor=south, black] {7};
	\draw[lw,blue,->] (8,-2) -- (8,-1) node[pos=1, anchor=south, black] {8};
	\draw[lw,blue,->] (9,-2) -- (9,-1) node[pos=1, anchor=south, black] {9};
	\draw[lw,blue,->] (10,-2) -- (10,-1) node[pos=1, anchor=south, black] {10};
	\draw[lw,blue,->] (11,-2) -- (11,-1) node[pos=1, anchor=south, black] {11};
	\draw[lw,blue,->] (12,-2) -- (12,-1) node[pos=1, anchor=south, black] {12};
	\draw[lw,blue,->] (13,-2) -- (13,-1) node[pos=1, anchor=south, black] {13};
    \draw[lw,blue,->] (14,-2) -- (14,-1) node[pos=1, anchor=south, black] {14};
	
	% % B
	\draw [->,lw] (-1,-4) -- (15,-4);
	\draw (0,-4.2) -- (0,-3.8);
	\draw (1,-4.2) -- (1,-3.8);
	\draw (2,-4.2) -- (2,-3.8);
	\draw (3,-4.2) -- (3,-3.8);
	\draw (4,-4.2) -- (4,-3.8);
	\draw (5,-4.2) -- (5,-3.8);
	\draw (6,-4.2) -- (6,-3.8);
	\draw (7,-4.2) -- (7,-3.8);
	\draw (8,-4.2) -- (8,-3.8);
	\draw (9,-4.2) -- (9,-3.8);
	\draw (10,-4.2) -- (10,-3.8);
	\draw (11,-4.2) -- (11,-3.8);
	\draw (12,-4.2) -- (12,-3.8);
	\draw (13,-4.2) -- (13,-3.8);
	\draw (14,-4.2) -- (14,-3.8);
	\node[anchor=south, red] at (15,-4)  {\texttt{outD}};

	\draw[lw,blue,->] (7,-4) -- (7,-3) node[pos=1, anchor=south, black] {1};
	\draw[lw,blue,->] (8,-4) -- (8,-3) node[pos=1, anchor=south, black] {2};
	\draw[lw,blue,->] (9,-4) -- (9,-3) node[pos=1, anchor=south, black] {3};
	\draw[lw,blue,->] (10,-4) -- (10,-3) node[pos=1, anchor=south, black] {4};
	\draw[lw,blue,->] (11,-4) -- (11,-3) node[pos=1, anchor=south, black] {5};
	\draw[lw,blue,->] (12,-4) -- (12,-3) node[pos=1, anchor=south, black] {6};
	\draw[lw,red,dashed,->] (13,-4) -- (13,-3) node[pos=1, anchor=south, black] {7};
	\draw[lw,red,dashed,->] (14,-4) -- (14,-3) node[pos=1, anchor=south, black] {8};

	% % OUT
	\draw [->,lw] (-1,-6) -- (15,-6);
	\draw (0,-6.2) -- (0,-5.8)	node[pos=0,below] {1};
	\draw (1,-6.2) -- (1,-5.8)	node[pos=0,below] {2};
	\draw (2,-6.2) -- (2,-5.8)	node[pos=0,below] {3};
	\draw (3,-6.2) -- (3,-5.8)	node[pos=0,below] {4};
	\draw (4,-6.2) -- (4,-5.8)	node[pos=0,below] {5};
	\draw (5,-6.2) -- (5,-5.8)	node[pos=0,below] {6};
	\draw (6,-6.2) -- (6,-5.8)	node[pos=0,below] {7};
	\draw (7,-6.2) -- (7,-5.8)	node[pos=0,below] {8};
	\draw (8,-6.2) -- (8,-5.8)	node[pos=0,below] {9};
	\draw (9,-6.2) -- (9,-5.8)  node[pos=0,below] {10};
	\draw (10,-6.2) -- (10,-5.8)node[pos=0,below] {11};
	\draw (11,-6.2) -- (11,-5.8)node[pos=0,below] {12};
	\draw (12,-6.2) -- (12,-5.8)node[pos=0,below] {13};
	\draw (13,-6.2) -- (13,-5.8)node[pos=0,below] {14};
	\draw (14,-6.2) -- (14,-5.8)node[pos=0,below] {15};
	\node[anchor=south, red] at (15,-6)  {\texttt{out}};
	
	\draw[lw,blue,->] (7,-6) -- (7,-5) node[pos=1, black, anchor=south east, xshift=0.125cm] {1};
	\draw[lw,blue,->] (7.1,-6) -- (7.1,-5) node[pos=1, black, anchor=south west, xshift=-0.125cm] {7};
	
	\draw[lw,blue,->] (8,-6) -- (8,-5) node[pos=1, black, anchor=south east, xshift=0.125cm] {2};
	\draw[lw,blue,->] (8.1,-6) -- (8.1,-5) node[pos=1, black, anchor=south west, xshift=-0.125cm] {8};
	
	\draw[lw,blue,->] (9,-6) -- (9,-5) node[pos=1, black, anchor=south east, xshift=0.125cm] {3};
	\draw[lw,blue,->] (9.1,-6) -- (9.1,-5) node[pos=1, black, anchor=south west, xshift=-0.125cm] {9};
	
	\draw[lw,blue,->] (10,-6) -- (10,-5) node[pos=1, black, anchor=south east, xshift=0.125cm] {4};
	\draw[lw,blue,->] (10.1,-6) -- (10.1,-5) node[pos=1, black, anchor=south west, xshift=-0.125cm] {10};
	
	\draw[lw,blue,->] (11,-6) -- (11,-5) node[pos=1, black, anchor=south east, xshift=0.125cm] {5};
	\draw[lw,blue,->] (11.1,-6) -- (11.1,-5) node[pos=1, black, anchor=south west, xshift=-0.125cm] {11};
	
	\draw[lw,blue,->] (12,-6) -- (12,-5) node[pos=1, black, anchor=south east, xshift=0.125cm] {6};
	\draw[lw,blue,->] (12.1,-6) -- (12.1,-5) node[pos=1, black, anchor=south west, xshift=-0.125cm] {12};

	\draw[lw,blue,->] (13,-6) -- (13,-5) node[pos=1, black, anchor=south] {13};

    \draw[lw,blue,->] (14,-6) -- (14,-5) node[pos=1, black, anchor=south] {14};

    \draw[-latex] (0,0) -- (7,-4);
	\draw[-latex] (0,0) -- (1,-1) node[pos=1,anchor=north west] (c1) {};
	\draw[red] (c1.north west) -- (c1.south east);
	\draw[red] (c1.south west) -- (c1.north east);

    \draw[-latex] (6,0) -- (13,-4);
	\draw[-latex] (6,0) -- (7,-2);
\end{tikzpicture}%}%
    \caption{\label{fig:prob-statement:double-rate} An example trajectory on the toy example of Fig.~\ref{fig:prob-statement:toy-example} causing a double-rate output of the \acs{PEF}.}
\end{figure}
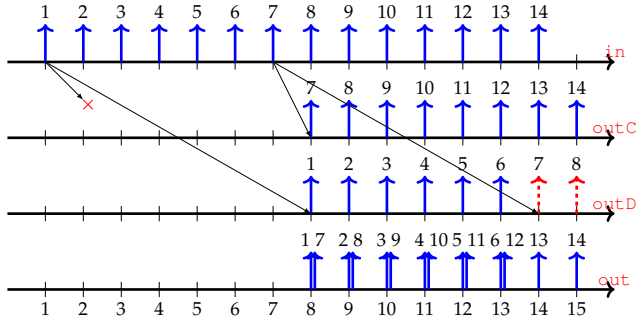

% A first approach for bounding the output traffic, which we denote as \emph{intuitive}, consists in doing as if the \ac{PEF} would never drop a packet (i.e., even the duplicates are forwarded).
% This approach requires the network engineer to dimension all the downstream nodes in order to support a sustained double rate.
% It can also lead to pessimistic delay bounds, as we will show in Sec.~\ref{sec:evaluation}.
% %it will be demonstrated in this paper.

%\subsection{Mitigating the Issues with Traffic Regulators and Packet Ordering Functions}
\vspace{\myvspacebeforesubsec}
\subsection{Interactions Between PREF and Other Devices}
%what is it and how can handle the burstiness increase problem
%highlight need of FIFO property and how POF can be a solution to guarantee this property

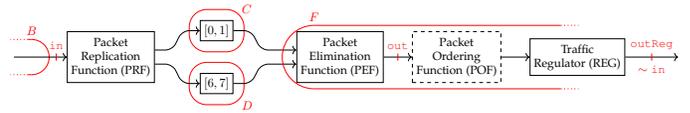
\begin{figure}\centering
    \resizebox*{\linewidth}{!}{\begin{tikzpicture}
	\node[draw] at (0,0) (s) {\makecell{Packet\\Replication\\Function (PRF)}};
	\node[draw] at (3,0.75) (a) {$[0,1]$};
	\node[draw] at (3,-0.75) (b) {$[6,7]$};
	\node[draw] at (6.5,0) (c) {\makecell{Packet\\Elimination\\Function (\acs{PEF})}};
    \node[draw, dashed, anchor=west] at ([xshift=0.8cm] c.east) (pof) {\makecell{Packet\\Ordering\\Function (\acs{POF})}};
    \node[draw, anchor=west] at ([xshift=0.8cm] pof.east) (reg) {\makecell{Traffic\\Regulator (\acs{REG})}};
	\node[draw, red, rounded corners=0.5cm, fit={(a)}, inner sep=0.3cm] (cNode) {};
	\node[draw, red, rounded corners=0.5cm, fit={(b)}, inner sep=0.3cm] (dNode) {};
	\node[anchor=west, red] at ([xshift=-0.2cm] dNode.south east) {$D$};
	\node[anchor=west, red] at ([xshift=-0.2cm] cNode.north east) {$C$};

	\draw[->, rounded corners=0.2cm] ([yshift=0.2cm]s.east) -- ([xshift=0.5cm,yshift=0.2cm] s.east) -- ([xshift=-0.5cm] a.west) -- (a.west);
	\draw[->, rounded corners=0.2cm] ([yshift=-0.2cm]s.east) -- ([xshift=0.5cm,yshift=-0.2cm] s.east) -- ([xshift=-0.5cm] b.west) -- (b.west);
	
	\draw[->, rounded corners=0.2cm] (a.east) -- ([xshift=0.5cm] a.east) -- ([xshift=-1cm,yshift=0.2cm] c.west) -- ([yshift=0.2cm]c.west) node[pos=0.5,anchor=center] (tA) {};
	\draw[->, rounded corners=0.2cm] (b.east) -- ([xshift=0.5cm] b.east) -- ([xshift=-1cm,yshift=-0.2cm] c.west) -- ([yshift=-0.2cm]c.west) node[pos=0.5,anchor=center] (tB) {};
	%\draw[-,red] ([yshift=-0.1cm] tA.center) -- ([yshift=0.1cm] tA.center) node[pos=1,above] {\texttt{outC}};
	%\draw[-,red] ([yshift=-0.1cm] tB.center) -- ([yshift=0.1cm] tB.center) node[pos=0,below] {\texttt{outD}};

	\draw[red] ([xshift=0.5cm] c.west) ++(90:-0.9) arc (90:-90:-0.9) node[pos=0, anchor=center] (mt) {} node[pos=1, anchor=center] (mtt) {};
	\draw[red] (mt.center) -- ++(7cm,0);
	\draw[red, dotted] (mt.center) ++(7cm,0) -- ++(0.5cm,0);
	\draw[red] (mtt.center) -- ++(7cm,0);
	\draw[red, dotted] (mtt.center) ++(7cm,0) -- ++(0.5cm,0);
	\node[red, anchor=south] at (mtt.center) {$F$};

	% OPTION A : Replication inside node A
	\draw[red] ([xshift=-1cm] s.west) ++(90:0.5) arc (90:-90:0.5) node[pos=0, anchor=center] (mmt) {} node[pos=1, anchor=center] (mmtt) {};
	\draw[red] (mmt.center) -- ++(-0.1cm,0);
	\draw[red, dotted] (mmt.center) ++(-0.1cm,0) -- ++(-0.5cm,0);
	\draw[red] (mmtt.center) -- ++(-0.1cm,0);
	\draw[red, dotted] (mmtt.center) ++(-0.1cm,0) -- ++(-0.5cm,0);
	\node[red, anchor=south] at (mmt.center) {$B$};

	% OPTION B : Replication not inside node B
	%\draw[red] ([xshift=-1cm] s.west) ++(90:0.8) arc (90:-90:0.8) node[pos=0, anchor=center] (mmt) {} node[pos=1, anchor=center] (mmtt) {};
	%\draw[red, dotted] (mmt.center) -- ++(-0.5cm,0);
	%\draw[red, dotted] (mmtt.center) -- ++(-0.5cm,0);
	%\node[red, anchor=south] at (mmt.center) {$B$};

	\draw[->] ([xshift=-1.5cm] s.west) -- (s.west) node[pos=0.8,anchor=center] (tIn) {};
	\draw[-,red] ([yshift=0.1cm] tIn.center) -- ([yshift=-0.1cm] tIn.center) node[pos=0,above] {\texttt{in}};

    \draw[->] (c.east) -- (pof.west) node[pos=0.5,anchor=center] (tOut) {};
    \draw[->] (pof.east) -- (reg.west);
    \draw[->] (reg.east) -- ++(1.5cm,0) node[pos=0.5, anchor=center] (tRegOut) {};
	\draw[-,red] ([yshift=-0.1cm] tOut.center) -- ([yshift=0.1cm] tOut.center) node[pos=1,above] {\texttt{out}};
    \draw[-,red] ([yshift=-0.1cm] tRegOut.center) -- ([yshift=0.1cm] tRegOut.center) node[pos=1,above] {\texttt{outReg}} node[pos=0,below] {$\sim\texttt{in}$};
\end{tikzpicture}%}%
    \caption{\label{fig:prob-statement:toy-example-with-reg} The toy example of Fig.~\ref{fig:prob-statement:toy-example} extended with \acs{POF} and \acs{REG} to deal with the mis-ordering and burstiness increase issues due to \acs{PEF}.}
\end{figure}

If either the end-to-end latency bound or the mis-ordering bound does not meet the system requirements, then we can use one of the devices specified by the working groups for tackling the corresponding issue.

For example, if the receiving application does not tolerate any mis-ordering, then the \ac{DetNet} \acf{POF} \cite[\S 3.2.2.2]{finnDeterministicNetworkingArchitecture2019} can be used after the \ac{PEF} to correct the mis-ordering introduced by \ac{PREF}.
Similarly, if the end-to-end latency of a flow does not meet its requirements due to a high worst-case delay in the third section of Figure~\ref{fig:prob-statement:three-sections}, then using traffic regulators \cite{ieeeDraftStandardLocal2019a} just after the \ac{PEF} appears as a natural choice: 
Traffic \acfp{REG} have been designed for removing the burstiness increase \cite{leboudecTheoryTrafficRegulators2018,mohammadpourLatencyBacklogBounds2018} thus for reducing the worst-case delay in downstream nodes.
They come in two flavors: \acfp{PFR} process each flow individually whereas \acfp{IR} process flow aggregates.

To the best of our knowledge, the interactions between \ac{PREF} and other devices such as \acp{POF} and \aclp{REG} have not yet been analyzed.
For instance, many properties of the regulators rely on the assumption that the upstream system is \ac{FIFO} \cite{leboudecTheoryTrafficRegulators2018}.
As observed on the toy example, this assumption does not hold with \ac{PREF}.

%The \ac{POF} reorders the flow's data units by storing and delaying those that arrive out of order (\emph{e.g.}, Data Unit 7) and releasing them immediately after those that should precede them (\emph{e.g.}, Data Unit 6).
%At the time of writing, there exists no equivalent function in \ac{IEEE} \ac{TSN}.

%Similarly, if the network elements placed after the \ac{PEF} cannot support the increase of the flow's burstiness without risking congestion losses, then ``\emph{some means of buffering [\ldots] is required at or near the [\ac{PEF}]}'' \cite[\S C.9]{IEEEStandardLocal2017a}.
%This buffering can ensure ``\emph{that the resultant [\ldots] flow will not exceed its contracted bandwidth}''
%\cite[\S 3.2.2.2.]{finnDeterministicNetworkingArchitecture2019}.
%\cite{finnDeterministicNetworkingArchitecture2019}.
%Traffic regulators are well-known buffering devices for enforcing per-flow contracts \cite{leboudecTheoryTrafficRegulators2018}.
%The strategy of regulating the traffic after the \ac{PEF} is particularly implicit within \ac{TSN}.
%Indeed, \ac{TSN} provides both building blocks: \ac{PREF} under the name \ac{FRER}\cite{IEEEStandardLocal2017a} and \acfp{IR} under the name \emph{\acf{ATS}} \cite{IEEEStandardLocal2020}.

Assume, for example, that the traffic regulator in Fig.~\ref{fig:prob-statement:toy-example-with-reg} shapes the traffic back to the profile it had at the input ``\texttt{in}''.
In terms of burstiness, this makes the middle section in Fig.~\ref{fig:prob-statement:three-sections} transparent to the third section.
The regulator processes the traffic from the ``\texttt{out}'' line of Fig.~\ref{fig:prob-statement:double-rate} and forces the packets to be as spaced as in the ``\texttt{in}'' line by delaying and storing the packets if required.
Clearly, the upstream system between ``$\texttt{in}$'' and ``$\texttt{out}$'' in Fig.~\ref{fig:prob-statement:double-rate} is not \ac{FIFO}, because the packets exit the \ac{PEF} out of order.
Thus the properties of the regulators that depend on this assumption might not hold and the cohabitation of the \ac{PEF} and the \ac{REG} could negatively affect the latency bounds.
A \acf{POF} (dashed box in Fig.~\ref{fig:prob-statement:toy-example-with-reg}) can be used after the \ac{PEF} and before the \acl{REG} to force the upstream system to be \ac{FIFO}.
If such \ac{POF} is placed, then we would expect to retrieve all the properties of regulators.

In Sec. \ref{sec:regulators}, we analyze the interactions between \ac{PREF} and \aclp{REG}.
We observe that the conclusions depend on the type of the \acl{REG}: either \ac{PFR} or \ac{IR} (as with \ac{TSN} \acs{ATS}).

\vspace{\myvspacebeforesec}
\section{Related Work}\label{sec:related-work}

The most relevant timing analyses of redundancy mechanisms in time-sensitive networks can mainly be categorized according to the assumption of where to enable the packet replication and elimination functions.

%, i.e., only at the end-devices or anywhere in the network. As discussed in Section \ref{problem statement}, considering the removal of duplicates anywhere in the network is more challenging than only at the end-devices to compute worst-case delay bounds.

The existing approaches in this area considering the packet replication and elimination only at the end-devices concern mainly \emph{\ac{HSR}} and \emph{\acf{PRP}}\cite[\S 4]{iecIEC6243932016}. Both mechanisms eliminate the duplicates only at the destination; thus their analysis does not require to bound the traffic at the output of the \ac{PEF} and discards the mis-ordering issue. 
%Zero switch-over redundancy based on content replication has first been implemented on Ethernet with \emph{\ac{HSR}} and \emph{\acf{PRP}}\cite[\S 4]{iecIEC6243932016}.
In \cite{heiseSAFDXDeterministicHighavailability2014a}, worst-case delay bounds are computed in \ac{HSR}-based networks by using network calculus. The idea consists in taking the maximum of the delay bounds along each of the redundant paths. In \cite{taubrichFormalSpecificationAnalysis2007}, model checking is used to analyze how well \ac{PEF} algorithms based on sequence numbers can detect duplicates in \ac{AFDX}, a \ac{PRP}-based network. 

%tu vas le citer dans la partie toolbox
%In~\cite{mohammadpourPacketReorderingTimeSensitive2020}, Mohammadpour and Le Boudec analyze the deterministic properties of \ac{POF} with network calculus, in the absence of \ac{PEF}.
%They also provide a set of results that can be used as a toolbox when computing a bound on the amount of reordering in time-sensitive networks.

On the other hand, there exist only few seminal works in the literature considering the packet replication and elimination anywhere in the network. These works mainly concern \acs{FRER} \cite{IEEEStandardLocal2017a}, which is the first mechanism enabling such an assumption. As mentioned in \cite[\S C.9]{IEEEStandardLocal2017a} and further illustrated in Sec.~\ref{sec:prob-statement}, the elimination of duplicates within the network raises issues in computing the \acl{ETE} delay bounds. In \cite{hofmannChallengesLimitationsIEEE2020}, further concerns about \ac{FRER} have been discussed. In \cite{heiseTSimNetIndustrialTime2016}, a simulation framework based on OMNeT++  has been developed for \ac{TSN} mechanisms, including \ac{FRER}~\cite[\S 4.3.2]{heiseRealtimeGuaranteesDependability2018}. However there is no specific experiment for assessing the effect of \ac{FRER} on latency bounds. Furthermore, obtaining the worst-case delay bounds with simulators is a known intractable problem~\cite[\S I]{bouillard2018deterministic}. 

Thus, as stated above, there are no formal analyses of delay bounds of redundancy mechanisms, such as \ac{TSN} \ac{FRER} or \ac{DetNet} PREOF, when the packet replication and elimination is performed anywhere in the network, or on the interactions between redundancy and scheduling mechanisms. 

\vspace{\myvspacebeforesec}
\section{System Model}\label{sec:system-model}

Our system model is divided into three abstraction levels.
It results from an analysis of the \ac{TSN} and \ac{DetNet} documents and Appendix~\ref{sec:appendix:discussion-system-model} details its applicability for these standards. 
Notations used thorough the paper are listed in Table~\ref{tab:system-model:notations}.
\setlength\dashlinedash{0.2pt}
\setlength\dashlinegap{1.5pt}
\begin{table}\centering
    \caption{\label{tab:system-model:notations} Notations}
    \resizebox*{\linewidth}{!}{
    \begin{tabular}{r|l}
        Term & Definition \\
        \hline
        $\mathcal{G}$ & The graph of the network for the class of interest.\\
        \hdashline
        $f$ & A flow. \\
        \hdashline
        $\mathcal{G}(f)$ & The graph of flow $f$.\\
        \hline
        \makecell[r]{\acs{EP}-vertex\\in $\mathcal{G}(f)$} & \makecell[l]{A vertex at which the duplicates of $f$ have not\\ been eliminated yet.} \\
        \hdashline
        \makecell[r]{Diamond\\ancestor\\of $n$ in $\mathcal{G}(f)$} & \makecell[l]{A vertex that is not an \acs{EP}-vertex of $\mathcal{G}(f)$  and that is \\ contained in any paths of $f$ between its source and $n$.}\\
        \hdashline
        \makecell[r]{$d_f^{a\rightarrow n}$\\{}[resp., $D_f^{a\rightarrow n}$]} & \makecell[l]{Lower [resp., upper] delay bound for $f$ between $a$\\ and $n$, along any possible path $a\rightarrow n$ within $\mathcal{G}(f)$.} \\
        %$\mathcal{P}^{\mathcal{G}(f)}_{a,n}$ & The set of all possible paths from $a$ to $n$ in $\mathcal{G}(f)$.\\
        \hline
        $\texttt{PEF}_n(f)$ & \makecell[l]{Packet-elimination function at output-port $n$ \\ that eliminates the duplicates of flow $f$.} \\
        \hdashline
        $\texttt{POF}_n(\mathcal{F},o)$ & \makecell[l]{Packet-ordering function at $n$  that uses reference $o$ \\ to force  the  order of the data units of the aggregate $\mathcal{F}$.}\\
        \hdashline
        $\texttt{REG}_n(\mathcal{F},o)$ & \makecell[l]{Regulator (either interleaved or per-flow) that shapes \\ the flows  within $\mathcal{F}$ in a \ac{FIFO} manner.}\\
        \hdashline
        $\sigma_{n,f}$ & Shaping curve for $f$ at the regulator within $n$. \\
        \hline
        \makecell[r]{$\alpha_{f,n^*}$ \\ $[$resp., $\alpha_{f,\texttt{FUN}^*}$ $]$} & \makecell[l]{For $n$ a vertex in $\mathcal{G}$ [resp., $\texttt{FUN}$ a function], the arrival \\ curve of $f$ at the output of $n$  $[$resp., of the function \texttt{FUN}$]$.} \\
        \hline
        $\gamma_{r,b}:t\mapsto rt+b$ & Leaky-bucket arrival curve with rate $r$ and burst $b$ \\
        \hdashline
        $\delta_D:\left\lbrace\begin{aligned}t\mapsto\infty\\\text{if }t > D\end{aligned}\right.$ & Service-curve of a D-bounded-Delay element\\
        \hline
        $|x|^+$ & $=\max(0,x)$\\
        \hline
        t.u. & Time unit (arbitrary unit used in the examples) \\
        \hdashline 
        d.u. & Data unit (arbitrary unit used in the examples) \\
    \end{tabular}
    }
\end{table}
\vspace{\myvspacebeforesubsec}
\subsection{Network and Flow Model}\label{sec:system-model:flow-model}

\ul{Type of network:} We consider an asynchronous packet-switching full-duplex store-and-forward network that transports \emph{data units} between applications.
We assume that there is one or several classes of traffic and that flows are statically assigned to a class.
We focus on one class and denote by $\mathcal{G}$ the
%network's
underlying graph for this class \cite[Chap. 12]{bouillard2018deterministic}.
$\mathcal{G}$ contains one vertex per output port in the network (see Sec.~\ref{sec:system-model:device-model} for the exact mapping between the two notions) and $(a,b)$ is a directed edge of $\mathcal{G}$ if at least one flow crosses $b$ just after $a$.
The network does not need to be feed-forward, it can contain cyclic dependencies (\emph{i.e.}, $\mathcal{G}$ can contain cycles)\cite{thomasCyclicDependenciesRegulators2019}.
%OR: if $b$ is the next hop of $a$ for at least one flow. 
% (seconde proposition  fait sauter une ligne).

\begin{figure}\centering
    \resizebox*{0.7\linewidth}{!}{\begin{tikzpicture}[yscale=0.8,xscale=1.4]
    \tikzstyle{n} = [draw, circle]
    \def\sep{1}
    \node[n] at (0,0) (a) {A};
    \node[n] at (1,0) (b) {B};
    \node[n] at (2,1) (c) {C};
    \node[n] at (2,-1) (d) {D};
    \node[n] at (3,0) (f) {F};
    \node[n] at (4,0) (g) {G};
    \node[n] at (4,1) (e) {E};
    \draw[->] (a) -- (b);
    \draw[->] (b) -- (c);
    \draw[->] (b) -- (d);
    \draw[->] (c) -- (f);
    \draw[->] (d) -- (f);
    \draw[->] (f) -- (g);
    \draw[->] (c) -- (e);
\end{tikzpicture}%}%
    \caption{\label{fig:system-model:example-flow-graph} Example of a flow graph $\mathcal{G}(f)$ (Sec.~\ref{sec:system-model:flow-model}) of flow f with a source $A$, destinations $E$ and $G$ and redundant paths to reach $G$.}
\end{figure}
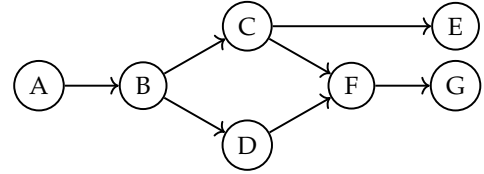
%$F$ contains a \acs{PEF}, it drops the dotted blue packet that contains the already-seen data unit $m$.
%Only the destination $G$ contains a \acs{PEF}, it receives the data unit twice and drops the dotted blue packet.}
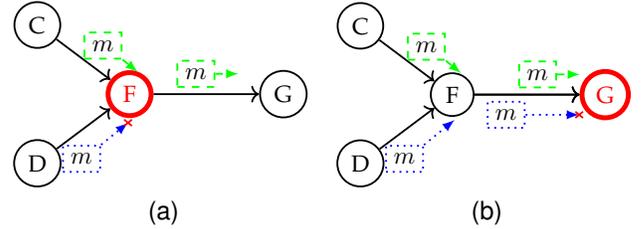
\begin{figure}\centering
    \def\lsize{0.48\linewidth}
    \subfloat[]{\resizebox*{0.45\linewidth}{!}{\begin{tikzpicture}[yscale=0.9,xscale=1.1]
    \tikzstyle{n} = [draw, circle]
    \tikzstyle{g} = [green, dashed]
    \tikzstyle{b} = [blue, dotted]
    \def\sep{1}
    \node[n] at (1.8,1.1) (c) {C};
    \node[n] at (1.8,-1.1) (d) {D};
    \node[n, red, line width=2pt] at (3,0) (f) {F};
    \node[n] at (5,0) (g) {G};
    \draw[->] (c) -- (f) node[pos=0.5, anchor=south west, draw, g, text=black] (mg) {$m$};
    \draw[-latex, g] (mg.south east) -- ++(0.2,-0.2);
    \draw[->] (d) -- (f) node[pos=0.1, anchor=north west, draw, b, text=black] (mb) {$m$};
    \draw[-latex, b] (mb.north east) -- ++(0.35,0.35) node[pos=1, anchor=center] (t) {};
    \draw[red] ([xshift=-0.05cm, yshift=-0.05cm] t.center) -- ([xshift=0.05cm, yshift=0.05cm] t.center);
    \draw[red] ([xshift=-0.05cm, yshift=0.05cm] t.center) -- ([xshift=0.05cm, yshift=-0.05cm] t.center);
    \draw[->] (f) -- (g) node[pos=0.4,anchor=center] (tt) {};
    \node[draw, g, anchor=south, text=black] at ([yshift=0.1cm] tt.center) (mmg) {$m$};
    \draw[-latex, g] (mmg.east) -- ++(0.3,0);
\end{tikzpicture}}\label{fig:system-model:example-flow-graph-different-pef-locations:f}}
    \hfil
    \subfloat[]{\resizebox*{0.45\linewidth}{!}{\begin{tikzpicture}[yscale=0.9,xscale=1.1]
    \tikzstyle{n} = [draw, circle]
    \def\sep{1}
    \tikzstyle{g} = [green, dashed]
    \tikzstyle{b} = [blue, dotted]
    \node[n] at (1.8,1.1) (c) {C};
    \node[n] at (1.8,-1.1) (d) {D};
    \node[n] at (3,0) (f) {F};
    \node[n, red, line width=2pt] at (5,0) (g) {G};
    \draw[->] (c) -- (f) node[pos=0.5, anchor=south west, draw, g, text=black] (mg) {$m$};
    \draw[-latex, g] (mg.south east) -- ++(0.2,-0.2);
    \draw[->] (d) -- (f) node[pos=0.1, anchor=north west, draw, b, text=black] (mb) {$m$};
    \draw[-latex, b] (mb.north east) -- ++(0.4,0.4) node[pos=1, anchor=center] (t) {};
    \draw[->] (f) -- (g) node[pos=0.6,anchor=center] (tt) {};
    \node[draw, g, text=black,anchor=south] at ([yshift=0.1cm] tt.center) (mmg) {$m$};
    \draw[-latex, g] (mmg.east) -- ++(0.3,0);
    \draw[->] (f) -- (g) node[pos=0.3,anchor=center] (ttt) {};
    \node[draw, b, text=black,anchor=north] at ([yshift=-0.1cm] ttt.center) (mmb) {$m$};
    \draw[-latex, b] (mmb.east) -- ++(0.7,0) node[pos=1, anchor=center] (u) {};
    \draw[red] ([xshift=-0.05cm, yshift=-0.05cm] u.center) -- ([xshift=0.05cm, yshift=0.05cm] u.center);
    \draw[red] ([xshift=-0.05cm, yshift=0.05cm] u.center) -- ([xshift=0.05cm, yshift=-0.05cm] u.center);
\end{tikzpicture}}\label{fig:system-model:example-flow-graph-different-pef-locations:g}}
    \caption{\label{fig:system-model:example-flow-graph-different-pef-locations} $F$ might receive the data unit $m$ twice (in the dashed green and the dotted blue packets). (a) $F$ contains a \acs{PEF}, it drops the dotted blue packet that contains the already-seen data unit $m$. (b) Only the destination $G$ contains a \acs{PEF}, it receives the data unit twice and drops the dotted blue packet.}
\end{figure}
\noindent\ul{\emph{Data unit} versus \emph{packet}:} At any time, a \emph{data unit} can be transported by several \emph{packets} located at several locations.
A \emph{flow} $f$ is a coherent sequence of data units that originate from a unique source and that follow a directed acyclic sub-graph of $\mathcal{G}$ to reach one or several destinations.
An example of such a flow graph, noted $\mathcal{G}(f)$, is shown in Fig.~\ref{fig:system-model:example-flow-graph}.

\noindent\ul{Flow constraints:}
We assume that each flow is constrained by a network-calculus arrival curve $\alpha_{f}^0$ at the output of its source application.
For an observation point $M$ (that can be a vertex or a function), we note $\alpha_{f,M}$ the arrival curve of $f$ at $M$.
For $n$ a vertex of $\mathcal{G}(f)$ [resp., for \texttt{FUN} a function], we note $\alpha_{f,n^*}$ [resp., $\alpha_{f,\texttt{FUN}^*}$] the arrival curve of $f$ at the output of vertex $n$ [resp., at the output of the function \texttt{FUN}].%
%A \emph{trajectory} is a description of the arrival of packets at different locations of the network.
%A trajectory is \emph{acceptable} if the traffic it describes meets the arrival- and service-curve constraints of the network model.

\noindent\ul{Position of PRF, PEF in a flow graph:}
When a vertex, such as $B$ in Fig.~\ref{fig:system-model:example-flow-graph}, has several children, we consider that an implicit \acs{PRF} has been installed on $B$ for the flow $f$: it sends a copy of each incoming data unit to each child.
%Incoming data units of $f$ are sent from $B$ within several packets, one per child (see Section~\ref{sec:system-model:function-model}).
When a vertex has several parents, such as $F$ in Fig.~\ref{fig:system-model:example-flow-graph}, this means that it can receive the same data unit several times, within different packets.
However, this does not necessarily mean that it implements a \acs{PEF}.
If a \ac{PEF} is present on such a vertex  (case of $F$ in Fig.~\ref{fig:system-model:example-flow-graph-different-pef-locations:f}), then it forwards only the first received packet that contains the data unit.
%Any subsequent packet that contains the same data unit is dropped.
If the vertex does not contain a \ac{PEF} (case of $F$ in Fig.~\ref{fig:system-model:example-flow-graph-different-pef-locations:g}), then it forwards all the packets, and might consequently forward the same data unit several times.
Packets that transport already-seen data units at a given location are called \emph{duplicates}.
%
% \begin{figure}\centering%
%     \def\lsize{0.48\linewidth}%
%     \begin{minipage}{\lsize}%
%         \resizebox*{\linewidth}{!}{\input{./figures/2021-06-not-supported-few-hops.tex}}%
%         \subcaption{\label{fig:system-model:assumption-examples:not-supported} This situation is not supported unless one of $F$, $G$ or $H$ implements a \acl{PEF}.}%
%     \end{minipage}\hspace{0.03\linewidth}\begin{minipage}{0.97\linewidth-\lsize}\centering%
%         \resizebox*{0.6\linewidth}{!}{\input{./figures/2021-06-supported-one-hop.tex}}%
%         \subcaption{\label{fig:system-model:assumption-examples:supported} This situation is supported: $F$ contains  a \acl{PEF}, followed by a \acl{PRF}.}%
%     \end{minipage}%
%     \caption{\label{fig:system-model:assumption-examples} Examples of supported and non-supported snippets of a flow graph.}%
% \end{figure}%

\noindent\ul{Assumption on the elimination of duplicates:}
When several paths of a flow merge, we assume that the duplicates are eliminated before the path can split again.
%The above assumption ensures that any \acf{PRF} sees each data unit at most once, which removes any ambiguity associated with the fact of duplicating the same data unit several times at the same location.
We believe that this assumption does not restrict the analysis of industrial systems.
Indeed, the main use-case for having a \ac{PEF} a few hops after the merge point (as in Fig.~\ref{fig:system-model:example-flow-graph-different-pef-locations:g}) is when the edge router does not support \ac{PEF}.
The edge router then forwards all the received packets to the end-system, that is responsible for removing the duplicates.
%Such a situation can be modeled irrespective of the above assumption. Examples of supported and non-supported situations under the above assumption are presented in Figure~\ref{fig:system-model:assumption-examples}.

\noindent\ul{EP-vertex:}
\emph{\ac{EP}} vertices of $\mathcal{G}(f)$ are the only vertices that can observe a data unit of $f$ more than once.
Formally, if a vertex, that does not contain a \ac{PEF} for $f$, has several parents in $\mathcal{G}(f)$ (vertex $F$ in Fig.~\ref{fig:system-model:example-flow-graph-different-pef-locations:g}), then we qualify it as an \ac{EP}-vertex of $\mathcal{G}(f)$.
An \ac{EP}-vertex can have at most one child in $\mathcal{G}(f)$. Additionally, a vertex that does not contain a \ac{PEF} for $f$ and is a child of an \ac{EP}-vertex is also an \ac{EP}-vertex of $\mathcal{G}(f)$.
%In Figure~\ref{fig:system-model:assumption-examples:not-supported}, $F$, $G$ and $H$ are \ac{EP}-vertices of $\mathcal{G}(f)$.
%They should have each at most one child, thus the situation of Figure~\ref{fig:system-model:assumption-examples:not-supported} is not supported in our system model.
%As the name suggests, any \ac{EP} vertex is susceptible to observe the same data unit several times.
%Note that any vertex that is not an \acs{EP}-vertex of $\mathcal{G}(f)$ can observe each data unit of $f$ at most once.
%By definition, the \acs{EP}-vertices of $\mathcal{G}(f)$ are the only vertices that can observe a data unit of $f$ more than once.

\noindent\ul{Diamond ancestor:} For any two vertices $a$ and $n$ in a flow graph $\mathcal{G}(f)$, we say that $a$ is a \emph{diamond ancestor} of $n$ in $\mathcal{G}(f)$ if $a$ is not an \ac{EP}-vertex in $\mathcal{G}(f)$ and all paths in $\mathcal{G}(f)$  from the source of $f$ to $n$ contain $a$.
In Fig.~\ref{fig:system-model:example-flow-graph}, $B$ is a diamond ancestor of $F$ because $B$ is not an \ac{EP}-vertex of $\mathcal{G}(f)$ and any paths from $A$ (source of $f$) to $F$ contain $B$.
%$D$ is not a diamond ancestor of $F$ because path $A\rightarrow B \rightarrow C \rightarrow F$ does not contain $D$.
%This notion of \emph{diamond ancestor} removes the need to identify a specific \ac{PRF} for each \ac{PEF}.
%For example, a network can contain several duplication points, but only one elimination point.
%Last, for two vertices $a$ and $n$, we denote by $\mathcal{P}^{\mathcal{G}(f)}_{a,n}$ the set of all possible paths from $a$ to $n$ in $\mathcal{G}(f)$. 
%(not related to diamond ancestor and well explaine in Table notations)

\noindent\ul{Lost data unit:}
We say that a data-unit $m$ of a flow $f$ is \emph{lost for} a vertex $n$ [resp., for a function $\texttt{FUN}$] if the vertex $n$ in $\mathcal{G}(f)$ [resp., the function $\texttt{FUN}$] never observes the data unit $m$ in any packet.
In Fig.~\ref{fig:system-model:example-flow-graph}, if the link $B\rightarrow C$ fails, then a data unit $m$ is lost for $E$ but not necessarily for $G$.
The main purpose of \ac{PREF} is to reduce the probability of losing a data unit for any destinations of the flow.

\noindent\ul{Worst-case latency:}
Let $f$ be a flow and $d$ one of the destinations of $f$; the \ac{ETE} upper [resp., lower] latency bound of $f$ for $d$ is an upper bound [resp., lower bound] on the maximum [resp., minimum] delay that each data unit $m$ of $f$ takes to reach $d$, assuming that $m$ is not lost for $d$.
%The jitter bound is defined as the difference of the two previous notions. % 

\vspace{\myvspacebeforesubsec}
\subsection{Device Model}\label{sec:system-model:device-model}
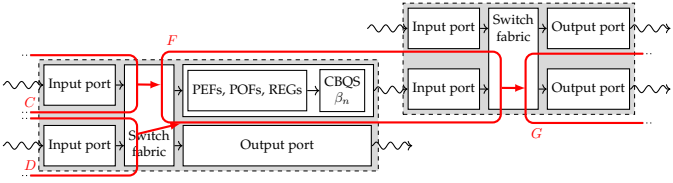
\begin{figure}\centering
    \resizebox*{\linewidth}{!}{	\begin{tikzpicture}[decoration=snake]
	\pgfdeclarelayer{bg}    % declare background layer
	\pgfsetlayers{bg,main}  % set the order of the layers (main is the standard layer)
	\tikzstyle{mybox} = [draw, minimum height=1cm, fill=white]
	\tikzstyle{r} = [red, line width=1.5pt, rounded corners=0.2cm]

	%Switch fabric
	\node[mybox, text width=1cm, minimum height=2.5cm] (SF) at (0,0){};
	\node[anchor=south] at (SF.south) {\makecell{Switch\\fabric}}; 
	%Port
	\node[mybox, anchor=north west] (OP) at ([xshift=0.2cm]SF.north east) {

		\begin{tikzpicture}[decoration=snake]
    \tikzstyle{n} = [draw, minimum height=1cm, minimum width=1cm, anchor=west]
    \tikzstyle{xx} = [xshift=0.3cm]

    \node[n] at (0,0) (reg1) {\acsp{PEF}, \acsp{POF}, \acsp{REG}};
    \node[n] at ([xx] reg1.east) (cbqs) {\makecell{\acs{CBQS}\\$\beta_n$}};

    %\draw[->, decorate] (cbqs.east) -- ++(1cm,0);
    %\draw[<-] (reg1.west) -- ++(-1cm,0);
    \draw[->] (reg1.east) -- (cbqs.west);
\end{tikzpicture}
		
	};
	\draw[->] (OP.west)++(-0.2cm,0) -- (OP.west);
	%Input i
	\node[mybox, anchor=north east] (INPUTI) at ([xshift=-0.2cm]SF.north west) {Input port};
	\draw[->, decorate] (INPUTI.west)++(-1cm,0) -- (INPUTI.west);
	\draw[->] (INPUTI.east) -- ++(0.2cm,0);
	%Input j
	\node[mybox, anchor=south east] (INPUTJ) at ([xshift=-0.2cm]SF.south west) {Input port};
	\draw[->, decorate] (INPUTJ.west)++(-1cm,0) -- (INPUTJ.west);
	\draw[->] (INPUTJ.east) -- ++(0.2cm,0);
	%Port m
	\node[mybox, anchor=south west, minimum width=4.65cm] (OPK) at ([xshift=0.2cm]SF.south east) {Output port};
	\draw[->] (OPK.west)++(-0.2cm,0) -- (OPK.west);
	\draw[->, decorate] (OPK.east) -- ++(1cm,0);

	%Switch fabric
	\node[mybox, text width=1cm, minimum height=2.5cm] (SF2) at (9,1.4){};
	\node[anchor=north] at (SF2.north) {\makecell{Switch\\fabric}}; 
	%Port
	\node[mybox, anchor=north west] (OP2) at ([xshift=0.2cm]SF2.north east) {Output port};
	\draw[->] (OP2.west)++(-0.2cm,0) -- (OP2.west);
	\draw[->, decorate] (OP2.east) -- ++(1cm,0);
	%Input i
	\node[mybox, anchor=north east] (INPUTI2) at ([xshift=-0.2cm]SF2.north west) {Input port};
	\draw[->, decorate] (INPUTI2.west)++(-1cm,0) -- (INPUTI2.west);
	\draw[->] (INPUTI2.east) -- ++(0.2cm,0);
	%Input j
	\node[mybox, anchor=south east] (INPUTJ2) at ([xshift=-0.2cm]SF2.south west) {Input port};
	%\draw[->] (INPUTJ2.west)++(-1cm,0) -- (INPUTJ2.west);
	\draw[->] (INPUTJ2.east) -- ++(0.2cm,0);
	\draw[->, decorate] (OP.east) -- (INPUTJ2.west);
	%Port m
	\node[mybox, anchor=south west] (OPK2) at ([xshift=0.2cm]SF2.south east) {Output port};
	\draw[->] (OPK2.west)++(-0.2cm,0) -- (OPK2.west);
	\draw[->, decorate] (OPK2.east) -- ++(1cm,0);

	\node[fit={([xshift=-0.2cm, yshift=0.2cm] SF.north east) ([xshift=-0.2cm] SF.east) ([yshift=-0.2cm, xshift=0.2cm] SF2.south west) ([xshift=0.2cm] SF2.west) ([xshift=-0.2cm, yshift=0.2cm] SF.north east)}, draw, r] (nodeF) {};
	\node[anchor=south west, r] at (nodeF.north west) {$F$};

	\node[fit={([xshift=0.2cm] SF.north west) ([xshift=0.2cm, yshift=0.2cm] SF.west) ([xshift=-0.2cm] INPUTI.south west) ([xshift=-0.2cm, yshift=0.1cm] INPUTI.north west)}] (nodeC) {};
	\node[fit={([xshift=0.2cm] SF.south west) ([xshift=0.2cm, yshift=-0.2cm] SF.west) ([xshift=-0.2cm, yshift=-0.1cm] INPUTJ.south west) ([xshift=-0.2cm] INPUTJ.north west)}] (nodeD) {};
	\draw[r] (nodeC.north west) -- (nodeC.north east) -- (nodeC.south east) -- (nodeC.south west);
	\draw[r] (nodeD.north west) -- (nodeD.north east) -- (nodeD.south east) -- (nodeD.south west);
	\draw[dotted] ([xshift=-0.2cm] nodeC.north west) -- (nodeC.north west);
	\draw[dotted] ([xshift=-0.2cm] nodeC.south west) -- (nodeC.south west);
	\draw[dotted] ([xshift=-0.2cm] nodeD.north west) -- (nodeD.north west);
	\draw[dotted] ([xshift=-0.2cm] nodeD.south west) -- (nodeD.south west);
	\node[r, anchor=south] at (nodeC.south west) {$C$};
	\node[r, anchor=south] at (nodeD.south west) {$D$};
	\draw[-latex, r] (nodeC) -- (nodeF);
	\draw[-latex, r] (nodeD) -- (nodeF);

	\node[fit={([xshift=-0.2cm, yshift=-0.2cm] SF2.south east) ([xshift=-0.2cm] SF2.east) ([xshift=0.2cm] OPK2.north east) ([xshift=0.2cm] OPK2.south east)}] (nodeG) {};
	\draw[r] (nodeG.north east) -- (nodeG.north west) -- (nodeG.south west) -- (nodeG.south east);
	\draw[dotted] ([xshift=0.2cm] nodeG.north east) -- (nodeG.north east);
	\draw[dotted] ([xshift=0.2cm] nodeG.south east) -- (nodeG.south east);
	\node[anchor=north west, r] at (nodeG.south west) {$G$};
	\draw[-latex, r] (nodeF) -- (nodeG);

	\begin{pgfonlayer}{bg}    % select the background layer
        \node[draw, dashed, fit={(SF) (INPUTI) (INPUTJ) (OP) (OPK)}, fill=gray!30] {}; 
		\node[draw, dashed, fit={(SF2) (INPUTI2) (INPUTJ2) (OP2) (OPK2)}, fill=gray!30] {}; 
    \end{pgfonlayer}%
	\end{tikzpicture}%}%
    \caption{\label{fig:system-model:device-top-level-model} Model of two devices in the network (in gray dashed boxes) and their relation with the flow graph (vertices in thick red).} %The output port within vertex $F$ also details the model for any output port.}% (tu vas détailler ça dans fig.10)
\end{figure}
\noindent\ul{Device:} The model for each \emph{device} in the network is illustrated in Fig.~\ref{fig:system-model:device-top-level-model}: it consists of \emph{input ports}, \emph{output ports}, and a \emph{switching fabric}.
The vertices in the network's graph $\mathcal{G}$, such as vertex $F$ in thick red in Fig.~\ref{fig:system-model:device-top-level-model}, are made of the output port on one device, followed by the input port on the subsequent device. The devices are connected through \emph{transmission links} that can lose packets.

\noindent\ul{Input port:} We assume that each \emph{input port} contains a store-and-forward step that we model as a network-calculus packetizer
\cite[\S 1.7.2]{leboudecNetworkCalculusTheory2001},
\cite[Thm.~1]{thomasCyclicDependenciesRegulators2019}. Any additional processing delay (\emph{e.g.}, decryption, CRC check, etc.) is assumed to be bounded between known values and is modeled using the network-calculus bounded-delay element~\cite[Prop.~1.3.3]{leboudecNetworkCalculusTheory2001}.

\noindent\ul{Switching fabric:} As illustrated in Fig.~\ref{fig:system-model:device-top-level-model}, the \emph{switching fabric} between vertices $C$ and $F$ forwards packets of flow $f$ from the input port within $C$ to the output port within $F$ if and only if $C\rightarrow F$ is an edge in $\mathcal{G}(f)$.
The switching fabric implements the \ac{PRF}.
When a packet is forwarded from one input port to two or more output ports, we say that the data unit contained in the incoming packet is replicated and transported by several new packets (one per recipient output port).
Any delay within the switching fabric is assumed to be bounded and is modeled by using the network-calculus bounded-delay element~\cite[Prop.~1.3.3]{leboudecNetworkCalculusTheory2001}.

%\begin{figure}\centering
%    \resizebox*{0.7\linewidth}{!}{\input{./figures/2021-06-sysmodel-an-output-port.tex}}
%    \caption{\label{fig:system-model:output-port} Model of any output port in the network. The output port $n$ is made of a \acf{CBQS}, for which a service curve $\beta_n$ for the class-of-interest is assumed to be known. It can be preceded by a set of optional functions: \acfp{PEF}, \acfp{POF} and/or \acfp{REG}.}
%\end{figure}
\noindent\ul{Output port:} We model each \emph{output port} as in Fig.~\ref{fig:system-model:device-top-level-model}.
It contains a \ac{FIFO}-per-class \acf{CBQS}. We assume that, for each vertex $n$, we know a network-caclulus service curve $\beta_n$ that the \ac{CBQS} offers in a \ac{FIFO} manner to the class of interest. 
The service curve can be obtained through an analysis of the scheduling policy \cite{maile2020network} and includes any additional technological latency.
The \ac{CBQS} can be preceded by a set of optional functions.

\noindent\ul{Packetized streams:} Within a device, between the output of the input port (that contains the packetizer) and the input of the \ac{CBQS}, the stream of bits for each flow is packetized.

\vspace{\myvspacebeforesubsec}
\subsection{Model for the Functions}\label{sec:system-model:function-model}

\noindent\ul{PEF}: For a flow $f$ crossing $n$, the output port in $n$ can contain a \emph{\acf{PEF}} for flow $f$, noted $\texttt{PEF}_n(f)$.\label{sec:system-model:pef-model}
For each incoming packet of $f$, we assume that $\texttt{PEF}_n(f)$ determines without any delay if the data unit contained in the packet has already been observed by $\texttt{PEF}_n(f)$.
If so, the packet is identified as a duplicate and is discarded.
For the stream of packets that contains never-seen data units of $f$, the $\texttt{PEF}_n(f)$ is transparent: \ac{FIFO} and without any delay.
%Without loss of generality, we assume that the destination end-systems remove the duplicates before delivering the data units to the recipient applications.
%
\begin{figure}\centering
    \resizebox*{\linewidth}{!}{\begin{tikzpicture}
    \def\mh{2cm}
    \def\ml{2cm}
    \def\mmar{0.2cm}
    \def\hdecal{0.2cm}
    \node[minimum width=1.6cm, minimum height=1.6cm, diamond, draw] at (0,0) (sf) {};
    \node[anchor=south] at (sf.north) (cond) {\makecell{Was $m_{-1}$ already forwarded ?}};
    \draw[->] ([xshift=-3cm] sf.west) -- (sf.west) node[pos=0.8,anchor=south] {$m$} node[pos=0.05,anchor=south west] {$m\in\mathcal{F}$};
    \node[database] at (3,-2) (db) {};
    \node[anchor=north] at (db.south) (dbname) {\makecell{Storage\\(non-FIFO)}};
    \node[draw] at (5,0) (fw) {Forward $m$};
    \draw[->] (sf.south) |- (db.west) node[pos=0, yshift=0.2cm, anchor=south] {No} node[pos=0.5, anchor=north] {\makecell{Store until:\\$m_{-1}$ is forwarded OR \\$T$ seconds have elapsed}};
    \node[draw, fit={(sf) (t) (cond) (db) (dbname) (fw)}] (bg) {};
    \node[anchor=north east] at (bg.north east) {$\texttt{POF}_n(\mathcal{F},o)$};
    \draw[->] (sf.east) -- (fw.west) node[pos=0, anchor=east, xshift=-0.2cm] {Yes} node[pos=0.8, above] {$m$};
    \draw[->] (db.north) |- (fw.west);
    \draw[->] (fw.east) -- ++(1.5cm,0);
    \draw[-latex, dotted] (fw.south) |- ([xshift=0.3cm] db.north) node [pos=0.56, anchor=center] {\makecell{Release $m_{+1}$\\without delay}};
\end{tikzpicture}%}%
    \caption{\label{fig:system-model:pof-model} Functional model of the \acl{POF} $\texttt{POF}_n(\mathcal{F},o)$. For a data unit $m$, $m_{-1}$ [resp., $m_{+1}$] refers to the data unit of the aggregate $\mathcal{F}$ that exited the reference $o$ just before [resp., just after] $m$.}
\end{figure}
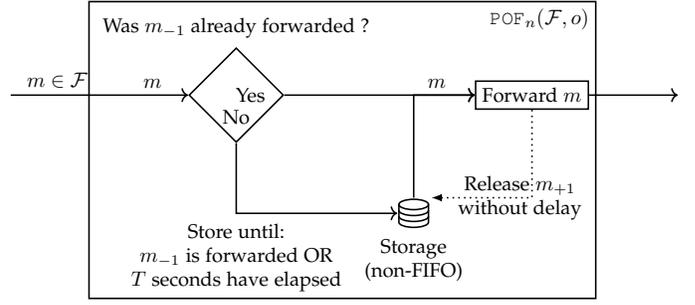

\noindent\ul{POF:} Consider a set of flows $\mathcal{F}$ crossing $n$ such that for each flow $f\in\mathcal{F}$, $o$ is a diamond ancestor of $n$ in $\mathcal{G}(f)$. 
The output port in $n$ can contain a \emph{\acf{POF}} for the aggregate $\mathcal{F}$ with reference $o$, noted $\texttt{POF}_n(\mathcal{F},o)$.
We assume that $\texttt{POF}_n(\mathcal{F},o)$ has the knowledge of the order in which the data units belonging to the aggregate $\mathcal{F}$ exited the reference $o$.
$\texttt{POF}_n(\mathcal{F},o)$ then enforces the same order at its own output, by delaying the packets that are out of order.

However, a data unit $m$ cannot be delayed by $\texttt{POF}_n(\mathcal{F},o)$ for a duration longer than the \ac{POF}'s timeout parameter $T$: After being stored for a duration $T$, $m$ is immediately released, even if the previously-expected data unit has not been received so far.
The timeout allows the \ac{POF} to recover from losses without blocking the following data units forever \cite{mohammadpourPacketReorderingTimeSensitive2020, vargaDeterministicNetworkingDetNet2021}.
We assume that the timeout value of every \ac{POF} conforms with the recommendations of \cite[\S IV.B]{mohammadpourPacketReorderingTimeSensitive2020}.
As a consequence, the timeout cannot only be triggered when one of the data units $m$ of $\mathcal{F}$ is \emph{lost for} the \ac{POF}.

The model of \ac{POF} is illustrated in Fig.~\ref{fig:system-model:pof-model}. A possible implementation is given in \cite[\S 3.4]{mohammadpourPacketReorderingTimeSensitive2020} and \cite{vargaDeterministicNetworkingDetNet2021}.
%
%but releasing them as soon as doing so does not modify the order they had at the output of the vertex $o$.
%
%Note that if $\mathcal{F} = \{f,g\}$, then a packet of $f$ might be delayed by a packet of $g$, even if its individual order is preserved between $o$ and~$n$.
A \ac{POF} cannot be placed at an \ac{EP}-vertex: we always assume that the duplicates are eliminated before the flow is handed to the \ac{POF}, which is consistent with the assumptions in \cite[\S 4.1]{vargaDeterministicNetworkingDetNet2021}.
%Additionally, we assume that the sequence of incoming data units cannot be incomplete, \emph{i.e.}, for each data unit of the aggregate $m\in \mathcal{F}$, $m$ is not lost for the \ac{POF}.
%Without the above assumption, the \ac{POF} algorithm might enter a recovery situation~\cite{vargaDeterministicNetworkingDetNet2021}, that has been analyzed in \cite{mohammadpourPacketReorderingTimeSensitive2020} and is out of the scope of this paper.
%Even if \cite[\S 3.2.2.2]{finnDeterministicNetworkingArchitecture2019} states that the ``\emph{order in which a \ac{DetNet} node applies \ac{PEF} [and] \ac{POF} [\ldots] to a \ac{DetNet} flow is left open for implementations}'', this is contradicted by \cite[\S 3.2.2.1.]{finnDeterministicNetworkingArchitecture2019} and \cite[\S 7.1.1.m]{IEEEStandardLocal2017a} where packet mis-ordering is seen as a side-effect of the \ac{PEF}, for which a \ac{POF} is a remedy when placed after the \ac{PEF}.
%As a consequence of the two previous assumptions, each data unit of the aggregate is seen exactly once by the \ac{POF}.
%
\begin{figure}\centering
    \resizebox*{0.7\linewidth}{!}{% !TeX rootfile = article.tex
\begin{tikzpicture} %
\tikzstyle{mn} = [draw, minimum height=1.5cm]
\tikzstyle{ns} = [draw, minimum height=0.75cm, anchor=west]
\tikzstyle{fil} = [draw, pattern=north east lines, text width=0.2cm]
\tikzstyle{hfil} = [fil, minimum height=0.75cm]
\tikzstyle{cor} = [xshift=-\pgflinewidth]

%PFR
\node[ns] at (6,-0.05) (tt) {};
\node[ns] at ([cor]tt.east) (p) {};
\node[ns] at ([cor]p.east) (p) {};
\node[ns] at ([cor]p.east) (p) {};
\node[ns] at ([cor]p.east) (p) {};
\node[anchor=north] at (p.south) (fifoex) {FIFO};
\node[ns] at ([cor]p.east) (p) {};
\node[ns] at ([cor]p.east) (p) {};
\node[ns] at ([cor]p.east) (p) {};
\draw let \p1=(p.north east) in node at (\x1,0) (epfr) {};
%\node (epfr) at (p.north east) {};
\node[hfil, cor, anchor=west] at ([yshift=-0.05cm]epfr.center) (fpfr2) {};

\draw[->] ([xshift=-2cm] tt.west) -- (tt.west) node[pos=0, anchor=south west] {$m\in \mathcal{F}$}; 
\draw[->] (fpfr2.east) -- ++(3cm,0);

\node[anchor=north west] at ([xshift=0.5cm, yshift=0.2cm] fpfr2.south east)(sigma) {$\{\sigma_{n,f}\}_{f\in\mathcal{F}}$};
\draw[-latex] (sigma.west) -- ++(-0.2cm,0) -- (fpfr2);
\node[anchor=south west] at ([xshift=0.5cm, yshift=-0.2cm] fpfr2.north east) (func) {$\texttt{REG}_n(\mathcal{F},o)$};
\node[draw, fit={([xshift=-0.5cm] tt.west) (fifoex) (fpfr2) (sigma) (func)}] {};
\end{tikzpicture}%}%
    \caption{\label{fig:system-model:reg-model} Model of a \acl{REG} $\texttt{REG}_{n}(\mathcal{F},o)$, with shaping curves $\{\sigma_{n,f}\}_f$.}
    %The regulator only looks at the head-of-line packet.
\end{figure}

\noindent\ul{REG:} Consider a set of flows $\mathcal{F}$ crossing $n$ such that, for each flow $f\in\mathcal{F}$, $o$ is a diamond ancestor of $n$ in $\mathcal{G}(f)$. The output port in $n$ can contain a \emph{\acf{REG}} for the aggregate $\mathcal{F}$ with reference $o$, noted $\texttt{REG}_n(\mathcal{F},o)$.
The regulator is configured with a set of shaping curves, one per flow $f$ of the aggregate $\mathcal{F}$, which we note $\{\sigma_{n,f}\}_{f\in\mathcal{F}}$.
For each $f \in \mathcal{F}$, $\sigma_{n,f}$ must be concave and must be an arrival curve of $f$ at the output of the reference vertex $o$.
The regulator then puts all the packets of the aggregate $\mathcal{F}$ in a \ac{FIFO} queue (Fig.~\ref{fig:system-model:reg-model}) and examines only the head-of-line packet.
It releases the head-of-line packet as soon as doing so does not violate the shaping curve $\sigma_{n,f}$, where $f$ is the flow of the head-of-line packet.
%As the packets of any flow $f$ in the aggregate $\mathcal{F}$ can enter $n$ through several input ports, a regulator can be limited to the packets of the aggregate that enter $n$ through a specific input port.
%We note $\texttt{REG}_n(\mathcal{F},o,p)$ the regulator that processes only the stream of packets of the aggregate $\mathcal{F}$ that enters $n$ from the edge $p\rightarrow n$, where, for each $f\in\mathcal{F}$, $p$ is a parent vertex of $n$ in $\mathcal{G}(f)$.
When the regulator processes a single flow, $\mathcal{F} = \{f\}$, we model it as a \emph{\acf{PFR}} \cite[\S 1.7.4]{leboudecNetworkCalculusTheory2001}.
When $\mathcal{F}$ contains two or more flows, we model it as an \emph{\acf{IR}} \cite{leboudecTheoryTrafficRegulators2018}.
%Similarly to the \acf{POF}, note that if $\mathcal{F} = \{f,g\}$, then a packet of $f$ might be delayed by a packet of $g$, even if releasing the packet of $f$ would not have violated its own shaping curve.

We consider that each output port contains a forwarding pipeline before the \ac{CBQS} with the following optional functions, in this order:
%$\acsp{REG}\rightarrow \acsp{PEF} \rightarrow \acsp{POF} \rightarrow \acsp{REG}$.
$\acsp{PEF} \rightarrow \acsp{POF} \rightarrow \acsp{REG}$.
%Each flow can appear in at most one function of the same type (a flow cannot be twice reordered, nor twice regulated within a same vertex).

\begin{figure}[b]\centering
    \resizebox*{\linewidth}{!}{\begin{tikzpicture}[> = stealth']
    \tikzstyle{mn} = [draw, minimum width=1.8cm, minimum height=0.6cm]
    \tikzstyle{n} = [draw, minimum height=1.2cm]
    \tikzstyle{nn} = [n, anchor=east]
    \tikzstyle{xx} = [xshift=-0.5cm]
    \tikzstyle{yy} = [yshift=0.1cm]
    \tikzstyle{myy} = [xshift=-0.1cm]
    \node[n] at (0,0) (cbqs) {\acs{CBQS}};
    \node[nn] at ([xx] cbqs.west) (reg1) {\makecell{$\texttt{REG}_F$\\$(\{f,g\},B)$}};
    \node[nn] at ([xx] reg1.west) (pof1) {\makecell{$\texttt{POF}_F$\\$(\{f,g\},B)$}};
    \node[mn, anchor=north east] at ([xx] pof1.north west) (pef1) {$\texttt{PEF}_F(f)$};
    \node[mn, anchor=south east] at ([xx] pof1.south west) (pef2) {$\texttt{PEF}_F(g)$};
    \draw[->] (pef1) -- (pof1);
    \draw[->] (pef2) -- (pof1);
    \draw[->] (pof1) -- (reg1);
    \draw[->] (reg1) -- (cbqs);
\end{tikzpicture}%}%
    \caption{\label{fig:system-model:example-of-functions} Example of an organization of the optional functions within an output port. After their respective \ac{PEF}, the two flows share the same \ac{POF} and the same \acf{REG}.} %Note that a packet of $f$ might be delayed by packet of $g$ in both the \ac{POF} and the \ac{PEF}, even if the order of its own packets is maintained and even if it already complies with the shaping curve $\sigma_{n,f}$ within the \ac{REG}.}
\end{figure}
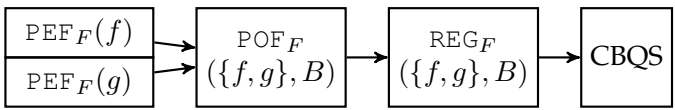
\begin{figure}\centering
    \resizebox*{\linewidth}{!}{\begin{tikzpicture}[> = stealth']
    \tikzstyle{mn} = [draw, minimum width=2.5cm, minimum height=0.6cm]
    \tikzstyle{mmn} = [draw, minimum width=1.7cm, minimum height=0.6cm]
    \tikzstyle{n} = [draw, minimum height=1.3cm]
    \tikzstyle{nn} = [n, anchor=east]
    \tikzstyle{xx} = [xshift=-0.5cm]
    \tikzstyle{yy} = [yshift=0.1cm]
    \tikzstyle{myy} = [xshift=-0.1cm]
    \node[n] at (0,0) (cbqs) {\acs{CBQS}};
    \node[mn, anchor=north east] at ([xx] cbqs.north west) (reg1) {$\texttt{REG}_F(\{f\},B)$};
    \node[mn, anchor=north east] at ([xx] reg1.north west) (pof1) {$\texttt{POF}_F(\{f\},B)$};
    \node[mmn, anchor=north east] at ([xx] pof1.north west) (pef1) {$\texttt{PEF}_F(f)$};
    \node[mn, anchor=south east] at ([xx] cbqs.south west) (reg2) {$\texttt{REG}_F(\{g\},B)$};
    \node[mn, anchor=south east] at ([xx] reg2.south west) (pof2) {$\texttt{POF}_F(\{g\},B)$};
    \node[mmn, anchor=south east] at ([xx] pof2.south west) (pef2) {$\texttt{PEF}_F(g)$};
    \draw[->] (pef1) -- (pof1);
    \draw[->] (pef2) -- (pof2);
    \draw[->] (pof1) -- (reg1);
    \draw[->] (pof2) -- (reg2);
    \draw[->] (reg1) -- (cbqs);
    \draw[->] (reg2) -- (cbqs);
\end{tikzpicture}%}%
    \caption{\label{fig:system-model:example-of-functions-a} Example of an the organization of the optional functions within an output port with one \ac{POF} and one \ac{REG} per flow.} %From the hardware perspective, this configuration could have a higher cost than in Fig.~\ref{fig:system-model:example-of-functions}. Note that every flow is independent from the other one, before reaching the \ac{CBQS}.}
\end{figure}
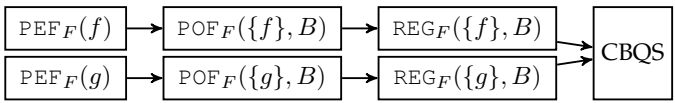
%\begin{figure}\centering%
%    \resizebox*{\linewidth}{!}{\input{./figures/2021-06-example-of-functions-b.tex}}%
%    \caption{\label{fig:system-model:example-of-functions-b} A last example of organization. This time, there is one \acf{IR} per input port, and each of them shapes both flows $f$ and $g$, but only the packets that enter from the input port connected to $C$ or $D$. Only after are the duplicates eliminated using a \acf{PEF} per flow.}%
%\end{figure}%
\textbf{Example:} Consider two flows $f,g$, both with the same flow graph of Fig.~\ref{fig:system-model:example-flow-graph} and a \ac{PEF} at $F$.
The output port $F$ processes streams of packets coming from both parents $C$ and $D$.
A first possible example of the organization of the functions before the \ac{CBQS} within vertex $F$ is shown in Fig.~\ref{fig:system-model:example-of-functions}.
Each flow is first processed by its respective \ac{PEF}, then both duplicate-free flows are reordered as an aggregate by using $\texttt{POF}_F(\{f,g\},B)$.
This function enforces the same order for the aggregate as the one at the output of $B$, \emph{i.e.}, before the redundant section.
Last, they are both processed by the same interleaved regulator that enforces two different contracts for $f$ and for $g$, but that keeps the aggregate $\{f,g\}$ \ac{FIFO}.
%The two shaping curves $\sigma_{F,f}$ and $\sigma_{F,g}$ can differ, but each must be an arrival curve of their respective flow at the output of $B$.
A variant of this situation is shown in Fig.~\ref{fig:system-model:example-of-functions-a}.
After elimination, each flow is now independent from the other one, where the \acp{POF} enforce per-flow order and the two \acsp{REG} are \acfp{PFR}.
%
%reordered and shaped using the same reference points and contracts of Fig.~\ref{fig:system-model:example-of-functions}.
%However, each flow is now independent from the other one: the \acp{POF} enforce per-flow order and the two \acsp{REG} are \acfp{PFR}.
This situation is different from Fig.~\ref{fig:system-model:example-of-functions} because a packet of $f$ cannot be delayed by a packet of $g$. In addition, this configuration could have a higher hardware cost than in Fig.~\ref{fig:system-model:example-of-functions}.
%Finally, Figure~\ref{fig:system-model:example-of-functions-b} shows an example situation where the flows are regulated before being eliminated. Here, for the parent $C$ [resp., for the parent $D$], one interleaved regulator processes the stream of packets of $\{f,g\}$ coming from $C$ [resp., $D$], and only afterwards are the duplicates eliminated.

\noindent\ul{FIFO assumptions:} With the exception of \ac{POF}, each network element is assumed to be \ac{FIFO}
%each function, each \ac{CBQS}, each switching fabric, each input port and each link is assumed to be \ac{FIFO}
for the class of interest.
\noindent\ul{Assumptions on losses:}
With the exception of \ac{PEF}, each function, each \ac{CBQS}, each switching fabric, each input port and each internal connection within a device is assumed to be lossless (does not lose any packets).
Packets can be lost on the transmission links between devices.
This model covers various failures, including random media losses, the shutdown of an output port (equivalent to its out-going link losing all packets) and the shutdown of an input port (equivalent to its in-going link losing all packets).

As packets can be lost on transmission links, the network is not assumed to be lossless.
Of course, the latency bounds computed in this paper are only valid for the non-lost data units (the data units for which at least one replicate reaches the destination), but these bounds remain valid even if some other data units are lost in the network.

% Of course, the latency bounds that we compute are only valid for the non-lost data units but they remain valid even if some other data units are lost.

%Packet losses can follow any stochastic model, but they are constrained by the \ac{POF}-related assumption that the data units are not lost for \acp{POF}. 

%moved to device model as it does not depend on the functions
%\noindent\ul{Packetized streams:} Within a device, between the output of the input port (that contains the packetizer) and the input of the \ac{CBQS}, the stream of bits for each flow is packetized.

	% !TeX spellcheck = en_US
% !TeX root = main
\vspace{\myvspacebeforesec}
\section{Toolbox for the Deterministic Analysis of Packet Replication and Elimination}\label{sec:toolbox}

Network calculus \cite{leboudecNetworkCalculusTheory2001} is a mathematical framework for computing deterministic latency bounds.
It relies on the concepts of arrival and service curves.
An arrival curve $\alpha_{f,M}$ at a specific observation point $M$ and for a specific flow~$f$ is a constraint on the maximum amount of traffic of flow~$f$ that can cross $M$ over any periods of time $[s,t]$, which is equivalent to: $\forall s\le t, R(t) - R(s) \le \alpha(t-s)$, with $R(t)$ the amount of data of flow $f$ crossing $M$ between $0$ and $t$. 
%Arrival curves that correspond to the source traffic specifications of \ac{IEEE} \ac{TSN} are given in \cite{maile2020network}.
Also, a service curve $\beta_S$ of a specific network element $S$ is a constraint on the minimum amount of traffic that the network element must serve.
%s over a period of time.
%Service curves that correspond to \ac{IEEE} \ac{TSN} network elements can be found in \cite{maile2020network,zhao2020latency}.
Network calculus gives delay and backlog bounds in network elements given the arrival-curve and service-curve constraints \cite{leboudecNetworkCalculusTheory2001,mohammadpourImprovedDelayBound2019a}.

In this section, we compute an upper bound of the burstiness increase caused by \ac{PREF} by computing an arrival curve $\alpha_{f,\text{PEF}^*}$ for the flow $f$ at the output of the \acl{PEF}.
The arrival curve $\alpha_{f,\text{PEF}^*}$ can then be combined with the service curves of the downstream elements (that can be found in \cite{maile2020network,zhao2020latency}) to compute a delay bound in the last section of Fig.~\ref{fig:prob-statement:three-sections}.
This delay bound is useful for validating the system's latency requirements.

We also quantify the amount of mis-ordering introduced by the redundancy.
This bound can be compared to the application's requirement to decide if reordering is required before delivering the data to the application.
If so, the same bound can be combined with the results of \cite{mohammadpourPacketReorderingTimeSensitive2020} to configure the \acf{POF} that corrects this mis-ordering.
The consequences of such reordering on the flow's delay and burstiness are also analyzed.

\vspace{\myvspacebeforesubsec}
\subsection{Output Arrival Curve of a \acs{PEF}}\label{sec:toolbox:pef-oac}
%
%\begin{figure}\centering
%    \resizebox*{\linewidth}{!}{\input{./figures/2021-06-f-enters-pef.tex}}
%    \caption{\label{fig:toolbox:f-enters-pef} Notations of Theorem~\ref{thm:toolbox:pef-oac}. $f$ enters a \acs{PEF} with arrival curve $\alpha_f$. Theorem~\ref{thm:toolbox:pef-oac} provides the output arrival curve $\alpha_f^*$.}
%\end{figure}
\begin{theorem}[Output arrival curve of a \ac{PEF}]\label{thm:toolbox:pef-oac}
    Let $\texttt{PEF}_n(f)$ be a \acl{PEF} for flow $f$ at the output port of vertex $n\in vertices(\mathcal{G}(f))$.
    Assume that $\alpha_{f,\text{PEF}^\text{in}}$ is an arrival curve of $f$ at the input of $\texttt{PEF}_n(f)$.
    Then
    \begin{enumerate}[1/]

        \item $\alpha_{f,\text{PEF}^\text{in}}$ is an arrival curve for the flow at the output of the \ac{PEF}.

        \item For every diamond ancestor $a$ of $n$ in $\mathcal{G}(f)$, assume that $\alpha_{f,a^*}$ is an arrival curve for $f$ at the output of $a$ and denote by $d_f^{a\rightarrow n}$ [resp., $D_f^{a \rightarrow n}$] a minimum [resp., maximum] delay bound for $f$ between the output of $a$ and the input of $\texttt{PEF}_n(f)$, along any possible paths $a\rightarrow n$ within the graph $\mathcal{G}(f)$.
        Then
        \begin{equation}\label{eq:thm:pef-oac:ancestor}\alpha_{f}^{a\rightarrow n} \triangleq   \alpha_{f,a^*} \oslash \delta_{(D_f^{a \rightarrow n}) - (d_f^{a\rightarrow n})}\end{equation}
        is an arrival curve for $f$ at the output of the \ac{PEF}.
    \end{enumerate}
    Furthermore, the min-plus convolution of all above arrival curves
    \begin{equation}
        \alpha_{f,\text{PEF}^*} = \alpha_{f,\text{PEF}^{\text{in}}} \otimes \alpha_{f}^{a_1\rightarrow n}  \otimes \alpha_{f}^{a_2\rightarrow n}  \otimes \alpha_{f}^{a_3\rightarrow n}  \otimes \dots
    \end{equation}
    for any set of diamond ancestors $a_1,a_2,a_3,\dots$ of $n$ in $\mathcal{G}(f)$ is also an arrival curve for $f$ at the output of the \ac{PEF}, where $\otimes$ denote the min-plus convolution\footnote{$f\otimes g: t\mapsto \inf_{s\ge0} (f(s) + g(t-s))$. The min-plus convolution is associative and commutative \cite[\S 2.1.3]{bouillard2018deterministic}.}.

    %$\bullet$ $\alpha_{f,a^*}$ is the arrival curve of the flow at the output of vertex $a$ (\emph{i.e.}, at the output of the input port contained in $a$).

    %$\bullet$ $\mathcal{P}^{\mathcal{G}(f)}_{a,n}$ is the set of all possible paths from $a$ to $n$ in $\mathcal{G}(f)$. It is finite since $\mathcal{G}(f)$ is acyclic.
    
    %$\bullet$ For $p=(a,v_1, \ldots, v_j,n)$ a path in $\mathcal{P}^{\mathcal{G}(f)}_{a,n}$, $d_p$ [resp., $D_p$] is a lower bound [resp., upper bound] on the minimum [resp., on the maximum] delay of the packets between the output of vertex $a$ and the input of $\texttt{PEF}_n(f)$, when the packets follow the path $p$ in the network.

\end{theorem}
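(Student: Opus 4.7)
The plan is to prove each of the three arrival-curve claims separately and then conclude via the standard min-plus closure property. Throughout, I would let $R_{\text{PEF}^{\text{in}}}$, $R_{\text{PEF}^*}$, and $R_{a^*}$ denote the cumulative functions of $f$ at the input of the \ac{PEF}, at its output, and at the output of a diamond ancestor $a$ respectively. By the \ac{PEF} model of Sec.~\ref{sec:system-model:pef-model}, the \ac{PEF} is transparent (FIFO, zero delay) for never-before-seen data units, while duplicates are silently dropped; hence a first-arriving packet at time $\tau$ exits at exactly time $\tau$.

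Claim (1) is then immediate: because the \ac{PEF} only discards packets and preserves first-arrival timestamps, $R_{\text{PEF}^*}(t) - R_{\text{PEF}^*}(s) \le R_{\text{PEF}^{\text{in}}}(t) - R_{\text{PEF}^{\text{in}}}(s) \le \alpha_{f,\text{PEF}^{\text{in}}}(t-s)$ for every $s \le t$.

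For claim (2), the key observation is that because $a$ is a diamond ancestor of $n$, it is not an \ac{EP}-vertex, so every data unit that ever exits the \ac{PEF} has crossed $a$ exactly once. For any data unit exiting the \ac{PEF} at time $\tau \in (s,t]$, the time $\tau'$ at which it left $a$ thus satisfies $\tau - D_f^{a\rightarrow n} \le \tau' \le \tau - d_f^{a\rightarrow n}$. This yields
\[
R_{\text{PEF}^*}(t) - R_{\text{PEF}^*}(s) \le R_{a^*}\bigl(t - d_f^{a\rightarrow n}\bigr) - R_{a^*}\bigl(s - D_f^{a\rightarrow n}\bigr),
\]
which, bounded via $\alpha_{f,a^*}$, gives $\alpha_{f,a^*}\bigl(t-s + D_f^{a\rightarrow n} - d_f^{a\rightarrow n}\bigr)$. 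Recognizing that for a non-decreasing $\alpha$ the operator $\oslash \delta_D$ simply shifts the argument by $D$, i.e., $(\alpha \oslash \delta_D)(u) = \alpha(u+D)$, recovers the bound $\alpha_f^{a\rightarrow n}(t-s)$ of Eq.~\eqref{eq:thm:pef-oac:ancestor}.

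Finally, the convolution claim is the standard network-calculus fact that the min-plus convolution of two arrival curves of the same flow is again an arrival curve: for any split $t-s = u+v$ with $u,v \ge 0$, $R(t) - R(s) = (R(t) - R(s+u)) + (R(s+u) - R(s)) \le \alpha_1(v) + \alpha_2(u)$, and taking the infimum over $u$ yields the result; associativity of $\otimes$ extends it to arbitrarily many factors. The main obstacle lies in claim (2): the delicate step is rigorously justifying the ``exactly one crossing of $a$'' statement when the same data unit is carried by several packets in the redundant region, which rests entirely on the definitions of diamond ancestor and \ac{EP}-vertex together with the assumption that duplicates are eliminated before any subsequent path split.
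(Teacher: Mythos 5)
Your proposal is correct and follows essentially the same route as the paper: Item 1 via the zero-delay/drop-only nature of the \ac{PEF}, Item 2 via the observation that each non-lost data unit crosses the diamond ancestor exactly once with delay in $[d_f^{a\rightarrow n},D_f^{a\rightarrow n}]$, yielding the shifted-window bound $R_{a^*}(t-d)-R_{a^*}(s-D)\le\alpha_{f,a^*}(t-s+D-d)$, and the final claim by min-plus closure of arrival curves. The only difference is organizational: the paper factors both items through a standalone lemma on non-\ac{FIFO}, non-lossless systems with per-bit delay bounded in $[d,D]$ (proved by splitting the input cumulative function into lost and non-lost parts), whereas you inline that argument directly.
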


The result is proved as follows:
Item 1/ is a direct consequence of the fact that the \ac{PEF} has no delay.
Item 2/ is obtained by considering the entire system made of the portion of the graph $\mathcal{G}(f)$ between the diamond ancestor $a$ and $n$.
This system is neither lossless nor \ac{FIFO}, but several classical network-calculus results remain applicable, as we discuss in Appendix~\ref{sec:appendix:non-fifo-non-lossless-results}.
$\alpha_{f,\text{PEF}^*}$ is finally obtained by applying \cite[Lemma 1.2.4]{leboudecNetworkCalculusTheory2001}.
A formal proof of Theorem~\ref{thm:toolbox:pef-oac} is given in Appendix~\ref{proof:appendix:pef-oac}.

\textbf{Application to the Toy Example:} An arrival curve $\alpha_{f,\text{PEF}^*}$ for $f$ at the output of the \ac{PEF} within $F$ (Fig.~\ref{fig:prob-statement:toy-example}) is shown in Fig.~\ref{fig:toolbox:toy-example-tihhtness} with a solid red line.

The first constituent, $\alpha_{f,\text{PEF}^\text{in}}$ is the arrival curve at $f$ at the input of the \ac{PEF} (as per Theorem~\ref{thm:toolbox:pef-oac}, Item 1).
To obtain it, we first observe that the periodic profile of the flow $f$ at the output of $B$ (as on the Line ``\texttt{in}'' of Fig.~\ref{fig:prob-statement:double-rate}) is constrained by the leaky-bucket arrival curve $\alpha_{f,B^*} = \gamma_{r_0,b_0}$ with a rate of one data unit per unit of time ($r_0 = 1$ d.u./t.u.) and a burst of one data unit ($b_0 = 1$ d.u.).
By using the jitter bound within $C$ and $D$ and Proposition~\ref{prop:appendix:ac-after-lossy} in Appendix~\ref{sec:appendix:non-fifo-non-lossless-results}, we obtain that the arrival curves of $f$ at the output of $C$ and $D$, $\alpha_{f,C^*}$ and $\alpha_{f,D^*}$, equal to the same leaky-bucket arrival curve $\gamma_{r_0,2b_0}$ with a burst $2b_0$ of two units of data.
As $f$ enters $F$ from both $C$ and $D$, we obtain  $\alpha_{f,\text{PEF}^\text{in}} = \alpha_{f,C^*} + \alpha_{f,D^*} = \gamma_{2r_0,4b_0}$, a leaky-bucket arrival curve with a rate $2r_0$ and a burst $4b_0$.

The second constituent of $\alpha_{f,\text{PEF}^*}$ in Fig.~\ref{fig:toolbox:toy-example-tihhtness} is obtained by applying the Equation (\ref{eq:thm:pef-oac:ancestor}) of Theorem~\ref{thm:toolbox:pef-oac}, Item 2/ with $a=B$.
From Fig.~\ref{fig:prob-statement:toy-example}, we obtain that a delay lower-bound [resp., an upper-bound] for $f$ from $B$ to $F$ along any possible paths within $\mathcal{G}(f)$ is $d_{f}^{B\rightarrow F} = 0$ t.u. (through $C$) [resp., $D_{f}^{B\rightarrow F} = 7$ t.u, through $D$].
We obtain $\alpha_{f}^{B\rightarrow F} = \alpha_{f,B^*} \oslash \delta_{D_{f}^{B\rightarrow F} - d_{f}^{B\rightarrow F}} = \gamma_{r_0,b_0}\oslash\delta_{7}$, \emph{i.e.}, $ \alpha_{f}^{B\rightarrow F} = \gamma_{r_0,8b_0}$.

If we assumes that the \ac{PEF} does not delete any packet, as in the \emph{intuitive} approach mentioned in Section~\ref{sec:prob-statement}, we only know that $f$ has the arrival curve $\alpha_{f,\text{PEF}^{\text{in}}}$ at the output of the \ac{PEF} (Item 1 of the Theorem).
This arrival curve shows that the traffic can exhibits a burst of $4b_0$ and a rate $2r_0$ twice as big as the normal source rate.

But our theorem goes beyond the intuitive approach: its second item applied with $a=B$ provides a second arrival curve for $f$: $\alpha_{f}^{B\rightarrow F}$.
In the network-calculus framework, we can combine the knowledge of two arrival curves by computing their min-plus convolution: $\alpha_{f,\text{PEF}^*} = \alpha_{f,\text{PEF}^{\text{in}}} \otimes \alpha_{f}^{B\rightarrow F}$ is also an arrival curve for $f$ at the output of the \ac{PEF}.
With the leaky-bucket arrival curves of the toy example, the min-plus convolution is simply the minimum of the two curves, shown with a solid red line on Fig.~\ref{fig:toolbox:toy-example-tihhtness}.
We observe that Theorem~\ref{thm:toolbox:pef-oac} provides a better upper-bound of the traffic than the intuitive approach.
For example, $\alpha_{f,\text{PEF}^*}$ indicates that the double rate $2r_0$ is only a peak rate that the traffic cannot exhibits forever: flow $f$ keeps a sustained rate $r_0$, but with a much higher burst $8b_0$.
In network calculus, the arrival curves that describe flows with a peek rate ($2 r_0$) and a sustained rate ($r_0$) are called \emph{\acl{VBR}} (\acs{VBR}) arrival curves.
Theorem~\ref{thm:toolbox:pef-oac} provides for the toy example the best possible \acs{VBR} arrival curve, as we prove later.
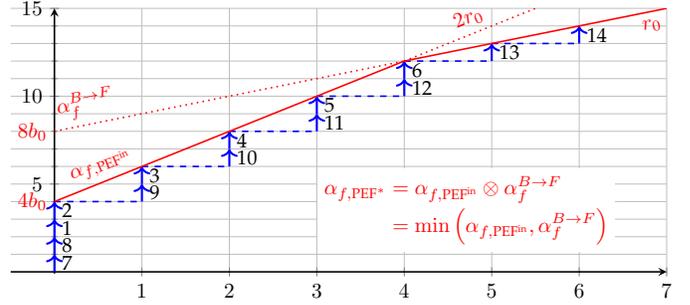
\begin{figure}\centering
    \resizebox*{\linewidth}{!}{\begin{tikzpicture}
	\tikzstyle{lw} = [line width=1pt]
	\begin{axis}[
		axis x line=center,
		axis y line=center,
        ytick={0,5,10,15},
		xmin=-0.5,
		xmax=7,
		ymin=-0.1,
		ymax=15,
        width=12.5cm,
        height=6cm,
		minor y tick num=4,
		grid=both
	]	
	\draw[->,blue,lw] (axis cs:0,0) -- (axis cs:0,1) node[pos=0.5,right, black] {7};
	\draw[->,blue,lw] (axis cs:0,1) -- (axis cs:0,2) node[pos=0.5,right, black] {8};
	\draw[->,blue,lw] (axis cs:0,2) -- (axis cs:0,3) node[pos=0.5,right, black] {1};
	\draw[->,blue,lw] (axis cs:0,3) -- (axis cs:0,4) node[pos=0.5,right, black] {2};

    \draw[dashed, blue] (axis cs:0,4) -- (axis cs:1,4);
	
	\draw[->,blue,lw] (axis cs:1,4) -- (axis cs:1,5) node[pos=0.5,right, black] {9};
	\draw[->,blue,lw] (axis cs:1,5) -- (axis cs:1,6) node[pos=0.5,right, black] {3};

    \draw[dashed, blue] (axis cs:1,6) -- (axis cs:2,6);
	
	\draw[->,blue,lw] (axis cs:2,6) -- (axis cs:2,7) node[pos=0.5,right, black] {10};
	\draw[->,blue,lw] (axis cs:2,7) -- (axis cs:2,8) node[pos=0.5,right, black] {4};

    \draw[dashed, blue] (axis cs:2,8) -- (axis cs:3,8);
	
	\draw[->,blue,lw] (axis cs:3,8) -- (axis cs:3,9) node[pos=0.5,right, black] {11};
	\draw[->,blue,lw] (axis cs:3,9) -- (axis cs:3,10) node[pos=0.5,right, black] {5};

    \draw[dashed, blue] (axis cs:3,10) -- (axis cs:4,10);
	
	\draw[->,blue,lw] (axis cs:4,10) -- (axis cs:4,11) node[pos=0.5,right, black] {12};
	\draw[->,blue,lw] (axis cs:4,11) -- (axis cs:4,12) node[pos=0.5,right, black] {6};

    \draw[dashed, blue] (axis cs:4,12)  -- (axis cs:5,12);
	
	\draw[->,blue,lw] (axis cs:5,12) -- (axis cs:5,13) node[pos=0.5,right, black] {13};

    \draw[dashed, blue] (axis cs:5,13)  -- (axis cs:6,13);
	
	\draw[->,blue,lw] (axis cs:6,13) -- (axis cs:6,14) node[pos=0.5,right, black] {14};
	
	\addplot[red, domain=0:4]{4+2*x} node[pos=0.15, above, sloped] {$\alpha_{f,\text{PEF}^\text{in}}$};
	\addplot[red, dotted, domain=4:8]{4+2*x} node[pos=0.2,above,sloped]{$2r_0$};;
	
	\addplot[red, domain=4:8]{8+x} node[pos=0.7,below,sloped]{$r_0$};
	\addplot[red, dotted, domain=0:4]{8+x} node[pos=0.1,above,sloped] {$\alpha_{f}^{B\rightarrow F}$};
	
	\node[anchor=east, red] at (axis cs:0,4) {$4b_0$};
	
	\node[anchor=east, red] at (axis cs:0,8) {$8b_0$};

	\node[red, fill=white, anchor=west] at (axis cs:3,3.5) {$\begin{aligned}\alpha_{f,\text{PEF}^*}&=\alpha_{f,\text{PEF}^\text{in}}\otimes \alpha_{f}^{B\rightarrow F} \\ &= \min\left(\alpha_{f,\text{PEF}^\text{in}},\alpha_{f}^{B\rightarrow F}\right)\end{aligned}$};
		
	\end{axis}%
\end{tikzpicture}%}%
    \caption{\label{fig:toolbox:toy-example-tihhtness} Solid red: $\alpha_{f,\text{PEF}^*}$, arrival curve of $f$ on the toy example, at the output of the \acl{PEF} $\texttt{PEF}_F(f)$, obtained by applying Theorem~\ref{thm:toolbox:pef-oac}. Dashed blue: Cumulative arrival function obtained with the trajectory of Fig.~\ref{fig:toolbox:toy-example-tight}, which shows the tightness of the result.}
\end{figure}

\textbf{Remark:} 
Theorem~\ref{thm:toolbox:pef-oac} does not require to identify pairs of replication/elimination functions, with one \ac{PRF} and one \ac{PEF} in each pair.
%Indeed, it requires only the identification of diamond ancestors, the knowledge of the arrival curve that the flow has at the output of the diamond ancestors and the knowledge of its arrival curve at the input of the \ac{PEF}.
Therefore, Theorem~\ref{thm:toolbox:pef-oac} is suited for complex flow graphs, including graphs with repeated patterns of redundancy, with meshes, as well as graphs where the \acl{PEF} is not located at the merge point of the paths.
When pairs of \ac{PRF}/\ac{PEF} can be identified as in Fig.~\ref{fig:prob-statement:three-sections}, the following simpler corollary 
%with simpler notations 
can be used.

\begin{corollary}[Application of the theorem to a unique redundant section with parallel systems]\label{cor:toolbox:thm-simple-application}
    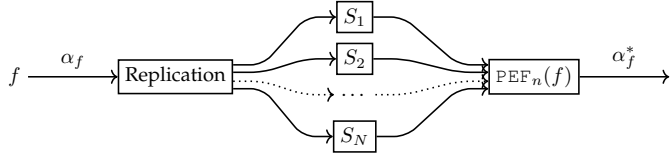
\begin{figure}
        \resizebox{\linewidth}{!}{\begin{tikzpicture}
	\node[draw] at (0,0) (s) {Replication};
	\node[draw] at (3,1) (a) {$S_1$};
	\node[draw] at (3,0.3) (b) {$S_2$};
	\node at (3,-0.3) (c) {\ldots};
	\node[draw] at (3,-1) (d) {$S_N$};
	
	\node[draw] at (6,0) (recov) {$\texttt{PEF}_n(f)$};
	
	\draw[->, rounded corners=0.2cm] ([yshift=0.2cm]s.east) -- ([xshift=0.5cm,yshift=0.2cm] s.east) -- ([xshift=-0.5cm] a.west) -- (a.west);
	\draw[->, rounded corners=0.2cm] ([yshift=0.075cm]s.east) -- ([xshift=0.5cm,yshift=0.075cm] s.east) -- ([xshift=-0.5cm] b.west) -- (b.west);
	\draw[->, rounded corners=0.2cm, dotted] ([yshift=-0.075cm]s.east) -- ([xshift=0.5cm,yshift=-0.075cm] s.east) -- ([xshift=-0.5cm] c.west) -- (c.west);
	\draw[->, rounded corners=0.2cm] ([yshift=-0.2cm]s.east) -- ([xshift=0.5cm,yshift=-0.2cm] s.east) -- ([xshift=-0.5cm] d.west) -- (d.west);

	w
	\draw[->, rounded corners=0.2cm] (a.east) -- ([xshift=0.5cm] a.east) -- ([xshift=-0.5cm,yshift=0.2cm] recov.west) -- ([yshift=0.2cm]recov.west) node[pos=0.5,anchor=center] (tA) {};
	\draw[->, rounded corners=0.2cm] (b.east) -- ([xshift=0.5cm] b.east) -- ([xshift=-0.5cm,yshift=0.075cm] recov.west) -- ([yshift=0.075cm]recov.west);
	\draw[->, rounded corners=0.2cm, dotted] (c.east) -- ([xshift=0.5cm] c.east) -- ([xshift=-0.5cm,yshift=-0.075cm] recov.west) -- ([yshift=-0.075cm]recov.west);
	\draw[->, rounded corners=0.2cm] (d.east) -- ([xshift=0.5cm] d.east) -- ([xshift=-0.5cm,yshift=-0.2cm] recov.west) -- ([yshift=-0.2cm]recov.west);
	
	\draw[->] ([xshift=-1.5cm] s.west) -- (s.west) node[pos=0,left] {$f$} node[pos=0.5,above] {$\alpha_f$};
	\draw[->] (recov.east) -- ([xshift=1.5cm] recov.east) node[pos=0.5,above] {$\alpha_f^*$};
\end{tikzpicture}%}%
        \caption{\label{fig:toolbox:oac-prop-simplified-figure} Notations of Corollary~\ref{cor:toolbox:thm-simple-application}. Flow $f$ is replicated and sent to $N$ parallel systems. Corollary~\ref{cor:toolbox:thm-simple-application} gives the arrival curve $\alpha_{f}^*$ at the output of the \acl{PEF} $\texttt{PEF}_n(f)$.}
    \end{figure}
    Consider a flow $f$ with an arrival curve $\alpha_f$ that is replicated and sent into $N$ systems $\{S_i\}_{i\in \llbracket 1 , N\rrbracket}$ and then processed by a \acl{PEF} $\texttt{PEF}(f)$, as in Fig.~\ref{fig:toolbox:oac-prop-simplified-figure}.
    Note that each $S_i$ is not necessary a single network element but can be any combination of network elements.
    Assume that the packets forwarded through $S_i$ (\emph{i.e.}, the ones not lost) have a delay through $S_i$ that is bounded within $[d_i, D_i]$. Then,
    \begin{equation}\resizebox*{\linewidth}{!}{$\label{eq:toolbox:simplified-proposition}
        \alpha_f^* = \left(\sum_{i\in\llbracket 1, N\rrbracket} \alpha_f \oslash \delta_{(D_i - d_i)}\right) \otimes \left( \alpha_f \oslash \delta_{\left(\max\limits_{i\in\llbracket 1, N\rrbracket} D_i - \min\limits_{j\in\llbracket 1, N\rrbracket}d_j\right)} \right)$}
    \end{equation} is an arrival curve for $f$ at the output of $\texttt{PEF}(f)$.
\end{corollary}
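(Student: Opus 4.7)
The plan is to obtain~\eqref{eq:toolbox:simplified-proposition} as a direct instantiation of Theorem~\ref{thm:toolbox:pef-oac}, taking the replication vertex $r$ that sits before all the $S_i$ as the unique diamond ancestor of the PEF's vertex $n$. Every path in $\mathcal{G}(f)$ from the source of $f$ to $n$ must cross $r$ before picking exactly one branch $S_i$, and $r$ lies strictly before any merge point, so it is not an EP-vertex of $\mathcal{G}(f)$.

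First I would derive the first factor of~\eqref{eq:toolbox:simplified-proposition}. Each $S_i$ receives a copy of $f$ constrained by $\alpha_f$ and introduces a delay bounded in $[d_i, D_i]$. Since $S_i$ is not assumed to be lossless or FIFO, I would invoke the non-FIFO, non-lossless output arrival-curve result from Appendix~\ref{sec:appendix:non-fifo-non-lossless-results} (Proposition~\ref{prop:appendix:ac-after-lossy}) to conclude that $\alpha_f \oslash \delta_{D_i - d_i}$ is an arrival curve at the output of $S_i$. Summing these $N$ contributions, which all feed into the input of $\texttt{PEF}(f)$, yields the first factor of~\eqref{eq:toolbox:simplified-proposition}. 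By Item~1 of Theorem~\ref{thm:toolbox:pef-oac}, this sum is also an arrival curve at the output of $\texttt{PEF}(f)$.

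Next I would apply Item~2 of Theorem~\ref{thm:toolbox:pef-oac} with $a = r$. The arrival curve of $f$ at the output of $r$ remains $\alpha_f$ since the replication leaves each copy's arrival process unchanged, and along any possible path $r \to n$ inside $\mathcal{G}(f)$ the delay is lower-bounded by $\min_j d_j$ and upper-bounded by $\max_i D_i$, since each such path traverses exactly one $S_i$. Item~2 therefore provides $\alpha_f \oslash \delta_{\max_i D_i - \min_j d_j}$ as another arrival curve of $f$ at the output of $\texttt{PEF}(f)$. Min-plus convolving the two expressions, as permitted by the final claim of Theorem~\ref{thm:toolbox:pef-oac}, yields exactly~\eqref{eq:toolbox:simplified-proposition}. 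The only steps requiring care are verifying that $r$ qualifies as a diamond ancestor when the $S_i$ are arbitrary combinations of network elements, and that the per-branch bound $\alpha_f \oslash \delta_{D_i - d_i}$ survives the absence of FIFO and of losslessness in $S_i$; both are handled by the non-FIFO, non-lossless toolbox of the appendix.
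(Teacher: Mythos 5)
Your proposal is correct and follows essentially the same route as the paper's own proof: instantiate Item~2 of Theorem~\ref{thm:toolbox:pef-oac} at the replication point to obtain the $\alpha_f \oslash \delta_{\max_i D_i - \min_j d_j}$ term, obtain $\alpha_{f,\text{PEF}^{\text{in}}}$ by applying Proposition~\ref{prop:appendix:ac-after-lossy} to each branch $S_i$ and summing, and combine via the min-plus convolution permitted by the theorem. The only detail the paper makes explicit that you gloss over is that the replication (switching fabric) and the \acs{PEF} have packetized inputs and outputs, so the per-packet delay bounds $[d_i,D_i]$ transfer to per-bit bounds as required by Proposition~\ref{prop:appendix:ac-after-lossy}; this is a minor point and does not affect the validity of your argument.
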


Corollary~\ref{cor:toolbox:thm-simple-application} is a direct application of Theorem~\ref{thm:toolbox:pef-oac}. A formal proof is given in Appendix~\ref{proof:appendix:toolbox:cor}. 
The corollary is of interest for two reasons.
First, its simpler notation is likely to cover many industrial applications containing a unique redundant portion with parallel systems.
Second, Corollary~\ref{cor:toolbox:thm-simple-application} is tight in the following sense.

\begin{proposition}[The result in Corollary~\ref{cor:toolbox:thm-simple-application} is tight with $N=2$ and leaky-bucket-constrained flows, in the family of \ac{VBR} arrival curves.]\label{prop:toolbox:simplified-result-is-tight}
		
    \textbf{For any} leaky-bucket arrival curve $\gamma_{r,b}$, for any set of values $d_1,D_1,d_2,D_2 \in \mathbb{R}$ such that  $d_1 \leq D_1$ and $ d_2 \leq D_2$,

    \textbf{there exists} a flow $f$ with arrival-curve $\alpha_f = \gamma_{r,b}$ and no minimum packet length whose content is replicated and sent to two systems $S_1$ and $S_2$ in which the packets of $f$ suffer a delay bounded in $[d_1,D_1]$ and $[d_2,D_2]$ respectively; the sum of the outputs of the two systems is then processed by a \acl{PEF} $\texttt{PEF}_n(f)$,

    \textbf{such that}, the arrival curve $\alpha_f^*$ defined in (\ref{eq:toolbox:simplified-proposition}) is the best \ac{VBR} arrival curve for $f$ at the output of $\texttt{PEF}_n(f)$.
\end{proposition}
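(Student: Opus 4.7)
The plan is first to expand (\ref{eq:toolbox:simplified-proposition}) under the leaky-bucket hypothesis. Since $\gamma_{r,b} \oslash \delta_\Delta = \gamma_{r, b + r\Delta}$, the sum of deconvolutions becomes $\gamma_{2r, B_s}$ with $B_s = 2b + r(D_1 - d_1 + D_2 - d_2)$, and the second deconvolution becomes $\gamma_{r, B_l}$ with $B_l = b + r(\max_i D_i - \min_j d_j)$. Because the min-plus convolution of two leaky buckets with distinct rates is their pointwise minimum, $\alpha_f^*$ is the VBR curve $\gamma_{2r, B_s} \wedge \gamma_{r, B_l}$, characterized by sustained rate $r$, peak rate $2r$, and the two bursts $B_s, B_l$. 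It then suffices to exhibit a single flow and a single adversarial scenario whose PEF-output cumulative function $R^*$ forces every dominating VBR curve $\alpha' = \gamma_{P, b'_s} \wedge \gamma_{\rho, b'_l}$ to satisfy $P \ge 2r$, $\rho \ge r$, $b'_s \ge B_s$, and $b'_l \ge B_l$, so that $\alpha' \ge \alpha_f^*$ pointwise.

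Relabel the systems so that $d_1 \le d_2$ and $D_1 \le D_2$. The worst-case construction will be a generalization of the toy example in Figs.~\ref{fig:prob-statement:double-rate}--\ref{fig:toolbox:toy-example-tihhtness}: in a long past the source emits at rate $r$, $S_1$ drops every packet (failed link), and $S_2$ delivers each packet with maximum delay $D_2$; after time $0$, $S_1$ comes alive and delivers with minimum delay $d_1$, while $S_2$ continues with delay $D_2$. This alone produces a rate-$2r$ phase at the PEF output of length $D_2 - d_1$, because $S_2$ is still catching up on the old packets while $S_1$ quickly forwards the fresh ones. Superimposed on this, the source emits one or two bursts of size $b$ whose per-system delays are aligned to produce a pile-up at the PEF output, placed inside the double-rate phase. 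In the genuine VBR regime $d_2 - D_1 > b/r$ (the range in which the short-term bucket of $\alpha_f^*$ is active), two source bursts can be placed and their $S_1$- and $S_2$-deliveries aligned, using a reduced source rate in the intervening window so that $\gamma_{r,b}$ is preserved, with the resulting pile equal to exactly $B_s$; in the complementary regime a single-burst variant of the same construction suffices, since the two delay ranges overlap enough for a single source burst plus the bunched surrounding packets to contribute a pile of $B_l$.

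Tightness is then read off $R^*$ at three kinds of intervals. (a) An arbitrarily short interval straddling the pile-up yields $R^*(T^+) - R^*(T^-) = B_s$ (respectively $B_l$ in the degenerate regime), which forces $b'_s \ge B_s$ (respectively $\min(b'_s, b'_l) \ge B_l$) on any dominating VBR. (b) An interval that spans the whole double-rate phase together with the pile-up gives $R^*(t) - R^*(s) = r(t - s) + B_l$, forcing $\rho \le r$ (which, combined with dominance at large $\tau$, yields $\rho = r$) and $b'_l \ge B_l$. (c) A sub-interval lying wholly inside the rate-$2r$ phase gives $R^*(t) - R^*(s) = 2r(t - s)$, forcing $P \ge 2r$. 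Combining (a), (b), (c), every valid VBR $\alpha'$ satisfies $\alpha' \ge \alpha_f^*$ pointwise, so $\alpha_f^*$ is the best VBR arrival curve.

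The principal obstacle is simultaneity: arranging a single trajectory in which the size-$B_s$ pile-up sits inside the rate-$2r$ phase while the source still respects $\gamma_{r,b}$. The spacing $b/r$ between two source bursts is exactly the threshold separating the two regimes of $\alpha_f^*$, which is why the construction bifurcates at $d_2 - D_1 = b/r$; boundary configurations (overlapping delay ranges $D_1 \ge d_2$, or one system dominating in both $d$ and $D$) require only minor adaptations of the same pattern. The hypothesis of no minimum packet length, explicit in the statement, is what legitimizes the fluid idealization and allows genuine instantaneous pile-ups at the PEF output, so that the VBR burst parameters are forced to equal $B_s$ and $B_l$ rather than being approached only asymptotically.
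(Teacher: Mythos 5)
Your adversarial construction is essentially the paper's: the same outage-then-recovery mechanism to create the rate-$2r$ phase, the same jitter alignment of source bursts to synchronize the pile-up at the \ac{PEF} output, and the same bifurcation at $d_2-D_1=b/r$. The genuine gap is in the final read-off step. Writing $B_s=2b+r(D_1-d_1+D_2-d_2)$ and $B_l=b+r(D_2-d_1)$, you claim the trajectory forces every dominating \ac{VBR} curve $\gamma_{P,b'_s}\wedge\gamma_{\rho,b'_l}$ to satisfy $P\ge 2r$ and $b'_l\ge B_l$ as separate parameter constraints, and then conclude pointwise domination from the four inequalities. Neither of those two forcings holds. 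The single leaky bucket $\gamma_{r,B_l}$ (a degenerate \ac{VBR} curve with peak rate $r<2r$) is itself a valid arrival curve for the \ac{PEF} output, being one of the two constituents of $\alpha_f^*$ established by Corollary~\ref{cor:toolbox:thm-simple-application}, so your step (c) cannot force $P\ge 2r$; symmetrically, $\gamma_{2r,B_s}$ is a valid envelope of the exhibited output trajectory yet has long-term bucket $B_s<B_l$. A rate-$2r$ segment of bounded duration simply cannot pin down the peak-rate parameter of a concave envelope, because a larger short-term burst compensates for a smaller peak rate. Your step (b) also asserts the spanning interval "forces $\rho\le r$," which is the wrong direction: one needs lower bounds on $\alpha'$, not upper bounds on its parameters.

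What the trajectory does force, and what suffices, are two point evaluations plus a rate bound: $\lim_{t\to 0^+}\alpha'(t)\ge B_s$ (from the instantaneous pile-up), $\alpha'(d_2-D_1-b/r)\ge 2B_l-B_s$ (from the interval spanning the pile-up and the entire peak phase), and long-term rate at least $r$ (from the sustained tail). The missing ingredient is concavity of the candidate \ac{VBR} curve on $(0,+\infty)$: concavity interpolates between the two anchors to give $\alpha'(t)\ge B_s+2rt$ on $(0,\,d_2-D_1-b/r)$, and the rate bound extends the second anchor to $\alpha'(t)\ge B_l+rt$ beyond it; this is exactly how the paper concludes in Appendix~\ref{proof:appendix:pef-oac:simplified-result-is-tight} (Lemmas~\ref{lemma:tight:alphadagger-evolv} and \ref{lemma:tight:primeisgreater}). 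Separately, your construction sketch leaves implicit the one point the paper spends most of its effort on: verifying, packet by packet (its categories $I$, $B$, $S$, $X$), that the source remains $\gamma_{r,b}$-conformant while still producing a pile of exactly $B_s$ followed by a peak phase of exactly the required duration $d_2-D_1-b/r$. That verification is where the constraint $d_2-D_1>b/r$ actually enters, so it should be made explicit rather than asserted.
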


Note that, due to the inherent nature of the \ac{PEF} processing packets, there could exist staircase arrival-curves that fit the worst-case traffic even better than the arrival curve provided in Corollary~\ref{cor:toolbox:thm-simple-application}.
However, deterministic computational tools process concave piecewise-linear arrival-curves better than staircase arrival-curves~\cite{mifdaouiAccuracyComplexityTradeoffsCompositional2017}.
Proposition~\ref{prop:toolbox:simplified-result-is-tight} proves that we obtain the best arrival-curve in the family of concave piecewise-linear arrival-curves with two segments or less.

\textbf{Intuition of the Proof with the Toy Example:}
We give an intuition of the proof of Proposition~\ref{prop:toolbox:simplified-result-is-tight} by using the toy example of Fig.~\ref{fig:prob-statement:toy-example}.
Our goal is to obtain a cumulative function $R^*(t)$ at the output of \ac{PEF} such that $t\mapsto R^*(t) - R^*(s)$ ``perfectly fits'' the arrival curve $\gamma_{2r_0,4b_0} \otimes \gamma_{r_0,8b_0}$, for some observation starting time $s$ (as in Fig.~\ref{fig:toolbox:toy-example-tihhtness}).
In the scenario of Fig.~\ref{fig:prob-statement:double-rate}, we already achieved a peak rate of $2r_0$ by using a disconnection of the short link for a duration equal to the delay difference between the two paths.
To obtain the worst-case burst, we now simply need to use the jitter within each path and synchronize the moments when the maximum burst on each path reaches the \ac{PEF}.

This is done by using the trajectory shown in Fig.~\ref{fig:toolbox:toy-example-tight}.
Here, Packet $1$ suffers the maximal delay on the long path and the following packets suffer only 6 t.u.
This causes Packets 1 and 2 to exit $D$ at the same time.
We do the same with Packets 7 and 8 through $C$ and we synchronize these two events at the same time, so that four packets simultaneously exit the \ac{PEF} at t.u. 8.
In the figure we spread the packets within t.u. 8 for ease of reading, but they exit at the exact same time (t.u. 8).
Because of this, we can also put an arbitrary order of arrivals among them (in a real-life system it means that there exists a very small difference in their reception instants).

If we start counting the packets at Time Unit 8, we observe the cumulative arrival function shown in dashed blue in Fig.~\ref{fig:toolbox:toy-example-tihhtness}, for which it is clear that the arrival curve in solid red is the best concave piecewise-linear envelope with two segments.
The formal proof of Proposition~\ref{prop:toolbox:simplified-result-is-tight} in Appendix~\ref{proof:appendix:pef-oac:simplified-result-is-tight} extends the intuition for any choice of values for $r,b,d_1,d_2,D_1,D_2$ (assuming no minimal packet length).

\begin{figure}\centering
    \resizebox*{0.95\linewidth}{!}{\begin{tikzpicture}
	\tikzstyle{lw} = [line width=2pt]
	\tikzstyle{every node}=[font=\Large]
	
	% % IN
	\draw [->,lw] (-1,0) -- (15,0);
	
	\draw (0,-0.2) -- (0,0.2);
	\draw (1,-0.2) -- (1,0.2);
	\draw (2,-0.2) -- (2,0.2);
	\draw (3,-0.2) -- (3,0.2);
	\draw (4,-0.2) -- (4,0.2);
	\draw (5,-0.2) -- (5,0.2);
	\draw (6,-0.2) -- (6,0.2);
	\draw (7,-0.2) -- (7,0.2);
	\draw (8,-0.2) -- (8,0.2);
	\draw (9,-0.2) -- (9,0.2);
	\draw (10,-0.2) -- (10,0.2);
	\draw (11,-0.2) -- (11,0.2);
	\draw (12,-0.2) -- (12,0.2);
	\draw (13,-0.2) -- (13,0.2);
	\draw (14,-0.2) -- (14,0.2);
	
	\draw[lw,blue,->] (0,0) -- (0,1) node[pos=1,above, black] {1};
	\draw[lw,blue,->] (1,0) -- (1,1) node[pos=1,above, black] {2};
	\draw[lw,blue,->] (2,0) -- (2,1) node[pos=1,above, black] {3};
	\draw[lw,blue,->] (3,0) -- (3,1) node[pos=1,above, black] {4};
	\draw[lw,blue,->] (4,0) -- (4,1) node[pos=1,above, black] {5};
	\draw[lw,blue,->] (5,0) -- (5,1) node[pos=1,above, black] {6};
	\draw[lw,blue,->] (6,0) -- (6,1) node[pos=1,above, black] {7};
	\draw[lw,blue,->] (7,0) -- (7,1) node[pos=1,above, black] {8};
	\draw[lw,blue,->] (8,0) -- (8,1) node[pos=1,above, black] {9};
	\draw[lw,blue,->] (9,0) -- (9,1) node[pos=1,above, black] {10};
	\draw[lw,blue,->] (10,0) -- (10,1) node[pos=1,above, black] {11};
	\draw[lw,blue,->] (11,0) -- (11,1) node[pos=1,above, black] {12};
	\draw[lw,blue,->] (12,0) -- (12,1) node[pos=1,above, black] {13};
	\draw[lw,blue,->] (13,0) -- (13,1) node[pos=1,above, black] {14};
    \node at (14,0.5) {\ldots};

	\node[anchor=south, red] at (15,0)  {\texttt{in}};
	
	% % A
	\draw [->,lw] (-1,-2) -- (15,-2);
	\draw (0,-2.2) -- (0,-1.8);
	\draw (1,-2.2) -- (1,-1.8);
	\draw (2,-2.2) -- (2,-1.8);
	\draw (3,-2.2) -- (3,-1.8);
	\draw (4,-2.2) -- (4,-1.8);
	\draw (5,-2.2) -- (5,-1.8);
	\draw (6,-2.2) -- (6,-1.8);
	\draw (7,-2.2) -- (7,-1.8);
	\draw (8,-2.2) -- (8,-1.8);
	\draw (9,-2.2) -- (9,-1.8);
	\draw (10,-2.2) -- (10,-1.8);
	\draw (11,-2.2) -- (11,-1.8);
	\draw (12,-2.2) -- (12,-1.8);
	\draw (13,-2.2) -- (13,-1.8);
	\draw (14,-2.2) -- (14,-1.8);
	\node[anchor=south, red] at (15,-2)  {\texttt{outC}};
	
	\draw[lw,blue,->] (7,-2) -- (7,-1) node[pos=1, anchor=south, black] {7,8};
    \draw[lw,blue,->] (7.1,-2) -- (7.1,-1) node[pos=1, anchor=south, black] {};
	\draw[lw,blue,->] (8,-2) -- (8,-1) node[pos=1, anchor=south, black] {9};
	\draw[lw,blue,->] (9,-2) -- (9,-1) node[pos=1, anchor=south, black] {10};
	\draw[lw,blue,->] (10,-2) -- (10,-1) node[pos=1, anchor=south, black] {11};
	\draw[lw,blue,->] (11,-2) -- (11,-1) node[pos=1, anchor=south, black] {12};
	\draw[lw,blue,->] (12,-2) -- (12,-1) node[pos=1, anchor=south, black] {13};
	\draw[lw,blue,->] (13,-2) -- (13,-1) node[pos=1, anchor=south, black] {14};
    \node at (14,-1.5) {\ldots};
	
	% % B
	\draw [->,lw] (-1,-4) -- (15,-4);
	\draw (0,-4.2) -- (0,-3.8);
	\draw (1,-4.2) -- (1,-3.8);
	\draw (2,-4.2) -- (2,-3.8);
	\draw (3,-4.2) -- (3,-3.8);
	\draw (4,-4.2) -- (4,-3.8);
	\draw (5,-4.2) -- (5,-3.8);
	\draw (6,-4.2) -- (6,-3.8);
	\draw (7,-4.2) -- (7,-3.8);
	\draw (8,-4.2) -- (8,-3.8);
	\draw (9,-4.2) -- (9,-3.8);
	\draw (10,-4.2) -- (10,-3.8);
	\draw (11,-4.2) -- (11,-3.8);
	\draw (12,-4.2) -- (12,-3.8);
	\draw (13,-4.2) -- (13,-3.8);
	\draw (14,-4.2) -- (14,-3.8);
	\node[anchor=south, red] at (15,-4)  {\texttt{outD}};

    \draw[lw,blue,->] (7,-4) -- (7,-3) node[pos=1, anchor=south, black] {1,2};
	\draw[lw,blue,->] (7.1,-4) -- (7.1,-3) node[pos=1, anchor=south, black] {};
    \draw[lw,blue,->] (8,-4) -- (8,-3) node[pos=1, anchor=south, black] {3};
	\draw[lw,blue,->] (9,-4) -- (9,-3) node[pos=1, anchor=south, black] {4};
	\draw[lw,blue,->] (10,-4) -- (10,-3) node[pos=1, anchor=south, black] {5};
	\draw[lw,blue,->] (11,-4) -- (11,-3) node[pos=1, anchor=south, black] {6};
	\draw[lw,red,->] (12,-4) -- (12,-3) node[pos=1, anchor=south, black] {7};
	\draw[lw,red,->] (13,-4) -- (13,-3) node[pos=1, anchor=south, black] {8};
    \node at (14,-3.5) {\ldots};

	% % OUT
	\draw [->,lw] (-1,-6) -- (15,-6);
	\draw (0,-6.2) -- (0,-5.8)	node[pos=0,below] {1};
	\draw (1,-6.2) -- (1,-5.8)	node[pos=0,below] {2};
	\draw (2,-6.2) -- (2,-5.8)	node[pos=0,below] {3};
	\draw (3,-6.2) -- (3,-5.8)	node[pos=0,below] {4};
	\draw (4,-6.2) -- (4,-5.8)	node[pos=0,below] {5};
	\draw (5,-6.2) -- (5,-5.8)	node[pos=0,below] {6};
	\draw (6,-6.2) -- (6,-5.8)	node[pos=0,below] {7};
	\draw (7,-6.2) -- (7,-5.8)	node[pos=0,below] {8};
	\draw (8,-6.2) -- (8,-5.8)	node[pos=0,below] {9};
	\draw (9,-6.2) -- (9,-5.8)  node[pos=0,below] {10};
	\draw (10,-6.2) -- (10,-5.8)node[pos=0,below] {11};
	\draw (11,-6.2) -- (11,-5.8)node[pos=0,below] {12};
	\draw (12,-6.2) -- (12,-5.8)node[pos=0,below] {13};
	\draw (13,-6.2) -- (13,-5.8)node[pos=0,below] {14};
	\draw (14,-6.2) -- (14,-5.8)node[pos=0,below] {15};
	\node[anchor=south, red] at (15,-6)  {\texttt{out}};
	
	\draw[lw,blue,->] (7,-6) -- (7,-5) node[pos=1, black, anchor=south, xshift=-0.2cm] {7,8,1,2};
	\draw[lw,blue,->] (7.1,-6) -- (7.1,-5);
    \draw[lw,blue,->] (7.2,-6) -- (7.2,-5);
    \draw[lw,blue,->] (7.3,-6) -- (7.3,-5);

	\draw[lw,blue,->] (8,-6) -- (8,-5) node[pos=1, black, anchor=south] {9,3};
	\draw[lw,blue,->] (8.1,-6) -- (8.1,-5);
	
	\draw[lw,blue,->] (9,-6) -- (9,-5) node[pos=1, black, anchor=south] {10,4};
	\draw[lw,blue,->] (9.1,-6) -- (9.1,-5);
	
	\draw[lw,blue,->] (10,-6) -- (10,-5) node[pos=1, black, anchor=south] {11,5};
	\draw[lw,blue,->] (10.1,-6) -- (10.1,-5);
	
	\draw[lw,blue,->] (11,-6) -- (11,-5) node[pos=1, black, anchor=south] {12,6};
	\draw[lw,blue,->] (11.1,-6) -- (11.1,-5);
	
	\draw[lw,blue,->] (12,-6) -- (12,-5) node[pos=1, black, anchor=south] {13};

	\draw[lw,blue,->] (13,-6) -- (13,-5) node[pos=1, black, anchor=south] {14};

    \node at (14,-5.5) {\ldots};

    \draw[-latex, dashed] (0,0) -- (7,-4);
    \draw[-latex] (1,0) -- (7,-4);
    \draw[-latex] (2,0) -- (8,-4);

    %\draw[-latex] (6,0) -- (13,-4);
	\draw[-latex, dashed] (6,0) -- (7,-2);
    \draw[-latex] (7,0) -- (7,-2);
    \draw[-latex] (8,0) -- (8,-2);
\end{tikzpicture}%}%
    \caption{\label{fig:toolbox:toy-example-tight} Trajectory showing that the results of Corollary~\ref{cor:toolbox:thm-simple-application} is tight for the toy example. The cumulative function of $f$, starting at Time Unit 8 in the above trajectory, is given as a dashed blue line in Fig.~\ref{fig:toolbox:toy-example-tihhtness}.}
\end{figure}
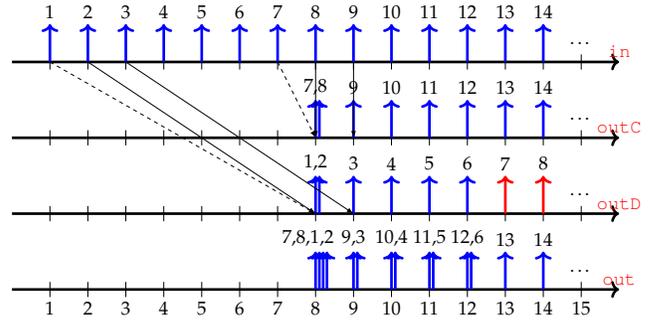
%
% Item 1/ of Theorem~\ref{thm:toolbox:pef-oac} proves that a \ac{PEF} keeps the individual arrival curves.
% As any set of parallel \acp{PEF} constitute a system with no delay for the packets that are not duplicates, they also keep any arrival curve on any aggregate:

% \begin{proposition}[\label{prop:toolbox:a-set-of-pef-keeps-aggregate-ac}A set of parallel \aclp{PEF} keeps the aggregate arrival curves]
%     Consider $\mathcal{F}$ an aggregate of flows that enter a set of parallel \acp{PEF} on node $n$, $\{\texttt{PEF}_n(f)\}_{f\in\mathcal{F}}$.
%     If $\alpha_{\mathcal{F}}$ is an arrival curve for the aggregate at the input of the \acp{PEF}, then $\alpha_{\mathcal{F}}$ is also an arrival curve for $\mathcal{F}$ at the output of the \acp{PEF}.
% \end{proposition}
% \begin{proof}
%     A set of parallel \ac{PEF} is a system (non-lossless) with zero delay for the non-lost packets. We apply Corollary~\ref{cor:appendix:lossy-with-fixed-delay-keeps-arrival-curves} in Appendix~\ref{sec:appendix:non-fifo-non-lossless-results}.
% \end{proof}
%Section~\ref{sec:xtfa} shows how Proposition~\ref{prop:toolbox:a-set-of-pef-keeps-aggregate-ac} can be used in combination with Theorem~\ref{thm:toolbox:pef-oac} to compute tight delay bounds. 

% !TeX root = main
\vspace{\myvspacebeforesubsec}
\subsection{Reordering Introduced by the Packet Replication and Elimination Functions}\label{sec:toolbox:reordering}

In Sec.~\ref{sec:toolbox:pef-oac} we provide a characterization of the traffic at the output of a \ac{PEF} in the form of an arrival curve.
The arrival curve can then be used to compute delay and backlog bounds on subsequent vertices, from which we can obtain the \ac{ETE} delay bounds.
However, as we can observe in the toy example (Figures~\ref{fig:prob-statement:double-rate} and \ref{fig:toolbox:toy-example-tight}), the data units at the output of the \ac{PEF} are out-of-order compared to the input.
The mis-ordering of the flow's data units cannot be captured by arrival curves.
As described in Sec.~\ref{sec:prob-statement}, it still has an effect on the performances of time-sensitive networks \cite{mohammadpourPacketReorderingTimeSensitive2020}.
% %
% Indeed, a destination could support only a maximum amount of reordering.
% In this case, an ``\emph{upper bound on out-of-order packet delivery}'' must be provided as a \acf{QoS} metric \cite[\S 3.1]{finnDeterministicNetworkingArchitecture2019}.
% Furthermore, ``\emph{some [\ldots] applications are unable to tolerate any out-of-order delivery}'' \cite[\S 3.1]{finnDeterministicNetworkingArchitecture2019}.
% In this case, any mis-ordering at the output of a \ac{PEF} must be corrected  before reaching the destination.

Two metrics are of interest when quantifying mis-ordering in time-sensitive networks: the \ac{RTO} and the \ac{RBO}\cite{mohammadpourPacketReorderingTimeSensitive2020,rfc4737}.
In this paper we focus on the mis-ordering as a consequence of the redundancy.
Thus we are only interested in defining reordering metrics after the \ac{PEF}, relative to a reference order defined before the \ac{PRF}.

For a flow $f$ and two vertices $n$ [resp.,  $o$] containing the observation points $v$ [resp., $w$] such that $f$ is packetized at $v$ and $w$, $n$ is not an \ac{EP}-vertex of $\mathcal{G}(f)$ and $o$ is a diamond ancestor of $n$ in $\mathcal{G}(f)$, we denote by $\lambda_v(f,w)$ the \ac{RTO} of the data units of flow $f$ at the observation point $v$, with respect to their order at $w$, as defined in \cite{mohammadpourPacketReorderingTimeSensitive2020,rfc4737}.
With the restrictions on $v$ and $w$, $\lambda_v(f,w)$ is well defined from \cite{mohammadpourPacketReorderingTimeSensitive2020,rfc4737} because each data unit of $f$ is observed at most once at $w$ and $v$, thus the arrival instant of each data unit at $w$ and $v$ is well defined.
Similarly, with $v$ and $w$ meeting the same conditions, we denote by $\pi_v(f,w)$ the \ac{RBO}, as defined in \cite{mohammadpourPacketReorderingTimeSensitive2020,rfc4737} of the data units of flow $f$ at $v$ with respect to their order at the reference $w$.

If $\texttt{POF}_n(\{f\},o)$ is a \acl{POF} that forces the data units of $f$ to be in the same order as their order at the output of $o$, then with $v$ being the input of $\texttt{POF}_n$, $\lambda_v(f,o^*)$ gives the minimum value for the timeout parameter $T$ of \ac{POF} algorithm and $\pi_v(f,o^*)$ gives its required buffer size~\cite[\S IV.B]{mohammadpourPacketReorderingTimeSensitive2020}.
In general, if a destination $d$ does not support any mis-ordering, then a function $\texttt{POF}_d(\{f\},\text{source}(f))$ that uses the reference $o=\text{source}(f)$ is placed just before delivery to the application.
The \acl{ETE} \ac{RTO} and \ac{RBO} $\lambda_d(f,\text{source}(f)),\pi_d(f,\text{source}(f))$ must be obtained to correctly configure this \ac{POF}.
\begin{proposition}[\label{prop:toolbox:rto-vs-rbo}$\ac{RBO} \le \alpha(\acs{RTO})$]
    For a flow $f$, and two observations points $v,w$ meeting the above conditions, if $\lambda_v(f,w)<+\infty$, then
    \begin{equation}\label{eq:toolbox:rto-vs-rbo}
        \pi_v(f,w) \le \alpha_{v,f}(\lambda_v(f,w))
    \end{equation}
\end{proposition}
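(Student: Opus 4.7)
The plan is to express both the \ac{RTO} and the \ac{RBO} in terms of the backlog of a conceptual ideal reordering function placed at $v$, and then use the arrival-curve constraint at $v$ to bound that backlog. Specifically, I would introduce an ideal \ac{POF} with infinite timeout that reorders the data units of $f$ observed at $v$ according to the reference order observed at $w$; using the definitions recalled from \cite{mohammadpourPacketReorderingTimeSensitive2020,rfc4737}, $\pi_v(f,w)$ equals the supremum over time of the byte-backlog $B(t)$ of this ideal function, and $\lambda_v(f,w)$ equals the supremum over data units of the waiting time spent inside this function.

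The key step is the following observation: fix any time $t$ and any data unit $m$ counted in $B(t)$. Then $m$ has arrived at $v$ at some instant $t_m \le t$ but has been held back because a data unit $m'$ that precedes it in the reference order at $w$ has not yet been observed at $v$. By the definition of the per–data-unit reordering time offset, the waiting duration $t - t_m$ is upper-bounded by $\lambda_v(f,w) =: \lambda$, so $t_m \in (t-\lambda,\, t]$. Consequently every data unit contributing to $B(t)$ has crossed $v$ within a window of length at most $\lambda$, and the arrival-curve constraint $\alpha_{v,f}$ at $v$ yields
\begin{equation*}
B(t) \le \alpha_{v,f}(\lambda).
\end{equation*}
Taking the supremum over $t$ gives the claimed inequality $\pi_v(f,w) \le \alpha_{v,f}(\lambda_v(f,w))$.

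The main obstacle will be aligning the informal argument above with the precise formal definitions of \ac{RTO} and \ac{RBO} of \cite{mohammadpourPacketReorderingTimeSensitive2020,rfc4737}: in particular, one must check that these definitions do coincide with the delay and backlog of the ideal reordering function (rather than, say, a per-packet metric that could behave differently at boundaries), and that the packetization hypotheses on $v$ and $w$ together with the restrictions that $n$ is not an \ac{EP}-vertex and that $o$ is a diamond ancestor of $n$ are what ensure each data unit is observed exactly once at both points, so that both $\lambda_v(f,w)$ and $\pi_v(f,w)$ are well defined. The assumption $\lambda_v(f,w) < +\infty$ is used precisely to guarantee that the window $(t-\lambda,t]$ is finite so that the arrival-curve bound is meaningful; the monotonicity of $\alpha_{v,f}$ then handles the passage from the pointwise bound to the supremum with no extra work.
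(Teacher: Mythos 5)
Your central idea---every data unit that contributes to the reordering byte offset crosses $v$ inside a time window of length at most the reordering time offset, after which the arrival curve converts the window length into a byte bound---is exactly the mechanism of the paper's proof. The problem is the object you attach it to. You route the argument through the backlog $B(t)$ of an ideal reordering function with \emph{infinite} timeout, and both identifications you need fail as soon as data units can be lost for $v$, which the hypotheses of the proposition explicitly allow (each data unit is observed \emph{at most} once at $v$, not exactly once as you assert in your last paragraph; the paper's own proof sets $E_k=+\infty$ for lost data units). If the $k_0$-th data unit (in the reference order at $w$) never reaches $v$, an infinite-timeout re-sequencer holds every subsequent arriving data unit forever: $\sup_t B(t)$ is then unbounded while $\pi_v(f,w)=\sup_{k:E_k<+\infty}\Pi_k$ can remain finite, so the identification $\pi_v(f,w)=\sup_t B(t)$ does not hold; and for those stuck data units the waiting time $t-t_m$ is not bounded by $\lambda_v(f,w)$, because $\lambda_v(f,w)=\sup_{k:E_k<+\infty}\bigl(E_k-\min_{j>k}E_j\bigr)$ ranges only over non-lost $k$ and is unaffected by the loss of $k_0$. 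The hypothesis $\lambda_v(f,w)<+\infty$ does not exclude losses, so you cannot discharge this case.

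The repair is to drop the detour through the re-sequencer and apply your window argument to each per-data-unit offset $\Pi_k$ directly: the data units counted in $\Pi_k=\sum_{j>k,\,E_j<E_k}l_j$ all cross $v$ in the interval $[\min_{j>k}E_j,\,E_k)$, whose length is $\Lambda_k=E_k-\min_{j>k}E_j\le\lambda_v(f,w)$, hence $\Pi_k\le R(E_k)-R(\min_{j>k}E_j)\le\alpha_{f,v}(\Lambda_k)$, and monotonicity of the arrival curve finishes the proof; this is precisely what the paper does. (Alternatively, your re-sequencer argument becomes correct if you give it timeout exactly $\lambda_v(f,w)$ instead of $+\infty$: then every buffered data unit has waited at most $\lambda_v(f,w)$, and the data units counted in $\Pi_k$ are still all simultaneously buffered just before $E_k$.)
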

The result is directly obtained by writing the definitions of the two notions. 
Its formal proof is in Appendix~\ref{proof:appendix:toolbox:rto-vs-rbo}.
Proposition~\ref{prop:toolbox:rto-vs-rbo} combined with our results from Sec.~\ref{sec:toolbox:pef-oac} show that we can focus on the effect of the \ac{PEF} on the \ac{RTO} to also obtain a bound on the \ac{RBO}.

\begin{theorem}[\ac{RTO} at the output of a \ac{PEF}]\label{thm:toolbox:reordering}
    Consider a flow $f$, a vertex $n$ containing a \acl{PEF} $\texttt{PEF}_n(f)$ and a diamond ancestor $a$ of $n$ in $\mathcal{G}(f)$.
    Denote by $d_f^{a\rightarrow n}$ [resp., $D_f^{a \rightarrow n}$] a lower [resp., upper] delay bound for $f$ between the output of $a$ and the input of $\texttt{PEF}_n(f)$, along any possible path in the graph $\mathcal{G}(f)$.
    Then $\lambda_{\texttt{PEF}_n(f)^*}(f,a)$, the \acf{RTO} of $f$ at the output of the \ac{PEF}, with respect to $a$, verifies
    \begin{equation}
        %\resizebox*{\linewidth}{!}{$
        \lambda_{\texttt{PEF}_n(f)*}(f,a^*) \le \left| D_f^{a\rightarrow n} - d_f^{a\rightarrow n}  -\alpha_{a*}^{\downarrow}(2L^\text{min}) \right|^+
        %$}
    \end{equation}
    where $|x|^+\triangleq\max(0,x)$, $\alpha_{f,a^*}$ is an arrival curve for $f$ at the output of the input port within $a$ and $\alpha_{f,a^*}^{\downarrow}$ is its lower pseudo-inverse\footnote{For $f:\mathbb{R}\rightarrow\mathbb{R}\cup\{-\infty, +\infty\}$ a wide-sense increasing function, its lower pseudo inverse $f^{\downarrow}$ is defined by $f^{\downarrow}(y) = \inf\{x|f(x)\ge y\}$.} defined in \cite[\S 10]{liebeherrDualityMaxPlusMinPlus2017b}.
\end{theorem}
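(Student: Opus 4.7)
My plan is to establish the bound by considering an arbitrary \emph{inversion pair}, i.e., two data units $m_1, m_2$ of $f$ such that $m_1$ is observed at the reference point at $a$ strictly before $m_2$ yet exits $\texttt{PEF}_n(f)$ strictly after $m_2$. Writing $t_m^{a}$ and $t_m^{\mathrm{out}}$ for the observation instants of a non-lost data unit $m$ at the reference and at the PEF output, respectively, the goal is an upper bound on the gap $t_{m_1}^{\mathrm{out}} - t_{m_2}^{\mathrm{out}}$ that is uniform in the pair. Since the RTO (in the RFC 4737 sense adopted by the paper) is the supremum over reordered packets of such offsets, a uniform pairwise bound transfers immediately to $\lambda_{\texttt{PEF}_n(f)^*}(f,a^*)$.

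Two ingredients will enter the decomposition. First, because $\texttt{PEF}_n(f)$ is FIFO and has no delay for non-duplicate packets, $t_m^{\mathrm{out}}$ coincides with the arrival instant at the PEF input of the first-arriving replicate of $m$; each replicate traverses a path of $\mathcal{G}(f)$ from $a$ to $n$ with delay in $[d_f^{a\rightarrow n}, D_f^{a\rightarrow n}]$, so taking the minimum over replicates preserves the interval and gives $d_f^{a\rightarrow n} \le t_m^{\mathrm{out}} - t_m^{a} \le D_f^{a\rightarrow n}$ for every non-lost $m$. Second, the arrival curve $\alpha_{f,a^*}$ together with the minimum packet length $L^{\min}$ yields a lower bound on the spacing of two distinct data units at the reference: the cumulative arrival function at $a$ grows by at least $L_{m_1} + L_{m_2} \ge 2L^{\min}$ between the observation instants of $m_1$ and $m_2$, so the arrival-curve constraint and the definition of the lower pseudo-inverse force
\[ t_{m_2}^{a} - t_{m_1}^{a} \;\ge\; \alpha_{f,a^*}^{\downarrow}(2L^{\min}). \]
Chaining these two bounds on any inversion pair yields
\[ t_{m_1}^{\mathrm{out}} - t_{m_2}^{\mathrm{out}} \;\le\; D_f^{a\rightarrow n} - d_f^{a\rightarrow n} - \alpha_{f,a^*}^{\downarrow}(2L^{\min}), \]
and taking the positive part --- since the RTO is non-negative by definition --- closes the argument.

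\textbf{Expected main obstacle.} The delicate step is the reduction from the formal RFC 4737 definition of the RTO, phrased through a ``next-expected-sequence-number'' book-keeping on the received stream, to a supremum of pairwise gaps over inversion pairs. One must verify that the reference order at $a$ is FIFO and uniquely defined --- which is exactly where the hypothesis that $a$ is a diamond ancestor, and in particular not an EP-vertex, enters --- and then unfold the RTO definition to check that every offset it produces is dominated by the pairwise gap of some inversion pair meeting our hypotheses. A lesser care point is the packetization convention used to invoke the arrival curve: the $2L^{\min}$ bits must be shown to accrue over an interval of length exactly $t_{m_2}^{a} - t_{m_1}^{a}$, which is handled by working with appropriate one-sided limits of the cumulative arrival function just after the packetizer of vertex $a$.
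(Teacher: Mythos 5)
Your proof is correct, and it follows the same conceptual skeleton as the paper's: view everything between the output of $a$ and the output of $\texttt{PEF}_n(f)$ as a single system that is neither FIFO nor lossless but whose per-data-unit delay (via the first-arriving replicate, since the PEF forwards that replicate with zero delay) lies in $[d_f^{a\rightarrow n}, D_f^{a\rightarrow n}]$, and then convert bounded jitter plus the arrival-curve spacing at $a^*$ into a bound on the reordering late time offset. The only difference is one of granularity: the paper's entire proof is the one-line observation that this system has jitter at most $D_f^{a\rightarrow n}-d_f^{a\rightarrow n}$ followed by a citation of Theorem~5 of the Mohammadpour--Le Boudec reordering paper, which is exactly the ``jitter $V$ implies RTO at most $|V-\alpha^{\downarrow}(2L^{\min})|^+$'' statement you re-derive from first principles. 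Your inversion-pair decomposition, the chaining
\begin{equation*}
t_{m_1}^{\mathrm{out}} - t_{m_2}^{\mathrm{out}} = \bigl(t_{m_1}^{\mathrm{out}} - t_{m_1}^{a}\bigr) - \bigl(t_{m_2}^{\mathrm{out}} - t_{m_2}^{a}\bigr) - \bigl(t_{m_2}^{a} - t_{m_1}^{a}\bigr) \le D_f^{a\rightarrow n} - d_f^{a\rightarrow n} - \alpha_{f,a^*}^{\downarrow}(2L^{\min}),
\end{equation*}
and the reduction of the RFC~4737 definition to a supremum of pairwise gaps (which matches the $\Lambda_k = E_k - \min_{j>k}E_j$ formulation the paper itself uses when proving its RBO-versus-RTO proposition) are all sound; you also correctly identify where the diamond-ancestor hypothesis is needed, namely to guarantee that each data unit is seen exactly once at $a$ so the reference order and the delays through every replicate path are well defined. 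What your version buys is self-containedness; what the paper's version buys is brevity and reuse of an already-published lemma. No gap.
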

Theorem~\ref{thm:toolbox:reordering} is a direct application of \cite[Thm. 5]{mohammadpourPacketReorderingTimeSensitive2020} for the system located between the diamond ancestor and the output of the \ac{PEF}, see Appendix~\ref{proof:thm:toolbox:reordering}.

\textbf{Application to the Toy Example:}
The lower-pseudo inverse of $\alpha_{f,B^*}=\gamma_{r_0,b_0}$ in the toy example of Fig.~\ref{fig:prob-statement:toy-example} is $\alpha_{f,B^*}^{\downarrow}: x \mapsto |x - b_0|^+ / r_0$.
In the toy example, all packets have the same size of one d.u., so $\alpha_{f,B^*}^{\downarrow}(2L_{\min}) = 1$ t.u.
Applying Theorem~\ref{thm:toolbox:reordering} proves that the \ac{RTO} at the output of the \ac{PEF} within $F$ in Fig.~\ref{fig:prob-statement:toy-example} is bounded by 6 t.u.
In the trajectory of Fig.~\ref{fig:toolbox:toy-example-tight}, we observe that d.u. 6 is late by 4 t.u. with respect to d.u. 7.
The worst-case \ac{RTO} is hence comprised between 4 and 6 t.u.
%
%If, for any reason, the mis-ordering of the traffic at the output of the \ac{PEF} must be corrected, then a \acf{POF} can be placed after the \ac{PEF}.
%As described in Section~\ref{sec:system-model:function-model}, the \ac{POF} forces the order of the data units, based on a reference order defined at some location in the network.
%Theorem~\ref{thm:toolbox:reordering} [resp., Proposition~\ref{prop:toolbox:rto-vs-rbo}] gives its timeout value [resp., its required buffer space] \cite[\S 4.2]{mohammadpourPacketReorderingTimeSensitive2020}.
%
%\textbf{Application to the Toy Example:}
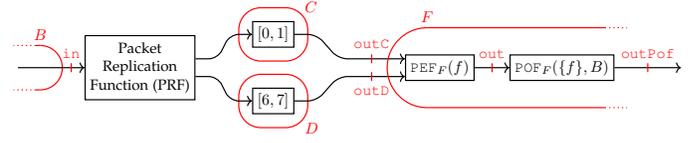
\begin{figure}\centering
    \resizebox*{\linewidth}{!}{\begin{tikzpicture}
	\node[draw] at (0,0) (s) {\makecell{Packet\\Replication\\Function (PRF)}};
	\node[draw] at (3,0.75) (a) {$[0,1]$};
	\node[draw] at (3,-0.75) (b) {$[6,7]$};
	\node[draw] at (6.75,0) (c) {$\texttt{PEF}_F(f)$};
    \node[draw] at (9.5,0) (pof) {$\texttt{POF}_F(\{f\},B)$};

	\node[draw, red, rounded corners=0.5cm, fit={(a)}, inner sep=0.3cm] (cNode) {};
	\node[draw, red, rounded corners=0.5cm, fit={(b)}, inner sep=0.3cm] (dNode) {};
	\node[anchor=west, red] at ([xshift=-0.2cm] dNode.south east) {$D$};
	\node[anchor=west, red] at ([xshift=-0.2cm] cNode.north east) {$C$};

	\draw[->, rounded corners=0.2cm] ([yshift=0.2cm]s.east) -- ([xshift=0.5cm,yshift=0.2cm] s.east) -- ([xshift=-0.5cm] a.west) -- (a.west);
	\draw[->, rounded corners=0.2cm] ([yshift=-0.2cm]s.east) -- ([xshift=0.5cm,yshift=-0.2cm] s.east) -- ([xshift=-0.5cm] b.west) -- (b.west);
	
	\draw[->, rounded corners=0.2cm] (a.east) -- ([xshift=0.5cm] a.east) -- ([xshift=-1.5cm,yshift=0.2cm] c.west) -- ([yshift=0.2cm]c.west) node[pos=0.5,anchor=center] (tA) {};
	\draw[->, rounded corners=0.2cm] (b.east) -- ([xshift=0.5cm] b.east) -- ([xshift=-1.5cm,yshift=-0.2cm] c.west) -- ([yshift=-0.2cm]c.west) node[pos=0.5,anchor=center] (tB) {};
	\draw[-,red] ([yshift=-0.1cm] tA.center) -- ([yshift=0.1cm] tA.center) node[pos=1,above] {\texttt{outC}};
	\draw[-,red] ([yshift=-0.1cm] tB.center) -- ([yshift=0.1cm] tB.center) node[pos=0,below] {\texttt{outD}};

	\draw[red] ([xshift=0.5cm] c.west) ++(90:-0.9) arc (90:-90:-0.9) node[pos=0, anchor=center] (mt) {} node[pos=1, anchor=center] (mtt) {};
	\draw[red] (mt.center) -- ++(4cm,0);
	\draw[red, dotted] (mt.center) ++(4cm,0) -- ++(0.5cm,0);
	\draw[red] (mtt.center) -- ++(4cm,0);
	\draw[red, dotted] (mtt.center) ++(4cm,0) -- ++(0.5cm,0);
	\node[red, anchor=south] at (mtt.center) {$F$};

	% OPTION A : Replication inside node A
	\draw[red] ([xshift=-1cm] s.west) ++(90:0.5) arc (90:-90:0.5) node[pos=0, anchor=center] (mmt) {} node[pos=1, anchor=center] (mmtt) {};
	\draw[red] (mmt.center) -- ++(-0.1cm,0);
	\draw[red, dotted] (mmt.center) ++(-0.1cm,0) -- ++(-0.5cm,0);
	\draw[red] (mmtt.center) -- ++(-0.1cm,0);
	\draw[red, dotted] (mmtt.center) ++(-0.1cm,0) -- ++(-0.5cm,0);
	\node[red, anchor=south] at (mmt.center) {$B$};

	% OPTION B : Replication not inside node B
	%\draw[red] ([xshift=-1cm] s.west) ++(90:0.8) arc (90:-90:0.8) node[pos=0, anchor=center] (mmt) {} node[pos=1, anchor=center] (mmtt) {};
	%\draw[red, dotted] (mmt.center) -- ++(-0.5cm,0);
	%\draw[red, dotted] (mmtt.center) -- ++(-0.5cm,0);
	%\node[red, anchor=south] at (mmt.center) {$B$};

	\draw[->] ([xshift=-1.5cm] s.west) -- (s.west) node[pos=0.8,anchor=center] (tIn) {};
	\draw[-,red] ([yshift=0.1cm] tIn.center) -- ([yshift=-0.1cm] tIn.center) node[pos=0,above] {\texttt{in}};
	\draw[->] (c.east) -- (pof.west) node[pos=0.5,anchor=center] (tOut) {};
	\draw[-,red] ([yshift=-0.1cm] tOut.center) -- ([yshift=0.1cm] tOut.center) node[pos=1,above] {\texttt{out}};
    \draw[->] (pof.east) -- ([xshift=1.5cm] pof.east) node[pos=0.5,anchor=center] (tOutPof) {};
	\draw[-,red] ([yshift=-0.1cm] tOutPof.center) -- ([yshift=0.1cm] tOutPof.center) node[pos=1,above] {\texttt{outPof}};
\end{tikzpicture}%}%
    \caption{\label{fig:toolbox:toy-example-with-pof} Toy example of Fig.~\ref{fig:prob-statement:toy-example}, with a \acf{POF} placed after the \ac{PEF} to correct the mis-ordering caused by the redundancy.}
\end{figure}
\begin{figure}\centering
    \resizebox*{\linewidth}{!}{\begin{tikzpicture}[xscale=2]
	\tikzstyle{lw} = [line width=2pt]
	\tikzstyle{every node}=[font=\Large]

	% % OUT POF
	\draw [->,lw] (6,-6) -- (15,-6);
	\draw (6,-6.2) -- (6,-5.8)	node[pos=0,below] {7};
	\draw (7,-6.2) -- (7,-5.8)	node[pos=0,below] {8};
	\draw (8,-6.2) -- (8,-5.8)	node[pos=0,below] {9};
	\draw (9,-6.2) -- (9,-5.8)  node[pos=0,below] {10};
	\draw (10,-6.2) -- (10,-5.8)node[pos=0,below] {11};
	\draw (11,-6.2) -- (11,-5.8)node[pos=0,below] {12};
	\draw (12,-6.2) -- (12,-5.8)node[pos=0,below] {13};
	\draw (13,-6.2) -- (13,-5.8)node[pos=0,below] {14};
	\draw (14,-6.2) -- (14,-5.8)node[pos=0,below] {15};
	\node[anchor=south, red] at (15,-6)  {\texttt{outPof}};
	
    \draw[lw,blue,->] (7,-6) -- (7,-5) node[pos=1, black, anchor=south] {1};
    \draw[lw,blue,->] (7.3,-6) -- (7.3,-5) node[pos=1, black, anchor=south] {2};

	\draw[lw,blue,->] (8,-6) -- (8,-5) node[pos=1, black, anchor=south] {3};
	
	\draw[lw,blue,->] (9,-6) -- (9,-5) node[pos=1, black, anchor=south] {4};
	
	\draw[lw,blue,->] (10,-6) -- (10,-5) node[pos=1, black, anchor=south] {5};
	
	\draw[lw,blue,->] (11,-6) -- (11,-5);
	\draw[lw,blue,->] (11.1,-6) -- (11.1,-5);
    \draw[lw,blue,->] (11.2,-6) -- (11.2,-5);
    \draw[lw,blue,->] (11.3,-6) -- (11.3,-5) node[pos=1, black, anchor=south] {\makecell{6,7,8,9,\\10,11,12}};
    \draw[lw,blue,->] (11.4,-6) -- (11.4,-5);
    \draw[lw,blue,->] (11.5,-6) -- (11.5,-5);
    \draw[lw,blue,->] (11.6,-6) -- (11.6,-5);

	\draw[lw,blue,->] (12,-6) -- (12,-5) node[pos=1, black, anchor=south] {13};

	\draw[lw,blue,->] (13,-6) -- (13,-5) node[pos=1, black, anchor=south] {14};

    \node at (14,-5.5) {\ldots};
\end{tikzpicture}%}%
    \caption{\label{fig:toolbox:toy-example-pof-output} Trajectory of the packets at the output of the \ac{POF} of Fig.~\ref{fig:toolbox:toy-example-with-pof} when the \ac{POF} processes the packets from the trajectory of Fig.~\ref{fig:toolbox:toy-example-tight}.}
\end{figure}
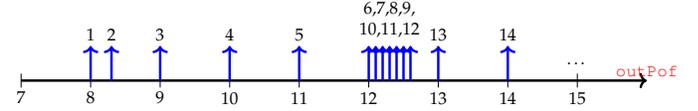

Assume now that we place, after the \ac{PEF}, the function $\texttt{POF}_F(\{f\},B)$, a \acl{POF} enforcing for $f$ the order defined at $B$ (Fig.~\ref{fig:toolbox:toy-example-with-pof}).
With Theorem~\ref{thm:toolbox:reordering}, we know that its timeout $T$ should be of at least 6 time units and it requires a buffer of at least 14 d.u. (Proposition~\ref{prop:toolbox:rto-vs-rbo} and Fig.~\ref{fig:toolbox:toy-example-tihhtness}).
In the trajectory of Fig.~\ref{fig:toolbox:toy-example-tight}, the \ac{POF} receives the traffic from the Line ``\texttt{out}'' and forces the data units to be in the same order as on the Line ``\texttt{in}''.
The resulting output is given in Fig.~\ref{fig:toolbox:toy-example-pof-output}.
We observe two main characteristics of the \ac{POF}; they have been widely studied in \cite{mohammadpourPacketReorderingTimeSensitive2020}.

First, we note that all data units continue to have a delay upper-bounded by 7 t.u.
Indeed, none of the data units has been lost for the \ac{POF} thus the \ac{POF} does not increase the \acf{ETE} latency of the data units \cite[Thm.~4]{mohammadpourPacketReorderingTimeSensitive2020}.

Second, we observe that the traffic at the output of the \ac{POF} (Fig.~\ref{fig:toolbox:toy-example-pof-output}) is much more bursty than the traffic at the output of the \ac{PEF} (Line ``\texttt{out}'', Fig.~\ref{fig:toolbox:toy-example-tight}).
We observe that seven d.u. exit the \ac{POF} at the same time (t.u. 12).
The traffic is hence no more constrained by $\alpha_{PEF^*} = \gamma_{2r_0,4b_0} \otimes \gamma_{r_0,8b_0}$, the arrival curve of the flow at the output of the \ac{PEF}, obtained by applying Theorem~\ref{thm:toolbox:pef-oac} (Sec.~\ref{sec:toolbox:pef-oac}).
We apply Corollary~1 of \cite{mohammadpourPacketReorderingTimeSensitive2020}: 
If none of the data units is lost for the \ac{POF} (at least one replicate of each data unit reaches the \ac{PEF}), then $\alpha_{POF^*} = \gamma_{r_0,8b_0}$ is an arrival curve of $f$ at the output of the \ac{POF}.
The trajectory in Fig.~\ref{fig:toolbox:toy-example-pof-output} is indeed $\gamma_{r_0,8b_0}$-constrained.
If both replicates of a data unit can be lost, then $\gamma_{r_0,8b_0+Tr_0}$ is an arrival curve for $f$ at the output of the \ac{POF}, with $T$ being the timeout parameter of the \ac{POF}.

Placing a \ac{POF} after a \ac{PEF} hence comes with benefits and drawbacks, as summarized on the first line of Table~\ref{tab:benefits-drawbacks}. 

	% !TeX spellcheck = en_US
% !TeX root = main
\vspace{\myvspacebeforesec}
\section{Analysis of the Interactions between \ac{PREF} and Traffic Regulators}\label{sec:regulators}
Sec.~\ref{sec:toolbox:reordering} shows that a \acf{POF} can be used after a \ac{PEF} to remove the mis-ordering caused by the redundancy.
Similarly, \aclp{REG} can be used after a \ac{PEF} to remove the burstiness increase caused by the redundancy, especially if the downstream systems cannot support the worst-case traffic of the \ac{PEF} output (Theorem~\ref{thm:toolbox:pef-oac}).
%
%As intuited in the problem statement of Sec.~\ref{sec:prob-statement}, a regulator placed after a \ac{PEF} is of particular interest when the downstream systems cannot support the worst-case traffic of the \ac{PEF} (Theorem~\ref{thm:toolbox:pef-oac}) or when this worst-case traffic incurs high delay-bounds in the downstream systems.

Traffic regulators come in two flavors: \acfp{PFR} and \acfp{IR}.
%With \acfp{IR}, a set of flows $\mathcal{F}$ that share a same redundant section, starting at a vertex $a$ and ending with an elimination at a vertex $n$, can share the same \ac{REG} function $\texttt{REG}_n(\mathcal{F},a)$ as in Fig.~\ref{fig:system-model:example-of-functions}.
%With \acfp{PFR}, each flow $f\in\mathcal{F}$ has its own attributed function: $\texttt{REG}_n(\{f\},a)$ (Fig.~\ref%{fig:system-model:example-of-functions-a}).
Both are configured with per-flow contracts, $\{\sigma_{f,n}\}_{f}$, and force each flow $f$ to be $\sigma_{f,n}$-compliant, delaying the packets if required.

Hence, when the shaping curve $\sigma_{f,n}$ for each flow $f$ equals the arrival curve that the flow had before the redundant section, then the regulators remove any burstiness increase caused by the redundancy, thus making the redundancy transparent to the downstream nodes.
However, regulators are themselves queuing systems and their effect on the worst-case \ac{ETE} delay should be accounted for.

In this section, we first analyze the interactions between \ac{PEF} and a \acl{REG} placed directly after.
We evaluate how these interactions affect the \ac{ETE} delay guarantees of the flows, and we show that the conclusions highly depend on the nature of the regulator (either \ac{PFR} or \ac{IR}).
We last analyze the effect of a \ac{POF} placed after the \ac{PEF} and before the \ac{REG}.
%
% !TeX spellcheck = en_US
% !TeX root = main
\vspace{\myvspacebeforesubsec}
\subsection{Delay Bound Analysis of \acs{PREF} Combined with Per-Flow Regulators}\label{sec:regulators:pfr}

% We begin with the analysis of the \acf{PFR}.
% As per Sec.~\ref{sec:system-model:function-model}, when a flow is processed by its dedicated \ac{PFR}, after its duplicates have been eliminated by a dedicated \ac{PEF}, its packets cannot be delayed by any other flow.
% Hence, we analyze the situation by assuming a unique flow $f$.
\begin{figure}\centering
    \resizebox*{\linewidth}{!}{\begin{tikzpicture}
    \def\mfact{1.6}
    \draw[dashed] (0.25,0) ++(90:\mfact*0.2) arc (90:-90:\mfact*0.2);
    \node at (0.25,0) {$a$};
    \draw[dashed] (0.25,\mfact*0.2) -- ++(-0.2cm,0);
    \draw[dashed] (0.25,-\mfact*0.2) -- ++(-0.2cm,0);

    \node at (\mfact*1.25,0) (entry) {};
    \node[circle, draw, dashed] at (\mfact*1.5,\mfact*0.5) (n1){};
    \node[circle, draw, dashed] at (\mfact*2,0) (n2){};
    \node[circle, draw, dashed] at (\mfact*1.75,\mfact*-0.5)(n3) {};  
    \node[circle, draw, dashed] at (\mfact*2.5,\mfact*0.25)(n4) {};
    \node[circle, draw, dashed] at (\mfact*3,\mfact*-0.25) (n5){};
    \node at (\mfact*3.25,0) (exit) {};
    \node[draw, dotted, anchor=north, rotate=90] at (\mfact*4.5,0) (pef) {$\texttt{PEF}_n(f)$};
    \node[draw, anchor=north, rotate=90] at ([xshift=0.5cm] pef.south) (pfr) {\resizebox*{1.6cm}{!}{$\texttt{REG}_n(\{f\},a)$}};
    \node[draw, fit={(entry) (n1) (n2) (n3) (n4) (n5) (exit) (pef)}] (sys) {};
    \draw[dashed] (\mfact*4.6,0) ++(90:\mfact*-0.6) arc (90:-90:\mfact*-0.6);
    \draw[dashed] (\mfact*4.6,\mfact*0.6) -- ++(1.7cm,0);
    \draw[dashed] (\mfact*4.6,\mfact*0.6) ++(1.7cm,0) -- ++ (0.5cm,0);
    \draw[dashed] (\mfact*4.6,\mfact*-0.6) -- ++(1.7cm,0);
    \draw[dashed] (\mfact*4.6,\mfact*-0.6) ++(1.7cm,0) -- ++ (0.5cm,0);

    \draw[->] (0.25+\mfact*0.2,0) -- (sys.west);
    \draw[->, dotted] (sys.west) -- (n1);
    \draw[->, dotted] (sys.west) -- (n3);
    \draw[->, dotted] (n3) -- (n5);
    \draw[->, dotted] (n3) -- (n2);
    \draw[->, dotted] (n1) -- (n2);
    \draw[->, dotted] (n2) -- (n4);
    \draw[->, dotted] (n4) -- (\mfact*4.6-\mfact*0.6,0);
    \draw[->, dotted] (n5) -- (\mfact*4.6-\mfact*0.6,0);

    \draw[->, dotted] (\mfact*4.6-\mfact*0.6,0) -- (pef.north);
    \draw[->, dotted] (pef.south) -- (sys.east);
    \draw[->] (sys.east) -- (pfr.north);
    \draw[->, dotted] (pfr.south) -- ++(1cm,0);

    \node[anchor=south] at (sys.north) (sysTxt) {$\mathcal{S}$};
    \node[draw, fit={(entry) (n1) (n2) (n3) (n4) (n5) (exit) (pef) (pfr) (sys) (sysTxt)}] (sysPrime) {};
    \node[anchor=south] at (sysPrime.north) (sysPrimeTxt) {$\mathcal{S}'$};
    \node[anchor=north west] at (\mfact*4.5-\mfact*0.6, \mfact*0.6) {$n$};

    \draw[red, dashdotted] (\mfact*0.75,\mfact*1) -- (\mfact*0.75,\mfact*-1.5);
    \draw[red, dashdotted] (\mfact*5.0,\mfact*1) -- (\mfact*5.0,\mfact*-1.5);

    \draw[latex-latex, red] (\mfact*0.75,\mfact*-1.1) -- (\mfact*5.0,\mfact*-1.1) node[pos=0, left] {$[d,D]$};
    \draw[red, dashdotted] (\mfact*5.75,\mfact*1) -- (\mfact*5.75,\mfact*-1.5);
    \draw[latex-latex, red] (\mfact*0.75,\mfact*-1.25) -- (\mfact*5.75,\mfact*-1.25) node[pos=1, right] {$[d',D']$};

    \node[anchor=south west] at ([xshift=0.1cm, yshift=0.75cm] pfr.south east) (conf) {$\sigma_{n,f} \triangleq \alpha_{f,a^*}$};
    \draw[-latex] (conf.west) -| (pfr.east);

\end{tikzpicture}%}%
    \caption{\label{fig:regulators:pfr-overview} Notations for the analysis of the interactions between \ac{PEF} and a \ac{PFR} for a flow $f$. Vertices of $\mathcal{G}(f)$ are shown in dashed circles/ovals and edges are shown with dotted arrows.}
\end{figure}
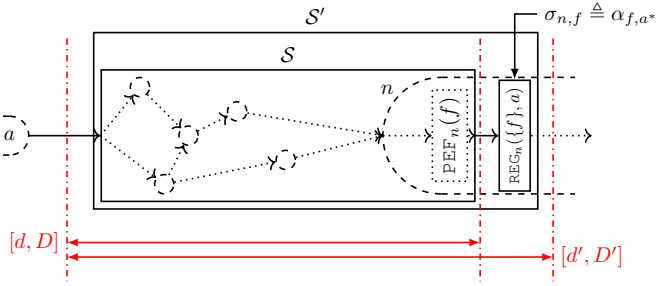
Consider a vertex $n$ containing a function $\texttt{PEF}_n(f)$ and consider a diamond ancestor $a$ of $n$ in $\mathcal{G}(f)$  (Fig.~\ref{fig:regulators:pfr-overview}).
Between $a$ and $n$, the flow follows $\mathcal{G}(f)$, with potentially multiple vertices and multiple paths.
% (the dotted edges in Fig.~\ref{fig:regulators:pfr-overview} being only a possible illustration of the flow's path between $a$ and $n$).
Consider the system $\mathcal{S}$ between the output of $a$ and the output of $\texttt{PEF}_n(f)$ (solid box in Fig.~\ref{fig:regulators:pfr-overview}).
Due to all the possible paths with different lengths, $\mathcal{S}$ is neither \ac{FIFO} nor lossless in the general case.
We denote by $d$ [resp., $D$] a delay lower-bound [resp., upper-bound] for each forwarded d.u. through $\mathcal{S}$.
The delays $d$ and $D$ are well-defined because the data units are seen at most once at the output of the \ac{PEF}.
Note that the \ac{PEF} has no delay, hence $d$ [resp., $D$] verifies $d=d_f^{a\rightarrow n}$ [resp., $D=D_f^{a\rightarrow n}$], \emph{i.e.}, a delay bound along any possible paths $a\rightarrow n$ is a delay bound through $\mathcal{S}$.

After $\mathcal{S}$, and still within vertex $n$ (dashed oval on the right of Fig.~\ref{fig:regulators:pfr-overview}), we place a \ac{PFR}: $\texttt{REG}_n(\{f\},a)$ with shaping curve $\sigma_{n,f}\triangleq \alpha_{f,a^*}$.
We now consider the system $\mathcal{S}'$ made of $\mathcal{S}$ followed by the \ac{PFR}, and we are interested in the delay bounds $[d',D']$ for the non-lost data units through $\mathcal{S}'$.
%The regulator being modeled without any technological latency (Sec.~\ref{sec:system-model:function-model}), we obtain $d'=d$ and now focus on the upper delay-bound.
%
If $\mathcal{S}$ was \ac{FIFO}, we could use the essential \emph{shaping-for-free} property of regulators~\cite{leboudecTheoryTrafficRegulators2018,leboudecNetworkCalculusTheory2001}: As $f$ is $\sigma_{f,n}$-constrained at the input of $\mathcal{S}$, the regulator would not have increased the \ac{ETE} delay of the data units;  we write this as $D'=D$.
But, as $\mathcal{S}$ is not \ac{FIFO}, the \ac{PFR} does not guarantee the \emph{shaping-for-free} property, as we show on the toy example.
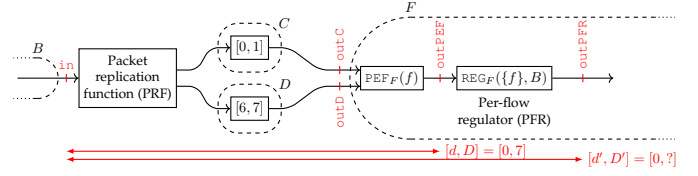
\begin{figure}\centering
    \resizebox*{\linewidth}{!}{\begin{tikzpicture}
	\node[draw] at (0,0) (s) {\makecell{Packet\\replication\\function (\acs{PRF})}};
	\node[draw] at (3,0.75) (a) {$[0,1]$};
	\node[draw] at (3,-0.75) (b) {$[6,7]$};
	\node[draw] at (6.5,0) (c) {$\texttt{PEF}_F(f)$};
    \node[draw, anchor=west, label=below:{\makecell{Per-flow\\regulator (PFR)}}] at ([xshift=0.8cm] c.east) (reg) {$\texttt{REG}_F(\{f\},B)$};
	\node[draw, dashed, rounded corners=0.5cm, fit={(a)}, inner sep=0.3cm] (cNode) {};
	\node[draw, dashed, rounded corners=0.5cm, fit={(b)}, inner sep=0.3cm] (dNode) {};
	\node[anchor=west] at ([xshift=-0.2cm] dNode.north east) {$D$};
	\node[anchor=west] at ([xshift=-0.2cm] cNode.north east) {$C$};

	\draw[->, rounded corners=0.2cm] ([yshift=0.2cm]s.east) -- ([xshift=0.5cm,yshift=0.2cm] s.east) -- ([xshift=-0.5cm] a.west) -- (a.west);
	\draw[->, rounded corners=0.2cm] ([yshift=-0.2cm]s.east) -- ([xshift=0.5cm,yshift=-0.2cm] s.east) -- ([xshift=-0.5cm] b.west) -- (b.west);
	
	\draw[->, rounded corners=0.2cm] (a.east) -- ([xshift=0.5cm] a.east) -- ([xshift=-1cm,yshift=0.2cm] c.west) -- ([yshift=0.2cm]c.west) node[pos=0.5,anchor=center] (tA) {};
	\draw[->, rounded corners=0.2cm] (b.east) -- ([xshift=0.5cm] b.east) -- ([xshift=-1cm,yshift=-0.2cm] c.west) -- ([yshift=-0.2cm]c.west) node[pos=0.5,anchor=center] (tB) {};
	\draw[-,red] ([yshift=-0.1cm] tA.center) -- ([yshift=0.1cm] tA.center) node[pos=1,right, rotate=90] {\texttt{outC}};
	\draw[-,red] ([yshift=-0.1cm] tB.center) -- ([yshift=0.1cm] tB.center) node[pos=0,left, rotate=90] {\texttt{outD}};

	\draw[dashed] ([xshift=1.25cm] c.west) ++(90:-1.5) arc (90:-90:-1.5) node[pos=0, anchor=center] (mt) {} node[pos=1, anchor=center] (mtt) {};
	\draw[dashed] (mt.center) -- ++(6cm,0);
	\draw[dotted] (mt.center) ++(6cm,0) -- ++(0.5cm,0);
	\draw[dashed] (mtt.center) -- ++(6cm,0);
	\draw[dotted] (mtt.center) ++(6cm,0) -- ++(0.5cm,0);
	\node[anchor=south] at (mtt.center) {$F$};

	% OPTION A : Replication inside node A
	\draw[dashed] ([xshift=-1cm] s.west) ++(90:0.5) arc (90:-90:0.5) node[pos=0, anchor=center] (mmt) {} node[pos=1, anchor=center] (mmtt) {};
	\draw[] (mmt.center) -- ++(-0.1cm,0);
	\draw[dotted] (mmt.center) ++(-0.1cm,0) -- ++(-0.5cm,0);
	\draw[] (mmtt.center) -- ++(-0.1cm,0);
	\draw[dotted] (mmtt.center) ++(-0.1cm,0) -- ++(-0.5cm,0);
	\node[anchor=south] at (mmt.center) {$B$};

	% OPTION B : Replication not inside node B
	%\draw[red] ([xshift=-1cm] s.west) ++(90:0.8) arc (90:-90:0.8) node[pos=0, anchor=center] (mmt) {} node[pos=1, anchor=center] (mmtt) {};
	%\draw[red, dotted] (mmt.center) -- ++(-0.5cm,0);
	%\draw[red, dotted] (mmtt.center) -- ++(-0.5cm,0);
	%\node[red, anchor=south] at (mmt.center) {$B$};

	\draw[->] ([xshift=-1.5cm] s.west) -- (s.west) node[pos=0.8,anchor=center] (tIn) {};
	\draw[-,red] ([yshift=0.1cm] tIn.center) -- ([yshift=-0.1cm] tIn.center) node[pos=0,above] {\texttt{in}} node[pos=0, anchor=center] (targetIn) {};

    \draw[->] (c.east) -- (reg.west) node[pos=0.5,anchor=center] (tOut) {};
    \draw[->] (reg.east) -- ++(1.5cm,0) node[pos=0.5, anchor=center] (tRegOut) {};
	\draw[-,red] ([yshift=-0.1cm] tOut.center) -- ([yshift=0.1cm] tOut.center) node[pos=1,right, rotate=90] {\texttt{outPEF}} node[pos=0, anchor=center] (targetOutPEF) {};
    \draw[-,red] ([yshift=-0.1cm] tRegOut.center) -- ([yshift=0.1cm] tRegOut.center) node[pos=1, right, rotate=90] {\texttt{outPFR}} node[pos=0,anchor=center] (targetOutPFR) {};

    \draw[latex-latex, red] let \p1 = (targetIn.center), \p2 = (targetOutPEF.center) in (\x1,-1.8) -- (\x2,-1.8) node[pos=1, right] {$[d,D] = [0,7]$};
    \draw[latex-latex, red] let \p1 = (targetIn.center), \p2 = (targetOutPFR.center) in (\x1,-2) -- (\x2,-2) node[pos=1,right] {$[d',D'] = [0,?]$};
\end{tikzpicture}%}%
    \caption{\label{fig:regulators:toy-example-notations} Toy example of Fig.~\ref{fig:prob-statement:toy-example} with a \acf{PFR} placed after the \ac{PEF} to remove the burstiness increase caused by the redundancy.}
\end{figure}
\begin{figure}\centering
    \resizebox*{\linewidth}{!}{% !TeX spellcheck = en_US
% !TeX program = pdflatex
% !TeX root = main

\begin{tikzpicture}[xscale=1.4]
\tikzstyle{pkn} = [pos=1, anchor=south]
\tikzstyle{delay} = [,pos=0.5,left,black]
\tikzstyle{ab} = [pos=0, anchor=north, black]
\tikzstyle{pk} = [->, blue, line width=1.25pt]
\tikzstyle{nb} = [pos=0.5, above, yshift=0.1cm]
\tikzstyle{dld} = [xshift=-0cm]
\tikzstyle{drd} = [xshift=0cm]
\tikzstyle{six} = [[->, red, line width=1.75pt]
\def\xdev{1cm}
\def\yline{0cm}
\def\ygraduate{0.2cm}
\def\ph{1cm}
\def\mxmax{22}
\def\mnpacket{14}
\def\mxminmin{5}
\def\mxmin{0}

% % % % % % % % % % % % IN
%\def\yline{0cm}
%\draw[-latex] (-1*\xdev,\yline) -- ([xshift=\xdev]\mxmax*\xdev, \yline) node[ab,pos=1] {time} node[anchor=south east, pos=1, red] {\texttt{in}} ;
%\foreach \xhere in {0,1,...,\mxmax}	\draw (\xhere*\xdev,\yline-\ygraduate) -- (\xhere*\xdev,\yline+\ygraduate) node[ab] (xin\xhere) {\xhere};

%\foreach \i in {1,2,...,\mnpacket}{
%	\draw[pk]  (\i*\xdev,\yline) -- (\i*\xdev,\ph+\yline)  node[pkn] (in\i) {\i};
%}
% % % % % % % % % % % % A

\def\yline{-2cm}
\draw[-latex] (\mxminmin*\xdev,\yline) -- ([xshift=\xdev]\mxmax*\xdev, \yline) node[ab,pos=1] {time} node[anchor=south east, pos=1, red] {\texttt{outC}} ;
\foreach \xhere in {5,6,...,\mxmax}	\draw (\xhere*\xdev,\yline-\ygraduate) -- (\xhere*\xdev,\yline+\ygraduate);
\foreach \i in {1,2,...,6}{
	\pgfmathsetmacro\xhere{(\i + 7)}
	\draw[pk]  (\xhere*\xdev,\yline) -- (\xhere*\xdev,\ph+\yline) node[pos=0, anchor=center] (cpied\i) {} node[pkn] (c\i) {\i} node[delay] (delayc\i) {7};
}
\foreach \i in {7}{
	\pgfmathsetmacro\xhere{(\i + 7)}
	\draw[pk, dotted]  (\xhere*\xdev,\yline) -- (\xhere*\xdev,\ph+\yline)  node[pkn] {\i} node[delay] {7};
}

\node[anchor=east] at (7.5*\xdev,\yline+0.5cm) {\color{red}\texttt{in} \color{black} $\rightarrow$ \color{red}\texttt{outC} \color{black}:};

% % % % % % % % % % % % B
\def\yline{-4cm}
\draw[-latex] (\mxminmin*\xdev,\yline) -- ([xshift=\xdev]\mxmax*\xdev, \yline) node[ab,pos=1] {time} node[anchor=south east, pos=1, red] {\texttt{outD}} ;
\foreach \xhere in {5,6,...,\mxmax}	\draw (\xhere*\xdev,\yline-\ygraduate) -- (\xhere*\xdev,\yline+\ygraduate);

\draw[pk]  (8*\xdev,\yline) -- (8*\xdev,\ph+\yline)  node[pkn] {7} node[delay] {1};
\draw[pk]  (8.2*\xdev,\yline) -- (8.2*\xdev,\ph+\yline)  node[pkn] {8} node[delay, right] {0};

\foreach \i in {9,10,...,14}{
	\pgfmathsetmacro\xhere{(\i)}
	\draw[pk]  (\i*\xdev,\yline) -- (\i*\xdev,\ph+\yline)  node[pkn] {\i} node[delay] {0};
}

\node[anchor=east] at (7.5*\xdev,\yline+0.5cm) {\color{red}\texttt{in} \color{black} $\rightarrow$ \color{red}\texttt{outD} \color{black}:};

% % % % % % % % % % % % OUT FRER
\def\yline{-6cm}
\draw[-latex] (\mxminmin*\xdev,\yline) -- ([xshift=\xdev]\mxmax*\xdev, \yline) node[ab,pos=1] {time} node[anchor=south east, pos=1, red] {\texttt{outPEF}} ;
\foreach \xhere in {5,6,...,\mxmax}	\draw (\xhere*\xdev,\yline-\ygraduate) -- (\xhere*\xdev,\yline+\ygraduate);

\draw[pk]  (8*\xdev,\yline) -- (8*\xdev,\ph+\yline)  node[pkn] {7};
\draw[pk]  (8.2*\xdev,\yline) -- (8.2*\xdev,\ph+\yline)  node[pkn] {1};
\draw[pk]  (8.4*\xdev,\yline) -- (8.4*\xdev,\ph+\yline)  node[pkn] {8};
\foreach \i in {9,10,...,14}{
	\pgfmathsetmacro\xhere{(\i)}
	\pgfmathtruncatemacro\delayHere{0}
	\draw[pk]  (\i*\xdev,\yline) -- (\i*\xdev,\ph+\yline)  node[pkn] {\i};
}
\foreach \i in {2,3,...,6}{
	\pgfmathsetmacro\xhere{(\i + 7 + 0.4)}
	\draw[pk]  (\xhere*\xdev,\yline) -- (\xhere*\xdev,\ph+\yline)  node[pkn] {\i};
}

% % % % % % % % % % % % OUT PFR
\def\yline{-8cm}
\draw[-latex] (\mxminmin*\xdev,\yline) -- ([xshift=\xdev]\mxmax*\xdev, \yline) node[ab,pos=1] {time} node[anchor=south east, pos=1, red] {\texttt{outPFR}} ;
\foreach \xhere in {5,6,...,\mxmax}	\draw (\xhere*\xdev,\yline-\ygraduate) -- (\xhere*\xdev,\yline+\ygraduate) node[ab] (xpfr\xhere) {\xhere};

\edef\absoluteHere{7}
\foreach \i in {7,1,8,9,2,10,3,11,4,12,5,13,6,14}{
	\pgfmathparse{\absoluteHere + 1}
	\xdef\absoluteHere{\pgfmathresult}
	\pgfmathtruncatemacro\delayHere{\absoluteHere - \i}
	\draw[pk] (\absoluteHere*\xdev,\yline) -- (\absoluteHere*\xdev,\ph+\yline) node[pkn] (pfr\i) {\i} node[delay] (delaypfr\i) {\delayHere};
	
}
\node[anchor=east] at (7.5*\xdev,\yline+0.5cm) {\color{red}\texttt{in} \color{black} $\rightarrow$ \color{red}\texttt{outPFR} \color{black}:};

%\node[draw, fit={(xin6) (in6)}, line width=2pt] {};
\node[draw, fit={(xpfr20) (pfr6) (delaypfr6)}, line width=2pt] {};
\node[draw, fit={(cpied6) ([yshift=0.05cm] c6.center) (delayc6)}, line width=2pt] {};
\end{tikzpicture}%}%
    \caption{\label{fig:regulators:traj-2d-pfr} An acceptable trajectory on the toy example, which shows that the delay bound $D'$ through $\mathcal{S}'$ is at least 14~t.u. The delay of the data units from ``\texttt{in}'' to the observation points are given on the left of the packets.}
\end{figure}
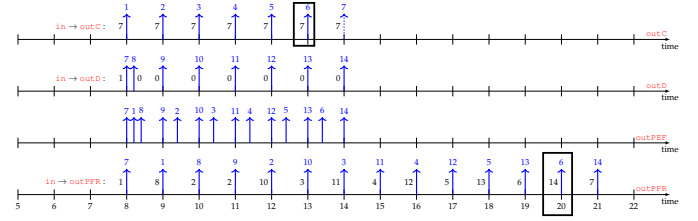

\textbf{Application to the Toy Example}:
Fig.~\ref{fig:regulators:toy-example-notations} considers the toy example from Fig.~\ref{fig:prob-statement:toy-example}, to which we add the \ac{PFR} $\texttt{REG}_n(\{f\},B)$ within vertex $F$ (dashed oval on the right in Fig.~\ref{fig:regulators:toy-example-notations}), just after the function $\texttt{PEF}_F(f)$.
With the above notations, system $\mathcal{S}$ is between the observation points ``\texttt{in}'' and ``\texttt{outPEF}'', with the delay bounds $[d,D]=[0,7]$~t.u.
System $\mathcal{S}'$ is between the observation points ``\texttt{in}'' and ``\texttt{outPFR}'', and we seek to obtain a delay-bound $D'$ for $\mathcal{S}'$.

Fig.~\ref{fig:regulators:traj-2d-pfr} presents an acceptable trajectory at the different observation points using the same input ``\texttt{in}'' as in Fig.~\ref{fig:prob-statement:double-rate}.
%At ``\texttt{in}'', the traffic has the periodic profile defined in Sec.~\ref{sec:prob-statement} (one unit of data per every unit of time).
The path through vertex $C$ forwards all packets with a constant delay of 7~t.u., whereas the path through vertex $D$ drops Packets $1$ to $6$, then forwards Packet $7$ with a delay of 1~t.u. (its worst-case delay) and finally forwards the following packets with a delay of 0~t.u. (its best-case delay).
The line ``\texttt{outPEF}'' gives the resulting trajectory at the output of the \ac{PEF} that removes any duplicates.
%As done in Sec.~\ref{sec:toolbox:reordering}, we arbitrarily break ties and spread the packets for an easy lecture of the figure.

Based on its input (``\texttt{outPEF}'')  and on its shaping curve ($\sigma_{f,F} = \alpha_{f,B^*} = \gamma_{r_0,b_0}$), the \ac{PFR} outputs the packets as shown on the Line ``\texttt{outPFR}''.
Recall that the \ac{PFR} $\texttt{REG}_n(\{f\},a)$ is \emph{itself} a \ac{FIFO} system (model in Sec.~\ref{sec:system-model:function-model}).

We observe that the d.u. $6$ suffers through $\mathcal{S}'$ a total delay of 14~t.u.; this is twice the delay upper-bound $D$ through $\mathcal{S}$ alone.
We note that this high delay for d.u. 6 can be explained by the time needed by the \ac{PFR} to process d.u. $1$ to $5$ and $7$ to $13$ that arrived before d.u. 6 and to pace them as required by the shaping curve.
This is done even though d.u.s $7$ to $13$ are out of order (``too early'') with respect to d.u. 6.
At ``\texttt{outPFR}'', the packet containing d.u. 6 is late with respect to d.u. 7 by 12~t.u.
Hence, the \acf{RTO} of the flow through $\mathcal{S}'$ (\emph{i.e.}, at the output of $\mathcal{S}'$, using the input of $\mathcal{S}'$ as reference) is at least 12~t.u., while it was bounded by only 6~t.u. through $\mathcal{S}$ alone (Sec.~\ref{sec:toolbox:reordering}).

We observe that the output of the \ac{PEF} is bursty and out of order, and the \ac{PFR} placed afterwards paces the packets to remove the burstiness.
But, by doing so, the \ac{PFR} worsens the mis-ordering of the packets (12 instead of 6) and increases the delay of the late packets (Packet 6), thus increasing the worst-case \ac{ETE} delay (at least 14~t.u.).
As such, the regulator comes with a \emph{delay penalty}.
With \acp{PFR} configured with leaky-bucket shaping curves, we can upper-bound this delay penalty for any networks.
\begin{theorem}[\label{thm:regulators:pfr-2d} Bound on the delay penalty of a \ac{PFR} placed after a \ac{PEF}]
    %With the notations of the subsection, further 
    Assume that the \ac{PFR} $\texttt{REG}_n(\{f\},a)$ is configured with a leaky-bucket shaping curve $\sigma_{n,f}=\gamma_{r,b}$, and that $\sigma_{n,f}$ is an arrival curve of $f$ at the input of $\mathcal{S}$.
    If $d$ [resp., $D$] is a lower [resp., an upper] bound on the delay of $f$ through the system $\mathcal{S}$ (Fig.~\ref{fig:regulators:pfr-overview}), then $d'=d$ [resp., $D'= 2D - d$] is a lower [resp., an upper] bound on the delay of $f$ through $\mathcal{S}'$.
\end{theorem}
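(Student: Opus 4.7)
The lower bound $d' = d$ is immediate: the \acs{PFR} only withholds packets, so it never releases a data unit before it has arrived, and a data unit therefore crosses $\mathcal{S}'$ in at least as much time as through $\mathcal{S}$ alone, which is itself at least $d$.

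For the upper bound, the plan is to decompose the delay of any data unit $m$ through $\mathcal{S}'$ into its delay through $\mathcal{S}$ (bounded by $D$) plus its delay through the \acs{PFR} alone, and to bound the latter by $D-d$. The first step is to obtain an arrival curve at the input of the \acs{PFR}. Since $\sigma_{n,f} = \gamma_{r,b}$ is, by hypothesis, an arrival curve of $f$ at the input of $\mathcal{S}$, which is the output of the diamond ancestor $a$, Item~2 of Theorem~\ref{thm:toolbox:pef-oac} applied with this ancestor gives
\[
    \alpha_f^{a\to n} \;=\; \gamma_{r,b} \oslash \delta_{D-d} \;=\; \gamma_{r,\,b + r(D-d)}
\]
as an arrival curve of $f$ at the output of $\texttt{PEF}_n(f)$, hence at the input of the \acs{PFR}.

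The second step is to invoke the classical greedy-shaper delay bound \cite[\S 1.7.4]{leboudecNetworkCalculusTheory2001}. The \acs{PFR} is a \acs{FIFO} greedy shaper with shaping curve $\gamma_{r,b}$ whose input has arrival curve $\gamma_{r,\,b + r(D-d)}$. Its delay for every data unit is therefore bounded by the horizontal distance $h(\gamma_{r,\,b+r(D-d)},\gamma_{r,b}) = D - d$, which is the horizontal gap between two parallel affine envelopes of slope $r$. Writing $t^{\mathrm{in}}_m$, $t^{\mathrm{mid}}_m$, $t^{\mathrm{out}}_m$ for the times at which $m$ enters $\mathcal{S}$, leaves $\texttt{PEF}_n(f)$, and leaves the \acs{PFR}, respectively, one obtains
\[
    t^{\mathrm{out}}_m - t^{\mathrm{in}}_m \;\le\; (t^{\mathrm{mid}}_m - t^{\mathrm{in}}_m) + (t^{\mathrm{out}}_m - t^{\mathrm{mid}}_m) \;\le\; D + (D - d) \;=\; 2D - d.
\]

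The subtle point is that $\mathcal{S}'$ is not \acs{FIFO}, so the usual global delay bound via the horizontal gap between the cumulative input and output functions of $\mathcal{S}'$ cannot be used as a black box. The decomposition above circumvents this: the horizontal-gap argument is applied only locally to the \acs{PFR}, which has its own internal \acs{FIFO} queue (Sec.~\ref{sec:system-model:function-model}), while the per-data-unit delay through the non-\acs{FIFO} system $\mathcal{S}$ is bounded directly by the definition of $D_f^{a\to n}$.
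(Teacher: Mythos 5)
Your proposal is correct and follows essentially the same route as the paper's proof: obtain the arrival curve $\gamma_{r,b+r(D-d)}$ at the PFR input via Item 2 of Theorem~\ref{thm:toolbox:pef-oac}, use the fact that the PFR is a FIFO, lossless greedy shaper offering $\gamma_{r,b}$ as a service curve to bound its internal delay by the horizontal distance $D-d$, and add the delay bound $D$ through $\mathcal{S}$. Your explicit remark that the horizontal-gap argument is applied only locally to the FIFO regulator (and not to the non-FIFO system $\mathcal{S}'$ as a whole) is exactly the point the paper's proof relies on implicitly.
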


The proof combines Theorem~\ref{thm:toolbox:pef-oac} with the service-curve characterization of a \ac{PFR} \cite[\S 1.7.3]{leboudecNetworkCalculusTheory2001} to obtain a delay bound within the \ac{PFR}, see Appendix~\ref{proof:regulators:pfr-2d}.
Combined with \cite[Thm.~7]{mohammadpourPacketReorderingTimeSensitive2020}, we directly obtain the following result.

% Note that, in the \ac{TSN} task-group documents, the regulators implement the Token-Bucket Algorithm \cite[\S 8.6.11.3]{IEEEStandardLocal2020} that relies on a leaky-bucket shaping curves.
% As such, the assumption on the shaping curve in Theorem~\ref{thm:regulators:pfr-2d} does not limit its application to the industrial context of \ac{TSN}.
\begin{corollary}[\label{cor:regulators:pfr-rto}Bound on the \ac{RTO} at the output of a \ac{PFR} placed after a \ac{PEF}]
    With the notations of Theorem~\ref{thm:regulators:pfr-2d}, the \ac{RTO} of $f$ at the output of $\texttt{PFR}_n(\{f\}, a)$, with reference $a$,  verifies
    \begin{equation*}
        \lambda_{n,\texttt{PFR}^*}(f,a)  \le \lambda_{n,\texttt{PEF}^*}(f,a) + D - d
    \end{equation*}
    with $\lambda_{n,\texttt{PEF}^*}(f,a)$ the \ac{RTO} of $f$ at the output of the \ac{PEF}, again with respect to the order of the data units at $a$.
\end{corollary}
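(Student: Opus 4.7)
The plan is to recognize the corollary as a direct combination of Theorem~\ref{thm:regulators:pfr-2d} with \cite[Thm.~7]{mohammadpourPacketReorderingTimeSensitive2020}, which states that a \ac{FIFO} system of delay jitter $J$ can increase the \ac{RTO} of the flow it carries by at most $J$. Since the regulator is \ac{FIFO} by its system-model definition (Section~\ref{sec:system-model:function-model}), the only missing ingredient is a jitter bound on the regulator considered in isolation.

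First, I would isolate the regulator from $\mathcal{S}$ and show that its own delay lies in $[0,\,D-d]$. The upper bound should be extracted from the proof of Theorem~\ref{thm:regulators:pfr-2d}: that proof combines the burstiness bound $\alpha_{f,\texttt{PEF}^*}$ on the regulator's input (Theorem~\ref{thm:toolbox:pef-oac}) with the service-curve characterization $\sigma_{n,f}=\gamma_{r,b}$ of the regulator to produce an intra-regulator delay bound of $D-d$. The lower bound of zero is immediate, since an idle regulator releases the head-of-line packet without delay. Hence the regulator alone has jitter at most $D-d$.

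Second, I would apply \cite[Thm.~7]{mohammadpourPacketReorderingTimeSensitive2020} to the regulator alone with this jitter value. Because the \ac{PEF} is instantaneous and sits immediately upstream of the regulator with no intervening element, the input ordering of the regulator coincides with the output ordering of the \ac{PEF}, and in particular the input \ac{RTO} of the regulator measured with reference $a$ equals $\lambda_{n,\texttt{PEF}^*}(f,a)$. The cited theorem then yields $\lambda_{n,\texttt{PFR}^*}(f,a) \le \lambda_{n,\texttt{PEF}^*}(f,a) + (D-d)$, which is the announced inequality.

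The main obstacle I anticipate is verifying that the cited reordering-preservation result is applicable with the chosen reference point $a$, in particular that each data unit of $f$ is packetized and observed at most once at $a$ and at each of the regulator's two endpoints. These requirements follow from $a$ being a diamond ancestor of $n$ in $\mathcal{G}(f)$ (so $a$ is not an \ac{EP}-vertex, and each data unit is observed at most once at $a$) and from the uniqueness enforced downstream by $\texttt{PEF}_n(f)$; this verification is the single non-mechanical step in what is otherwise a routine composition of the two cited results.
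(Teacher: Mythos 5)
Your proposal is correct and follows essentially the same route as the paper: the paper also obtains the corollary by combining the intra-regulator delay bound of $D-d$ from Theorem~\ref{thm:regulators:pfr-2d} with \cite[Thm.~7]{mohammadpourPacketReorderingTimeSensitive2020}, which bounds the \ac{RTO} increase through a \ac{FIFO} element by its jitter. Your additional check that the reference point $a$ and the regulator's endpoints satisfy the packetization and at-most-once-observation requirements is a detail the paper leaves implicit, but it does not change the argument.
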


% \begin{proof}
%     The proof of Theorem~\ref{thm:regulators:pfr-2d} exhibited the delay bounds the \ac{PFR} only: $[D-d,0]$ is the guaranteed interval for the delay of the packets through the \ac{PFR}. This represents a jitter through the \ac{PFR} of $D-d$, we then apply \cite[Thm.~7]{mohammadpourPacketReorderingTimeSensitive2020}.
% \end{proof}

\textbf{Application to the Toy Example}:
Applying Theorem~\ref{thm:regulators:pfr-2d} shows that $2D-d =14$~t.u. is an upper delay bound through $\mathcal{S}'$. As it is achieved by d.u. 6 in Fig.~\ref{fig:regulators:traj-2d-pfr}, it is also the worst-case delay. Applying Corollary~\ref{cor:regulators:pfr-rto} to the toy example gives that 13~t.u. is an upper-bound on the \ac{RTO} of the flow at the output of the \ac{PFR}, with respect to the order of the packets at $B$. Data Unit 6 in the trajectory achieves a reordering offset of 12~t.u. (with respect to d.u. 7), thus the worst-case \ac{RTO} at the output of $\mathcal{S}'$ in the toy example is between 12 and 13~t.u.

When a \ac{PFR} is used after a \ac{PEF}, the current subsection shows that the \emph{shaping-for-free} property does not hold, but Theorem~\ref{thm:regulators:pfr-2d} captures the delay penalty by using the service-curve characterization of \acp{PFR}, combined with the arrival curve obtained from Theorem~\ref{thm:toolbox:pef-oac}.
As we do not know any service-curve characterization for an \ac{IR}, we cannot apply the Theorem~\ref{thm:regulators:pfr-2d} to \acfp{IR}.
%On the contrary, we show in the following subsection that the \ac{IR} can yield unbounded latencies. 
%
% !TeX spellcheck = en_US
% !TeX root = main
\vspace{\myvspacebeforesubsec}
\subsection{Instability of the Interleaved Regulator Placed after a Set of \acp{PEF}}\label{sec:regulators:ir}

\begin{figure}\centering
    \resizebox*{\linewidth}{!}{\begin{tikzpicture}
    \def\mfact{1.6}

    \def\mma{-0.25}
    \draw[dashed] (\mma,0) ++(90:\mfact*0.2) arc (90:-90:\mfact*0.2);
    \node at (\mma,0) {$a$};
    \draw[dashed] (\mma,\mfact*0.2) -- ++(-0.2cm,0);
    \draw[dashed] (\mma,-\mfact*0.2) -- ++(-0.2cm,0);

    \node at (\mfact*1.25,0) (entry) {};
    \node[circle, draw, dashed] at (\mfact*1.5,\mfact*0.5) (n1){};
    \node[circle, draw, dashed] at (\mfact*2,0) (n2){};
    \node[circle, draw, dashed] at (\mfact*1.75,\mfact*-0.5)(n3) {};  
    \node[circle, draw, dashed] at (\mfact*2.5,\mfact*0.25)(n4) {};
    \node[circle, draw, dashed] at (\mfact*2.4,\mfact*-0.25) (n5){};
    \node at (\mfact*3.25,0) (exit) {};
    \node[draw, anchor=west, minimum width=1.8cm] at ([yshift=0.5cm)]\mfact*3.5,0) (pef) {$\texttt{PEF}_n(f_1)$};
    \path let \p1 = (pef) in node at (\x1,0) {\ldots};
    \node[draw, anchor=west, minimum width=1.8cm] at ([yshift=-0.5cm)]\mfact*3.5,0) (pef2) {$\texttt{PEF}_n(f_q)$};
    \node[draw, anchor=north, rotate=90] at (\mfact*5.25,0) (pfr) {$\texttt{REG}_n(\mathcal{F},a)$};
    \node[draw, fit={(entry) (n1) (n2) (n3) (n4) (n5) (exit) (pef) ([xshift=0.2cm]pef2.east)}] (sys) {};
    \draw[dashed] (\mfact*3.6,0) ++(90:\mfact*-0.6) arc (90:-90:\mfact*-0.6);
    \draw[dashed] (\mfact*3.6,\mfact*0.6) -- ++(3cm,0);
    \draw[dashed] (\mfact*3.6,\mfact*-0.6) -- ++(3cm,0);

    \draw[->] (\mma+\mfact*0.2,0) -- (sys.west) node[pos=0.4,above] {$\{f_i\}_{i\in\llbracket1,q\rrbracket}$};
    \draw[->, dotted] (sys.west) -- (n1);
    \draw[->, dotted] (sys.west) -- (n3);
    \draw[->, dotted] (n3) -- (n5);
    \draw[->, dotted] (n3) -- (n2);
    \draw[->, dotted] (n1) -- (n2);
    \draw[->, dotted] (n2) -- (n4);
    \draw[->, dotted] (n4) -- (\mfact*3.6-\mfact*0.6,0);
    \draw[->, dotted] (n5) -- (\mfact*3.6-\mfact*0.6,0);

    \draw[->, dotted] (\mfact*3.6-\mfact*0.6,0) -- (pef.west);
    \draw[->, dotted] (\mfact*3.6-\mfact*0.6,0) -- (pef2.west);
    \draw[->, dotted] (pef.east) -- (sys.east);
    \draw[->, dotted] (pef2.east) -- (sys.east);
    \draw[->] (sys.east) -- (pfr.north);
    \draw[->, dotted] (pfr.south) -- ++(1cm,0);

    \node[anchor=south] at (sys.north) {$\mathcal{S}$};
    \node[anchor=north west] at (\mfact*3.5-\mfact*0.6, \mfact*0.6) {$n$};

    \draw[red, dashdotted] (\mfact*1,\mfact*1) -- (\mfact*1,\mfact*-1.5);
    \draw[red, dashdotted] (\mfact*5.05,\mfact*1) -- (\mfact*5.05,\mfact*-1.5);

    \draw[latex-latex, red] (\mfact*1,\mfact*-1) -- (\mfact*5.05,\mfact*-1) node[pos=0.5, above] {$D$};
    \draw[red, dashdotted] (\mfact*5.75,\mfact*1) -- (\mfact*5.75,\mfact*-1.5);
    \draw[latex-latex, red] (\mfact*1,\mfact*-1.25) -- (\mfact*5.75,\mfact*-1.25) node[pos=0.95, above] {$D'$};

    \node[anchor=south west] at ([xshift=0.1cm, yshift=0.5cm] pfr.south east) (conf) {$\{\sigma_{n,f_i}\}_{i\in\llbracket1,q\rrbracket}$};
    \draw[-latex] (conf.west) -| (pfr.east);
\end{tikzpicture}%}%
    \caption{\label{fig:regulators:ir-overview} Notations for the analysis of the interactions between \acp{PEF} and an \acf{IR} for an aggregate of flows $\mathcal{F}=\{f_i\}_{i\in\llbracket 1, q\rrbracket}$.}
\end{figure}
With an \acf{IR}, several flows $\mathcal{F} = \{f_i\}_{1\le i \le q}$, sharing the same redundant section $a \rightarrow n$ are processed by the same \ac{IR} $\texttt{REG}_n(\mathcal{F},a)$, after their respective elimination function $\texttt{PEF}_n(f_i)$ for $i\in \llbracket 1,m\rrbracket$ (see Fig.~\ref{fig:regulators:ir-overview}).

When the aggregate contains a unique flow, then the \ac{IR} is a \ac{PFR}.
Therefore, we do not expect the \emph{shaping-for-free} property to be valid with the \ac{IR} either.
However, as opposed to the \ac{PFR}, we exhibit an adversarial model in which any \ac{IR} placed after the \acp{PEF} and processing several flows yields unbounded latencies.
\begin{theorem}[\label{thm:regulators:ir-instable}Instability of the \ac{IR} placed after the \acp{PEF}]
    Consider a network with graph $\mathcal{G}$ and consider $q\in\mathbb{N}$ flows $f_1,\ldots,f_q$  (see Fig.~\ref{fig:regulators:ir-overview}).
    Take two vertices $a$ and $n$ such that, for each $i\in\llbracket 1,q\rrbracket$, $a$ is a diamond ancestor of $n$ in $\mathcal{G}(f_i)$. Assume that

    \begin{enumerate}[(a)]
        \item for each $i\in\llbracket 1,q\rrbracket$, vertex $n$ contains $\texttt{PEF}_n(f_i)$, a \ac{PEF} for $f_i$,
        \item vertex $n$ contains $\texttt{REG}_n(\{f_i\}_{i\in\llbracket1,q\rrbracket},a)$, an \acf{IR} for the aggregate, placed after the \acp{PEF}, with the same leaky-bucket shaping curve for each flow: $\forall i\in \llbracket 1,q \rrbracket, \sigma_{f_i,n} = \gamma_{r,b}$,
        \item all graphs $\{\mathcal{G}(f_i)\}_{i\in\llbracket1,q\rrbracket}$ share at least two different paths $P_1,P_2$ to reach $n$ from $a$.
    \end{enumerate}

    For $q \in \mathbb{N}$ and $r,b,d_1,d_2,D_1,D_2 \in \mathbb{R}^+$ with $d_1 \le D_1$, $d_2 \le D_2$ and $D_1 \le D_2$ (flipping the indexes if required), if
    \begin{enumerate}[(a)] \setcounter{enumi}{3}
        \item $b$ is greater than the minimum packet length, 
        \item $d_1,D_1,d_2, D_2$ are not all equal, and
        \item $q \ge q_{\min}$ with
    \end{enumerate}
    \begin{equation*}
        q_{\min} \triangleq \left\lfloor \frac{2r\left|d_2-D_1\right|^+}{b} + 2\right\rfloor + 1
    \end{equation*}
    then there exists an adversarial traffic arrival at $a$ for each of the $q$ flows and an adversarial implementation of the paths $\{P_j\}_{j}$ such that

    \begin{enumerate}[1/]
        \item each flow $f_i$ is $\gamma_{r,b}$-constrained at $a$, \label{item:thm:ir:source} \label{item:thm:ir:first}
        \item for each data unit $m$ belonging to one of the flows $\{f_i\}_{i\in\llbracket 1,q\rrbracket}$, if $m$ is not lost on $P_1$ [resp., on $P_2$], then its delay along $P_1$ [resp., along $P_2$] is within $[d_1,D_1]$ [resp., within $[d_2,D_2]$], \label{item:thm:ir:path-delay}
        \item flows $\{f_i\}_{i}$ have an unbounded latency within the \ac{IR}, \label{item:thm:ir:undbounded}
        \item $P_1$ and $P_2$ are both \ac{FIFO}, \label{item:thm:ir:both-fifo}
        \item the system $\mathcal{S}$ made of the sub-graph of $\mathcal{G}$ between $a$ and the output of the \acp{PEF} (Fig.~\ref{fig:regulators:ir-overview}) remains lossless and \ac{FIFO}-per-flow for each $f_i$. \label{item:thm:ir:lossless-fifo-per-flow} \label{item:thm:ir:last} 
    \end{enumerate}
\end{theorem}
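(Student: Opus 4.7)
The instability must come from the combination of two effects. First, Theorem~\ref{thm:toolbox:pef-oac} tells us that the \ac{PEF} output of a single flow can exhibit a burst well above its source burst $b$, because packets of that flow can be ``concentrated'' by having an earlier packet travel via the slow path $P_2$ and a later packet travel via the fast path $P_1$, so that both reach the \ac{PEF} within a narrow window. Second, the \ac{IR} uses a single aggregate \ac{FIFO} queue but applies per-flow leaky-bucket pacing: when the head-of-line packet belongs to a flow whose bucket is empty, the entire queue stalls, even though packets of other flows could legally be released. The plan is to chain these two effects: use the \ac{PEF}-induced burst of one flow to stall the \ac{IR}, while the other $q-1$ flows quietly fill the queue behind; then rotate to another flow in the next cycle so that the accumulated backlog never drains.

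\textbf{Construction.} I would choose a super-cycle of length proportional to $|d_2-D_1|^+$ and, within super-cycle number $k$, designate a \emph{blocker} flow $f_{1+(k \bmod q)}$. At vertex $a$, each flow emits the greedy $\gamma_{r,b}$ pattern; the adversarial choice is the path taken by each copy of each data unit. For the blocker, I would route an initial batch of packets through $P_2$ with delay close to $D_2$ and a subsequent batch through $P_1$ with delay close to $d_1$, so that all these packets reach the \ac{PEF} input at essentially the same instant (this is exactly the tightness construction underlying Proposition~\ref{prop:toolbox:simplified-result-is-tight}). Because the \ac{IR} must space the blocker's packets at rate $r$, this episode lasts approximately the blocker burst size divided by $r$; during it, the remaining $q-1$ flows inject into the same queue at aggregate rate $(q-1)r$, contributing a backlog that persists into the next cycle. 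The bound $q_{\min}$ is calibrated so that the packets arriving during one blocking episode strictly exceed what the regulator can drain before the next blocker episode starts, giving a strictly positive increment $\Delta>0$ to the backlog per super-cycle; iterating shows the backlog, and hence the delay, grows without bound.

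\textbf{Verification and main obstacle.} Conditions (1) (source is $\gamma_{r,b}$) and (2) (per-path delay bounds) are built in by construction; condition (4) (each $P_j$ \ac{FIFO} in isolation) follows because within each path the assignment of delays is chosen so that the exit timestamp is non-decreasing in the source timestamp; condition (5) (losslessness and \ac{FIFO}-per-flow through $\mathcal{S}$) holds because every data unit has at least one copy reaching the \ac{PEF} and because, for each flow, the first-arriving copy of each data unit still respects source order. The hard step will be combining the burst concentration at the \ac{PEF} output with the per-flow \ac{FIFO} constraint: achieving a true simultaneous arrival of many packets of the blocker flow requires the gap $|d_2-D_1|^+$ to absorb the inter-packet spacing $b/r$ without inverting the order of first-arriving copies, which is exactly why the denominator $b$ and the factor $2r|d_2-D_1|^+$ appear in $q_{\min}$. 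Once this per-cycle ``gain'' $\Delta$ is shown to be strictly positive under $q \ge q_{\min}$, condition (3) follows by induction on the super-cycle index $k$: the backlog at the head of the \ac{IR} queue at the end of cycle $k$ is at least $k\Delta$, and hence the worst-case residence time inside the \ac{IR} diverges.
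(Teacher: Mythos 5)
Your proposal correctly identifies the two ingredients (jitter compression of a flow's packets by the two unequal paths, and head-of-line blocking in the \ac{IR}), but the mechanism you build from them --- a rotating \emph{blocker} flow whose \ac{PEF}-output burst stalls the queue while the other $q-1$ flows accumulate backlog --- has a genuine gap: it conflates backlog with delay. The \ac{IR} is not a rate-limited server; it releases the head-of-line packet the instant that packet's \emph{own} flow's shaping curve permits. In your construction the $q-1$ follower flows reach the \ac{IR} already $\gamma_{r,b}$-conforming, so once the blocker's burst has drained they are released essentially immediately (at worst in rounds spaced $b/r$ apart, i.e., at aggregate rate $(q-1)r$, which equals their arrival rate). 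Moreover the blocker's burst is itself bounded --- by Theorem~\ref{thm:toolbox:pef-oac} it cannot exceed $b+r(D_2-d_1)$ --- so each blocking episode imposes only a bounded delay. Your claimed per-super-cycle increment $\Delta>0$ is therefore exactly the missing step, and the calibration you propose for $q_{\min}$ (arrival rate $(q-1)r$ versus ``what the regulator can drain'') cannot supply it, because the regulator has no drain rate to compare against. You acknowledge this as ``the hard step'' but do not resolve it, and as described the construction does not show that delays compound across cycles.

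The paper's construction avoids this by making \emph{every} flow pay a forced wait in \emph{every} period and serializing those waits through the FIFO order. Each flow $f_i$ emits two data units per period $\tau$, separated by $I$ at the source; one copy of each is dropped on one path and kept on the other, so that at the \ac{PEF} output the pair is compressed to a separation $I-J$ with $J=|d_2-D_1|$ and $I-J<b/r$. The shaping curve then forces the second packet of $f_i$ to be released at least $b/r$ after the first. The offsets $x_i=x_1+(i-1)\phi$ are chosen so that $f_{i+1}$'s first packet arrives at the \ac{IR} just $\epsilon$ after $f_i$'s second packet, hence behind it in the FIFO queue; chaining $D^1_{i+1,k}\ge D^2_{i,k}\ge D^1_{i,k}+b/r$ over $i$ and wrapping around the period gives $D^1_{1,k+1}\ge D^1_{1,k}+q\,b/r$, while arrivals advance by only $\tau=q\phi<q\,b/r$. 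The condition $q\ge q_{\min}$ is precisely what makes it possible to choose $\epsilon>0$ with $\phi=I-J+\epsilon<b/r$ while keeping $I>J$, i.e., to fit $q$ compressed pairs into a period strictly shorter than $q\,b/r$; this is a different role for $q_{\min}$ than the one you assign it. To repair your proof you would need to replace the rotating blocker with a schedule in which each flow's own shaping constraint is violated at the \ac{IR} input once per period and those violations are forced into a single FIFO chain --- which is, in substance, the paper's construction.
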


The proof is in Appendix~\ref{proof:regulators:ir-instable}.
It relies on the trajectory developed for the proof of \cite[Prop.~7.3]{thomasTimeSynchronizationIssues2020}.
The main idea is to use the mis-ordering caused by \ac{PREF} and the property that the \ac{IR} looks only at the head-of-line packet to generate blocking situations with always-increasing packet delays.

Note that only Properties~\ref{item:thm:ir:first} to \ref{item:thm:ir:undbounded} of Theorem~\ref{thm:regulators:ir-instable} are required to prove the validity of the adversarial model.
However, our adversarial model provides additional Properties~\ref{item:thm:ir:both-fifo} and \ref{item:thm:ir:last}; they are of interest when considering the solutions for preventing the instability, as we illustrate in Sec.~\ref{sec:regulators:reorder}.
Theorem~\ref{thm:regulators:ir-instable} also provides a mean to obtain the following wider result, whose proof is in Appendix~\ref{proof:cor:regulators:ir-after-non-fifo}.

% In addition to the assumption on the flows sharing the same leaky-bucket shaping curve, our adversarial model applies when the number of processed flows is larger than $q_{\min}$.
% Two types of networks can be distinguished:

% \textbf{Networks with an intersection of the delay intervals:}
% If the delay intervals for the two paths $P_1,P_2$ intersect in more than a singleton (case $d_2 < D_1$) then $q_{\min}=3$.
% For these networks, as soon as $\mathcal{S}$ and its following \ac{IR} process more than three flows, Theorem~\ref{thm:regulators:ir-instable} proves that we can find an adversarial trajectory such that the delay of each flow $f_i$ in the \ac{IR} is not bounded.
% The theorem even states that in this adversarial trajectory, paths $P_1$ and $P_2$ remain lossless and \ac{FIFO} and that the overall system $\mathcal{S}$ remains lossless and \ac{FIFO}-per-flow.
% Therefore, by looking at $\mathcal{S}$ as a ``black box'', we obtain the following result that presents a more general interest for any time-sensitive network relying on non-\ac{FIFO} network elements:

 \begin{corollary}[\label{cor:regulators:ir-after-non-fifo} Instability of the \acl{IR} after a non-\ac{FIFO} system, even if the system is \ac{FIFO}-per-flow and lossless]
     For any $D_{\max}>0$, $r>0$, $b$ greater than the minimum packet length, and for any \ac{IR} that processes 3 or more flows $\{f_i\}_i$ using the same leaky-bucket shaping curve $\gamma_{r,b}$, there exists a lossless \ac{FIFO}-per-flow system $\mathcal{S}$ and a $\gamma_{r,b}$-constrained adversarial generation of each flow at the input of $\mathcal{S}$ such that, when the \ac{IR} is placed after $\mathcal{S}$, the delay of the flows through $\mathcal{S}$ is upper-bounded by $D_{\max}$ but the delay of the flows through the \ac{IR} is not bounded.
 \end{corollary}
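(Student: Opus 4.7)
The plan is to derive Corollary~\ref{cor:regulators:ir-after-non-fifo} as an immediate specialization of Theorem~\ref{thm:regulators:ir-instable}. Given the data $D_{\max}>0$, $r>0$, $b$ larger than the minimum packet length, and a number of flows $q\ge 3$, I would first fix the path-delay parameters as $d_1=d_2=0$ and $D_1=D_2=D_{\max}$. These values satisfy the ordering $d_1\le D_1$, $d_2\le D_2$, $D_1\le D_2$ required by the theorem; they are not all equal because $D_{\max}>0$, so condition~(e) holds; and because $d_2=D_1$, we have $|d_2-D_1|^+=0$, so
\begin{equation*}
q_{\min}=\left\lfloor\frac{2r\cdot 0}{b}+2\right\rfloor+1=3,
\end{equation*}
which gives condition~(f) for any $q\ge 3$.

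Next, I would build a concrete graph $\mathcal{G}$ with two vertices $a$ and $n$ connected by two internal parallel paths $P_1$ and $P_2$ that are common to all $q$ flows; each $f_i$ is replicated at $a$, carried along both $P_1$ and $P_2$, and recombined at $n$, where the per-flow \acp{PEF} and the shared \ac{IR} reside. This makes $a$ a diamond ancestor of $n$ in every $\mathcal{G}(f_i)$, realizing hypotheses~(a)--(c) of the theorem; hypothesis~(d) is inherited from the corollary, and the leaky-bucket shaping curve $\sigma_{n,f_i}=\gamma_{r,b}$ common to every flow matches the corollary's IR configuration.

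Applying Theorem~\ref{thm:regulators:ir-instable} then produces an adversarial scenario that simultaneously delivers all five conclusions of the theorem. Properties~1 and 2 say that each $f_i$ is $\gamma_{r,b}$-constrained at $a$ and that every non-lost data unit of $f_i$ reaches the output of the \acp{PEF} with delay at most $\max(D_1,D_2)=D_{\max}$; combined with the fact that the \acp{PEF} are themselves delay-free on non-duplicate packets, this bounds the traversal delay through the subsystem $\mathcal{S}$ (between $a$ and the output of the \acp{PEF}) by $D_{\max}$. Property~5 gives exactly the structural requirements demanded by the corollary: $\mathcal{S}$ is lossless (every data unit entering at $a$ leaves exactly once at the output of the matching $\texttt{PEF}_n(f_i)$) and FIFO-per-flow. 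Property~3 supplies the unbounded regulator delay. The corollary follows by identifying this $\mathcal{S}$ with the one it claims to construct and the downstream \ac{IR} with the regulator.

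The main obstacle is not computational but conceptual: one must carefully verify that the $\mathcal{S}$ produced by Theorem~\ref{thm:regulators:ir-instable} meets the corollary's definition of a ``lossless FIFO-per-flow system with delay bounded by $D_{\max}$'', in particular that the \emph{lossless} qualifier applies at the level of data units (a data unit entering $\mathcal{S}$ at $a$ appears exactly once at the output of the corresponding \ac{PEF}, even though packet-level duplicates are eliminated inside $\mathcal{S}$) and that the uniform choice $D_1=D_2=D_{\max}$ indeed collapses $q_{\min}$ to $3$, so that the adversarial construction is usable for any $q\ge 3$. Once this correspondence is spelled out, no new estimates are required.
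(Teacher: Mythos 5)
Your proposal is correct and takes essentially the same route as the paper: the paper's proof likewise builds $\mathcal{S}$ from a replication function, two parallel lossless FIFO paths and the set of PEFs, and applies Theorem~\ref{thm:regulators:ir-instable} with $[d_1,D_1]=[d_2,D_2]=[0,D_{\max}]$, so that $q_{\min}=3$ and Properties 1--5 yield the corollary. One trivial slip in your write-up: you justify $|d_2-D_1|^+=0$ by asserting $d_2=D_1$, whereas with your choice $d_2=0<D_{\max}=D_1$ the correct reason is simply that $d_2-D_1\le 0$ and $|x|^+=\max(0,x)$.
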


\vspace{\myvspacebeforesubsec}
\subsection{Effect of the Packet-Ordering Function on the Combination of a \acs{PEF} with Traffic Regulators}\label{sec:regulators:reorder}
\begin{table*}\centering
    \caption{\label{tab:benefits-drawbacks} Benefits and Drawbacks of Several Configurations, Compared to the Situation with the \ac{PEF}(s) only.}
    %\resizebox*{\linewidth}{!}{\input{./figures/2021-08-tab-benefits-drawbacks.tex}}
    %\resizebox*{0.8\linewidth}{!}{\input{./figures/2021-08-tab-benefits-drawbacks.tex}}
    \resizebox*{\linewidth}{!}{\input{./figures/2021-08-tab-benefits-drawbacks.tex}}
\end{table*}

Table~\ref{tab:benefits-drawbacks} summarizes the benefits and drawbacks of using regulators after a \ac{PEF}, as analyzed in Sections~\ref{sec:regulators:pfr} and \ref{sec:regulators:ir}.
We observe that the drawbacks of the \aclp{REG} appear symmetrical with respect to those of the \ac{POF}.
For example, a main issue of the \ac{POF} is the burstiness of the traffic at its output; this can be corrected by using a regulator.
A main issue of the \acp{REG} is the delay penalty caused by the out-of-order input; this can be solved by placing a \ac{POF} just before.

The combination \ac{PEF} + \ac{POF} + \ac{REG} appears as a potential solution for keeping the benefits of both the \ac{POF} and the \ac{REG} without their main drawbacks.
We first analyze this new configuration on the toy example.

\textbf{Application to the Toy Example:}
Let us first add a \ac{POF} before the \ac{PFR} of the single-flow situation in Fig.~\ref{fig:regulators:toy-example-notations}.
It gives the situation presented in Fig.~\ref{fig:regulators:toy-example-pef-pof-pfr}.
The \ac{POF} enforces the order of the data units as seen at $B$.
Assume for example that it receives the traffic defined by the line ``\texttt{outPEF}'' of Fig.~\ref{fig:regulators:traj-2d-pfr}.
Then the \ac{POF} outputs the data units as on Line ``\texttt{outPOF}'' of Fig.~\ref{fig:regulators:traj-toy-example-pof-pfr}.
The \ac{PFR} further processes this trajectory to spread the data units as per the flow's contract and outputs them as on the Line ``\texttt{outPFR}'' of Fig.~\ref{fig:regulators:traj-toy-example-pof-pfr}.
The resulting traffic is compliant with the initial arrival curve $\alpha_{r_0,b_0}$. 
We observe that all the data units have kept an \ac{ETE} delay below 7 t.u.
\begin{figure}\centering
    \resizebox*{\linewidth}{!}{\begin{tikzpicture}
	\node[draw] at (0,0) (s) {\makecell{Packet\\replication\\function (\acs{PRF})}};
	\node[draw] at (3,0.75) (a) {$[0,1]$};
	\node[draw] at (3,-0.75) (b) {$[6,7]$};
	\node[draw] at (6.5,0) (c) {$\texttt{PEF}_F(f)$};
    \node[draw, anchor=west] at ([xshift=0.8cm] c.east) (pof) {$\texttt{POF}_F(\{f\},B)$};
    \node[draw, anchor=west] at ([xshift=0.8cm] pof.east) (reg) {$\texttt{REG}_F(\{f\},B)$};
	\node[draw, dashed, rounded corners=0.5cm, fit={(a)}, inner sep=0.3cm] (cNode) {};
	\node[draw, dashed, rounded corners=0.5cm, fit={(b)}, inner sep=0.3cm] (dNode) {};
	\node[anchor=west] at ([xshift=-0.2cm] dNode.north east) {$D$};
	\node[anchor=west] at ([xshift=-0.2cm] cNode.north east) {$C$};

	\draw[->, rounded corners=0.2cm] ([yshift=0.2cm]s.east) -- ([xshift=0.5cm,yshift=0.2cm] s.east) -- ([xshift=-0.5cm] a.west) -- (a.west);
	\draw[->, rounded corners=0.2cm] ([yshift=-0.2cm]s.east) -- ([xshift=0.5cm,yshift=-0.2cm] s.east) -- ([xshift=-0.5cm] b.west) -- (b.west);
	
	\draw[->, rounded corners=0.2cm] (a.east) -- ([xshift=0.5cm] a.east) -- ([xshift=-1cm,yshift=0.2cm] c.west) -- ([yshift=0.2cm]c.west) node[pos=0.5,anchor=center] (tA) {};
	\draw[->, rounded corners=0.2cm] (b.east) -- ([xshift=0.5cm] b.east) -- ([xshift=-1cm,yshift=-0.2cm] c.west) -- ([yshift=-0.2cm]c.west) node[pos=0.5,anchor=center] (tB) {};
	\draw[-,red] ([yshift=-0.1cm] tA.center) -- ([yshift=0.1cm] tA.center) node[pos=1,right, rotate=90] {\texttt{outC}};
	\draw[-,red] ([yshift=-0.1cm] tB.center) -- ([yshift=0.1cm] tB.center) node[pos=0,left, rotate=90] {\texttt{outD}};

	\draw[dashed] ([xshift=1.25cm] c.west) ++(90:-1.5) arc (90:-90:-1.5) node[pos=0, anchor=center] (mt) {} node[pos=0.9,below] (Ftxt) {$F$} node[pos=1, anchor=center] (mtt) {};
	\draw[dashed] (mt.center) -- ++(6cm,0);
	\draw[dotted] (mt.center) ++(6cm,0) -- ++(0.5cm,0);
	\draw[dashed] (mtt.center) -- ++(6cm,0);
	\draw[dotted] (mtt.center) ++(6cm,0) -- ++(0.5cm,0);

	% OPTION A : Replication inside node A
	\draw[dashed] ([xshift=-1cm] s.west) ++(90:0.5) arc (90:-90:0.5) node[pos=0, anchor=center] (mmt) {} node[pos=1, anchor=center] (mmtt) {};
	\draw[] (mmt.center) -- ++(-0.1cm,0);
	\draw[dotted] (mmt.center) ++(-0.1cm,0) -- ++(-0.5cm,0);
	\draw[] (mmtt.center) -- ++(-0.1cm,0);
	\draw[dotted] (mmtt.center) ++(-0.1cm,0) -- ++(-0.5cm,0);
	\node[anchor=south] at (mmt.center) {$B$};

	\draw[->] ([xshift=-1.5cm] s.west) -- (s.west) node[pos=0.8,anchor=center] (tIn) {};
	\draw[-,red] ([yshift=0.1cm] tIn.center) -- ([yshift=-0.1cm] tIn.center) node[pos=0,above] {\texttt{in}} node[pos=0, anchor=center] (targetIn) {};

    \draw[->] (c.east) -- (pof.west) node[pos=0.5,anchor=center] (tOut) {};
    \draw[->] (pof.east) -- (reg.west) node[pos=0.5,anchor=center] (tOutPof) {};
    \draw[->] (reg.east) -- ++(1cm,0) node[pos=0.5, anchor=center] (tRegOut) {};
    \draw[-,red] ([yshift=-0.1cm] tOutPof.center) -- ([yshift=0.1cm] tOutPof.center) node[pos=1,right, rotate=90] {\texttt{outPOF}} node[pos=0, anchor=center] (targetOutPOF) {};
	\draw[-,red] ([yshift=-0.1cm] tOut.center) -- ([yshift=0.1cm] tOut.center) node[pos=1,right, rotate=90] {\texttt{outPEF}} node[pos=0, anchor=center] (targetOutPEF) {};
    \draw[-,red] ([yshift=-0.1cm] tRegOut.center) -- ([yshift=0.1cm] tRegOut.center) node[pos=1, right, rotate=90] {\texttt{outPFR}} node[pos=0,anchor=center] (targetOutPFR) {};

    %\node[draw, fit={(a) (b) (cNode) (dNode) (c) (s) (Ftxt)}] (S) {};
    %\node[draw, fit={(a) (b) (c) (s) (S) (pof)}] (Sprime) {};
    %\node[draw, fit={(a) (b) (c) (s) (Sprime) (pof) (reg)}] (Ssecond) {};

    %\node[anchor=east] at ([yshift=-1cm] S.east) {$S$};
    %\node[anchor=east] at ([yshift=-1cm] Sprime.east) {$S'$};
    %\node[anchor=east] at ([yshift=-1cm] Ssecond.east) {$S''$};	
\end{tikzpicture}%}%
    \caption{\label{fig:regulators:toy-example-pef-pof-pfr} Toy example of Fig.~\ref{fig:prob-statement:toy-example}, to which we added a \ac{POF} followed by a \ac{PFR}.}
\end{figure}
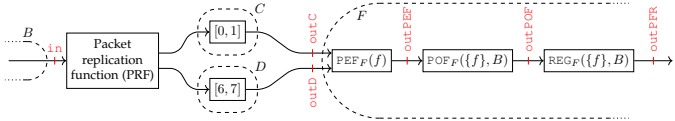
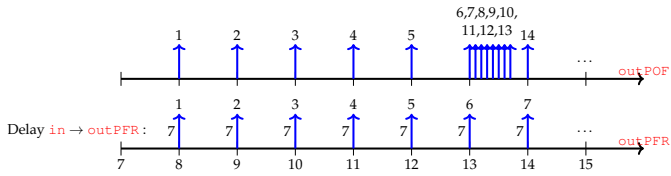
\begin{figure}\centering
    \resizebox*{\linewidth}{!}{\begin{tikzpicture}[xscale=2,yscale=1.2]
	\tikzstyle{lw} = [line width=2pt]
	\tikzstyle{every node}=[font=\Large]

	% % OUT POF
	\draw [->,lw] (6,-6) -- (15,-6);
	\draw (6,-6.2) -- (6,-5.8)	;
	\draw (7,-6.2) -- (7,-5.8)	;
	\draw (8,-6.2) -- (8,-5.8)	;
	\draw (9,-6.2) -- (9,-5.8)  ;
	\draw (10,-6.2) -- (10,-5.8);
	\draw (11,-6.2) -- (11,-5.8);
	\draw (12,-6.2) -- (12,-5.8);
	\draw (13,-6.2) -- (13,-5.8);
	\draw (14,-6.2) -- (14,-5.8);
	\node[anchor=south, red] at (15,-6)  {\texttt{outPOF}};
	
    \draw[lw,blue,->] (7,-6) -- (7,-5) node[pos=1, black, anchor=south] {1};
    \draw[lw,blue,->] (8,-6) -- (8,-5) node[pos=1, black, anchor=south] {2};
	\draw[lw,blue,->] (9,-6) -- (9,-5) node[pos=1, black, anchor=south] {3};
	\draw[lw,blue,->] (10,-6) -- (10,-5) node[pos=1, black, anchor=south] {4};
	\draw[lw,blue,->] (11,-6) -- (11,-5) node[pos=1, black, anchor=south] {5};
	
	\draw[lw,blue,->] (12,-6) --   (12,-5);
	\draw[lw,blue,->] (12.1,-6) -- (12.1,-5);
    \draw[lw,blue,->] (12.2,-6) -- (12.2,-5);
    \draw[lw,blue,->] (12.3,-6) -- (12.3,-5) node[pos=1, black, anchor=south] {\makecell{6,7,8,9,10,\\11,12,13}};
    \draw[lw,blue,->] (12.4,-6) -- (12.4,-5);
    \draw[lw,blue,->] (12.5,-6) -- (12.5,-5);
    \draw[lw,blue,->] (12.6,-6) -- (12.6,-5);
	\draw[lw,blue,->] (12.7,-6) -- (12.7,-5);
	\draw[lw,blue,->] (13,-6) -- (13,-5) node[pos=1, black, anchor=south] {14};
    \node at (14,-5.5) {\ldots};

	% % OUT PFR
	\draw [->,lw] 	(6,-8) -- (15,-8);
	\draw (6, -8.2) -- (6, -7.8)	node[pos=0,below] {7};
	\draw (7, -8.2) -- (7, -7.8)	node[pos=0,below] {8};
	\draw (8, -8.2) -- (8, -7.8)	node[pos=0,below] {9};
	\draw (9, -8.2) -- (9, -7.8)  node[pos=0,below] {10};
	\draw (10,-8.2) -- (10,-7.8)node[pos=0,below] {11};
	\draw (11,-8.2) -- (11,-7.8)node[pos=0,below] {12};
	\draw (12,-8.2) -- (12,-7.8)node[pos=0,below] {13};
	\draw (13,-8.2) -- (13,-7.8)node[pos=0,below] {14};
	\draw (14,-8.2) -- (14,-7.8)node[pos=0,below] {15};
	\node[anchor=south, red] at (15,-8)  {\texttt{outPFR}};
	
	\draw[lw,blue,->] (7,-8) -- (7,-7) node[pos=1, black, anchor=south] {1} node[pos=0.5,left, black] {7};
	\draw[lw,blue,->] (8,-8) -- (8,-7) node[pos=1, black, anchor=south] {2} node[pos=0.5,left, black] {7};
	\draw[lw,blue,->] (9,-8) -- (9,-7) node[pos=1, black, anchor=south] {3} node[pos=0.5,left, black] {7};
	\draw[lw,blue,->] (10,-8) -- (10,-7) node[pos=1, black, anchor=south] {4} node[pos=0.5,left, black] {7};
	\draw[lw,blue,->] (11,-8) -- (11,-7) node[pos=1, black, anchor=south] {5} node[pos=0.5,left, black] {7};
	
	\draw[lw,blue,->] (12,	-8) --   (12,-7)  node[pos=1, black, anchor=south] {6} node[pos=0.5,left, black] {7};
	\draw[lw,blue,->] (13,-8) -- (13,-7) node[pos=1, black, anchor=south] {7} node[pos=0.5,left, black] {7};
	\node at (14,-7.5) {\ldots};

	\node[anchor=east] at (6.5,-7.5) {Delay \color{red}\texttt{in} \color{black} $\rightarrow$ \color{red}\texttt{outPFR} \color{black}:};
\end{tikzpicture}%}%
    \caption{\label{fig:regulators:traj-toy-example-pof-pfr} Output of the \ac{POF} and of the \ac{PFR} of Fig.~\ref{fig:regulators:toy-example-pef-pof-pfr} when they process the trajectory of Fig.~\ref{fig:regulators:traj-2d-pfr}.}
\end{figure}

% \begin{figure}\centering
%     \def\lsize{0.48\linewidth}
%     \begin{minipage}{\lsize}
%         \resizebox*{\linewidth}{!}{\input{./figures/2021-08-pef-plus-pof-indiv-plus-ir.tex}}
%         \subcaption{\label{fig:regulators:pef-plus-pof-indiv-plusir}Per-flow re-sequencing is not sufficient, the adversarial model continues to yield unbounded latencies.}
%     \end{minipage}\hspace{0.04\linewidth}\begin{minipage}{0.95\linewidth-\lsize}
%         \resizebox*{\linewidth}{!}{\input{./figures/2021-08-pef-plus-pof-plus-ir.tex}}
%         \subcaption{\label{fig:regulators:pef-plus-pof-plusir}When re-sequencing the whole aggregate (sharing order information between flows), the shaping-for-free holds.}
%     \end{minipage}

%     \caption{Comparison of per-flow or aggregate re-sequencing before the \acf{IR}.}
% \end{figure}
When using an interleaved regulator, Property 5/ of Theorem~\ref{thm:regulators:ir-instable} shows that the re-sequencing must be performed globally on the aggregate processed by the \ac{IR}, and not for each flow individually.
% Re-sequencing each flow individually, as in Fig.~\ref{fig:regulators:pef-plus-pof-indiv-plusir}, is not sufficient.
% Indeed, Property 5/ of Theorem~\ref{thm:regulators:ir-instable} shows that per-flow reordering has no effect on the trajectory generated by the adversarial model, and the instability remains.
%
% When re-sequencing the aggregate globally, as in Fig.~\ref{fig:regulators:pef-plus-pof-plusir}, then the \ac{POF} forces the packet $b$ of one flow in Fig.~\ref{fig:regulators:traj-ir-instable} to wait and to be released after the packet $a$ of the next flow, which removes the head-of-line blocking situation causing the instability.
%
The above observations are summarized in the following result, valid for both \acp{PFR} and \acp{IR}.

\begin{theorem}[\label{thm:regulators:preof-for-free} Elimination-resequencing-reshaping is for free]
    Consider a network with graph $\mathcal{G}$ and consider a set of one or more flows $\mathcal{F}$.
    Take $a$ and $n$ two vertices of $\mathcal{G}$ such that for each flow $f\in\mathcal{F}$, $a$ is a diamond ancestor of $n$ in $\mathcal{G}(f)$  (see Fig.~\ref{fig:regulators:preof-for-free}).
    Assume that the \ac{CBQS} within $n$ is preceded by the following functions, in this order: a set of parallel \aclp{PEF} $\{\texttt{PEF}_n(f)\}_{f\in\mathcal{F}}$, followed by a unique \acl{POF} with configuration $\texttt{POF}_n(\mathcal{F},a)$, and finally a \acl{REG} with configuration $\texttt{REG}_n(\mathcal{F},a)$. % (enforcing the arrival curves the flows had at $a$).
    Denote by $d$ [resp., $D$] a lower bound [resp., an upper bound] for the delay of the non-lost data units of $\mathcal{F}$ through the system $\mathcal{S}$ between $a$ and the output of the \acp{PEF}.

    $\bullet$ If $\mathcal{S}$ is lossless for $\mathcal{F}$ (\ie for every data unit $m$ of the aggregate, at least one packet containing $m$ reaches the \acp{PEF}), then $d$ [resp., $D$] is also a lower bound [resp., an upper bound] for the delay of the non-lost data units through $\mathcal{S}'$, which we note $[d',D']=[d,D]$.

    $\bullet$ Otherwise, denote by $T$ the timeout value of the \ac{POF} \cite[\S III.D]{mohammadpourPacketReorderingTimeSensitive2020}. Then $d$ [resp., $D+T$] is a lower bound [resp., an upper bound] for the delay of the data units through $\mathcal{S}'$, \emph{i.e.,} $[d',D']=[d,D+T]$.

\end{theorem}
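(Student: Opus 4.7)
The plan is to decompose the compound system $\mathcal{S}'$ into three sub-systems in series and to bound the delay through each: $\mathcal{S}$, whose delay is within $[d,D]$ by assumption, followed by $\texttt{POF}_n(\mathcal{F},a)$, followed by $\texttt{REG}_n(\mathcal{F},a)$. The key observation is that the \ac{POF} with reference $a$ transforms the non-\ac{FIFO} system $\mathcal{S}$ into a composite system that is \ac{FIFO} for the aggregate $\mathcal{F}$, which in turn unlocks the shaping-for-free property of the regulator.

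First I would analyze the \acl{POF}. By construction, $\texttt{POF}_n(\mathcal{F},a)$ enforces at its output the same order of the aggregate $\mathcal{F}$ as at the output of $a$; hence the composite system from the output of $a$ to the output of the \ac{POF} is \ac{FIFO} for the aggregate (and therefore \ac{FIFO} per flow too). For the delay contribution of the \ac{POF}, in the lossless case no data unit is lost for the \ac{POF} and the reordering-for-free property \cite[Thm.~4]{mohammadpourPacketReorderingTimeSensitive2020} ensures that the \ac{POF} adds no delay; in the general case, each non-lost data unit can be delayed by at most the timeout $T$. Thus the delay from the output of $a$ to the output of the \ac{POF} is within $[d,D]$ in the lossless case and within $[d,D+T]$ in the general case.

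Next I would apply the shaping-for-free property to the \ac{REG}. The configuration of $\texttt{REG}_n(\mathcal{F},a)$ ensures that $\sigma_{n,f}$ is an arrival curve of $f$ at the output of $a$ for each $f\in\mathcal{F}$, and the composite system from $a$ to the input of the \ac{REG} is \ac{FIFO} for the aggregate (by the previous step). The shaping-for-free property then applies and guarantees that the \ac{REG} does not add any delay to the flows of the aggregate: this follows from the service-curve characterization of a \ac{PFR} \cite[\S 1.7.4]{leboudecNetworkCalculusTheory2001}, and from \cite{leboudecTheoryTrafficRegulators2018} for an \ac{IR}. Combining this with the delay bound through $\mathcal{S}$ and the \ac{POF} yields $[d',D']=[d,D]$ in the lossless case and $[d',D']=[d,D+T]$ in the general case.

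The main technical point to verify will be the \ac{FIFO}-per-aggregate property at the input of the \ac{REG}, which is the precondition required for the \ac{IR} version of shaping-for-free to apply; this property follows from the definition of the \ac{POF} but relies crucially on the \ac{POF} and the \ac{REG} sharing the same reference vertex $a$, as prescribed by the theorem's configuration. A secondary subtlety in the general (lossy) case is that the \ac{POF}'s output may exhibit additional burstiness compared to the traffic at $a$; shaping-for-free still applies because it only requires $\sigma_{n,f}$ to be an arrival curve at the \emph{reference} $a$ and the intermediate system to be \ac{FIFO}, not that the instantaneous input to the \ac{REG} itself be $\sigma_{n,f}$-constrained.
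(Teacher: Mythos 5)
Your proposal follows essentially the same route as the paper's proof: bound the delay from $a^*$ to the output of the \ac{POF} via the reordering-for-free result \cite[Thm.~4]{mohammadpourPacketReorderingTimeSensitive2020} (zero added delay if lossless, at most $T$ otherwise), observe that the \ac{POF} with reference $a$ makes the composite upstream system \ac{FIFO} for the aggregate, and then invoke shaping-for-free \cite[Thm.~5]{leboudecTheoryTrafficRegulators2018} for the regulator. The one precondition you gloss over is in the lossy branch: the shaping-for-free theorem is stated for a \ac{FIFO} \emph{and lossless} upstream system with a $\Pi^f$-regular input, whereas here the system from $a^*$ to the \ac{POF} output is \ac{FIFO} but not lossless. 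The paper closes this by splitting the input packet sequence at $a^*$ into the sub-sequence of data units that survive and the sub-sequence of those that are lost, noting that the surviving sub-sequence is still $\sigma_{f,n}$-constrained (hence $\Pi^f$-regular) because removing packets from a regulated sequence preserves the constraint, and then applying shaping-for-free to that sub-sequence alone. Your ``secondary subtlety'' paragraph correctly senses that something needs checking in the lossy case, but it points at burstiness of the \ac{POF} output rather than at the losslessness hypothesis of the shaping-for-free theorem; you would need to add the sub-sequence argument to make that branch rigorous.
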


\begin{figure}\centering
    \resizebox*{\linewidth}{!}{\begin{tikzpicture}[xscale=1.3]
    \def\mfact{1.6}

    \def\mma{-0.5}
    \draw[dashed] (\mma,0) ++(90:\mfact*0.2) arc (90:-90:\mfact*0.2);
    \node at (\mma,0) {$a$};
    \draw[dashed] (\mma,\mfact*0.2) -- ++(-0.2cm,0);
    \draw[dashed] (\mma,-\mfact*0.2) -- ++(-0.2cm,0);

    \node at (\mfact*0.5,0) (entry) {};
    \node[circle, draw, dashed] at (\mfact*0.5,\mfact*0.5) (n1){};
    \node[circle, draw, dashed] at (\mfact*1,0) (n2){};
    \node[circle, draw, dashed] at (\mfact*0.75,\mfact*-0.5)(n3) {};  
    \node[circle, draw, dashed] at (\mfact*1.5,\mfact*0.25)(n4) {};
    \node[circle, draw, dashed] at (\mfact*1.4,\mfact*-0.25) (n5){};
    \node at (\mfact*2.25,0) (exit) {};
    \node[draw, anchor=west, minimum width=1.8cm] at ([yshift=0.5cm)]\mfact*2.5,0) (pef) {$\texttt{PEF}_n(f_1)$};
    \path let \p1 = (pef) in node at (\x1,0) (myExit) {\ldots};
    \node[draw, anchor=west, minimum width=1.8cm] at ([yshift=-0.5cm)]\mfact*2.5,0) (pef2) {$\texttt{PEF}_n(f_m)$};

    \node[draw, anchor=north, rotate=90] at ([xshift=1.5cm] myExit.east) (pof) {$\texttt{POF}_n(\mathcal{F},a)$};

    \node[draw, anchor=north, rotate=90] at (\mfact*5.1,0) (pfr) {$\texttt{REG}_n(\mathcal{F},a)$};
    \node[draw, thin, fit={(entry) (n1) (n2) (n3) (n4) (n5) (exit) (pef) ([xshift=0.2cm]pef2.east)}] (sys) {};
    \draw[dashed] (\mfact*2.6,0) ++(90:\mfact*-0.65) arc (90:-90:\mfact*-0.65);
    \draw[dashed] (\mfact*2.6,\mfact*0.65) -- ++(5cm,0);
    \draw[dashed] (\mfact*2.6,\mfact*-0.65) -- ++(5cm,0);

    \node[thin, densely dotted, draw, fit={(entry) (n1) (n2) (n3) (n4) (n5) (exit) (pef) (sys) (pof) ([xshift=0.2cm]pef2.east)}] (sysDagger) {};
    \node[draw, thin, fit={(entry) (n1) (n2) (n3) (n4) (n5) (exit) (pef) (sys) (pof) (sysDagger) (pfr) ([xshift=0.2cm]pef2.east)}] (sysPrime) {};

    \node[thin, draw, anchor=south west, fill=white] at ([xshift=-\pgflinewidth] sys.south east) {$\mathcal{S}$};
    \node[thin, densely dotted, draw, anchor=south west, fill=white] at ([xshift=-\pgflinewidth] sysDagger.south east) {$\mathcal{S}^{\dagger}$};
    \node[thin, draw, anchor=south west, fill=white] at ([xshift=-\pgflinewidth] sysPrime.south east) {$\mathcal{S}'$};

    \draw[->] (\mma+\mfact*0.2,0) -- (sys.west) node[pos=0.4,above] {$\mathcal{F}$};
    \draw[->, dotted] (sys.west) -- (n1);
    \draw[->, dotted] (sys.west) -- (n3);
    \draw[->, dotted] (n3) -- (n5);
    \draw[->, dotted] (n3) -- (n2);
    \draw[->, dotted] (n1) -- (n2);
    \draw[->, dotted] (n2) -- (n4);
    \draw[-, dotted] (n4) -- (\mfact*2.6-\mfact*0.65,0);
    \draw[-, dotted] (n5) -- (\mfact*2.6-\mfact*0.65,0);

    \draw[->, dotted] (\mfact*2.6-\mfact*0.65,0) -- (pef.west);
    \draw[->, dotted] (\mfact*2.6-\mfact*0.65,0) -- (pef2.west);
    \draw[-, dotted] (pef.east) -- (sys.east);
    \draw[-, dotted] (pef2.east) -- (sys.east);
    \draw[->] (sys.east) -- (pof.north);
    \draw[->] (pof.south) -- (pfr.north);
    \draw[->, dotted] (pfr.south) -- ++(1cm,0);

    \node[anchor=north west] at (\mfact*2.5-\mfact*0.6, \mfact*0.6) {$n$};

    \draw[red, dashdotted] (\mfact*0.25,\mfact*1) -- (\mfact*0.25,\mfact*-1.1);
    \draw[red, dashdotted] (\mfact*3.835,\mfact*1) -- (\mfact*3.835,\mfact*-1.1);
    \draw[latex-latex, red] (\mfact*0.25,\mfact*-0.9) -- (\mfact*3.835,\mfact*-0.9) node[pos=0, left] {$[d,D]$};
    \draw[red, dashdotted] (\mfact*5.45,\mfact*1) -- (\mfact*5.45,\mfact*-1.1);
    \draw[latex-latex, red] (\mfact*0.25,\mfact*-1) -- (\mfact*5.45,\mfact*-1) node[pos=1, right] {$[d',D']$};
\end{tikzpicture}%}%
    \caption{\label{fig:regulators:preof-for-free} Notations of Theorem~\ref{thm:regulators:preof-for-free}. An aggregate re-sequencing followed by a \ac{REG} is placed after the \acp{PEF}. We are interested in the delay bounds through system $\mathcal{S}'$.}
\end{figure}
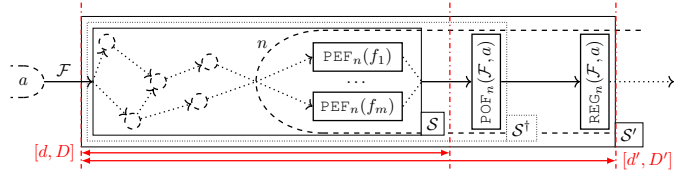

The proof in Appendix~\ref{proof:thm:regulators:preof-for-free} first applies \cite[Theorem~4]{mohammadpourPacketReorderingTimeSensitive2020} to obtain the delay bounds through the system $\mathcal{S}^\dagger$ on Fig.~\ref{fig:regulators:preof-for-free}.
This system is \ac{FIFO} thus \cite[Thm. 5]{leboudecNetworkCalculusTheory2001} can be applied.

Therefore, the ``\ac{PEF} + \ac{POF} + \ac{REG}'' configuration provides all the benefits on the network performance bounds associated with the ``\ac{PEF} + \ac{POF}'' and the ``\ac{PEF} + \ac{REG}'' configurations, removing most of their drawbacks.
This is summarized on the last line of Table~\ref{tab:benefits-drawbacks}.
Only the hardware cost remains a drawback, as the models of Figures~\ref{fig:system-model:pof-model} and \ref{fig:system-model:reg-model} must be implemented.
%Hardware considerations are, however, outside the scope of the paper.

%

	%\input{7-analysis-of-networks/7-analysis-of-networks.tex}
	% !TeX spellcheck = en_US
% !TeX root = main
\vspace{\myvspacebeforesec}
\section{Evaluation of the Framework on an Industrial Use-Case}\label{sec:evaluation}

\label{sec:volvo}

In this section, we use a modified version of FP-TFA~\cite[\S VI]{thomasCyclicDependenciesRegulators2019} that implements the results from Sections~\ref{sec:toolbox} and \ref{sec:regulators} to compute end-to-end delay bounds in a representative industrial use-case that contains \ac{PREF}.
FP-TFA has been chosen because it can compute delay bounds for general topologies, \emph{i.e.} even for those with cyclic dependencies \cite{thomasCyclicDependenciesRegulators2019}.

\textbf{Network Description:}
We consider the Volvo core \ac{TSN} network~\cite{navetEarlystageBottleneckIdentification2020}.
Its physical topology is given in Fig.~\ref{fig:volvo:phy-topo}.
The network contains two redundant control units \texttt{P1} and \texttt{P2}~\cite[Page~4]{navetEarlystageBottleneckIdentification2020}.
Each of the four \acp{MCU} acts as a gateway between the core \ac{TSN} network and the local networks running on legacy protocols.
We hence assume that the \acp{MCU} are legacy devices that support only 100Mbps full-duplex links and cannot implement the recent technologies of \ac{TSN} or \ac{DetNet}, such as \ac{PREOF}.
We assume that their applications cannot handle any duplicate.
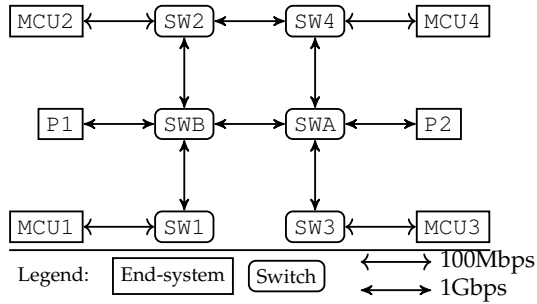
\begin{figure}\centering
    \resizebox*{0.8\linewidth}{!}{\begin{tikzpicture}
    \tikzstyle{es} = [draw]
    \tikzstyle{sw} = [draw, rounded corners=0.1cm]
    \tikzstyle{1g} = [stealth'-stealth']
    \tikzstyle{100m} = [<->]

    \node[sw] at (0,0) (swb) {\texttt{SWB}};
    \node[sw, anchor=west] at ([xshift=1cm] swb.east) (swa) {\texttt{SWA}};
    \node[sw, anchor=south] at ([yshift=1cm] swb.north) (sw2) {\texttt{SW2}};
    \node[es, anchor=east] at ([xshift=-1cm] sw2.west) (mcu2) {\texttt{MCU2}};
    \node[sw, anchor=south] at ([yshift=1cm] swa.north) (sw4) {\texttt{SW4}};
    \node[es, anchor=west] at ([xshift=1cm] sw4.east) (mcu4) {\texttt{MCU4}};
    \node[es, anchor=west] at ([xshift=1cm] swa.east) (p2) {\texttt{P2}};
    \node[sw, anchor=north] at ([yshift=-1cm] swa.south) (sw3) {\texttt{SW3}};
    \node[es, anchor=west] at ([xshift=1cm] sw3.east) (mcu3) {\texttt{MCU3}};
    \node[es, anchor=east] at ([xshift=-1cm] swb.west) (p1) {\texttt{P1}};
    \node[sw, anchor=north] at ([yshift=-1cm] swb.south) (sw1) {\texttt{SW1}};
    \node[es, anchor=east] at ([xshift=-1cm] sw1.west) (mcu1) {\texttt{MCU1}};

    \draw[100m] (mcu1) -- (sw1);
    \draw[100m] (sw3) -- (mcu3);
    \draw[100m] (sw4) -- (mcu4);
    \draw[100m] (sw2) -- (mcu2);
    \draw[1g] (sw1) -- (swb);
    \draw[1g] (swb) -- (p1);
    \draw[1g] (swb) -- (swa);
    \draw[1g] (swa) -- (sw3);
    \draw[1g] (swa) -- (p2);
    \draw[1g] (swa) -- (sw4);
    \draw[1g] (sw4) -- (sw2);
    \draw[1g] (sw2) -- (swb);

    \draw ([yshift=-0.1cm] mcu1.south west) -- ([yshift=-0.1cm] mcu3.south east);
    \node[anchor=north west] at ([yshift=-0.2cm] mcu1.south west) (legend) {\footnotesize Legend:};
    \node[anchor=west, es] at ([xshift=0.2cm] legend.east) (les) {\footnotesize End-system};
    \node[anchor=west, sw] at ([xshift=0.2cm] les.east) (lsw) {\footnotesize Switch};
    \draw[100m] ([xshift=0.5cm, yshift=0.2cm] lsw.east) -- ++(1cm,0) node[pos=1,right] {100Mbps};
    \draw[1g] ([xshift=0.5cm, yshift=-0.2cm] lsw.east) -- ++(1cm,0) node[pos=1, right] {1Gbps};
\end{tikzpicture}%}%
    \caption{\label{fig:volvo:phy-topo} Simplified physical topology of the Volvo core \ac{TSN} Network. From~\cite{navetEarlystageBottleneckIdentification2020}.}
\end{figure}

%\begin{figure}\centering
%    \resizebox*{\linewidth}{!}{\input{./figures/2021-07-volvo-flow-graph-mcu2.tex}}
%    \caption{\label{fig:volvo:flow-graph-muc2} Flow graph for any the four flows \texttt{C\_MCU2\_P12\_$p$} with $p \in \{\texttt{S},\texttt{M1},\texttt{M2},\texttt{B}\}$. The vertex \texttt{SW2-3} represents the third output port of \texttt{SW2}. The flow has two destinations and is protected due to \ac{PREF}. Nodes in red have a \acf{PEF} for the flow.}
%\end{figure}

\textbf{Flow Description:}
We focus on the \emph{Command and Control} class and consider four different periodic traffic profiles within the class. Their characteristics are based on~\cite[Page~13]{navetEarlystageBottleneckIdentification2020} and listed in Table~\ref{tab:volvo:traffic-profiles}.
%We assume that the periodic generation of the packets at the source is perfect, \emph{i.e.,} there exists no jitter at the emission of packets.
%
For each traffic profile and for each \texttt{MCU}, there exist a multicast flow that carries the sensor data from the \ac{MCU} to both \texttt{P1} and \texttt{P2} and a unicast flow per control unit (2 in total) that carries the commands from the control unit to the \ac{MCU} (see Table~\ref{tab:volvo:traffic-path}). 

To meet stringent loss-ratio requirements, flows are redounded by using \ac{PREF}, whenever two alternative paths can be found for a $($source$,$ destination$)$ tuple.
In total, the network contains 48 flows, including 40 redounded flows, 16 of which are also multicast.
%Among the multicast flows, 8 are service-protected for both of their destinations, and 8 are are service-protected for only one of their destinations.
\begin{table}\centering
    \caption{\label{tab:volvo:traffic-profiles} Traffic Profiles. Realistic Use-Case Based on the Values for \emph{Command and Control} Flows in \cite[Page~13]{navetEarlystageBottleneckIdentification2020}.
    %And overhead of 22Bytes is added to the payload size to obtain the packet size.
    }
    \resizebox*{0.8\linewidth}{!}{\begin{tabular}{l|r|r|r}
    Name & \multicolumn{1}{l|}{Payload size} & \multicolumn{1}{l|}{Period at source} & \multicolumn{1}{l}{Deadline} \\
    \hline
    \texttt{S}  & 64B & 0.5ms & 0.2ms\\
    \texttt{M1} & 92B & 2ms & 0.8ms \\
    \texttt{M2} & 120B & 3.5ms & 1.4ms\\
    \texttt{B} & 150B & 5ms & 2ms \\
\end{tabular}}
\end{table}

\begin{table}\centering
    \caption{\label{tab:volvo:traffic-path} Flow Path for $i \in \{\texttt{1},\texttt{2},\texttt{3},\texttt{4}\}$, $p\in \{\texttt{S},\texttt{M1},\texttt{M2},\texttt{B}\}$.}
    \resizebox*{\linewidth}{!}{\input{./figures/2021-07-volvo-flows-tab.tex}}
\end{table}

\textbf{Service Description:}
As the class of interest is of highest priority, each \ac{CBQS} offers to the aggregate a service rate equal to the capacity of the transmission link (either 100Mbps or 1Gbps). 
We also assume that the technological latency within each output port is below $2\mu s$, and we neglect input-port and switching-fabric latencies. 
%The network is not feed-forward and is loaded at 5.2\%. %\textbf{commentAhlem: il risque de nous être reproché de travailler sur un réseau d'aussi faible charge. peut être qu'une analyse en augmentant la charge serait intéressante}
%Thus, each \ac{CBQS} provides to the \emph{Command and Control} class a rate-latency service curve $\beta_n = \lambda_{R_n,T}$ with $R_n$ the capacity of the link and $T=2\mu s$.

\textbf{Comparison of the Analytical Models:}
We first set the load of the network at 5.2\%. We compare the \emph{intuitive approach} from Sec.~\ref{sec:prob-statement:pref-issues} with the \emph{tight model} that relies on Theorem~\ref{thm:toolbox:pef-oac}.
In Fig.~\ref{fig:volvo:model-comparison-results}, we provide the deterministic lower and upper bounds of the latency of each flow for each of its destinations. 
The delay upper-bounds are obtained by using the fix-point version of FP-TFA~\cite[\S~VI.C]{thomasCyclicDependenciesRegulators2019}, modified for taking into account the effect of \ac{PREF} with either the intuitive approach or the tight model.
The exact best-case and worst-case latencies for the flow are guaranteed to be within the provided interval, thus the smaller the guaranteed interval the better the model.

We observe that an analysis of the network by using the tight model concludes that all flows meet their deadline, whereas the same analysis that uses the intuitive approach shows that four flows may violate their deadlines. 
The delay bounds for all flows, including those that are not redounded by \ac{PREF}, are improved with the tight model.
For example, the flow in a box in Fig.~\ref{fig:volvo:model-comparison-results}, from \texttt{P2} to \texttt{MCU3}, is not redounded, but the tight model still computes a guaranteed delay interval tighter than with the intuitive approach.
Indeed, the flow shares the link $\texttt{SWA}\rightarrow\texttt{SW3}$ and $\texttt{SW3}\rightarrow\texttt{MCU3}$ with redounded flows, for which the burst bounds obtained with the tight model are smaller. 
Hence, the delay that this flow suffers in \texttt{SWA} and \texttt{SW3} has a better bound with the tight model than with the intuitive one.

\begin{figure}\centering
    \resizebox*{\linewidth}{!}{\input{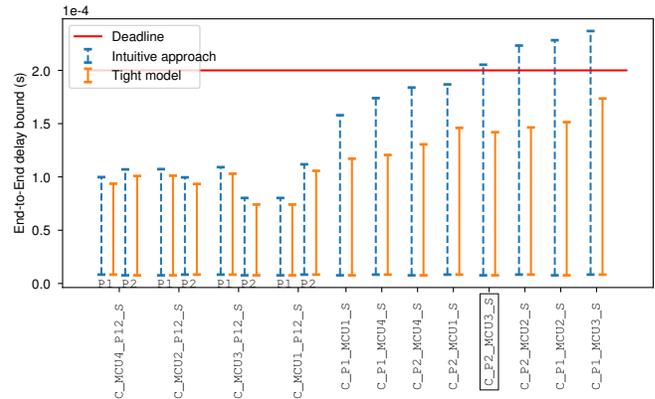}}%
    \vspace{-0.6cm}
    \caption{\label{fig:volvo:model-comparison-results} Comparison of the guaranteed \acf{ETE} latency intervals (upper and lower bounds) for each flow and each destination, obtained by using either the intuitive approach or the tight model.
    %The analysis of the network by using the tight model of \ac{PEF} concludes that all flows meet their deadline. 
    %The intuitive approach of \ac{PEF} cannot conclude for three flows.
    }
\end{figure}

\textbf{Comparison of the Technological Solutions:}
\begin{figure}\centering
    \resizebox*{\linewidth}{!}{\input{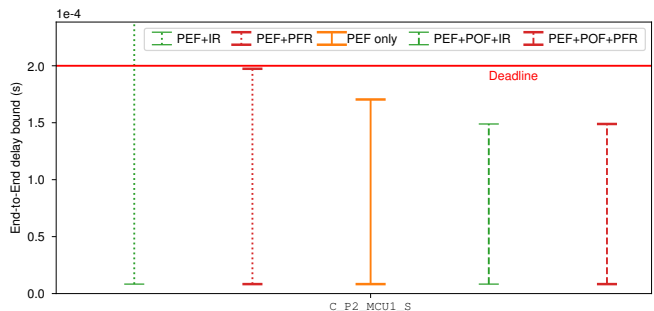}}%
    \vspace{-0.6cm}
    \caption{\label{fig:volvo:tech-comparison-resutls}
        Comparison of the guaranteed \ac{ETE} latency intervals with several technological choices.
        Delay bounds without any \ac{REG} are shown in the middle.
        The bars on the left are the guaranteed intervals when the flows are regulated after the \ac{PEF}, but without any \ac{POF}.
        When a \ac{POF} is additionally placed between the \ac{PEF} and the \ac{REG}, we obtain the results on the right of the baseline.
    }
\end{figure}
Fig.~\ref{fig:volvo:model-comparison-results} shows that, at low network load, the network edges withstand the peak rate an increased burstiness at the output of the \acp{PEF}, even if they rely only on 100Mbps links.

We now consider the same network but we increase the load up to 88\% by reducing the period of each flow.
We focus on the four redounded flows from \texttt{P2} to \texttt{MCU1}.
Each of them is processed by a \ac{PEF} within \texttt{SWB} to eliminate the duplicates coming from \texttt{SW2} and \texttt{SWA} and each of them present a peak rate and and increased burstiness after its \ac{PEF}.

We evaluate the opportunity to shape the four flows with their source profile before they compete with the four other flows coming from \texttt{P1} in the output port of \texttt{SWB}.
We can either use four \acfp{PFR} 
%$\left\lbrace\texttt{REG}_{\texttt{SWB}}(\{\texttt{C\_P2\_MCU1\_}p\},\texttt{P2})\right\rbrace_{p\in\{\texttt{S},\texttt{M1},\texttt{M2},\texttt{B}\}}$ 
(each processing a unique flow), or we can use a unique \acf{IR}, because they all share the same reference point $\texttt{P2}$.
%: $\texttt{REG}_{\texttt{SWB}}(\{\texttt{C\_P2\_MCU1\_}p\}_{p\in\{\texttt{S},\texttt{M1},\texttt{M2},\texttt{B}\}},\texttt{P2})$.

Fig.~\ref{fig:volvo:tech-comparison-resutls} focuses on flow \texttt{C\_P2\_MCU1\_S}.
The baseline guaranteed delay interval (in the middle) is obtained from the application of the tight model without any regulator.
We note that the flow is schedulable, but as the network load is higher, its safety margin is reduced with respect to Fig.~\ref{fig:volvo:model-comparison-results}.

The dotted bars on the left of the baseline represent the guaranteed delay intervals obtained when the flows are processed, either with an \ac{IR} (far-left), or with four independent \acp{PFR}, but without using any \ac{POF}.
For the \ac{IR}, no guarantee can be obtained per Theorem~\ref{thm:regulators:ir-instable}.
For the \ac{PFR}, the flow remain schedulable but its safety margin is drastically reduced by the delay penalty of the \ac{PFR} (Theorem~\ref{thm:regulators:pfr-2d}).

The dashed bars on the right of the baseline represent the guaranteed delay bounds when using the combination \ac{POF}+\ac{REG} after the \ac{PEF}, assuming that for each data unit, at least one replicate is not lost.
On the far-right, the delay bounds with four per-flow \acp{POF} placed before the \acp{PFR} (as in Fig.~\ref{fig:system-model:example-of-functions-a}), and the other bar represents the delay bounds with a unique \ac{POF} for the aggregate before the \ac{IR} (as in Fig.~\ref{fig:system-model:example-of-functions}).
The shaping-for-free property holds in both cases, thus their delay bounds are equal.
They represent a 13\% improvement with respect to the baseline.
Indeed, the regulators reduce the downstream burst, thus reducing the worst-case delay in the low-capacity link \texttt{SW1}$\rightarrow$\texttt{MCU1}.

	\section{Conclusion}
	% !TeX spellcheck = en_US
% !TeX root = main
\vspace{\myvspacebeforesec}
\section{Conclusion}

We provide a toolbox of network-calculus results that give theoretical foundations for the worst-case analysis of \ac{DetNet} PREOF (\emph{\acl{PREOF}}) and \ac{TSN} \acs{FRER} (\emph{\acl{FRER}}).
The toolbox contains an output-arrival-curve characterization of the packet-elimination function that is tighter than any other variable-bit-rate or leaky-bucket arrival curves. It also contains a quantification of the amount of mis-ordering caused by the redundancy.
%The redundancy also creates a mis-ordering of the flow's data units, which we quantify.

We further 
%provide results of practical interest as we 
analyze the interactions between the packet-elimination function, the packet-ordering function and traffic regulators.
We show that the latter can cancel the burstiness increase caused by the redundancy.
But when traffic regulators are placed immediately after the \acl{PEF}, they do not enjoy the shaping-for-free property: Per-flow regulators induce a delay penalty that we upper-bound, whereas \aclp{IR} (such as \acs{TSN} \emph{Asynchronous Traffic Shapers}) induce unbounded latencies.
Shaping-for-free can be retrieved if the data units are reordered after the elimination function and prior to shaping.

The users of \ac{TSN} \ac{FRER} and \ac{TSN} \ac{ATS} are invited to bear in mind the conflicting interactions outlined in this paper, as no \acl{POF} is available within \ac{TSN} at the time of writing.

We finally apply our theoretical and practical results on a representative industrial use-case.
The latency bounds obtained with the toolbox are significantly tighter than those obtained with an intuitive approach.
%We compare the different combinations of functions on the use-case.
We also highlight the end-to-end latency gain obtained on the use-case when traffic regulators are placed after the redundant section with a reordering function in between.

% %Both the packet-elimination and the packet-ordering functions worsen the burstiness of the processed flow.
% We also analyze the interactions between the packet-replication functions and traffic regulators.
% We show that the latter can cancel the burstiness increase caused by the former.
% But when traffic regulators are placed immediately after the \acl{PEF}, they do not enjoy the shaping-for-free property:

% $-$ Per-Flow Regulators (such as IETF DetNet \emph{Shapers}) induce a delay penalty that we upper-bound.
% Shaping-for-free can be obtained by placing a \acl{POF} before the \acl{PFR}.

% $-$ Interleaved Regulators (such as IEEE TSN \emph{Asynchronous Traffic Shapers}) induce unbounded latencies.
% Latency bounds can be retrieved if the aggregate is reordered prior to shaping.
% Yet, no \acl{POF} is currently available in the \ac{TSN} features.
% We thus outline the conflicting interactions between \ac{TSN} \ac{FRER} and \ac{TSN} \ac{ATS} that their users should bear in mind.
% The user of \ac{TSN} \ac{FRER} and \ac{TSN} \ac{ATS} should hence be aware of their conflicting interactions, as outlined in this paper.

	%\section*{Acknowledgment}
	%\input{91-acks/acks}
	
	% if have a single appendix:
	%\appendix[Proof of the Zonklar Equations]
	% or
	%\appendix  % for no appendix heading
	% do not use \section anymore after \appendix, only \section*
	% is possibly needed
	
	% use appendices with more than one appendix
	% then use \section to start each appendixabstract
	% you must declare a \section before using any
	% \subsection or using \label (\appendices by itself
	% starts a section numbered zero.)
	%
	\vspace{-0.05cm}
	\bibliography{../biblio/bib-zotero,../biblio/bib-jylb,../biblio/bib-ludo}

	\begin{acronym}[CP-OFDMX]
	\acro{XXXXX}{\tagg{List of acronyms for Holly}\tagg{Will be removed in final version}}
	\acro{ACP}{aggregate computation pipeline}
	\acro{AFDX}{Avionics Full-dupleX switched Ethernet}
	\acro{ATS}{asynchronous traffic shaping}
	\acro{AVB}{Audio Video Bridging}
	\acro{CAN}{Controller Area Network}
	\acro{CBQS}{class-based queuing subsystem}
	\acro{CBS}{credit-based scheduler}
	\acro{CDT}{control-data traffic}
	\acro{CEV}{crew exploration vehicle}
	\acro{COTS}{commercial off the shelf}
	\acro{DAG}{directed acyclic graph}
	\acro{DetNet}{deterministic networking}
	\acro{DNC}{deterministic network calculus}
	\acro{ETE}{end-to-end}
	\acro{EP}{elimination-pending}
	\acro{FIFO}{first in, first out}
	\acro{FP}{fixed-priority}
	\acro{FRER}{frame replication and elimination for redundancy}
	%\acro{GVBR}{generalized VBR}
	\acro{HSR}{High-availability Seamless Redundancy}
	\acro{HTTP}{Hypertext Transfer Protocol}
	\acro{IEC}{International Electrotechnical Committee}
	\acro{IEEE}{Institute of Electrical and Electronics Engineers}
	\acro{IETF}{Internet Engineering Task Force}
	\acro{IMA}{Integrated Modular Avionics}
	\acro{iPRP}{IP parallel redundancy protocol}
	\acro{IR}{interleaved regulator}
	\acro{LCAN}{low-cost acyclic network}
	\acro{MCU}{micro-controller unit}
	\acro{MFAS}{minimum feedback arc set}
	\acro{MFVS}{minimum feedback vertex set}
	\acro{MOST}{Media Oriented Systems Transport}
	\acro{NC}{network calculus}
	\acro{NoC}{networks on chip}
	\acro{OSI}{Open Systems Interconnection}
	\acro{PBOO}{pay burst only once}
	\acro{PFR}{per-flow regulator}
	\acro{PMOC}{pay multiplexing only at convergence points}
	\acro{PEF}{packet-elimination function}
	\acro{POF}{packet-ordering function}
	\acro{PRF}{packet-replication function}
	\acro{PREF}[PREFs]{packet replication and elimination functions}
	\acro{PREOF}[PREOFs]{packet replication, elimination and ordering functions}
	\acro{PRP}{Parallel Redundancy Protocol}
	\acro{QoS}{quality of service}
	\acro{RAMS}{Reliability, Availability, Maintainability, and Safety}
	\acro{RBO}{reordering byte offset}
	\acro{REG}{regulator}
	\acro{RSTP}{Rapid Spanning Tree Protocol}
	\acro{RTE}{Real-Time Ethernet}
	\acro{RTO}{reordering late time offset}
	\acro{SFA}{single-flow analysis}
	\acro{SNC}{stochastic network calculus}
	\acro{TAS}{Time-Aware Shaping}
	\acro{TCP}{Transmission Control Protocol}
	\acro{TFA}{total-flow analysis}
	\acro{TP}{turn prohibition}
	\acro{TSN}{time-sensitive networking}
	\acro{VBR}{variable-bit-rate}
	\acro{VIU}{vehicle interface unit}
\end{acronym}

	% !TeX spellcheck = en_US
% !TeX root = main

% biography section
% 
% If you have an EPS/PDF photo (graphicx package needed) extra braces are
% needed around the contents of the optional argument to biography to prevent
% the LaTeX parser from getting confused when it sees the complicated
% \includegraphics command within an optional argument. (You could create
% your own custom macro containing the \includegraphics command to make things
% simpler here.)
%\begin{IEEEbiography}[{\includegraphics[width=1in,height=1.25in,clip,keepaspectratio]{mshell}}]{Michael Shell}
% or if you just want to reserve a space for a photo:

\begin{IEEEbiography}[{\includegraphics[width=1in,clip,trim=0 350 0 0,height=1.25in]{./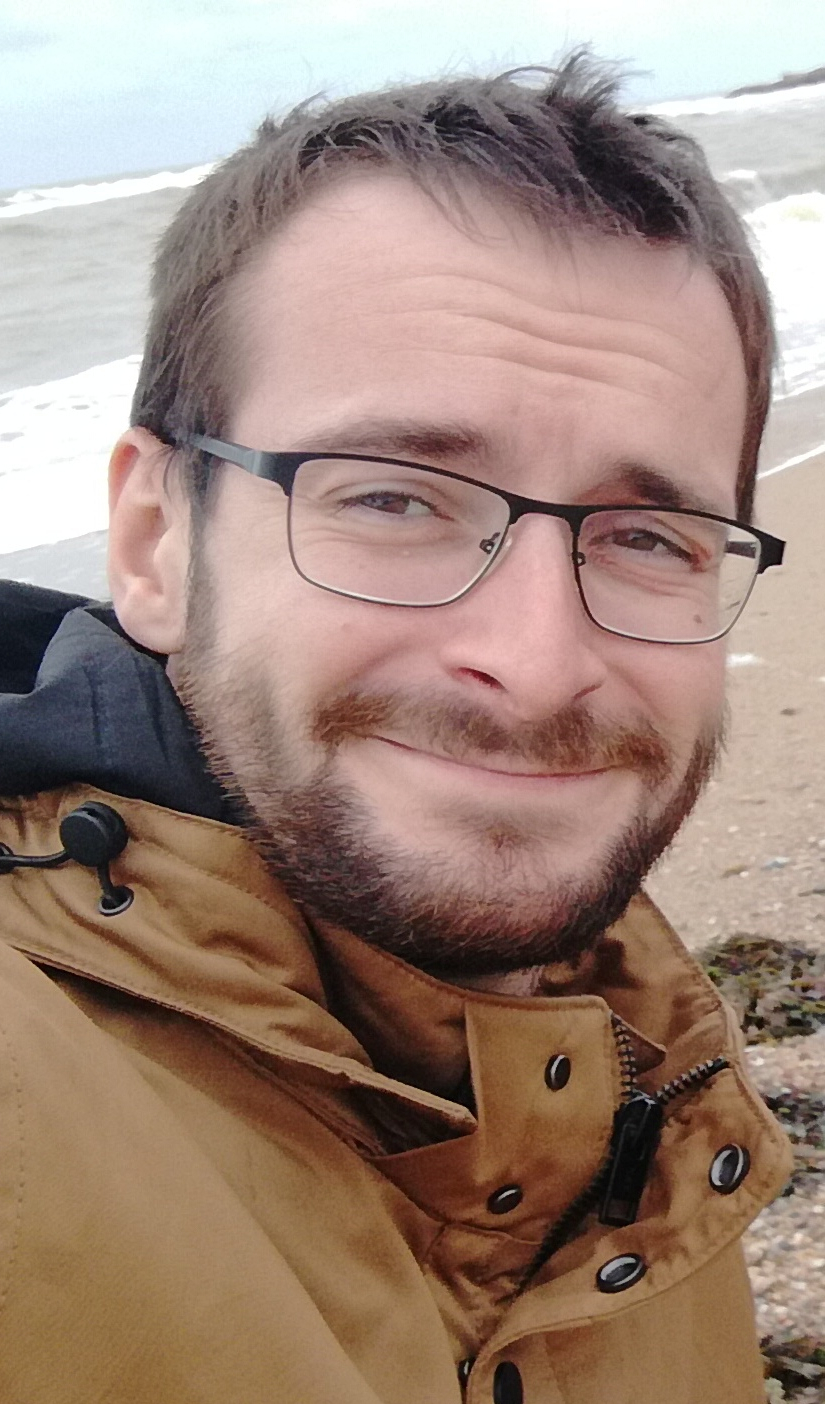}}]{Ludovic Thomas}
	(Graduate Student Member, IEEE)
	obtained his Master's degree in Aerospace Engineering from ISAE-SUPAERO in 2018.
	He performed his final-year internship at the French aerospace agency where he studied the effects of fully-encrypted transport-layer protocols (such as QUIC) on the performance of satellite Internet access. 
	He is currently a PhD student at ISAE-SUPAERO in collaboration with EPFL and focuses on the deterministic performance analysis of Time-Sensitive Networking (TSN) technologies using the network-calculus framework.
	His research interests include the performance analysis of computer networks using measurements, simulations and deterministic approaches.
\end{IEEEbiography}

\begin{IEEEbiography}[{\includegraphics[width=1in,clip,trim=20 0 25 0,height=1.25in]{./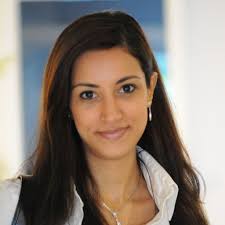}}]{Ahlem Mifdaoui}
	(Member, IEEE) received the M.E. degree in computer science and air traffic management from the Ecole Nationale de Aviation civile (ENAC) Toulouse, in 2004, and the Ph.D. degree in computer science and telecommunication from the Institut National Polytechnique of Toulouse (INPT), in 2007. 
	She has been a Full Professor with the Department of Complex System Engineering, ISAE-Supaero, University of Toulouse, since 2017. 
	She has been the head of the embedded system master’s degree, since 2016. Her main research interests are real-time networks and cyber-physical systems. 
	Particular attention is given to the performance analysis of safety-critical applications such as avionics and satellites. 
	Since 2008, she has been successfully coordinated more than ten national projects and supervised almost ten Ph.D. degree students and postdoctoral researchers. Moreover, she has been involved in more than 30 conference program committees such as RTSS, ECRTS, and RTAS, and is a regular Reviewer of many journals in the field such as the IEEE Transactions on Computers and Real-Time Systems.
\end{IEEEbiography}

\begin{IEEEbiography}[{\includegraphics[width=1in,clip,trim=40 0 60 0,height=1.25in]{./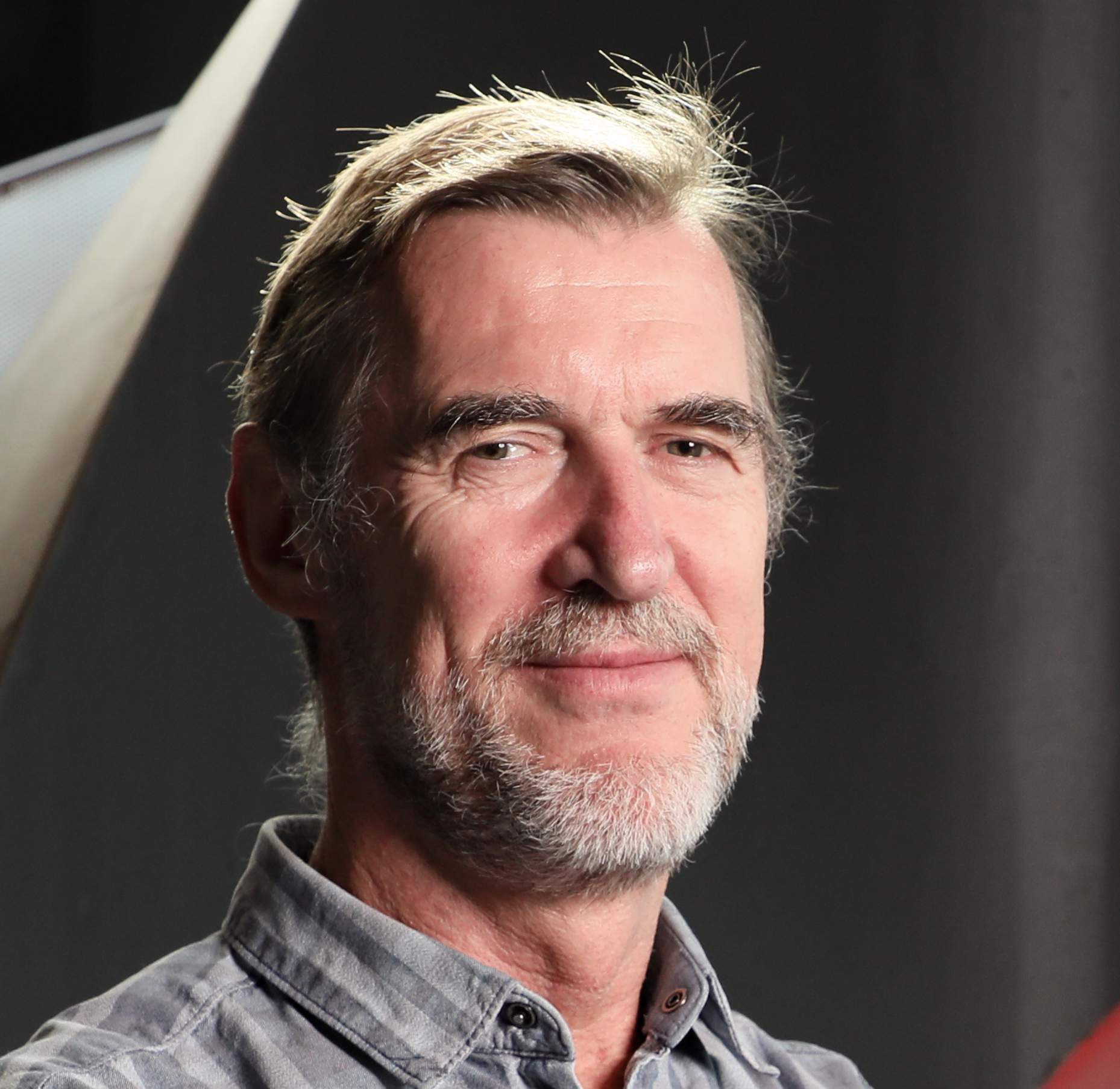}}]{Jean-Yves Le Boudec}
	(Fellow, IEEE) received the Agrégation degree in mathematics from the École Normale Supérieure de Saint-Cloud, Paris, in 1980, and the doctorate degree from the University of Rennes, France, in 1984.
	From 1984 to 1987, he was with INSA/IRISA, Rennes. 
	In 1987, he joined Bell Northern Research, Ottawa, Canada, as a member of scientific staff at the Network and Product Traffic Design Department. 
	In 1988, he joined the IBM Zurich Research Laboratory where he was the Manager of the Customer Premises Network Department. 
	In 1994, he became an Associate Professor at the École Polytechnique Fédérale de Lausanne (EPFL), where he is a Professor. 
	He is a coauthor of a book on network calculus, which serves as a foundation for deterministic networking, an introductory textbook on information sciences, and the author of the book \emph{Performance Evaluation}. 
	His research interests are in the performance and architecture of communication systems and smart grids.
\end{IEEEbiography}

% if you will not have a photo at all:

% insert where needed to balance the two columns on the last page with
% biographies
%\newpage

% You can push biographies down or up by placing
% a \vfill before or after them. The appropriate
% use of \vfill depends on what kind of text is
% on the last page and whether or not the columns
% are being equalized.

%\vfill

% Can be used to pull up biographies so that the bottom of the last one
% is flush with the other column.
%\enlargethispage{-5in}

	\appendices
	% !TeX spellcheck = en_US
% !TeX root = main

\section{Discussion on the Relationship between our System Model and the \acs{TSN} and \acs{DetNet} Documents.}\label{sec:appendix:discussion-system-model}

The system model proposed in Section~\ref{sec:system-model} results from an analysis of both the \ac{DetNet} \ac{PREOF} \cite{finnDeterministicNetworkingArchitecture2019} and the \ac{TSN} \ac{FRER} \cite{IEEEStandardLocal2017a} documents.
The present appendix can be used by the \ac{TSN} and the \ac{DetNet} communities for evaluating the applicability of our results in the \ac{TSN} and \ac{DetNet} contexts.
The appendix highlights the similarities and the differences between the terms, the notions and the assumptions used in our system model with those that are used in the \ac{DetNet} and \ac{TSN} documents.

The \ac{IETF} \ac{DetNet} documents focus on the network-layer mechanisms whereas \ac{IEEE} \ac{TSN} documents focus on the link-layer mechanisms. A notion can hence have different terms depending on the considered layer.
For example, within \ac{DetNet}, data units are encapsulated within \emph{packets} and a coherent sequence of them that originates also from a single source is a \emph{\ac{DetNet} flow}\cite[\S 2.1]{finnDeterministicNetworkingArchitecture2019}. Within \ac{TSN}, data units are carried by \emph{frames}, and a coherent sequence of them that originates from the same source is a \emph{stream}\cite[\S 3]{IEEEStandardLocal2017a}. In the paper, we use the terms \emph{packets} and \emph{flows}.

\subsection{Directed Acyclic Graphs (\acsp{DAG}) versus \emph{Compound} and \emph{Member} Flows.}\label{sec:appendix:discussion-system-model:dag}

When defining the redundancy that protects a flow, the working groups of \acs{DetNet} and \ac{TSN} use the terms \emph{compound} and \emph{member} flows.
For example, \ac{DetNet} \ac{PREOF} specifies that, 
``\emph{A \acs{DetNet} compound flow is a \acs{DetNet} flow that has been separated into multiple duplicate \acs{DetNet} member flows for service protection [\ldots]. Member flows are merged back into a single \acs{DetNet} compound flow [\ldots]}''\cite[\S 2.1]{finnDeterministicNetworkingArchitecture2019}.
Similarly, \ac{TSN} \ac{FRER} indicates that, ``\emph{A Compound Stream is a Stream composed of one or more Member Streams linked together via Frame Replication and Elimination for Reliability (\acs{FRER})}''\cite[\S 3]{IEEEStandardLocal2017a}.

These two adjectives are not used in our paper. 
Indeed, the model of Section~\ref{sec:system-model} is based on \acfp{DAG} and on the knowledge of positions of the elimination points.
This model is compatible with the \emph{compound} and \emph{member} terms but is also more versatile.
For example, take the flow $f$ defined by the graph in Figure~\ref{fig:system-model:example-flow-graph} and by the knowledge that $F$ contains a \ac{PEF} for $f$.
If we focus on destination $G$, we could say that the compound flow with path $A\rightarrow B$ is \emph{separated} into \emph{duplicate member flows} with paths $B\rightarrow C \rightarrow F$ and $B\rightarrow D \rightarrow F$ and that these member flows are \emph{merged back} into the \emph{compound flow} for path $F\rightarrow G$.
But the previous distinction between \emph{compound} and \emph{member} flows cannot be applied for destination $E$. 
Neither the \ac{DetNet} documents nor the \ac{TSN} documents clarify that the \emph{compound}/\emph{member} distinction depends on the destination that is considered. By focusing on \acp{DAG} and by not using the two adjectives, we remove the above ambiguity.

\subsection{\emph{Replicates} versus \emph{Duplicates}}

Neither \cite{IEEEStandardLocal2017a} nor \cite{finnDeterministicNetworkingArchitecture2019} provides a formal definition for \emph{replicates} and \emph{duplicates}.
The \ac{TSN} \ac{FRER} standard even use both ``\emph{eliminate duplicate packets}'' \cite[\S 7.1.1]{IEEEStandardLocal2017a} and 
``\emph{eliminates the replicates}'' \cite[\S 1.6]{IEEEStandardLocal2017a}.
However, both documents seem to adhere to the following convention.

$-$ \emph{Replicates} are defined as identical copies of the same packet (of the same piece of data).

$-$ \emph{Duplicates} are defined with respect to a given location or for a given function (\emph{e.g.}, for a \ac{PEF}): 
A packet is a duplicate at a given location [resp., for a given function] if an identical copy of itself (another replicate of the same piece of data) has been observed previously at the location [resp., by the function].

We re-use the same convention in our paper.
The \ac{PEF} is hence a function that forwards the first replicate and drops the duplicates, it \emph{eliminates the duplicates}.

\subsection{Considered Types of Failures}\label{sec:appendix:discussion-system-model:protect}

Within \ac{DetNet}, \acs{PREOF} provide \emph{service protection} that ``\emph{aims to mitigate or eliminate packet loss due to equipment failures, including random media and/or memory faults.}''\cite[\S 3.2.2.]{finnDeterministicNetworkingArchitecture2019}.
Similarly, ``\emph{\acs{FRER} can substantially reduce the probability of packet loss due to equipment failures}''\cite[\S 1.2]{IEEEStandardLocal2017a}.

In the paper, we consider only failures that cause packets to be lost on the transmission links. 
As described in Section~\ref{sec:system-model:function-model}, this can model various real-life failures that lose packets but only if they do not affect the service provided to the non-lost packets.
For example, random media packet losses fall within our model.
Similarly, we can model a device that shuts down: in this case, all its links lose all packets.
However, failures that cause a network element to provide less service than its minimum-service contract or that cause a source to generate more traffic than its maximum-traffic contract are not considered in our model but could be considered using tools from \emph{\acf{SNC}}\footnote{Y. Jiang and Y. Liu, Stochastic Network Calculus. London: Springer-Verlag, 2008. \url{https://www.springer.com/gp/book/9781848001268}.}.

\subsection{Packet Replication Function (\acs{PRF}) and Multicast Mechanisms}\label{sec:appendix:discussion-system-model:prf}

In \ac{DetNet}, ``\emph{Flow replication [\ldots] can be performed by, for example, techniques similar to ordinary multicast replication}''\cite[\S 4.1.1]{finnDeterministicNetworkingArchitecture2019}. In \ac{TSN}, ``\emph{no explicit Stream splitting function [\ldots] is required. Frames in a single Compound Stream can be replicated using the normal multicast mechanisms [\ldots]}''\cite[\S 8.1]{IEEEStandardLocal2017a}.

Our system model follows the same rationale. The \acf{PRF} is implemented by the switching fabric that already handles the duplication of packets for multicast flows. The rationale also motivates the choice of \acp{DAG} for modeling flow paths. Indeed, \acp{DAG} represent a natural extension of multicast trees.

\subsection{Packet Ordering Function (\acs{POF}) and its Position with Respect to the \acs{PEF}}\label{sec:appendix:discussion-system-model:pof}

In \cite[\S 2.1]{finnDeterministicNetworkingArchitecture2019}, a \acf{POF} is defined as a function that ``\emph{reorders packets within a \acs{DetNet} flow that are received out of order}''.
In \ac{TSN}, there exists no function with similar goals as of March 2022.
In fact, in-order-delivery was a goal in an early draft version of the \ac{FRER} standard, but it was latter removed due to hardware considerations\footnote{See Comment 29 at \url{https://www.ieee802.org/1/files/private/cb-drafts/d0/802-1CB-d0-3-dis.pdf}. For the credentials, consult \url{https://www.ietf.org/proceedings/52/slides/bridge-0/tsld003.htm}.}.

In our model, we extend the definition of the \ac{DetNet} \ac{POF} and allow the function to consider an aggregate of flows.
As we show in Section~\ref{sec:regulators}, this aggregate reordering is necessary for obtaining delay bounds when \acfp{IR} are placed after the elimination function, and per-flow reordering is not sufficient to guarantee bounded latencies when an \ac{IR} is placed after the \ac{PEF}.

In \cite[\S 3.2.2.2]{finnDeterministicNetworkingArchitecture2019}, the \ac{DetNet} working group states that the ``\emph{order in which a \ac{DetNet} node applies \ac{PEF} [and] \ac{POF} [\ldots] to a \ac{DetNet} flow is left open for implementations}''.
This is however contradicted by both \cite[\S 3.2.2.1.]{finnDeterministicNetworkingArchitecture2019} and \cite[\S 7.1.1.m]{IEEEStandardLocal2017a} where packet mis-ordering is seen as a side-effect of the \ac{PEF}, for which a \ac{POF} is a remedy when placed after the \ac{PEF}.
Additionally, the ongoing draft for the \ac{POF} states that, ``\emph{the [\acs{POF}] algorithm assumes that a Packet Elimination Function (\acs{PEF}) is performed on the incoming packets before they are handed to the \acs{POF} function.
Hence, the sequence of incoming packets can be out of order or incomplete but cannot contain duplicate packets}'' \cite[\S 4.1]{vargaDeterministicNetworkingDetNet2021}. In our model, we follow the same assumption.

The internal algorithm of the \acf{POF} also relies on a timeout parameter, that is denoted by $T$ in our paper and in \cite{mohammadpourPacketReorderingTimeSensitive2020} and that is called ``\texttt{POFMaxDelay}'' in \cite{vargaDeterministicNetworkingDetNet2021}.
As its name suggest, it corresponds to the maximum delay that a data unit can spend in the \ac{POF}, even if the previously-expected data unit has not been received so far.
The timeout prevents the \ac{POF} from holding forever a data unit if the previously-expected data unit has been lost.
To prevent spurious transmission of out-of-order data units, the timeout $T$ cannot take any value.
In our paper, we always assume that $T$ follows the recommendations of \cite[\S IV.B]{mohammadpourPacketReorderingTimeSensitive2020}, which are also consistent with \cite[\S 4.3]{vargaDeterministicNetworkingDetNet2021}. 
This configuration depends on the value of the \acf{RTO} at the location of the \ac{POF} and our Theorem~\ref{thm:toolbox:reordering} provides this value for when the \ac{POF} is placed after a \ac{PEF}.

The implications of an incomplete sequence of data units for a \ac{POF} are further analyzed in \cite{mohammadpourPacketReorderingTimeSensitive2020}.
In our Theorem~\ref{thm:regulators:preof-for-free}, we reuse the results from \cite{mohammadpourPacketReorderingTimeSensitive2020} and apply them in the case of a \ac{POF} placed just after a \ac{PEF}.

\subsection{Traffic Regulators (\acsp{REG}) and their Position with Respect to the Other Functions}\label{sec:appendix:discussion-system-model:reg}

Within \ac{TSN}, \emph{\acf{ATS}}~\cite{IEEEStandardLocal2020} is a building block that implements the \acf{IR} model within the \ac{TSN} bridges.
In \ac{ATS}, each \ac{IR} (called an \emph{ATS Scheduler Group} in \cite{IEEEStandardLocal2020}) is in the form $\texttt{REG}_n(\mathcal{F},o)$ where $\mathcal{F}$ is the set of flows that enter $n$ from $p$, $p$ is a direct parent of $n$, and $o$ is a direct parent of $p$. 
As such, each \emph{ATS Scheduler Group} can only cancel the burstiness increase within the direct upstream parent $p$.
Our model authorizes more flexibility when defining the aggregate $\mathcal{F}$ and the reference~$o$. 
Within our model, an \ac{IR} can cancel the burstiness increase caused by any system made of several network elements.

For example, in the industrial application in Section~\ref{sec:volvo}, the \ac{IR} placed in \texttt{SWB} cancel the burstiness increase caused by the entire system that is located between \texttt{P2} and \texttt{SWB} (Figure~\ref{fig:volvo:phy-topo}) and that includes a \acf{PRF}, two redundant paths with several devices for each, and a \acf{PEF}.

In the technical documents, there exists an uncertainty on the relative order of the \ac{REG} and the other \ac{PREOF} functions within a device (especially with respect to \ac{PEF}).
The \ac{TSN} implementation of \ac{PEF} (called \ac{FRER}, see Table~\ref{tab:introduction:comparison-names}) is defined in \cite{IEEEStandardLocal2017a} whereas the \ac{TSN} implementation of \aclp{REG} (\ac{ATS}) is defined in \cite{IEEEStandardLocal2020}.
Their pipeline models, \cite[Figure~8.2]{IEEEStandardLocal2017a} and \cite[Figure~8.13]{IEEEStandardLocal2020}, place their respective mechanisms exactly at the same position in the forwarding process (between IEEE802.1Q 8.6.5 \emph{Flow metering} and IEEE802.1Q 8.6.6 \emph{Queuing frames}).
As of August 2021, no information has been provided on their relative order.

In our paper, traffic regulators are of particular interest when they are placed after the \acp{PEF}, because they can shape the traffic back to the profile it had at the input of the redundant section (second section in Figure~\ref{fig:prob-statement:three-sections}).
Analyzing the interactions between \ac{PREOF} and traffic regulators in this configuration is one of our major objectives in the paper, while placing a regulator before the \ac{PEF} is equivalent to shaping the traffic within a sub-path of a multicast flow, a situation widely analyzed \cite{leboudecTheoryTrafficRegulators2018, mohammadpourLatencyBacklogBounds2018,thomasCyclicDependenciesRegulators2019}.
If regulators are placed before the \ac{PEF}, then their effects on the arrival curve of the flow at the input of the \ac{PEF}, $\alpha_{\text{PEF}^{\text{in}}}$, can be computed. 
Then Item 1/ of Theorem~\ref{thm:toolbox:pef-oac} can be applied to obtain the arrival curve of the flow at the output of the \ac{PEF}, as affected by the regulators placed before.

Within \ac{DetNet}, traffic shaping is also mentioned as one of the mechanisms for providing bounded delivery~\cite[\S 4.5]{finnDeterministicNetworkingArchitecture2019}. The \ac{DetNet} working group refers to the traffic shapers of DiffServ \cite{blackArchitectureDifferentiatedServices1998}. Thus, the traffic shapers that are mentioned in \ac{DetNet} follow the \acf{PFR} model, but \ac{DetNet} also indicates that the ``\emph{actual queuing and shaping mechanisms are
typically provided by the underlying subnet}''~\cite[\S 4.1.1.]{finnDeterministicNetworkingArchitecture2019}. There exists the same uncertainty as in \ac{TSN} on the relative order of the traffic shaper with the other functions of \ac{DetNet}.

%Still, to remain compatible with any possible outcome of the \ac{TSN} discussion on the relative position of \ac{TSN} \ac{FRER} and \ac{TSN} \ac{ATS}, our model and our algorithms consider that the regulators are either the first or the last functions in the pipeline preceding the \ac{CBQS}.

\section{What Network Calculus Results Remain Valid for Non-lossless Non-FIFO Systems ?}\label{sec:appendix:non-fifo-non-lossless-results}

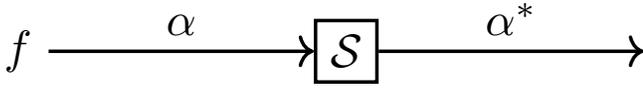
\begin{figure}\centering
    \resizebox*{\linewidth}{!}{\begin{tikzpicture}
    \node[draw] at (0,0) (s) {$\mathcal{S}$};
    \draw[->] ([xshift=-2cm] s.west) -- (s.west) node[pos=0, left] {$f$} node[pos=0.5,above] {$\alpha$};
    \draw[->] (s.east) -- ++(2cm,0) node[pos=0.5,above]{$\alpha^*$};
\end{tikzpicture}}
    \caption{\label{fig:appendix:nc-nonfifo-nonlossless:notations-system} Notations for Appendix~\ref{sec:appendix:non-fifo-non-lossless-results}. Flow $f$ with input arrival curve $\alpha$ enters system $\mathcal{S}$ and exits with an output arrival curve $\alpha^*$.}
\end{figure}

A main aspect of the network-calculus framework is the capacity to combine its results in order to analyze a specific property in a network. 
Assume for example that a flow $f$ goes through a lossless system $\mathcal{S}$ (Figure~\ref{fig:appendix:nc-nonfifo-nonlossless:notations-system}) and assume that the delay of each bit of $f$ through $\mathcal{S}$ is upper-bounded by $D$.
Furthermore, note $\alpha$ the arrival curve of $f$ at the input of $\mathcal{S}$.
Then, if we are interested in an arrival curve $\alpha^*$ of $f$ at the output of $\mathcal{S}$, we can use Proposition~1.3.7 of \cite{leboudecNetworkCalculusTheory2001} to obtain that $\delta_D$ is a service curve of $\mathcal{S}$ for $f$ and combine it with Theorem~1.4.3 of \cite{leboudecNetworkCalculusTheory2001} to finally obtain $\alpha^* = \alpha \oslash \delta_D$ is an arrival curve of $f$ at the output of the system.

Clearly, if the system $\mathcal{S}$ can lose packets (is not lossless) but continues to provide the guaranteed delay bound $D$ to the non-lost bits, then we expect the output traffic (that contains only the bits of non-lost packets) to be also $\alpha^*$-constrained.
Yet, \cite[Prop.~1.3.7]{leboudecNetworkCalculusTheory2001} applies only to lossless systems, thus making it impossible to reuse the same combination of results.

In this section, we provide a set of network-calculus results that remain valid with lossless and non-\ac{FIFO} systems.
In the literature, Ciucu \emph{et al.} introduce\footnote{F. Ciucu, J. Schmitt, and H. Wang, “On expressing networks with flow transformations in convolution-form,” in 2011 Proceedings IEEE INFOCOM, pp. 1979–1987, Apr. 2011.} the concept of loss processes to model non-lossless but \ac{FIFO} systems using the \acf{SNC} framework. They focus on the service curves and on the concatenation property of services curves~\cite[Thm.~1.4.6]{leboudecNetworkCalculusTheory2001}, whereas we focus on obtaining an output arrival curve in the framework of \acf{DNC}.
In the \ac{DNC} framework, Mohammadpour and Le Boudec obtain an output arrival curve for a flow at the output of a non-\ac{FIFO} system, when the jitter of each packet is constrained~\cite[Lemma~1]{mohammadpourPacketReorderingTimeSensitive2020}.

Our following result can be distinguished from the previous work as it applies to any systems that do not need to be \ac{FIFO} or lossless and in which the delay of each non-lost bit is constrained in a bounded interval.

\begin{proposition}[Arrival curve of a flow at the output of a system with bounded delay]\label{prop:appendix:ac-after-lossy}
    Consider a flow $f$ entering a system $\mathcal{S}$. 
    Assume that each bit of $f$ that exits $\mathcal{S}$ suffers a delay within $\mathcal{S}$ that is bounded within $[d,D]$. Finally assume that $\alpha$ is an arrival curve for $f$ at the input of $\mathcal{S}$.
    $\mathcal{S}$ does not need to be \ac{FIFO} or lossless.

    Then, $\alpha^* = \alpha\oslash\delta_{D-d}$ is an arrival curve for $f$ at the output of $\mathcal{S}$.
\end{proposition}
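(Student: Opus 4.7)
The plan is to track individual bits from input to output and exploit the per-bit delay bounds $[d,D]$ to relate the output count on an interval to the input count on a slightly enlarged interval. Let $R$ [resp., $R^*$] denote the cumulative arrival function of $f$ at the input [resp., output] of $\mathcal{S}$. Fix $s \le t$; the goal is to show $R^*(t) - R^*(s) \le \alpha(t - s + D - d)$, since by the definition of the min-plus deconvolution, $(\alpha \oslash \delta_{D-d})(\tau) = \sup_{u \ge 0}\bigl[\alpha(\tau+u) - \delta_{D-d}(u)\bigr] = \alpha(\tau + D - d)$ (because $\alpha$ is wide-sense increasing and $\delta_{D-d}(u)$ equals $0$ for $u \le D-d$ and $+\infty$ otherwise).

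The core observation is the following bit-tracking argument: every bit that exits $\mathcal{S}$ at some time $\tau \in (s,t]$ entered $\mathcal{S}$ at some earlier time $\tau' = \tau - \text{delay}$, and since its delay lies in $[d,D]$ we have $\tau' \in (s - D, t - d]$. Crucially, this step does not require $\mathcal{S}$ to be FIFO or lossless: it only uses the existence, for each output bit, of a corresponding input bit with a delay in $[d,D]$. Losses only mean that some input bits in the window do not appear at the output, which can only decrease the right-hand side; reordering does not affect the counting, since we only compare cumulative amounts.

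From this, I conclude that $R^*(t) - R^*(s) \le R(t - d) - R(s - D)$. Applying the input arrival curve constraint to the right-hand side yields
\begin{equation*}
R^*(t) - R^*(s) \;\le\; R(t - d) - R(s - D) \;\le\; \alpha\bigl((t-d) - (s - D)\bigr) \;=\; \alpha(t - s + D - d),
\end{equation*}
which is precisely $(\alpha \oslash \delta_{D-d})(t-s)$, completing the proof.

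The main obstacle is making the bit-tracking argument rigorous in the absence of FIFO and losslessness. Concretely, one needs to justify that $R^*$ may indeed be viewed as counting bits each endowed with an entry time in $\mathcal{S}$ and a delay value in $[d,D]$, so that summing over disjoint bits on an output interval is bounded by the count of input bits on the corresponding enlarged interval. A clean way to do this is to invoke the standard continuum-of-bits model of \cite[\S 1.2]{leboudecNetworkCalculusTheory2001}: each output increment $dR^*(\tau)$ is matched with an input increment $dR(\tau')$ for some $\tau' \in [\tau - D, \tau - d]$ (possibly via a non-injective and non-monotone matching since $\mathcal{S}$ is not FIFO, and only covering a subset of input bits since $\mathcal{S}$ is lossy), from which the inequality $R^*(t) - R^*(s) \le R(t-d) - R(s-D)$ follows by integration of the matched increments over $(s,t]$.
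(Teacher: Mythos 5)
Your proof is correct and follows essentially the same route as the paper's: both reduce to the key inequality $R^*(t)-R^*(s)\le R(t-d)-R(s-D)$ obtained from the per-bit delay bounds, then apply the input arrival curve and identify the result with $(\alpha\oslash\delta_{D-d})(t-s)$. The only difference is cosmetic: the paper first decomposes the input into lost and non-lost substreams and shows the non-lost substream inherits $\alpha$, whereas you bound directly against the full input process $R$, which slightly shortens the argument without changing its substance.
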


\begin{proof}
    Denote by $R_f$ the cumulative process of $f$ at the input of $\mathcal{S}$ (Figure~\ref{fig:appendix:nc-nonfifo-nonlossless:notations-system}).
    We decompose $R_f = R_a + R_b$, with $R_a$ the cumulative process, at the input, for the stream of bits of $f$ that are not lost inside $\mathcal{S}$ and $R_b = R_f - R_a$.
    $R_a(t)$ is defined as the number of bits of $f$ that eventually exit $\mathcal{S}$ (that are not lost inside it) and that are observed at the input of $\mathcal{S}$ during the interval $[0,t]$.
    Note that the cumulative functions $R_a$ and $R_b$ are unknown in general:
    When a bit is observed at the input of $\mathcal{S}$, the real-life observer cannot infer whether it will be lost within $\mathcal{S}$ or not. 
    However, we can still work on the unknown functions $R_a$ and $R_b$.

    We denote by $R_a^*$ [resp., $R_b^*$] the output cumulative function related to the input process $R_a$ [resp., $R_b$].
    Hence, $R_a^*(t)$ is defined as the number of bits of $f$ that exit $\mathcal{S}$ (are not lost inside it) and that are seen at the output of $\mathcal{S}$ during the interval $[0,t]$.

    All cumulative functions are positive, wide-sense increasing and defined for $t\ge0$~\cite[\S 1.1.1]{leboudecNetworkCalculusTheory2001}.
    We extend their definition domain by using the convention that all cumulative functions equal zero in $\mathbb{R}-$: $\forall t \le 0, R_a(t) = R_b(t) = R_a^*(t) = R_b^*(t) = 0$.

    As $\alpha$ is an arrival curve for $f$ at the input of $\mathcal{S}$, the input process $R_f$ is $\alpha$-constrained\cite[Definition~1.2.1]{leboudecNetworkCalculusTheory2001}, thus for all $s\le t$,
    \begin{equation*}
        \begin{aligned}
            R_f(t) - R_f(s) &\le \alpha(t-s) \\
            R_a(t) + R_b(t) - R_a(s) + R_b(s) &\le \alpha(t-s) \\
            R_a(t) - R_a(s) &\le \alpha(t-s) + R_b(s) - R_b(t) \\
        \end{aligned}
    \end{equation*}
    As $s\le t$, and $R_b$ is wide-sense increasing, $R_b(s) - R_b(t) \le 0$ and $R_a(t) - R_a(s) \le \alpha(t-s)$ which shows that the cumulative process $R_a$ is also $\alpha$-constrained.

    By definition of $R_b$ and $R_b^*$, $\forall t, R_b^*(t) = 0$ and
    \begin{equation}\label{eq:appendix:toolbox:RaStarIsRfStar}
        R_a^*(t) = R_f^*(t)
    \end{equation}

    The system $\mathcal{S}$ is not \ac{FIFO} but the non-lost bits have a maximum delay of $D$.
    Hence, all the bits of $R_a$ that have entered $\mathcal{S}$ at $t$ have exited $\mathcal{S}$ by $t+D$,
    \begin{equation}\label{eq:appendix:toolbox:r-and-D}
        R_a^*(t+D) \ge R_a(t)
    \end{equation}
    Equation~(\ref{eq:appendix:toolbox:r-and-D}) is valid for $t<0$ because $R_a(t) = 0$ for $t<0$ and $R_a^*$ is a positive function.
    Similarly, the minimum delay of each data unit within $\mathcal{S}$ is $d$.
    As such, all the data units that have exited $\mathcal{S}$ by $t+d$ must have entered $\mathcal{S}$ before $t$,
    \begin{equation}\label{eq:appendix:toolbox:r-and-d}
        R_a^*(t+d) \le R_a(t)
    \end{equation}
    Equation~(\ref{eq:appendix:toolbox:r-and-d}) is again valid for $t <0$: $R_a(t) = 0$ but $R_a^*(t+d)$ also equals zero because the minimum time that a bit needs to reach the output is $d$.
    
    Then, $\forall t\ge s$,
    \begin{align*}
        R_a^*(t) &- R_a^*(s) \\
        & \le R_a(t-d) - R_a(s-D) \qquad\text{\Comment{(\ref{eq:appendix:toolbox:r-and-D}) and (\ref{eq:appendix:toolbox:r-and-d})}} \\
        & \le \alpha(t-s+(D-d)) \qquad \text{\Comment{}} R_a \text{ is }\alpha\text{-constrained}\\
        & \le (\alpha\oslash\delta_{D-d})(t-s)
    \end{align*}
    Combining the above result with (\ref{eq:appendix:toolbox:RaStarIsRfStar}) shows, $\forall t\ge s$
    \begin{equation}
        R_f^*(t) - R_f^*(s) \le (\alpha\oslash\delta_{D-d})(t-s)
    \end{equation}
    which proves that $\alpha^* =\alpha\oslash\delta_{D-d}$ is an arrival curve for $f$ at the output of $\mathcal{S}$.
\end{proof}

For a system with a constant delay or without any delay, Proposition~\ref{prop:appendix:ac-after-lossy} is simplified as follows.
\begin{corollary}[\label{cor:appendix:lossy-with-fixed-delay-keeps-arrival-curves}A system with constant delay keeps the arrival curves]
    If $\mathcal{S}$ is a system in which the non-lost bits of any flow have a constant delay, then an arrival curve for a flow or aggregate of flows at the input of $\mathcal{S}$ is also an arrival curve for the same flow or aggregate of flows at the output of $\mathcal{S}$.    
\end{corollary}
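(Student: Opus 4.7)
The plan is to obtain Corollary~\ref{cor:appendix:lossy-with-fixed-delay-keeps-arrival-curves} as a direct specialization of Proposition~\ref{prop:appendix:ac-after-lossy} in the degenerate case where the minimum and maximum delay bounds coincide. First, I would observe that the hypothesis ``constant delay'' means that there exists some $D \in \mathbb{R}^+$ such that every non-lost bit of the flow (or of the aggregate) suffers the exact delay $D$ within $\mathcal{S}$. This is precisely the setting of Proposition~\ref{prop:appendix:ac-after-lossy} with $d = D$, with no further assumption on FIFO behavior or losslessness being required.

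Second, I would instantiate the conclusion of Proposition~\ref{prop:appendix:ac-after-lossy} with $d = D$ to obtain that $\alpha^* = \alpha \oslash \delta_{D-d} = \alpha \oslash \delta_0$ is an arrival curve at the output. The remaining step is to recognize that $\alpha \oslash \delta_0 = \alpha$. This follows directly from the definition of the min-plus deconvolution: $(\alpha \oslash \delta_0)(t) = \sup_{u \ge 0}\bigl(\alpha(t+u) - \delta_0(u)\bigr)$, where $\delta_0(u) = 0$ for $u \le 0$ and $\delta_0(u) = +\infty$ for $u > 0$. Only $u = 0$ yields a finite contribution, so the supremum equals $\alpha(t)$, giving $\alpha^* = \alpha$ as desired.

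Third, for the aggregate case, I would handle it by either of two equivalent routes: (i) treating the aggregate as a single flow whose cumulative function is the sum of the individual cumulative functions and invoking the proposition once, since the constant-delay hypothesis is assumed to hold for the aggregate as a whole; or (ii) applying the proposition to each flow individually (each of which inherits the same constant delay $D$) and then summing the arrival-curve constraints. Since Proposition~\ref{prop:appendix:ac-after-lossy} does not use FIFO or losslessness, both routes are clean.

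I do not anticipate any technical obstacle here: the corollary is essentially a reading of the proposition in the limit $d \to D$, and the only nontrivial observation is the identity $\alpha \oslash \delta_0 = \alpha$, which is a one-line computation from the definition of $\delta_0$ recalled in Table~\ref{tab:system-model:notations}. The proof can therefore be written in just a few lines.
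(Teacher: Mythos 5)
Your proposal is correct and follows exactly the paper's own route: the corollary is obtained by applying Proposition~\ref{prop:appendix:ac-after-lossy} with $d=D$ (so that $\alpha\oslash\delta_{D-d}=\alpha\oslash\delta_0=\alpha$), and the aggregate case is handled by treating the whole aggregate as a single flow, which is precisely what the paper's one-line proof does.
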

\begin{proof}
    For the aggregate, we simply need to consider the whole aggregate as a unique flow when applying Proposition~\ref{prop:appendix:ac-after-lossy}.
\end{proof}
\section{Proofs}
% !TeX spellcheck = en_US
% !TeX program = pdflatex
% !TeX root = proofs
\subsection{Proof of Theorem~\ref{thm:toolbox:pef-oac}}\label{sec:appendix:pef-oac}

\begin{proof}[Proof of Theorem~\ref{thm:toolbox:pef-oac}]\label{proof:appendix:pef-oac}

	Consider a vertex $n$ and a flow $f$ such that $n$ contains a \ac{PEF} for $f$, noted $\texttt{PEF}_n(f)$.
	We first note that flow $f$ is packetized at both the input and the output of $\texttt{PEF}_n(f)$.

	\ul{Proof of Item 1/}
	As per the model in Section~\ref{sec:system-model:pef-model}, $\texttt{PEF}_n(f)$ is a network element that can lose packets but does not have any delay for the forwarded packets. As a consequence, it does not have any delay for the forwarded bits either (both its input and its output are packetized).
	Applying Corollary~4 proves that an arrival curve for $f$ at the input of the \ac{PEF}, $\alpha_{f,\text{PEF}^{\text{in}}}$, is also an arrival curve for $f$ at the output of the \ac{PEF}.

	\ul{Item 2/}
	Consider a diamond ancestor $a$ of $n$.
	The observation point $a^*$ is located at the output of the input port within $a$. 
	As such, flow $f$ is packetized at the observation point $a^*$.
	As such, a bound on the per-bit delay between $a^*$ and either the \ac{PEF}'s input or the \ac{PEF}'s output $\texttt{PEF}^*$ is also a per-packet delay bound on the delay between the same two observation points, and \emph{vice versa}.

	Denote by $\mathcal{P}^{\mathcal{G}(f)}_{a,n}$ the set of all possible paths from $a$ to $n$ in $\mathcal{G}(f)$ and consider a data unit $m$ of $f$ such that $m$ is not lost for $n$.
	By definition of the diamond ancestor, $a$ is not an \ac{EP}-vertex for $f$, thus the data unit $m$ is observed exactly once at $a$.

	Denote by $\{P^m_i\}_{i\in\mathcal{I}(m)}$ the set of packets containing $m$ that reach $\texttt{PEF}_n(f)$, with $\mathcal{I}(m)$ a set to index them.
	$\mathcal{I}(m)$ is not empty because $m$ is not lost for $n$ (at least one packet containing $m$ reaches $n$).
	Furthermore, $\mathcal{I}(m)$ is a finite set because $\mathcal{G}(f)$ is finite and acyclic: $m$ is replicated a finite number of times.
	
	For $i$ in $\mathcal{I}(m)$, call $\rho_i$ the path within $\mathcal{G}(f)$ that the packet $P^m_i$ took from the source of $f$ to $n$. 
	By definition of a diamond ancestor of $n$, this path crosses $a$ and by definition of $\mathcal{P}^{\mathcal{G}(f)}_{a,n}$, there exists a path $p_i\in \mathcal{P}^{\mathcal{G}(f)}_{a,n}$ such that packet $P^m_i$ took path $p_i$ between $a$ and $n$ ($p_i$ is a sub-path of $\rho_i$).
	
	Denote by $d^m_i$ the delay of packet $P^m_i$ between the output of $a$ and the input of $\texttt{PEF}_n(f)$.
	By definition of the notations $D_f^{a\rightarrow n}$ and $d_f^{a\rightarrow n}$ used in Theorem~\ref{thm:toolbox:pef-oac}, 
	
	\begin{equation}\label{eq:appendix:pef-oac:bounding-packet-bound}
		d_f^{a\rightarrow n} \le d^m_i \le D_f^{a\rightarrow n}
	\end{equation}

	% This is valid for any non-lost data unit $m$, we hence proved $\forall m \in f, m \text{ not lost for } n$,
	% \begin{equation}
	% 	\forall i \in \mathcal{I}(m), \quad d_{p_i} \le d^m_i \le D_{p_i}
	% \end{equation}
	% Furthermore, $\forall m \in f, m \text{ not lost for } n, \forall i \in \mathcal{I}(m)$,
	% \begin{equation*}
	% 	d_{p_i} \ge \min_{j\in\mathcal{I}(m)} d_{p_j} \quad\text{ and }\quad D_{p_i} \le \max_{j\in\mathcal{I}(m)} D_{p_j}
	% \end{equation*}
	% Which gives $\forall m \in f, m \text{ not lost for } n, \forall i \in \mathcal{I}(m)$
	% \begin{equation*}
	% 	\min_{j\in\mathcal{I}(m)}d_{p_j} \le d^m_i \le\max_{j\in\mathcal{I}(m)}D_{p_j}
	% \end{equation*}
	% In addition, $\{p_i; i \in \mathcal{I}(m)\} \subset  \mathcal{P}^{\mathcal{G}(f)}_{a,n}$, so 
	% \begin{equation*}
	% 	\min_{j\in\mathcal{I}(m)} d_{p_j} \ge \min_{p\in\mathcal{P}^{\mathcal{G}(f)}_{a,n}} d_{p} \quad\text{and}\quad \max_{j\in\mathcal{I}(m)} D_{p_j} \le \max_{p\in\mathcal{P}^{\mathcal{G}(f)}_{a,n}} D_{p}
	% \end{equation*}
	% We thus have $\forall m \in f, m \text{ not lost for } n, \forall i \in \mathcal{I}(m)$
	% \begin{equation}
	% 	\min_{p \in \mathcal{P}^{\mathcal{G}(f)}_{a,n}}d_p \le d^m_i  \le \max_{p \in \mathcal{P}^{\mathcal{G}(f)}_{a,n}}D_p
	% \end{equation}
	
	% Denote by $D\triangleq \max\{D_p;p \in \mathcal{P}^{\mathcal{G}(f)}_{a,n}\}$ and $d\triangleq\min\{d_p;p \in \mathcal{P}^{\mathcal{G}(f)}_{a,n}\}$, then,
	% \begin{equation}\label{eq:appendix:pef-oac:bounding-packet-bound}
	% 	\forall m \in f, m \text{ not lost for } n, \forall i \in \mathcal{I}(m), \quad d \le d^m_i  \le D
	% \end{equation}
	The values $d_f^{a\rightarrow n}$ and $D_f^{a\rightarrow n}$ can be seen as the lower and upper-bound of the non-lost data units through the system located between the output of $a$ and the input of $\texttt{PEF}_n(f)$. 
	This system is represented with a cloud in Figure~\ref{fig:appendix:system-a-to-pef}.
	
	The data unit $m$ exits $\texttt{PEF}_n(f)$ as soon as one of the the packets $\{P^m_i\}_{i\in\mathcal{I}(m)}$ reaches the \acl{PEF}.
	If we denote by $d^{m}_{a\rightarrow \texttt{PEF}^*}$ the delay of the data unit $m$ from the output of $a$ to the output of the \ac{PEF}, we have, $\forall m\in f$, $m$ not lost for $n$,
	\begin{equation}\label{eq:appendix:pef-oac:relation-packet-data-out}
		\exists i \in\mathcal{I}(m), \quad d^{m}_{a\rightarrow \texttt{PEF}^*} = d_i^m
	\end{equation}
	Combining Equations~(\ref{eq:appendix:pef-oac:bounding-packet-bound}) and (\ref{eq:appendix:pef-oac:relation-packet-data-out}) gives
	\begin{equation}\label{eq:appendix:pef-oac:bounds-on-s}
		\forall m \in f, m \text{ not lost for } n, \quad d_f^{a\rightarrow n} \le d^{m}_{a^*\rightarrow \texttt{PEF}^*} \le D_f^{a\rightarrow n}
	\end{equation}
	
	Equation (\ref{eq:appendix:pef-oac:bounds-on-s}) proves that any non-lost data units of $f$ for $n$ suffer through the system $S$ in Figure~\ref{fig:appendix:system-a-to-pef} a delay bounded in $[d_f^{a\rightarrow n},D_f^{a\rightarrow n}]$.
	As both $a^*$ and the output of the \ac{PEF} are packetized, this also proves that each bit of $f$ that is not lost within $\mathcal{S}$ (neither in the cloud of Figure~\ref{fig:appendix:system-a-to-pef} nor in the \ac{PEF}) suffers a delay through $\mathcal{S}$ that is bounded within $[d_f^{a\rightarrow n},D_f^{a\rightarrow n}]$.
	We apply Proposition~\ref{prop:appendix:ac-after-lossy} and obtain that $\alpha_{f,a^*}\oslash\delta_{(D_f^{a\rightarrow n})-(d_f^{a\rightarrow n})}$ is an arrival curve for $f$ at the output of the \ac{PEF}.

	\begin{figure}\centering
		\resizebox*{\linewidth}{!}{\begin{tikzpicture}
    \def\myy{0.7cm}
    \node[circle, draw, red] at (0,0) (a) {$a$};
    \draw[-] (a.east) -- ++(1.5cm,0) node[pos=0.5,above] {$\alpha_{f,a*}$} node[pos=1, anchor=center](tt){};
    \asymCloud{([xshift=2.5cm] tt)}{}{0.6}{0}
    \draw[latex-latex] (3,0.8) -- (5,0.8) node[pos=0.8,above] (dCloud) {$[d_f^{a\rightarrow n},D_f^{a\rightarrow n}]$ (Eq.~\ref{eq:appendix:pef-oac:bounding-packet-bound})};
    \node[draw, rotate=90] at (7,0)(pef)  {$\texttt{PEF}_n(f)$};
    \node[draw, fit={(pef) ([xshift=0.1cm] tt.center) (dCloud)}] (sb) {};
    \node[anchor=north east] at (sb.north west) (s) {$S$};
    \draw[->] (tt.center) -- (2.95,0);
    \draw[->] (5,0) -- (pef.north);
    \draw[->] (pef.south) -- ++(1.5cm,0);
    \draw[latex-latex] ([yshift=-0.2cm] sb.south west) -- ([yshift=-0.2cm] sb.south east) node[pos=0.5, below] {$[d_f^{a\rightarrow n},D_f^{a\rightarrow n}]$ (Eq.~\ref{eq:appendix:pef-oac:bounds-on-s})};
\end{tikzpicture}}
		\caption{\label{fig:appendix:system-a-to-pef} Notations for Appendix~\ref{sec:appendix:pef-oac}: System from diamond ancestor $a$ to the \ac{PEF}, focusing on the non-lost data units.}
	\end{figure}
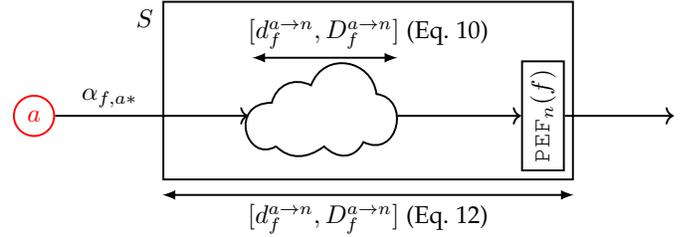
\end{proof}

\subsection{Proof of Corollary~\ref{cor:toolbox:thm-simple-application}}\label{sec:appendix:toolbox:cor}
\begin{proof}[Proof of Corollary~\ref{cor:toolbox:thm-simple-application}]\label{proof:appendix:toolbox:cor}
    The replication is performed by the switching fabric, both its input and output are hence packetized. 
    The same remark applies for the \ac{PEF} thus we conclude that both the input and the output of each system $S_i$ is also packetized.
    Therefore, the per-bit delay of flow $f$ through $S_i$ is also bounded by $[d_i,D_i]$. 

    For each $i \in \llbracket 1,N \rrbracket$, the application of Proposition~\ref{prop:appendix:ac-after-lossy} gives the arrival curve for $f$ at the output of $S_i$
    $$\forall \in \llbracket 1,N \rrbracket, \quad \alpha_{f,S_i^*} = \alpha_f \oslash \delta_{D_i-d_1}$$
    We then obtain $\alpha_{f,\text{PEF}^{\text{in}}}$
    \begin{align*}
        \alpha_{f,\text{PEF}^{\text{in}}} &= \sum_{i\in\llbracket 1 ,n \rrbracket} \alpha_{f,S_i^*} \\
        & =\sum_{i\in\llbracket 1 ,n \rrbracket}\alpha_f \oslash \delta_{D_i-d_1}
    \end{align*}

    We then apply Equation~(\ref{eq:thm:pef-oac:ancestor}) with the ancestor $a$ being the input of the replication function in Figure~\ref{fig:toolbox:oac-prop-simplified-figure}.
    A lower delay bound for $f$ from the ancestor $a$ to the input of the \ac{PEF} along any possible paths (\ie through any $S_i$) is $$d_f^{a\rightarrow n} = \min\{d_i; i\in\llbracket 1,N\rrbracket\}$$
    Similarly, 
    $$D_f^{a\rightarrow n} = \max\{D_i; i\in\llbracket 1,N\rrbracket\}$$
    and (\ref{eq:thm:pef-oac:ancestor}) can be written
    \begin{align*}
        \alpha_f^{a\rightarrow n} &= \alpha_{f,a^*} \oslash \delta_{D_f^{a\rightarrow n} - d_f^{a\rightarrow n}} \\
        &= \alpha_f \oslash \delta_{\max_i D_i - \min_i d_i}
    \end{align*}
    We apply Theorem~\ref{thm:toolbox:pef-oac}: $\alpha_{f,\text{PEF}^*} = \alpha_{f,\text{PEF}^{\text{in}}} \otimes \alpha_f^{a\rightarrow n}$ is an arrival curve for $f$ at the output of the \ac{PEF}.
    Replacing with the above expressions for $\alpha_{f,\text{PEF}^{\text{in}}}$ and $\alpha_f^{a\rightarrow n}$ gives Equation~(\ref{eq:toolbox:simplified-proposition}) of Corollary~\ref{cor:toolbox:thm-simple-application}.
\end{proof}

% !TeX spellcheck = en_US
% !TeX program = pdflatex
% !TeX root = proofs
\subsection{Proof of Proposition~\ref{prop:toolbox:simplified-result-is-tight}}
\label{sec:appendix:pef-oac:simplified-result-is-tight}

\begin{proof}[Proof of Proposition~\ref{prop:toolbox:simplified-result-is-tight}]\label{proof:appendix:pef-oac:simplified-result-is-tight}
	
	Take a leaky-bucket arrival curve $\gamma_{r,b}$. 
	And $[d_1,D_1]$, $[d_2,D_2]$ two intervals of $\mathbb{R}^+$.
	
	We first prove the result when $d_2 - D_1 \ge b/r$, we prove the other situation afterwards.

	\subsubsection{Case $d_2 - D_1 \ge b/r$:}
	Applying Corollary~\ref{cor:toolbox:thm-simple-application} with $N=2$ systems $S_1$, $S_2$ with bounded intervals $[d_1,D_1]$ and $[d_2,D_2]$ gives that
	\begin{equation}\label{eq:appendix:tight:alpha-star}\begin{aligned}
		\alpha^* &= \left(\sum_{i\in\llbracket 1,2\rrbracket } \gamma_{r,b}\oslash\delta_{D_i-d_i}\right) \otimes \left( \gamma_{r,b} \oslash\delta_{D_2-d_1}\right)\\
		&=\left(\gamma_{r,b+r(D_1-d_1)} + \gamma_{r,b+r(D_2-d_1)}\right) \otimes \gamma_{r,b+r(D_2-d_1)} \\
		&=\gamma_{2r,2b+r(D_1-d_1+D_2-d_2)}\otimes \gamma_{r,b+r(D_2-d_1)} \\
	\end{aligned}\end{equation}
	is an arrival curve for $f$ after the \ac{PEF} in Figure~\ref{fig:toolbox:oac-prop-simplified-figure}.

	In the following, we exhibit a trajectory with a $\gamma_{r,b}$-constrained source for $f$ and no minimal packet length.
	We exhibit also two systems $S_1, S_2$, in which the delay of the non-lost data-units is in the intervals $[d_1, D_1]$ and $[d_2,D_2]$.
	The proof operates in several steps, as follows:

	\paragraph{Definition of several constants used in the proof}
	%We define	
	%$J_1 \triangleq D_1 - d_1$,	$(D_2-d_2) \triangleq D_2 - d_2$, $J \triangleq D_2 - d_ 1$ and $d_2 - D_1 \triangleq (d_2-D_1)$
	We define
	\begin{equation}\label{eq:proof:frer:tight:chi_values}
		\chi^1 \triangleq \left\lceil\frac{r (D_1-d_1) }{b}\right\rceil \qquad \text{and} \qquad \chi^2 \triangleq \left\lceil\frac{r (D_2-d_2) }{b}\right\rceil
	\end{equation}
	(Note that $\chi^1\ge1$ and $\chi^2\ge1$)
	
	Last, we define
	\begin{equation}\label{eq:proof:frer:tight:psi_values}
		\psi = \left\lceil\frac{r (d_2-D_1) -b}{b}\right\rceil
	\end{equation}	
	and we also have $\psi \ge 1$.

	\paragraph{Description of the traffic generation at the source}\label{par:appendix:tightness:traffic-gen}

	For the sake of clarity, we classify the data-units generated by the source into four categories: $I,B,S$ and $X$. 
	The category of a data-unit defines the role that the data-unit has in the trajectory. 
	Each of the three first categories ($I,B,S$) has two sub-categories that we distinguish by using a superscript (e.g., $I^1$ and $I^2$). 
	This sub-category notion is used in order to distinguish the role of every system ($S_1$ or $S_2$) in the trajectory. 
	
	Subcategories do not infer the order with which data-units are generated. The notions of categories and subcategories are only used in the proof, they are not related to any physical property of the packets (neither to their length nor to any field in their header).
	
	\ul{Category $I$:}
	The source generates two ``initiator" data-units: $I^2$ [resp., $I^1$] at absolute time $0$ [resp., $(D_2 - D_1)$], of length $b$ (see Table~\ref{tab:proof:frer:oac:tight:initiators_generation}). 
	Figure~\ref{fig:proof:frer:oac:tight:timeline:initiators_only} shows the timeline of the data-units $I$ out of the source.
	
	\begin{figure}
		\resizebox{\linewidth}{!}{% !TeX spellcheck = en_US
% !TeX program = pdflatex
% !TeX root = proofs

\begin{tikzpicture}
	\def\xdev{1cm}
	\def\yline{0cm}
	\def\ph{1cm}
	
	\tikzstyle{ab} = [pos=0, rotate=90, anchor=east, black]
	\tikzstyle{pkn} = [black, pos=1,above]
	\tikzstyle{pk} = [->, blue, line width=1.25pt]
	
	\draw[-latex] (-2*\xdev,\yline) -- (10*\xdev, \yline) node[anchor=north east, black] {time};
	\draw[-latex] (-1*\xdev,\yline-0.5cm) -- (-1*\xdev,\yline+2*\ph) node[pos=1, anchor=north east] {size};
	
	% I1
	\draw[pk]  (0,\yline) -- (0,\ph+\yline) node[ab] {$0$} node[pkn] {$I^2$};
	% I2
	\draw[pk]  (5*\xdev,\yline) -- (5*\xdev,\ph+\yline) node[ab] {$D_2-D_1$} node[pkn] {$I^1$};
	
	\draw[dotted] (-1*\xdev, \ph) -- (5*\xdev,\ph) node[pos=0,left] {$b$};

\end{tikzpicture}}
		\caption{\label{fig:proof:frer:oac:tight:timeline:initiators_only} Source output in the trajectory achieving the tightness of Corollary~\ref{cor:toolbox:thm-simple-application}. Two "initiator" data units are sent at $0$ and at $D_2-D_1$.}		
	\end{figure}
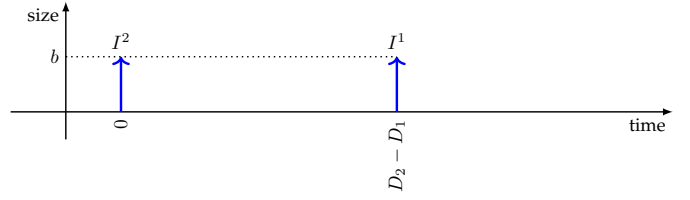

	\begin{table}\centering
		\caption{\label{tab:proof:frer:oac:tight:initiators_generation} Generation of the Data-Units of Category $I$ in the Trajectory that Achieves the Tightness of Corollary~\ref{cor:toolbox:thm-simple-application}.}
		% !TeX spellcheck = en_US
% !TeX program = pdflatex
% !TeX root = proofs
\begin{tabular}{r|c|c}
	Data unit $m$ & Size, $\text{size}(m)$ & Generation time, $\mathcal{G}(m)$ \\
	\hline
	$I^2$ & $b$ & $0$ \\
	$I^1$ & $b$ & $D_2-D_1$ \\
\end{tabular}
	\end{table}

	\emph{Note: The role of the two data-units of category $I$ is to initiate the backlog period. In the next parts of the proof, we create a situation where $I^1$ and $I^2$ exit the \ac{PEF} of Figure~\ref{fig:toolbox:oac-prop-simplified-figure} at the same time, creating the $2b$ part of the burst in the term $\gamma_{2r,\mathbf{2b}+r(D_1-d_1+D_2-d_2)}$ of (\ref{eq:appendix:tight:alpha-star}).}

	\ul{Category $B$:}	
	In addition to the data-units of category $I$, the source generates $\chi^2$ data-units of subcategory $B^2$ and $\chi^1$ data-units of subcategory $B^1$, as described in Table~\ref{tab:proof:frer:oac:tight:bursty_generation}.
	\begin{table}\centering
		\caption{\label{tab:proof:frer:oac:tight:bursty_generation} Generation of the Data-Units of Category $B$ in the Trajectory that Achieves the Tightness of Corollary~\ref{cor:toolbox:thm-simple-application}.}
		\resizebox*{\linewidth}{!}{\input{./figures/2020-10-frer-oac-tight-tab-bursty}}
	\end{table}
	A possible output of the source when combining categories $I$ and $B$ is shown in Figure~\ref{fig:proof:frer:oac:tight:timeline:bursty_1}. In the proposed situation, we have $\chi_1=1$ (\emph{i.e.}, $r (D_1-d_1)  \le b$). Then the interval $\llbracket 1, \chi^1 - 1\rrbracket$ in Table~\ref{tab:proof:frer:oac:tight:bursty_generation} is empty and category $B^1$ contains a unique data-unit $B^1_1 = B^1_{\chi^1}$ of size $r (D_1-d_1) $ and sent at time $ (D_2-D_1)  +  (D_1-d_1)  = D_2-d_1$. 
	In Figure~\ref{fig:proof:frer:oac:tight:timeline:bursty_1}, $\chi^2$ equals $2$, and category $B_2$ is made of two data-units: $B_1^2$, of size $b$, released at time $b/r$; and $B_2^2$, of size $r (D_2-d_2) -b$, released at time $ (D_2-d_2) $.
	
	Note that for any value of $\chi^1$, $\chi^2$, by Table~\ref{tab:proof:frer:oac:tight:bursty_generation},
	\begin{equation}\label{eq:proof:frer:oac:tight:sum_cat_b}
		\forall j \in\lbrace 1, 2 \rbrace \qquad \sum_{k\in\llbracket 1, \chi^j \rrbracket} \text{size}(B_k^j) = r(D_j-d_j)
	\end{equation}
	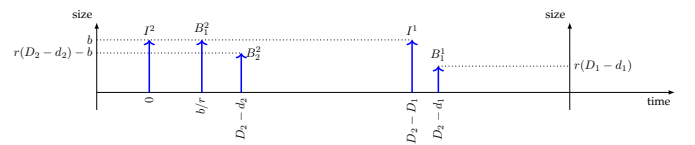
\begin{figure}
		\resizebox{\linewidth}{!}{% !TeX spellcheck = en_US
% !TeX program = pdflatex
% !TeX root = proofs

\begin{tikzpicture}
	\tikzstyle{ab} = [pos=0, rotate=90, anchor=east, black]
	\tikzstyle{pkn} = [black, pos=1,above]
	\tikzstyle{pk} = [->, blue, line width=1.25pt]
	\def\xdev{1.5cm}
	\def\yline{0cm}
	\def\ph{1.5cm}
	
	\draw[-latex] (-1*\xdev,\yline) -- (10*\xdev, \yline) node[black, anchor=north east, pos=1] {time};
	\draw[-latex] (-1*\xdev,\yline-0.5cm) -- (-1*\xdev,\yline+\ph+0.5cm) node[pos=1, anchor=south east] {size};
	\draw[-latex] (8*\xdev,\yline-0.5cm) -- (8*\xdev,\yline+\ph+0.5cm) node[pos=1, anchor=south east] {size};
	% I1
	\draw[pk]  (0,\yline) -- (0,\ph+\yline) node[ab] {$0$} node[pkn] {$I^2$};
	% I2
	\draw[pk]  (5*\xdev,\yline) -- (5*\xdev,\ph+\yline) node[ab] {$D_2-D_1$} node[pkn] {$I^1$};
	
	% B^2_1
	\draw[pk]  (1*\xdev,\yline) -- (1*\xdev,\ph+\yline) node[ab] {$b/r$} node[pkn] {$B^2_1$};
	% B^2_2
	\draw[pk]  (1.75*\xdev,\yline) -- (1.75*\xdev,0.75*\ph+\yline) node[ab] {$D_2-d_2$} node[pkn,anchor=west] {$B^2_2$};
	
	% B^1_1
	\draw[pk]  (5.5*\xdev,\yline) -- (5.5*\xdev,0.5*\ph+\yline) node[ab] {$D_2-d_1$} node[pkn] {$B^1_1$};

	%b
	\draw[dotted] (-1*\xdev, \ph) -- (5*\xdev,\ph) node[pos=0,left] {$b$};
	% r(D_1-d_1)	
	\draw[dotted] (5.5*\xdev, 0.5*\ph+\yline) -- (8*\xdev,0.5*\ph+\yline) node[pos=1,right] {$r(D_1-d_1)$};
	% r(D_2-d_2) - b
	\draw[dotted] (-1*\xdev, 0.75*\ph) -- (1.75*\xdev,0.75*\ph) node[pos=0,left] {$r(D_2-d_2)-b$};
	
\end{tikzpicture}}
		\caption{\label{fig:proof:frer:oac:tight:timeline:bursty_1}Example of the source output in the trajectory achieving the tightness, focusing on categories $I$ and $B$.}		
	\end{figure}
	
	\emph{Note: The role of the data-units of category $B$ is to participate in the burst term of $\gamma_{2r,2b+r(D_1-d_1+D_2-d_2)}$ in (\ref{eq:appendix:tight:alpha-star}).
	In the next parts of the proof, we create a situation where all data-units of category $B$ (both subcategories $B^1$ and $B^2$) are released at the same time, simultaneously with data-units $I^1$ and $I^2$. This give the part $r(D_1-d_1+D_2-d_2)$ in the burst term of $\gamma_{2r,2b+\mathbf{r(D_1-d_1+D_2-d_2)}}$.} 
	
	We now prove that data-units of subcategory $B^1$ [resp., $B^2$] are generated after data-units of subcategory $I^1$ [resp., $I^2$] and in the order of their lower-script index.

	$\bullet$ If $\chi^2 = 1$, then the first data-unit of $B^2$ is sent at $ D_2-d_2 $ and $ D_2-d_2 \ge0$, so data-unit of $B^2$ is generated after the data-unit of $I^2$.
	
	$\bullet$ If $\chi^2 \ge 1$, then the first data-unit of $B^2$ is sent at $b/r\ge0$, so data-units of $B^2$ are generated after the data-units of $I^2$. Also, the data-units $(B^2_k)_{\kappa\in\llbracket 1, \chi^2\rrbracket}$  are generated in the same order as their index: this is clear for indexes up to $\chi^2 - 1$. For the order between $B^2_{\chi^2-1}$ and $B^2_{\chi^2}$, we note that $\left\lceil\frac{r (D_2-d_2) }{b}\right\rceil \frac{b}{r} - \frac{b}{r} \le D_2 - d_2$ by property of the ceiling function.
	
	$\bullet$ If $\chi^1 = 1$, then the first data-unit of $B^1$ is sent at $(D_2-d_1)$ thus after the data-unit of $I^1$ (as $D_1 \ge d_1$).
	
	$\bullet$ If $\chi^1 \ge 1$, then by using the same reasoning for $B^2$, we obtain that data-units of $B^1$ are generated after the initiator $I^1$ and they are released in the order of their lower-script index.
	
	\ul{Category $S$:} In addition to the data-units of categories $I$ and $B$, the source generates $\psi$ data-units of subcategory $S^2$ and $\psi$ data-units of subcategory $S^1$, as described in Table~\ref{tab:proof:frer:oac:tight:s_generation}.
	A possible output of the source when adding category $S$ to Figure~\ref{fig:proof:frer:oac:tight:timeline:bursty_1} is shown in Figure~\ref{fig:proof:frer:oac:tight:timeline:three_categories}. In the proposed situation, we have $\psi = 2$ and subcategories $S^1$ and $S^2$ are both made of two data-units: $S^1_1$ [resp., $S^2_1$], of size $b$, released $\frac{b}{r}$ after the last data-unit of $B^1$ [resp., $B^2$] and $S^1_2$ [resp., $S^2_2$], of size $r (d_2-D_1) -2b$, released at $(D_2+d_2) - (D_1+d_1)-\frac{b}{r}$ [resp., $D_2-D_1-\frac{b}{r}$].
	\begin{table}\centering
		\caption{\label{tab:proof:frer:oac:tight:s_generation} Generation of the Data-Units of Category $S$ in the Trajectory that Achieves the Tightness of Corollary~\ref{cor:toolbox:thm-simple-application}.}
		\resizebox*{\linewidth}{!}{\input{./figures/2020-10-frer-oac-tight-tab-s}}
	\end{table}
	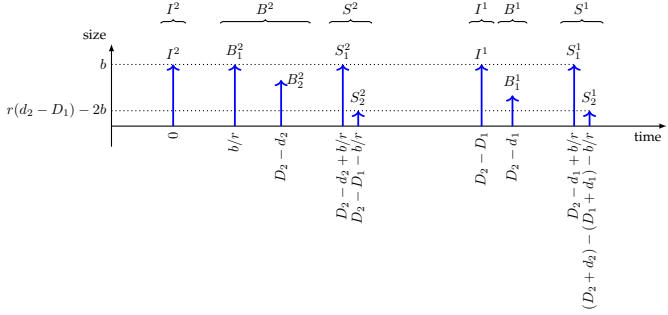
\begin{figure}
		\resizebox{\linewidth}{!}{% !TeX spellcheck = en_US
% !TeX program = pdflatex
% !TeX root = ../proofs/proofs

\begin{tikzpicture}
	\tikzstyle{ab} = [pos=0, rotate=90, anchor=east, black]
	\tikzstyle{pkn} = [black, pos=1,above]
	\tikzstyle{pk} = [->, blue, line width=1.25pt]
	\tikzstyle{nb} = [pos=0.5, above, yshift=0.1cm]
	\tikzstyle{dld} = [xshift=-0cm]
	\tikzstyle{drd} = [xshift=0cm]
	\def\xdev{1.5cm}
	\def\yline{0cm}
	\def\ph{1.5cm}
	\def\ybrace{\yline+2.5cm}

	\draw[-latex] (-1*\xdev,\yline) -- (8*\xdev, \yline) node[black, anchor=north east] {time};
	\draw[-latex] (-1*\xdev,\yline-0.5cm) -- (-1*\xdev,\yline+\ph+0.5cm) node[pos=1, anchor=south east] {size};
	% I2
	\draw[pk]  (0,\yline) -- (0,\ph+\yline) node[ab] {$0$} node[pkn] (i2l) {$I^2$};
	% I1
	\draw[pk]  (5*\xdev,\yline) -- (5*\xdev,\ph+\yline) node[ab] {$D_2-D_1$} node[pkn] (i1l) {$I^1$};
	
	% B^2_1
	\draw[pk]  (1*\xdev,\yline) -- (1*\xdev,\ph+\yline) node[ab] {$b/r$} node[pkn] (b21l) {$B^2_1$};
	% B^2_2
	\draw[pk]  (1.75*\xdev,\yline) -- (1.75*\xdev,0.75*\ph+\yline) node[ab] {$D_2-d_2$} node[pkn,anchor=west] (b22l) {$B^2_2$};
	
	% B^1_1
	\draw[pk]  (5.5*\xdev,\yline) -- (5.5*\xdev,0.5*\ph+\yline) node[ab] {$D_2-d_1$} node[pkn] (b11l) {$B^1_1$};
	
	% S^2_1
	\draw[pk]  (2.75*\xdev,\yline) -- (2.75*\xdev,\ph+\yline) node[ab] {$D_2-d_2+b/r$} node[pkn] (s21l) {$S^2_1$};
	% S^2_2
	\draw[pk]  (3*\xdev,\yline) -- (3*\xdev,0.25*\ph+\yline) node[ab] {$D_2-D_1-b/r$} node[pkn] (s22l) {$S^2_2$};

	% S^1_1
	\draw[pk]  (6.5*\xdev,\yline) -- (6.5*\xdev,\ph+\yline) node[ab] {$D_2-d_1+b/r$} node[pkn] (s11l) {$S^1_1$};
	% S^1_2
	\draw[pk]  (6.75*\xdev,\yline) -- (6.75*\xdev,0.25*\ph+\yline) node[ab] {$(D_2+d_2)-(D_1+d_1)-b/r$} node[pkn] (s12l) {$S^1_2$};
	
	%b
	\draw[dotted] (-1*\xdev, \ph) -- (6.5*\xdev,\ph) node[pos=0,left] {$b$};
	\draw[dotted] (-1*\xdev, 0.25*\ph) -- (6.75*\xdev,0.25*\ph) node[pos=0,left] {$r(d_2-D_1)-2b$};

	\draw[decorate, decoration={brace}] let \p1 = (i2l.north west), \p2 = (i2l.north east) in ([dld] \x1,\ybrace) -- ([drd] \x2,\ybrace) node[nb] {$I^2$};
	\draw[decorate, decoration={brace}] let \p1 = (b21l.north west), \p2 = (b22l.north east) in ([dld] \x1,\ybrace) -- ([drd] \x2,\ybrace) node[nb] {$B^2$};
	\draw[decorate, decoration={brace}] let \p1 = (s21l.north west), \p2 = (s22l.north east) in ([dld] \x1,\ybrace) -- ([drd] \x2,\ybrace) node[nb]  {$S^2$};
	
	\draw[decorate, decoration={brace}] let \p1 = (i1l.north west), \p2 = (i1l.north east) in ([dld] \x1,\ybrace) -- ([drd] \x2,\ybrace) node[nb] {$I^1$};
	\draw[decorate, decoration={brace}] let \p1 = (b11l.north west), \p2 = (b11l.north east) in ([dld] \x1,\ybrace) -- ([drd] \x2,\ybrace) node[nb] {$B^1$};
	\draw[decorate, decoration={brace}] let \p1 = (s11l.north west), \p2 = (s12l.north east) in ([dld] \x1,\ybrace) -- ([drd] \x2,\ybrace) node[nb] {$S^1$};
	
\end{tikzpicture}}
		\caption{\label{fig:proof:frer:oac:tight:timeline:three_categories}Source output, with the three categories $I,B$ and $S$ of data-units}		
	\end{figure}

	Note that for any value of $\psi$, by Table~\ref{tab:proof:frer:oac:tight:s_generation},
	\begin{equation}\label{eq:proof:frer:oac:tight:sum_cat_s}
		\forall j \in\lbrace 1, 2 \rbrace \qquad \sum_{k\in\llbracket 1, \psi \rrbracket} \text{size}(S_k^j) = r (d_2-D_1)  - b
	\end{equation}
	
	\emph{Note: 
	The role of the data-units of category $S$ is to participate in the peak-rate term of $\gamma_{\mathbf{2r},2b+r(D_1-d_1+D_2-d_2)}$ in (\ref{eq:appendix:tight:alpha-star}). The output traffic should maintain the peak rate for a sufficient duration so that the obtained cumulative output intersects with the curve $\gamma_{r,b+r(D_2-d_1)}$. 
	In the next parts of the proof, we create a situation where each data unit of subcategory $S^1$ is released at the same time as its peer of subcategory $S^2$. 
	This creates a peak rate $2r$ for a duration of at least $d_2-D_1-b/r$.
	The resulting cumulative output intersects the curve $\gamma_{r,b+r(D_2-d_1)}$.
	} 
	
	We now check the order of the data data units of category $S$.

	$\bullet$ If $\psi = 1$, then the first data-unit of $S^2$ is sent at $D_2-D_1-b/r$, whereas the last data-unit of $B^2$ was sent at $ D_2-d_2 $. By assumption, $d_2-D_1 \ge b/r$, so $D_2-D_1-b/r \ge D_2 - d_2$ and the first data-unit of $S^2$ is sent after the last data-unit of $B^2$.
	
	$\bullet$ If $\psi \ge 1$, then the first data-unit of $S^2$ is sent $b/r$ after the last data-unit of $B^2$. Also, the data-units $(S^2_k)_{\kappa\in\llbracket 1, \psi\rrbracket}$  are generated in the same order as their index: this is clear for indexes up to $\psi - 1$.
	For the order between $S^2_{\psi-1}$ and $S^2_{\psi}$, we note that $(\psi-1) \frac{b}{r} \le  (d_2-D_1)  - \frac{b}{r}$ by properties of the ceiling function and so $  (D_2-d_2)  + (\psi-1)\frac{b}{r} \le  (D_2-d_2) + (d_2 - D_1) - \frac{b}{r}$, \emph{i.e.}, $ D_2-d_2  + (\psi-1) \frac{b}{r} \le D_2-D_1 - \frac{b}{r}$, \emph{i.e},  $S^2_{\psi-1}$ is sent before $S^2_{\psi}$
	
	We then apply the same principles for $S^1$. We thus prove that data-units of subcategory $S^1$ [resp., $S^2$] are generated after data-units of subcategory $B^1$ [resp., $B^2$] and in the order of their index.  We also observe that the last data-unit of subcategory $S^2$ is sent at $D_2 - D_1 - \frac{b}{r}$, whereas the data-unit of subcategory $I^1$ is sent at $D_2-D_1$, hence data-units of subcategory $S^2$ are sent before the data-unit of subcategory $I^1$.
	
	\ul{Category $X$:} After the data-units of subcategory $S^1$, the source generates for eternity data-units $(X_n)_{n\in\mathbb{N}^*}$ of size $b$ with a period $b/r$ (see Table~\ref{tab:proof:frer:oac:tight:x_generation}). The first one of these data-units is sent $b/r$ after the last data-unit of $S^1$.
	Figure~\ref{fig:proof:frer:oac:tight:timeline:four_categories} presents the output of the source with all four categories.
	\begin{table}\centering
		\caption{\label{tab:proof:frer:oac:tight:x_generation} Generation of the Data-units of Category $X$ in the Trajectory that Achieves the Tightness of Corollary~\ref{cor:toolbox:thm-simple-application}.}
		\resizebox{\linewidth}{!}{% !TeX spellcheck = en_US
% !TeX program = pdflatex
% !TeX root = proofs
\begin{tabular}{r|c|c}
	Data unit $m$ & Size, $\text{size}(m)$ & Generation time, $\mathcal{G}(m)$ \\
	\hline
	$\forall k \in \mathbb{N}^*, \qquad X_k$ & $b$ & $(D_2+d_2)-(D_1+d_1) + (k-1)\frac{b}{r}$\\
\end{tabular}}
	\end{table}
	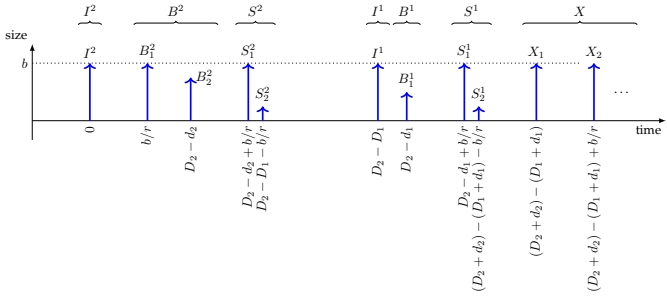
\begin{figure}
		\resizebox{\linewidth}{!}{% !TeX spellcheck = en_US
% !TeX program = pdflatex
% !TeX root = ../proofs/proofs

\begin{tikzpicture}
	\tikzstyle{ab} = [pos=0, rotate=90, anchor=east, black]
	\tikzstyle{pkn} = [black, pos=1,above]
	\tikzstyle{pk} = [->, blue, line width=1.25pt]
	\tikzstyle{nb} = [pos=0.5, above, yshift=0.1cm]
	\tikzstyle{dld} = [xshift=-0cm]
	\tikzstyle{drd} = [xshift=0cm]
	\def\xdev{1.5cm}
	\def\yline{0cm}
	\def\ph{1.5cm}
	\def\ybrace{\yline+2.5cm}

	\draw[-latex] (-1*\xdev,\yline) -- (10*\xdev, \yline) node[black, anchor=north east] {time};
	\draw[-latex] (-1*\xdev,\yline-0.5cm) -- (-1*\xdev,\yline+\ph+0.5cm) node[pos=1, anchor=south east] {size};
	% I2
	\draw[pk]  (0,\yline) -- (0,\ph+\yline) node[ab] {$0$} node[pkn] (i2l) {$I^2$};
	% I1
	\draw[pk]  (5*\xdev,\yline) -- (5*\xdev,\ph+\yline) node[ab] {$D_2-D_1$} node[pkn] (i1l) {$I^1$};
	
	% B^2_1
	\draw[pk]  (1*\xdev,\yline) -- (1*\xdev,\ph+\yline) node[ab] {$b/r$} node[pkn] (b21l) {$B^2_1$};
	% B^2_2
	\draw[pk]  (1.75*\xdev,\yline) -- (1.75*\xdev,0.75*\ph+\yline) node[ab] {$D_2-d_2$} node[pkn,anchor=west] (b22l) {$B^2_2$};
	
	% B^1_1
	\draw[pk]  (5.5*\xdev,\yline) -- (5.5*\xdev,0.5*\ph+\yline) node[ab] {$D_2-d_1$} node[pkn] (b11l) {$B^1_1$};
	
	% S^2_1
	\draw[pk]  (2.75*\xdev,\yline) -- (2.75*\xdev,\ph+\yline) node[ab] {$D_2-d_2+b/r$} node[pkn] (s21l) {$S^2_1$};
	% S^2_2
	\draw[pk]  (3*\xdev,\yline) -- (3*\xdev,0.25*\ph+\yline) node[ab] {$D_2-D_1-b/r$} node[pkn] (s22l) {$S^2_2$};

	% S^1_1
	\draw[pk]  (6.5*\xdev,\yline) -- (6.5*\xdev,\ph+\yline) node[ab] {$D_2-d_1+b/r$} node[pkn] (s11l) {$S^1_1$};
	% S^1_2
	\draw[pk]  (6.75*\xdev,\yline) -- (6.75*\xdev,0.25*\ph+\yline) node[ab] {$(D_2+d_2)-(D_1+d_1)-b/r$} node[pkn] (s12l) {$S^1_2$};

    % X_1
	\draw[pk]  (7.75*\xdev,\yline) -- (7.75*\xdev,\ph+\yline) node[ab] {$(D_2+d_2)-(D_1+d_1)$} node[pkn] (x1) {$X_1$};
    % X_2
	\draw[pk]  (8.75*\xdev,\yline) -- (8.75*\xdev,\ph+\yline) node[ab] {$(D_2+d_2)-(D_1+d_1)+b/r$} node[pkn] (x2) {$X_2$};
    \node at (9.25*\xdev,0.5*\ph+\yline) {\ldots};
	
	%b
	\draw[dotted] (-1*\xdev, \ph) -- (8.5*\xdev,\ph) node[pos=0,left] {$b$};

	\draw[decorate, decoration={brace}] let \p1 = (i2l.north west), \p2 = (i2l.north east) in ([dld] \x1,\ybrace) -- ([drd] \x2,\ybrace) node[nb] {$I^2$};
	\draw[decorate, decoration={brace}] let \p1 = (b21l.north west), \p2 = (b22l.north east) in ([dld] \x1,\ybrace) -- ([drd] \x2,\ybrace) node[nb] {$B^2$};
	\draw[decorate, decoration={brace}] let \p1 = (s21l.north west), \p2 = (s22l.north east) in ([dld] \x1,\ybrace) -- ([drd] \x2,\ybrace) node[nb]  {$S^2$};
	
	\draw[decorate, decoration={brace}] let \p1 = (i1l.north west), \p2 = (i1l.north east) in ([dld] \x1,\ybrace) -- ([drd] \x2,\ybrace) node[nb] {$I^1$};
	\draw[decorate, decoration={brace}] let \p1 = (b11l.north west), \p2 = (b11l.north east) in ([dld] \x1,\ybrace) -- ([drd] \x2,\ybrace) node[nb] {$B^1$};
	\draw[decorate, decoration={brace}] let \p1 = (s11l.north west), \p2 = (s12l.north east) in ([dld] \x1,\ybrace) -- ([drd] \x2,\ybrace) node[nb] {$S^1$};

    \draw[decorate, decoration={brace}] let \p1 = (x1.north west), \p2 = (x2.north east) in ([dld] \x1,\ybrace) -- (9.5*\xdev,\ybrace) node[nb] {$X$};
\end{tikzpicture}}
		\caption{\label{fig:proof:frer:oac:tight:timeline:four_categories}Source output, with the four categories of data units: $I,B, S$ and $X$ of data-units, in the trajectory achieving the tightness of Corollary~\ref{cor:toolbox:thm-simple-application}.}		
	\end{figure}
	
	\emph{Note: The role of category $X$ is to generate the sustained rate term in the curve $\gamma_{\mathbf{r},b+r(D_2-d_1)}$ in (\ref{eq:appendix:tight:alpha-star}).}

	\paragraph{Properties of the traffic generation at the source}
	Now that we have described the profile of the traffic generated by the source, we can prove that the generation is $\gamma_{r,b}$-compliant.
	This is done with the following lemmas.
	
	\begin{lemma}[Size of the data-units]\label{lemma:frer:oac:tight:packet_sizes}
		For any data-unit $P$ described in Paragraph~\ref{par:appendix:tightness:traffic-gen}, $ 0 \le \text{size}(P) \le b$
	\end{lemma}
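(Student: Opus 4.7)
The plan is to go through each of the four categories $I, B, S, X$ and check that the prescribed size lies in $[0,b]$. For the categories $I$ and $X$, and for all but the last element of each subcategory of $B$ and $S$, the size is exactly $b$ by Tables~\ref{tab:proof:frer:oac:tight:initiators_generation}, \ref{tab:proof:frer:oac:tight:bursty_generation}, \ref{tab:proof:frer:oac:tight:s_generation} and \ref{tab:proof:frer:oac:tight:x_generation}, so there is nothing to prove. The only non-trivial cases are the terminal data-units $B^j_{\chi^j}$ and $S^j_\psi$ for $j\in\{1,2\}$.

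For $B^j_{\chi^j}$, the prescribed size is $r(D_j-d_j)-(\chi^j-1)b$. Using the defining equation~(\ref{eq:proof:frer:tight:chi_values}), namely $\chi^j = \lceil r(D_j-d_j)/b\rceil$, I would apply the standard ceiling bracketing $\chi^j-1 < r(D_j-d_j)/b \le \chi^j$. Multiplying through by $b>0$ yields $(\chi^j-1)b < r(D_j-d_j) \le \chi^j b$, which rewrites as $0 \le r(D_j-d_j)-(\chi^j-1)b \le b$, which is exactly the required bound.

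For $S^j_\psi$, the prescribed size is $r(d_2-D_1)-\psi b$. Here I would use the defining equation~(\ref{eq:proof:frer:tight:psi_values}), namely $\psi = \lceil (r(d_2-D_1)-b)/b\rceil$. The bracketing $\psi-1 < (r(d_2-D_1)-b)/b \le \psi$ becomes, after multiplying by $b$ and rearranging, $\psi b \ge r(d_2-D_1)-b$ and $(\psi-1)b < r(d_2-D_1)-b$, hence $0 < r(d_2-D_1)-\psi b \le b$.

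This is essentially an exercise in the ceiling function and there is no real obstacle; the only subtlety worth flagging is that the case assumption $d_2-D_1 \ge b/r$ guarantees $r(d_2-D_1)-b \ge 0$, so $\psi$ is a non-negative integer and the bracketing above is meaningful. The bound $\ge 0$ (rather than $>0$) is what the lemma requires, and both chains of inequalities deliver it.
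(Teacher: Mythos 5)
Your proposal is correct and takes essentially the same route as the paper's proof: both reduce the claim to the terminal data-units $B^j_{\chi^j}$ and $S^j_{\psi}$ and obtain the bounds $0 \le \text{size}(\cdot) \le b$ by applying the standard ceiling-function bracketing to the defining equations for $\chi^j$ and $\psi$, all other data-units having size exactly $b$. Your remark that the case assumption $d_2 - D_1 \ge b/r$ makes $\psi$ well-defined and positive matches the paper's observation that $\psi \ge 1$.
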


	\begin{proof}[Proof of Lemma~\ref{lemma:frer:oac:tight:packet_sizes}]
		We focus on certain data-units.

		$\bullet$ $B^2_{\chi^2}$: By property of the ceiling function:
		\begin{equation*}\resizebox{\linewidth}{!}{$\begin{aligned}
			 \frac{r (D_2-d_2) }{b} &\le \chi^2 &&\le \frac{r (D_2-d_2) }{b} + 1 \\
			 r (D_2-d_2)  - b &\le (\chi^2 - 1)b &&\le r (D_2-d_2)  \qquad\text{\Comment{}}b > 0 \\
			 -r (D_2-d_2)  + b &\ge -(\chi^2 - 1)b &&\ge -r (D_2-d_2)  \\
			 b &\ge r (D_2-d_2) -(\chi^2 - 1)b &&\ge 0 \\
		\end{aligned}$}\end{equation*}
		
		$\bullet$ $B^1_{\chi^1}$: same idea
		
		$\bullet$ $S^1_{\psi}$ and $S^2_{\psi}$ (they have the same size): By property of the ceiling function:
		\begin{equation*}\resizebox{\linewidth}{!}{$\begin{aligned}
			\frac{r}{b}  (d_2-D_1)  - 1 &\le \psi &&\le \frac{r}{b} (d_2-D_1)\\
			-r (d_2-D_1)  + b & \ge -\psi b &&\ge -r (d_2-D_1)  \qquad\text{\Comment{}} b > 0\\
			 b & \ge r (d_2-D_1)  - \psi b &&\ge 0
		\end{aligned}$}\end{equation*}

		All the other data-units have the same size, equal to the burst $b$.
	\end{proof}
	
	\begin{lemma}[Minimum time distance between two successive data-units]\label{lemma:frer:oac:tight:min_dist}
		Consider two successive data-units $m,m'$ in the traffic described in Paragraph~\ref{par:appendix:tightness:traffic-gen}, \emph{i.e.}, $m'$ is the first data-unit sent after $m$. Note $\mathcal{G}(m)$ and $\mathcal{G}(m')$ the time at which they are generated.
		
		Then $\mathcal{G}(m') - \mathcal{G}(m) \ge \frac{\text{size}(m')}{r}$
	\end{lemma}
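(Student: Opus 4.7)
The plan is to prove this lemma by an exhaustive case analysis on the pair $(m, m')$ of successive data units, using the ordering already established at the end of Paragraph~\ref{par:appendix:tightness:traffic-gen} together with the explicit definitions of $\chi^1, \chi^2, \psi$ from (\ref{eq:proof:frer:tight:chi_values})--(\ref{eq:proof:frer:tight:psi_values}). The key observation is that nearly every successive pair is generated exactly $b/r$ apart and $m'$ has size exactly $b$, in which case the inequality holds with equality. Only a handful of ``boundary'' transitions require a specific calculation, and these are precisely the transitions in which $m'$ is one of the resized data units ($B^j_{\chi^j}$ or $S^j_{\psi}$); I expect each of these to produce exact equality thanks to the way the constants were chosen.

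First I would list the successive pairs implied by the ordering argument given at the end of the previous paragraph:
\begin{equation*}
\begin{aligned}
I^2 &\to B^2_1 \to \cdots \to B^2_{\chi^2} \to S^2_1 \to \cdots \to S^2_{\psi} \\
    &\to I^1 \to B^1_1 \to \cdots \to B^1_{\chi^1} \to S^1_1 \to \cdots \to S^1_{\psi} \to X_1 \to X_2 \to \cdots
\end{aligned}
\end{equation*}
For the ``regular'' transitions $I^j\to B^j_1$, $B^j_k\to B^j_{k+1}$ with $k<\chi^j-1$, $B^j_{\chi^j}\to S^j_1$, $S^j_k\to S^j_{k+1}$ with $k<\psi-1$, $S^2_{\psi}\to I^1$, $S^1_{\psi}\to X_1$, and $X_n\to X_{n+1}$, Tables~\ref{tab:proof:frer:oac:tight:initiators_generation}, \ref{tab:proof:frer:oac:tight:bursty_generation}, \ref{tab:proof:frer:oac:tight:s_generation} and \ref{tab:proof:frer:oac:tight:x_generation} give $\mathcal{G}(m') - \mathcal{G}(m) = b/r$ and $\text{size}(m') = b$, so the inequality is immediate.

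It remains to handle the boundary pairs $B^j_{\chi^j-1}\to B^j_{\chi^j}$ and $S^j_{\psi-1}\to S^j_{\psi}$ (for $j\in\{1,2\}$, and including the degenerate subcases $\chi^j = 1$ or $\psi = 1$, which reduce to the transitions $I^j\to B^j_{\chi^j}$ or $B^j_{\chi^j}\to S^j_{\psi}$ respectively). For $B^2_{\chi^2-1}\to B^2_{\chi^2}$ for instance, Table~\ref{tab:proof:frer:oac:tight:bursty_generation} yields
\begin{equation*}
\mathcal{G}(B^2_{\chi^2}) - \mathcal{G}(B^2_{\chi^2-1}) = (D_2-d_2) - (\chi^2-1)\tfrac{b}{r} = \tfrac{1}{r}\bigl(r(D_2-d_2) - (\chi^2-1)b\bigr) = \tfrac{\text{size}(B^2_{\chi^2})}{r},
\end{equation*}
giving equality. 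A completely analogous computation, using the definition of $\psi$, works for $S^2_{\psi-1}\to S^2_{\psi}$, and the two cases with superscript $1$ are obtained by replacing $(d_2, D_2)$ by the appropriate shifted quantities. The main obstacle is not mathematical difficulty but bookkeeping: I have to be careful to cover the degenerate cases where a subcategory contains a single data unit, and to invoke Lemma~\ref{lemma:frer:oac:tight:packet_sizes} to ensure the boundary data units have non-negative size so that ``$\text{size}(m')/r$'' is well-defined and the equality bound is meaningful.
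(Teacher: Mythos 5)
Your proposal is correct and follows essentially the same route as the paper's own proof: an exhaustive enumeration of successive pairs, with the regular transitions dispatched immediately (gap $b/r$, size $b$) and the boundary transitions into $B^j_{\chi^j}$ and $S^j_{\psi}$ (including the degenerate subcases $\chi^j=1$, $\psi=1$) verified by the same explicit computations that yield equality from the definitions of $\chi^1,\chi^2,\psi$. The only difference is organizational — you group the uniform cases up front where the paper lists every pair separately — but the content is identical.
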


	\begin{proof}[Proof of Lemma~\ref{lemma:frer:oac:tight:min_dist}]
		We simply describe all the possible combinations:
		
		$\bullet$ Case $m = I^2$ and $m' = B^2_1$:
		
		We have $\mathcal{G}(m)=0$ (see Table~\ref{tab:proof:frer:oac:tight:initiators_generation}).
		If $\chi^2 = 1$, then $\mathcal{G}(m') - \mathcal{G}(m) =  (D_2-d_2) $, $\text{size}(m') = r (D_2-d_2) $ (see Table~\ref{tab:proof:frer:oac:tight:bursty_generation}) and the result holds.
		If $\chi^2 \ge 2$, then $\mathcal{G}(m') - \mathcal{G}(m) = b/r$, $\text{size}(m') = b$ and the result holds.
		
		$\bullet$ Case $m,m' \in B^2$ (when $\chi^2 \ge 2$):
		
		There exists $k \in \llbracket 1, \chi^2-1 \rrbracket$ such that $m=B^2_k$ and $m'=B^2_{k+1}$. If $k \le \chi^2 - 2$, then $\mathcal{G}(m') - \mathcal{G}(m) = b/r$, $\text{size}(m') = b$ and the result holds. If $k = \chi^2 - 1$, then:
		\begin{align*}
			\mathcal{G}(m') - \mathcal{G}(m) &=  (D_2-d_2)  - k\frac{b}{r} \\
			& =  (D_2-d_2)  - (\chi^2 - 1) \frac{b}{r}\\	
			& = r \cdot \text{size}(m')		
		\end{align*}
		
		$\bullet$ Case $m = B^2_{\chi^2}$ and $m'=S^2_1$:
		
		We have $\mathcal{G}(m) = D_2 - d_2$ (Table~\ref{tab:proof:frer:oac:tight:bursty_generation}). 
		If $\psi = 1$, then $m' = S^2_{\psi}$ and
		\begin{align*}
			 \mathcal{G}(m') - \mathcal{G}(m) &= D_2 - D_1 - \frac{b}{r} -  (D_2 - d_2)\\ 
			 &= d_2 - D_1 - \frac{b}{r} \\
			 &= \frac{r (d_2-D_1)  - \psi b}{r} \\
			 &= \text{size}(m')/r
		\end{align*}
		If $\psi \ge 2$, then $\mathcal{G}(m') - \mathcal{G}(m)=b/r$ (see Tables~\ref{tab:proof:frer:oac:tight:bursty_generation} and \ref{tab:proof:frer:oac:tight:s_generation}) and $\text{size}(m') = b$ so the result holds.
		
		$\bullet$ Case $m,m' \in S^2$ (when $\psi \ge 2$):
		
		There exists $k\in\llbracket 1, \psi -1 \rrbracket$ such that $m=S^2_k$ and $m'=S^2_{k+1}$. If $k\le \psi - 2$, then $\mathcal{G}(m') - \mathcal{G}(m) = b/r$ and $\text{size}(m') = b$ so the result holds. If $k=\psi-1$, then
		\begin{align*}
			\mathcal{G}(m') - \mathcal{G}(m) &= d_2 - D_1 - \psi \frac{b}{r}\\
			& = \text{size}(S^2_{\psi})/r = \text{size}(m')/r \\
		\end{align*}
	
		$\bullet$ Case $m = S^2_{\psi}$ and $m'=I^1$:
		
		Then we have $\mathcal{G}(m) = D_2-D_1-\frac{b}{r}$ (Table~\ref{tab:proof:frer:oac:tight:s_generation}) and $\mathcal{G}(m') = (D_2-D_1)$ (Table~\ref{tab:proof:frer:oac:tight:initiators_generation}). So $\mathcal{G}(m') - \mathcal{G}(m) = b/r = \text{size}(m')/r$.
		
		$\bullet$ Case $m=I^1$ and $m'\in B^1$:
		
		We have $\mathcal{G}(m)=(D_2-D_1)$ (see Table~\ref{tab:proof:frer:oac:tight:initiators_generation}).
		If $\chi^1 = 1$, then $\mathcal{G}(m') - \mathcal{G}(m) = (D_1-d_1)$, $\text{size}(m') = r (D_1-d_1) $ (see Table~\ref{tab:proof:frer:oac:tight:bursty_generation}) and the result holds.
		If $\chi^1 \ge 2$, then $\mathcal{G}(m') - \mathcal{G}(m) = b/r$, $\text{size}(m') = b$ and the result holds also.
		
		$\bullet$ Case $m,m' \in B^1$ (when $\chi^1 \ge 2$):
		
		There exists $k \in \llbracket 1, \chi^1-1 \rrbracket$ such that $m=B^1_k$ and $m'=B^1_{k+1}$. If $k \le \chi^1 - 2$, then $\mathcal{G}(m') - \mathcal{G}(m) = b/r$, $\text{size}(m') = b$ and the result holds. If $k = \chi^1 - 1$, then
		\begin{equation*}\resizebox*{\linewidth}{!}{$\begin{aligned}
		\mathcal{G}(m') - \mathcal{G}(m) &=  (D_2-D_1)  +  (D_1-d_1)  -  (D_2-D_1)  -  k\frac{b}{r} \\
		& =  (D_1-d_1)  - (\chi^1 - 1) \frac{b}{r}\\	
		& = r \cdot \text{size}(m')		
		\end{aligned}$}\end{equation*}

		$\bullet$ Case $m = B^1_{\chi^1}$ and $m'=S^1_1$:
		
		We have $\mathcal{G}(m) = (D_2-d_1)$ (Table~\ref{tab:proof:frer:oac:tight:bursty_generation}). 
		If $\psi = 1$, then $m' = S^1_{\psi}$ and
		\begin{align*}
		\mathcal{G}(m') - \mathcal{G}(m) &= D_2 + d_2 - D_1 - d_1 - b/r - D_2 + d_1 \\ 
		&= d_2 - D_1 - b/r \\
		&= \text{size}(m')/r
		\end{align*}
		If $\psi \ge 2$, then $\mathcal{G}(m') - \mathcal{G}(m)=b/r$ (see Tables~\ref{tab:proof:frer:oac:tight:bursty_generation} and \ref{tab:proof:frer:oac:tight:s_generation}) and $\text{size}(m') = b$ so the result holds.
		
		$\bullet$ Case $m,m' \in S^1$ (when $\psi \ge 2$):
		
		There exists $k\in\llbracket 1, \psi -1 \rrbracket$ such that $m=S^1_k$ and $m'=S^1_{k+1}$. If $k\le \psi - 2$, then $\mathcal{G}(m') - \mathcal{G}(m) = b/r$ and $\text{size}(m') = b$ so the result holds. If $k=\psi-1$, then
		\begin{align*}
		\mathcal{G}(m') - \mathcal{G}(m) &= d_2 - D_1 - \psi b/r\\
		&= \text{size}(m')/r \\
		\end{align*}
		
		$\bullet$ Case $m \in S^1$, $m' \in X$: clear (Table~\ref{tab:proof:frer:oac:tight:x_generation})
		
		$\bullet$ Case $m,m' \in X$: clear as well (Table~\ref{tab:proof:frer:oac:tight:x_generation})
	\end{proof}
	
	\begin{lemma}[The source described in Paragraph~\ref{par:appendix:tightness:traffic-gen} complies with the arrival curve $\gamma_{r,b}$]\label{lemma:fre:oac:tight:source_complies}
		The traffic generation described in Paragraph~\ref{par:appendix:tightness:traffic-gen} is $\gamma_{r,b}$-constrained.
	\end{lemma}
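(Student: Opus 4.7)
\textbf{Proof plan for Lemma~\ref{lemma:fre:oac:tight:source_complies}.}
The plan is to combine the two preceding lemmas in the classical way that turns a pointwise ``minimum inter-generation spacing'' bound into a leaky-bucket arrival-curve constraint. Let $R(t)$ denote the cumulative amount of bits generated by the source up to time $t$, so that for any $0 \le s \le t$,
\[
R(t) - R(s) \;=\; \sum_{m\,:\, s < \mathcal{G}(m) \le t} \text{size}(m).
\]
I need to show that this quantity is at most $r(t-s) + b$. If the interval $(s,t]$ contains no generation instant, the sum is zero and the bound holds trivially. Otherwise, I would list the data units generated in $(s,t]$ in increasing generation order as $m_1, m_2, \ldots, m_k$ (the list is finite on any bounded interval because successive packets are separated by at least $b/r$ away from the small-size corner cases, and in general by $\text{size}(m')/r > 0$).

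Next, I would apply Lemma~\ref{lemma:frer:oac:tight:min_dist} to every pair of consecutive data units $m_{i-1}, m_i$ for $i = 2, \ldots, k$: this gives $\mathcal{G}(m_i) - \mathcal{G}(m_{i-1}) \ge \text{size}(m_i)/r$. Summing these inequalities telescopically yields
\[
\sum_{i=2}^{k} \text{size}(m_i) \;\le\; r\bigl(\mathcal{G}(m_k) - \mathcal{G}(m_1)\bigr) \;\le\; r\bigl(t - \mathcal{G}(m_1)\bigr) \;<\; r(t-s),
\]
where the last strict inequality uses $\mathcal{G}(m_1) > s$. Adding $\text{size}(m_1)$ and invoking Lemma~\ref{lemma:frer:oac:tight:packet_sizes} to bound $\text{size}(m_1) \le b$, I would conclude
\[
R(t) - R(s) \;=\; \text{size}(m_1) + \sum_{i=2}^{k}\text{size}(m_i) \;\le\; b + r(t-s),
\]
which is exactly the $\gamma_{r,b}$-compliance condition. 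Since $R(0^-) = 0$ (the first data unit, $I^2$, is generated at time $0$), the bound also holds when $s < 0$.

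The argument is essentially routine once Lemmas~\ref{lemma:frer:oac:tight:packet_sizes} and \ref{lemma:frer:oac:tight:min_dist} are established; the non-trivial work has already been absorbed into the case analysis of Lemma~\ref{lemma:frer:oac:tight:min_dist}, which treats transitions between the different sub-categories ($I^j \to B^j$, $B^j \to S^j$, $S^2 \to I^1$, $S^1 \to X$, and so on). The only small subtlety I anticipate is handling the strict inequality $\mathcal{G}(m_1) > s$ correctly so that the initial ``overshoot'' $\text{size}(m_1)$ can be absorbed into the burst $b$ rather than into the rate term; this is why I bound $\sum_{i\ge 2} \text{size}(m_i)$ separately from $\text{size}(m_1)$ instead of trying to charge the whole sum to the rate $r$.
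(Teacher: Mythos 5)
Your proposal is correct and follows essentially the same route as the paper: telescope the minimum-spacing bound of Lemma~\ref{lemma:frer:oac:tight:min_dist} over consecutive data units, charge all but the first packet to the rate $r$, and absorb the first packet's size (bounded by $b$ via Lemma~\ref{lemma:frer:oac:tight:packet_sizes}) into the burst. The only cosmetic difference is that you derive the min-plus inequality $R(t)-R(s)\le r(t-s)+b$ directly on intervals $(s,t]$, whereas the paper states the equivalent max-plus form $r(\mathcal{G}(P_n)-\mathcal{G}(P_1))+b\ge\sum_v \text{size}(P_v)$ and cites the standard equivalence for packetized flows.
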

	\begin{proof}[Proof of Lemma~\ref{lemma:fre:oac:tight:source_complies}]
		Consider any set of $n$ consecutive data-units generated by the source $(P_v)_{v\in\llbracket 1,n\rrbracket}$. Then
		\begin{align*}
			\mathcal{G}(P_n) - \mathcal{G}(P_1) &\ge \sum_{v\in\llbracket 1,n-1 \rrbracket} \mathcal{G}(P_{v+1}) - \mathcal{G}(P_v)\\
			&\ge  \sum_{v\in\llbracket 1,n-1 \rrbracket} \frac{\text{size}(P_{v+1})}{r} \quad\text{\Comment{Lemma~\ref{lemma:frer:oac:tight:min_dist}}}\\
			r(\mathcal{G}(P_n) - \mathcal{G}(P_1)) + b &\ge \sum_{v\in\llbracket 1,n-1 \rrbracket} \text{size}(P_{v+1}) + b\\
			&\ge \sum_{v\in\llbracket 2,n \rrbracket} \text{size}(P_{v}) + b
		\end{align*}
		Applying Lemma~\ref{lemma:frer:oac:tight:min_dist}, $\text{size}(P_1) \le b$, so we obtain
		\begin{equation}\label{eq:proof:oac:tight:source_max_plus_ac}
			r(\mathcal{G}(P_n) - \mathcal{G}(P_1)) + b \ge \sum_{v\in\llbracket 1,n \rrbracket} \text{size}(P_{v})
		\end{equation}
		Equation~(\ref{eq:proof:oac:tight:source_max_plus_ac}) is the max-plus representation of a packetized flow constrained by an arrival curve $\gamma_{r,b}$~\cite[\S 3]{leboudecNetworkCalculusTheory2001}.
	\end{proof}

	\paragraph{Description of the systems $S_1$, $S_2$}
	For a data unit $m$, we note $\mathcal{E}_1(m)$ [resp., $\mathcal{E}_2(m)$] the absolute time at which the packet transporting $m$ through $S_1$ [resp., through $S_2$], exits $S_1$ [resp., exits $S_2$]. 
	For $j\in\lbrace1,2\rbrace$, we note $\mathcal{E}_j(m) = +\infty$ if and only if the packet transporting data unit $m$ through $S_j$ is lost by system $S_j$.
	Systems $S_1$ and $S_2$ release the packets generated by the source at the time instants shown in Table~\ref{tab:proof:frer:oac:tight:output_time}.
	\begin{table*}\centering
		\caption{\label{tab:proof:frer:oac:tight:output_time} Absolute Release Time for Each Packet at the Output of Each System $S_1$, $S_2$}
		\resizebox{\linewidth}{!}{\input{./figures/2020-10-frer-oac-tight-tab-release-time}}
	\end{table*}

	\emph{Remark: We chose a system such that any packet transporting a data unit of category $I^2,B^2$ or $S^2$ is lost within $S_1$ ($\mathcal{E}_1(m)=+\infty$) and any packet transporting a data unit of category $I^1,B^1,S^1$ or $X$ is lost within $S_2$ ($\mathcal{E}_2(m)=+\infty$). This scheme keeps the proof simple but note that a similar proof could be obtained assuming $S_2$ is lossless. In that case, we would only need to make sure that, for $m$ in category $I^1,B^1,S^1$ or $X$, the packet transporting $m$ through $S_2$ exits $S_2$ after the packet transporting $m$ through $S_1$ exits $S_1$, \emph{i.e.}, $\mathcal{E}_2(m) \ge \mathcal{E}_1(m)$ for any $m$.}
	
	As an illustration, Figure~\ref{fig:proof:frer:oac:tight:cumul-out} shows the obtained cumulative function at the output of the \ac{PEF} when applying the exit time instants of Table~\ref{tab:proof:frer:oac:tight:output_time} on the example of Figure~\ref{fig:proof:frer:oac:tight:timeline:four_categories}. Dashed boxes represent values of interest that are further detailed in Paragraph~\ref{par:tight:oaf}. We also plot on top of it the arrival curve at the output of the \ac{PEF}, obtained in (\ref{eq:appendix:tight:alpha-star}). In the following paragraphs, we prove that there exists no better \ac{VBR}-arrival curve than this one for the shown output cumulative function.

	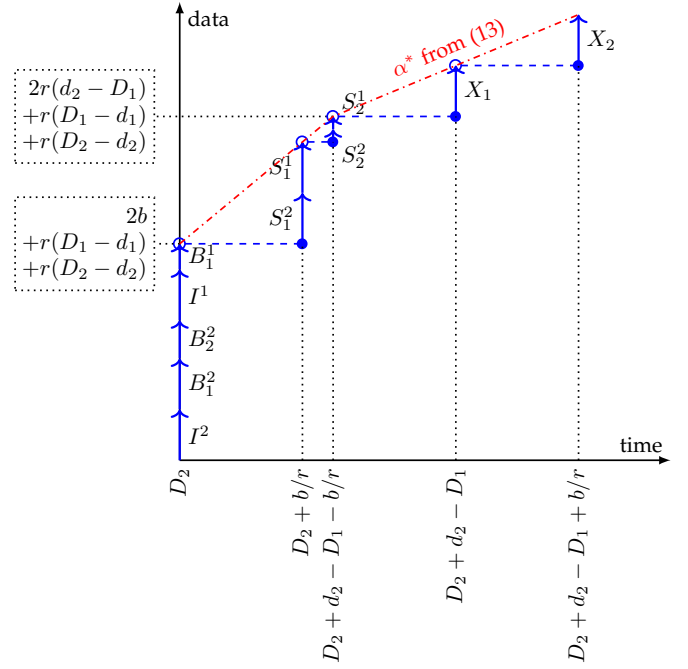
\begin{figure}\centering
		\resizebox*{\linewidth}{!}{\begin{tikzpicture}
	\tikzstyle{lw} = [line width=1pt]
	\tikzstyle{das} = [dashed, blue, o - Circle ]
	\tikzstyle{xx} = [xshift=-4\pgflinewidth]
	\tikzstyle{xxx} = [xshift=3\pgflinewidth]
	\begin{axis}[
        scale=1.6,
		axis x line=none,
		axis y line=none,
        ticks=none,
		xmin=-1.4,
		xmax=4,
		ymin=-4,
		ymax=9,
        height=8cm,
        width=8cm,
	]

    \node[anchor=south east] at (axis cs:4,0) {time};
    \node[anchor=north west] at (axis cs:0,9) {data};
    \node[anchor=east, rotate=90] at (axis cs:0,0) {$D_2$};
    \draw[-latex] (axis cs:0,0) -- (axis cs:4,0);
    \draw[-latex] (axis cs:0,0) -- (axis cs:0,9);

	\draw[->,blue,lw] (axis cs:0,0) -- (axis cs:0,1) node[pos=0.5,right, black] {$I^2$};
	\draw[->,blue,lw] (axis cs:0,1) -- (axis cs:0,2) node[pos=0.5,right, black] {$B^2_1$};
	\draw[->,blue,lw] (axis cs:0,2) -- (axis cs:0,2.75) node[pos=0.5,right, black] {$B^2_2$};
	\draw[->,blue,lw] (axis cs:0,2.75) -- (axis cs:0,3.75) node[pos=0.5,right, black] {$I^1$};
    \draw[->,blue,lw] (axis cs:0,3.75) -- (axis cs:0,4.25) node[pos=0.5,right, black] {$B^1_1$};

    \draw[das] ([xx] axis cs:0,4.25) -- ([xxx] axis cs:1,4.25);
    \draw[dotted] (axis cs:0,4.25) -- (axis cs:-0.2,4.25) node[pos=1, draw, dotted, anchor=east] {\makecell[r]{$2b$\\$+r(D_1-d_1)$\\$+r(D_2-d_2)$}};
	
	\draw[->,blue,lw] (axis cs:1,4.25) -- (axis cs:1,5.25) node[pos=0.5,left, black] {$S^2_1$};
	\draw[->,blue,lw] (axis cs:1,5.25) -- (axis cs:1,6.25) node[pos=0.5,left, black] {$S^1_1$};

    \draw[dotted] (axis cs:1.25,6.75) -- (axis cs:-0.2,6.75) node[pos=1, draw, dotted, anchor=east] {\makecell[r]{$2r(d_2-D_1)$\\$+r(D_1-d_1)$\\$+r(D_2-d_2)$}};

    \draw[dotted] (axis cs:1,0) -- (axis cs:1,4.25) node[pos=0,rotate=90, anchor=east] {$D_2 + b/r$};

    \draw[das] ([xx] axis cs:1,6.25) -- ([xxx] axis cs:1.25,6.25);
	
	\draw[->,blue,lw] (axis cs:1.25,6.25) -- (axis cs:1.25,6.5) node[pos=0.5,anchor=north west, black] {$S^2_2$};
	\draw[->,blue,lw] (axis cs:1.25,6.5) -- (axis cs:1.25,6.75) node[pos=0.5,anchor=south west, black] {$S^1_2$};

    \draw[dotted] (axis cs:1.25,0) -- (axis cs:1.25,6.25) node[pos=0,rotate=90, anchor=east] {$D_2 + d_2-D_1-b/r$};

    \draw[das] ([xx] axis cs:1.25,6.75) -- ( [xxx]axis cs:2.25,6.75);
	\draw[->,blue,lw] (axis cs:2.25,6.75) -- (axis cs:2.25,7.75) node[pos=0.5,right, black] {$X_1$};
    \draw[dotted] (axis cs:2.25,0) -- (axis cs:2.25,6.75) node[pos=0,rotate=90, anchor=east] {$D_2 + d_2-D_1$};
    \draw[das] ([xx] axis cs:2.25,7.75) -- ([xxx] axis cs:3.25,7.75);
    \draw[->,blue,lw] (axis cs:3.25,7.75) -- (axis cs:3.25,8.75) node[pos=0.5,right, black] {$X_2$};
    \draw[dotted] (axis cs:3.25,0) -- (axis cs:3.25,7.75) node[pos=0,rotate=90, anchor=east] {$D_2 + d_2-D_1 + b/r$};

    \draw[red, dashdotted] (axis cs:0,4.25) -- (axis cs:1.25,6.75) -- (axis cs:3.25,8.75) node[pos=0.5, sloped, above] {$\alpha^*$ from (\ref{eq:appendix:tight:alpha-star})};
	%\addplot[red, domain=0:4]{4+2*x} node[pos=0.5, above, sloped] {rate $2r_0$};
	%\addplot[red, dotted, domain=4:8]{4+2*x};
	
	%\addplot[red, domain=4:8]{8+x} node[pos=0.4,above,sloped]{rate $r_0$};
	%\addplot[red, dotted, domain=0:4]{8+x} node[pos=0.5,above,sloped] {$\alpha_f^{\texttt{PEF}^*}$};
	
	%\node[anchor=east, red] at (axis cs:0,4) {$4b$};
	
	%\node[anchor=east, red] at (axis cs:0,8) {$8b$};
		
	\end{axis}
\end{tikzpicture}}
		\caption{\label{fig:proof:frer:oac:tight:cumul-out} Dashed blue: Cumulative function $R^*$ at the output of the \ac{PEF}, from the example trajectory of Figure~\ref{fig:proof:frer:oac:tight:timeline:four_categories}. Dashed boxes: Values of interest of the cumulative function, further detailed in Paragraph~\ref{par:tight:oaf}. Red dashdotted: Arrival curve obtained from Corollary~\ref{cor:toolbox:thm-simple-application} and recalled in Equation~(\ref{eq:appendix:tight:alpha-star}).}
	\end{figure}

	\paragraph{Properties of the systems $S_1$, $S_2$} We now show the following properties of the above-described systems $S_1$, $S_2$.
	\begin{lemma}[The delay bounds through $S_1$, $S_2$]\label{lemma:fre:oac:tight:path_delay_bounds}
		The delay of any non-lost packet through $S_1$ is bounded between $d_1$ and $D_1$.		
		The delay of any non-lost packet through $S_2$  is bounded between $d_2$ and $D_2$.  
	\end{lemma}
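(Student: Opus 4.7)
The plan is a direct case-by-case verification, reading the delay $\mathcal{E}_j(m) - \mathcal{G}(m)$ from Table~\ref{tab:proof:frer:oac:tight:output_time} against the generation times in Tables~\ref{tab:proof:frer:oac:tight:initiators_generation}--\ref{tab:proof:frer:oac:tight:x_generation}, and checking that it lies in $[d_j, D_j]$. First I would identify the non-lost packets through each system from the table: through $S_2$ these are exactly the data units of categories $I^2$, $B^2$, $S^2$, and through $S_1$ these are exactly those of categories $I^1$, $B^1$, $S^1$, $X$.

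For $S_2$, four sub-cases suffice. The initiator $I^2$ gives delay $D_2$; each $B^2_{\chi^2}$ gives delay $D_2 - (D_2-d_2) = d_2$; and each $S^2_k$ (for $k<\psi$) and $S^2_\psi$ reduce algebraically to delay exactly $d_2$. The only sub-case requiring work is $B^2_k$ for $k\in\llbracket 1,\chi^2-1\rrbracket$, whose delay is $D_2 - k\tfrac{b}{r}$; this is clearly at most $D_2$, and the lower bound $D_2 - k\tfrac{b}{r} \ge d_2$ follows from the definition $\chi^2=\lceil r(D_2-d_2)/b\rceil$, which gives $(\chi^2-1)\tfrac{b}{r} < D_2-d_2$ and hence $k\tfrac{b}{r} < D_2-d_2$ for $k\le \chi^2-1$. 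The symmetric argument handles $S_1$: $I^1$ has delay $D_1$; each $B^1_k$ with $k<\chi^1$ yields $D_1 - k\tfrac{b}{r}\in(d_1,D_1]$ by the ceiling property of $\chi^1$; $B^1_{\chi^1}$ yields exactly $d_1$; and all $S^1_k$ and every $X_k$ collapse algebraically to delay $d_1$.

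The only mild obstacle is the bookkeeping for the two ``bursty" sub-categories $B^1_k$, $B^2_k$ with $k<\chi^j$, where the bound uses the strict inequality $\chi^j - 1 < r(D_j-d_j)/b$ implicit in the ceiling definition (\ref{eq:proof:frer:tight:chi_values}); everything else reduces to substitution. I would organize the proof as a short table or as two bullet-like paragraphs, one per system, concluding that $\mathcal{E}_j(m)-\mathcal{G}(m)\in[d_j,D_j]$ for every non-lost $m$, which is exactly the statement of the lemma.
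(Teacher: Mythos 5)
Your proposal is correct and follows essentially the same route as the paper: a case-by-case read-off of the exit times in Table~\ref{tab:proof:frer:oac:tight:output_time} against the generation times, with the only nontrivial cases being the intermediate $B^j_k$ packets. The paper handles those by sandwiching $\mathcal{G}(m)$ between $\mathcal{G}(I^j)$ and $\mathcal{G}(B^j_{\chi^j})$ (reusing the previously established generation order), whereas you invoke the ceiling bound $(\chi^j-1)\tfrac{b}{r} < D_j-d_j$ directly; these are the same inequality obtained two equivalent ways.
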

	\begin{proof}[Proof of Lemma~\ref{lemma:fre:oac:tight:path_delay_bounds}]
		From Table~\ref{tab:proof:frer:oac:tight:output_time}, the result is clear for packets transporting data-units of categories $I,S$ and $X$ 
		
		We prove it for packets transporting data-units of category $B$:
		
		\ul{Through $S_1$:} for $m$ a data unit of category $B^2$, the packet transporting $m$ through $S_1$ is lost. For $m$ a data unit of category $B^1$, the packet transporting $m$ through $S_1$ verifies
		
		\begin{equation*}
			\mathcal{G}(B^1_{\chi^1}) \ge \mathcal{G}(m) \ge \mathcal{G}(I^1)
		\end{equation*}
		because data units of $B^1$ are sent after $I^1$ and before $B^1_{\chi^1}$.
		\begin{equation}\label{eq:appendix:tight:delay-through-s1}
			D_2 - (D_2 - d_1) \le \mathcal{E}_1(m) - \mathcal{G}(m) \le D_2 -  (D_2-D_1) 
		\end{equation}
		per Table~\ref{tab:proof:frer:oac:tight:output_time}.
		Equation~(\ref{eq:appendix:tight:delay-through-s1}) proves that any packet transporting a data unit of type $B^1$ through $S_1$ has a delay through $S_1$ bounded in $[d_1,D_1]$.
				
		\ul{Through $S_2$:} for $m$ a data unit of category $B^1$, the packet transporting $m$ through $S_2$ is lost. For $m$ a data unit of category $B^2$, the packet transporting $m$ through $S_2$ verifies
		
		\begin{equation*}
			\mathcal{G}(B^2_{\chi^2}) \ge \mathcal{G}(m) \ge \mathcal{G}(I^2)
		\end{equation*}
		because data units $B^2$ are sent after $I^2$ and before $B^2_{\chi^2}$.
		\begin{equation}\label{eq:appendix:tight:delay-through-s2}
			D_2 - (D_2 - d_2) \le \mathcal{E}_1(m) - \mathcal{G}(m) \le D_2 - 0
		\end{equation}
		Equation~(\ref{eq:appendix:tight:delay-through-s2}) proves that any packet transporting a data unit of type $B^2$ through $S_2$ has a delay through $S_2$ bounded in $[d_2,D_2]$.

	\end{proof}

	\paragraph{Properties of the output cumulative function}\label{par:tight:oaf}
	
	Call $R^*$ the output cumulative function of the flow at the output of the \ac{PEF}.
	Any data unit $m$ is released as soon as the first packet containing $m$ is received from either $S_1$ or $S_2$.
	Therefore, for any time instant $t$,
	\begin{equation}\label{eq:appendix:tight:compute-r-star}
		R^*(t) =  \sum_{\left\lbrace m \left| \begin{aligned}\mathcal{E}_1(m) < t\\\text{or }\mathcal{E}_2(m) < t\end{aligned}\right.\right\rbrace} \text{size}(m)
	\end{equation}
	
	We apply Equation~(\ref{eq:appendix:tight:compute-r-star}) to obtain the value of the cumulative function at several time-instants of interest.
	We start with $t=D_2$.
	
	\begin{equation*}
		R^*(D_2) =  \sum_{\left\lbrace m \left| \begin{aligned}\mathcal{E}_1(m) < D_2\\\text{or }\mathcal{E}_2(m) < D_2\end{aligned}\right.\right\rbrace} \text{size}(m)
	\end{equation*}
	Per Table~\ref{tab:proof:frer:oac:tight:output_time}, we obtain
	\begin{equation}\label{eq:proof:frer:oac:tight:RatD2}
		R^*(D_2) = 0
	\end{equation}
	We then continue with $D_2+\epsilon$ for $\epsilon > 0$,
	\begin{align*}
		%\resizebox*{\linewidth}{!}{$\begin{aligned}
	& R^*(D_2+\epsilon) \\
	& =  \sum_{\min(\mathcal{E}_1(m),\mathcal{E}_2(m)) < D_2+\epsilon} \text{size}(m) \\
	& \ge \sum_{m\in I}\text{size}(m) + \sum_{m\in B^1}\text{size}(m) + \sum_{m\in B^2}\text{size}(m) \\
	& \text{\Comment{From~Table~\ref{tab:proof:frer:oac:tight:output_time}}}\\
		%\end{aligned}$}
	\end{align*}
	With (\ref{eq:proof:frer:oac:tight:sum_cat_b}), we obtain
	\begin{equation}\label{eq:proof:frer:oac:tight:RatD2Plus}
		\forall \epsilon > 0, \qquad R^*(D_2+\epsilon)	\ge 2b + r (D_1-d_1)  + r (D_2-d_2) 
	\end{equation}
	And finally, $\forall \epsilon>0$,	
	\begin{align*}
	&R^*(D_2+(d_2-D_1)-b/r+\epsilon) \\
	&=   \sum_{\min(\mathcal{E}_1(m),\mathcal{E}_2(m)) <D_2+(d_2-D_1)-b/r+\epsilon} \text{size}(m) \\
	&\ge \sum_{m\in I}\text{size}(m) + \sum_{m\in B}\text{size}(m) \\
	&\qquad + \sum_{m\in S_1}\text{size}(m) + \sum_{m\in S_2}\text{size}(m)\textbf{}\\
	&\text{\Comment{From~Table~\ref{tab:proof:frer:oac:tight:output_time}}}\\
	\end{align*}
	With (\ref{eq:proof:frer:oac:tight:sum_cat_s}), we obtain $\forall \epsilon > 0 $
	\begin{equation}\label{eq:proof:frer:oac:tight:RatD2PlusTs}\begin{aligned}
		&R^*(D_2+(d_2-D_1)-b/r+\epsilon)	\\&\ge r (D_1-d_1)  + r (D_2-d_2)  + 2 r (d_2-D_1)
	\end{aligned}\end{equation}

	\paragraph{Properties of any candidate arrival curve for $f$}

	Consider any \ac{VBR} arrival curve $\alpha'= \text{SPEC}(M',p',r',b')$ defined in \cite[\S 1.2]{leboudecNetworkCalculusTheory2001} and assume that $\alpha'$ is an arrival curve for $f$ at the output of the \ac{PEF}.

	Consider also the piecewise-linear function $\alpha^{\dagger}$ defined on $\mathbb{R}+$ by
	\begin{equation}
		\alpha^{\dagger}: t\mapsto \min(M'+\rho' t, b' + r' t)
	\end{equation}

	By definition, we have, for all $t\ge0$
	\begin{equation}\label{eq:tight:primevsdagger}
		\alpha'(t) = \left\lbrace\begin{aligned} \alpha^{\dagger}(t) &\quad\text{ if }t>0 \\ 0 &\quad\text{ if } t=0\end{aligned}\right.
	\end{equation}

	Note that $\alpha^{\dagger}$ is concave and wide-sense increasing. We also have the following result
	\begin{lemma}\label{lemma:tight:alphadagger-evolv}
		For any $s \ge t\ge0$, $\alpha^{\dagger}(s) - \alpha^{\dagger}(t) \ge r'(s-t)$
	\end{lemma}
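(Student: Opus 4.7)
The plan is to exploit the fact that $\alpha^{\dagger}$ is by definition the pointwise minimum of two affine functions---one of slope $\rho'$ and one of slope $r'$---together with the built-in VBR convention that $\rho' \ge r'$ (peak rate dominates sustained rate, as in the T-SPEC of \cite[\S 1.2]{leboudecNetworkCalculusTheory2001}). A simple case split on which of the two affine pieces attains the minimum at $s$ will then deliver the inequality directly, without ever having to identify where the breakpoint of $\alpha^{\dagger}$ lies.

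In the first case, if $\alpha^{\dagger}(s) = b' + r' s$, then since $\alpha^{\dagger}(t) \le b' + r' t$ by definition of the min, subtracting yields
\[
\alpha^{\dagger}(s) - \alpha^{\dagger}(t) \ge (b' + r' s) - (b' + r' t) = r'(s - t).
\]
In the second case, if $\alpha^{\dagger}(s) = M' + \rho' s$, then $\alpha^{\dagger}(t) \le M' + \rho' t$, hence
\[
\alpha^{\dagger}(s) - \alpha^{\dagger}(t) \ge \rho'(s - t) \ge r'(s - t),
\]
where the last step uses $\rho' \ge r'$. Note that in each case one only has to upper bound $\alpha^{\dagger}(t)$ by whichever affine piece equals $\alpha^{\dagger}(s)$; one does not need to identify which piece is active at $t$, which keeps the argument branch-free on the value of $t$ relative to the breakpoint.

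I do not anticipate any real obstacle here: the lemma simply records the elementary observation that a concave piecewise-affine function whose slopes all belong to $[r', \rho']$ has increments that are at least $r'(s - t)$ on any interval $[t, s]$. The only subtlety worth stating explicitly in the final write-up is the implicit VBR assumption $\rho' \ge r'$, since without it Case B would fail. If the paper later needs to extend the argument to arrival curves of the form $\min_{k}(M_k + \rho_k \cdot t)$ with more than two affine pieces, the same case split still applies verbatim, provided $\min_k \rho_k \ge r'$.
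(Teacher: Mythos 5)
Your proof is correct, and it takes a slightly different (and cleaner) route than the paper's. The paper splits into three cases according to where $s$ and $t$ sit relative to the explicit breakpoint $\frac{M'-b'}{\rho'-r'}$ of the piecewise-affine curve, evaluates $\alpha^{\dagger}$ exactly on each piece, and in the mixed case needs both $\rho'\ge r'$ and $b'\ge M'$ to conclude. You instead split only on which affine piece attains the minimum at $s$ and upper-bound $\alpha^{\dagger}(t)$ by that \emph{same} piece, which makes the argument branch-free in $t$, avoids dividing by $\rho'-r'$ (so the degenerate case $\rho'=r'$ is handled for free), and dispenses with the hypothesis $M'\le b'$ entirely. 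As you note, your version generalizes verbatim to $\min_k(M_k+\rho_k t)$ with $\min_k\rho_k\ge r'$, whereas the paper's explicit-breakpoint computation does not scale as cleanly. Both arguments rest on the same elementary fact that all slopes of the concave envelope are at least $r'$; yours just packages it better. (Incidentally, the paper's first case contains the typo ``because $r'\ge\rho'$'' where $\rho'\ge r'$ is meant; your write-up states the needed inequality correctly.)
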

	\begin{proof}[Proof of Lemma~\ref{lemma:tight:alphadagger-evolv}]
		We simply break in all the possible cases:
		$\bullet$ If both $s \le \frac{M'-b'}{\rho'-r'}$ and $t \le \frac{M'-b'}{\rho'-r'}$ then $\alpha^{\dagger}(s) - \alpha^{\dagger}(t) = \rho'(s-t) \ge r'(s-t)$ because $r'\ge \rho'$.
		
		$\bullet$ If $t \le \frac{M'-b'}{\rho'-r'}$, and $s \ge \frac{M'-b'}{\rho'-r'}$, then $\alpha^{\dagger}(s) - \alpha^{\dagger}(t) = b' - M' + r' s - \rho' t \ge r' s - \rho' t \ge r' (s-t)$ because $b' \ge M'$ and $\rho' \ge r'$.
		
		$\bullet$ If both  $s \ge \frac{M'-b'}{\rho'-r'}$ and $t \ge \frac{M'-b'}{\rho'-r'}$ then $\alpha^{\dagger}(s) - \alpha^{\dagger}(t) = r'(s-t)$.
	\end{proof}

	We observe that after $D_2+(d_2-D_1)$, the output traffic $R^*$ is made of the data units of category $X$ with a size $b$ and a period $b/r$. Therefore, the long-term rate of the flow at the output of the \ac{PEF} is exactly $r$ and any piece-wise linear arrival curve for this flow must have a long-term rate at least as big as $r$. For the \ac{VBR} arrival curve $\alpha'$, this gives
	\begin{equation}\label{eq:appendix:tight:rprime}
		r'\ge r
	\end{equation} 
	
	Then, as $\alpha'$ is an arrival curve for $f$ at the output of the \ac{PEF}, by \cite[Definition 1.2.1]{leboudecNetworkCalculusTheory2001}, for any $t,s \ge 0$, $R^*(t+s) - R(t) \le \alpha'(s)$. 
	
	In particular, $\forall \epsilon >0$
	\begin{equation*}
		\alpha'(\epsilon) \ge R^*(D_2+\epsilon) - R^*(D_2)
	\end{equation*}
	With (\ref{eq:proof:frer:oac:tight:RatD2}) and (\ref{eq:proof:frer:oac:tight:RatD2Plus}) this gives, $\forall \epsilon > 0$,
	\begin{equation}\label{eq:proof:frer:oac:tight:alphaM}
		\alpha'(\epsilon) \ge 2b+r(D_1-d_1+D_2-d_2)
	\end{equation}
	This is valid for any choice of $\epsilon > 0$ thus $\lim_{t\rightarrow 0}\alpha'(t) \ge 2b+r(D_1-d_1+D_2-d_2)$, which gives:
	\begin{equation}\label{eq:tight:daggerAt0}
		\alpha^{\dagger}(0) \ge 2b+r(D_1-d_1+D_2-d_2)
	\end{equation}

	Similarly, $\forall \epsilon > 0$, 
	\begin{equation}\resizebox*{\linewidth}{!}{$
		\alpha'\left(d_2-D_1 - \frac{b}{r} + \epsilon\right) \ge R^*\left(D_2+d_2-D_1-\frac{b}{r} + \epsilon\right) - R^*(D_2)
	$}\end{equation}
	with (\ref{eq:proof:frer:oac:tight:RatD2}) and (\ref{eq:proof:frer:oac:tight:RatD2PlusTs}), we obtain, $\forall \epsilon >0$,
	\begin{equation}\resizebox*{\linewidth}{!}{$
		\alpha'\left(d_2-D_1 - \frac{b}{r} + \epsilon\right) \ge r(D_1-d_1) + r(D_2-d_2) + 2r(d_2-D_1) 
	$}\end{equation}
	this is again valid for any value $\epsilon>0$ so $\lim_{\epsilon \rightarrow 0}\alpha'(d_2-D_1 - \frac{b}{r} + \epsilon) \ge r(D_1-d_1) + r(D_2-d_2) + 2r(d_2-D_1)$, which gives
	\begin{equation}\label{eq:tight:daggerAtX}\resizebox*{\linewidth}{!}{$\begin{aligned}
		\alpha^{\dagger}\left(d_2-D_1 - \frac{b}{r}\right)  &= \alpha'\left(d_2-D_1 - \frac{b}{r}\right) \\ &\ge r(D_1-d_1) + r(D_2-d_2) + 2r(d_2-D_1)
	\end{aligned}$}\end{equation}

	And using the above properties, we can prove the following result
	\begin{lemma}\label{lemma:tight:primeisgreater}
		For any $t > 0$, $\alpha^{\dagger}(t)\ge \alpha^*(t)$ where $\alpha^*$ is the \ac{VBR} obtained by applying Corollary~\ref{cor:toolbox:thm-simple-application} and given in (\ref{eq:appendix:tight:alpha-star}).
	\end{lemma}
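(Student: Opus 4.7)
The plan is to exploit the fact that $\alpha^*$, as the min-plus convolution of two leaky-bucket curves (one with rate $2r$, burst $2b+r(D_1-d_1+D_2-d_2)$, the other with rate $r$, burst $b+r(D_2-d_1)$), is a piecewise-linear concave function with a single breakpoint at $t_0 \triangleq d_2-D_1-b/r$, which is nonnegative under our assumption $d_2-D_1 \ge b/r$. First I would compute explicitly that, for $t > 0$,
\begin{equation*}
\alpha^*(t) = \min\!\bigl(2rt+2b+r(D_1-d_1+D_2-d_2),\; rt+b+r(D_2-d_1)\bigr),
\end{equation*}
and check that the two affine pieces cross at $t = t_0$, where their common value is $r(D_1-d_1)+r(D_2-d_2)+2r(d_2-D_1)$.

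Next I would observe that the bounds already collected match $\alpha^*$ at exactly the two vertices of the piecewise-linear graph: Equation~(\ref{eq:tight:daggerAt0}) gives $\alpha^{\dagger}(0) \ge \alpha^*(0^+)$, and Equation~(\ref{eq:tight:daggerAtX}) gives $\alpha^{\dagger}(t_0) \ge \alpha^*(t_0)$. On the interval $[0,t_0]$, the curve $\alpha^*$ is affine (it is itself the chord between these two vertices), so from the concavity of $\alpha^{\dagger}$ the chord of $\alpha^{\dagger}$ between $0$ and $t_0$ lies below $\alpha^{\dagger}$, and since the two endpoints of this chord dominate those of $\alpha^*$ the chord itself dominates $\alpha^*$ pointwise. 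This yields $\alpha^{\dagger}(t) \ge \alpha^*(t)$ on $[0,t_0]$.

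For $t > t_0$, I would invoke Lemma~\ref{lemma:tight:alphadagger-evolv} with $s=t$ and the reference point $t_0$ to get $\alpha^{\dagger}(t) \ge \alpha^{\dagger}(t_0) + r'(t-t_0)$, then use $r' \ge r$ from Equation~(\ref{eq:appendix:tight:rprime}) together with the lower bound on $\alpha^{\dagger}(t_0)$ to obtain
\begin{equation*}
\alpha^{\dagger}(t) \ge r(D_1-d_1)+r(D_2-d_2)+2r(d_2-D_1)+r(t-t_0).
\end{equation*}
Substituting $t_0 = d_2-D_1-b/r$ and simplifying, the right-hand side collapses to $rt + b + r(D_2-d_1)$, which is precisely $\alpha^*(t)$ on this region. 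The bounds therefore cover $(0,t_0]$ and $[t_0,+\infty)$, giving the claim for every $t > 0$.

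The only delicate point is making the concavity argument rigorous on $[0,t_0]$: one has to carefully handle the right-limit value $\alpha^{\dagger}(0)$ (since $\alpha'$ itself jumps at $0$ by Equation~(\ref{eq:tight:primevsdagger})), and check that concavity of $\alpha^{\dagger} = \min(M'+\rho' t, b'+r' t)$ plus the two endpoint inequalities really does force pointwise domination of the affine segment of $\alpha^*$. Everything else is a direct computation and a one-line application of Lemma~\ref{lemma:tight:alphadagger-evolv}.
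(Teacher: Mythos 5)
Your proposal is correct and follows essentially the same route as the paper's proof: split at the breakpoint $t_0=d_2-D_1-b/r$, use concavity of $\alpha^{\dagger}$ together with the endpoint bounds (\ref{eq:tight:daggerAt0}) and (\ref{eq:tight:daggerAtX}) on $[0,t_0]$, and use Lemma~\ref{lemma:tight:alphadagger-evolv} with $r'\ge r$ for $t>t_0$. The "delicate point" you flag about $\alpha^{\dagger}(0)$ is already resolved by the definition of $\alpha^{\dagger}$ as a genuine function value (not a limit), so nothing further is needed there.
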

	\begin{proof}[Proof of Lemma~\ref{lemma:tight:primeisgreater}]

		$\bullet$ If $t > d_2-D_1-\frac{b}{r}$, then
		\begin{equation*}\resizebox*{\linewidth}{!}{$\begin{aligned}
			\alpha^{\dagger}(t) &= \alpha^{\dagger}\left(d_2-D_1-\frac{b}{r}\right) + \alpha^{\dagger}(t) - \alpha^{\dagger}\left(d_2-D_1-\frac{b}{r}\right)\\
			& \ge \alpha^{\dagger}\left(d_2-D_1-\frac{b}{r}\right) + r' (t - d_2 + D_1) + b \quad &&\text{\Comment{Lemma~\ref{lemma:tight:alphadagger-evolv}}} \\
			& \ge \alpha^{\dagger}\left(d_2-D_1-\frac{b}{r}\right) + r (t - d_2 + D_1) + b \quad &&\text{\Comment{(\ref{eq:appendix:tight:rprime})}} \\
			& \ge r(D_1-d_1) + r(D_2-d_2)+2r(d_2-D_1) + r (t - d_2 + D_1) + b \quad &&\text{\Comment{(\ref{eq:tight:daggerAtX})}} \\
			& \ge b + rD_2 - rd_1 + rt = \alpha^*(t) \quad && \text{\Comment(\ref{eq:appendix:tight:alpha-star})} \\
		\end{aligned}$}\end{equation*}

		$\bullet$ If $0<t<d_2-D_1-\frac{b}{r}$, then we use the fact that $\alpha^{\dagger}$ is concave on $\mathcal{R}+$.
		
		Define 
		\begin{equation}
			x = \frac{t}{d_2-D_1-\frac{b}{r}}
		\end{equation}
		then, by definition of a concave function,
		\begin{equation*}\resizebox*{\linewidth}{!}{$\begin{aligned}
			\alpha^{\dagger}(t) &\ge x \alpha^{\dagger}\left(d_2-D_1-\frac{b}{r}\right) + (1-x) \alpha^{\dagger}(0)\\
			&\ge x \left(r(D_1-d_1) + r (D_2-d_2)+2r(d_2-D_1)\right) \quad && \text{\Comment(\ref{eq:tight:daggerAtX})} \\ &\qquad+  (1-x)\left(2b+r(D_1-d_1+D_2-d_2)\right) \quad && \text{\Comment(\ref{eq:tight:daggerAt0})}\\
			&\ge 2 r x\left(d_2-D_1-\frac{b}{r}\right) + 2b + r(D_1-d_1) + r(D_2-d_2) \\
			&\ge 2rt + 2b + r(D_1-d_1) + r(D_2-d_2) = \alpha^*(t) && \text{\Comment(\ref{eq:tight:daggerAt0}) and (\ref{eq:appendix:tight:alpha-star})} \\
		\end{aligned}$}\end{equation*}

	\end{proof}

	By (\ref{eq:tight:primevsdagger}), Lemma~\ref{lemma:tight:primeisgreater} proves that $\forall t > 0$, $\alpha'(t)\ge\alpha^*(t)$ and by definition of an arrival curve, $\alpha'(0)=\alpha^*(0)=0$. 
	We hence have proved that $\alpha^*$ is a better arrival curve for $f$ at the output of the \ac{PEF} than $\alpha'$.
	This is valid for any \ac{VBR} curve $\alpha'$ that is an arrival curve for $f$ at the output of the \ac{PEF}.
	Therefore $\alpha^*$ obtained using Corollary~\ref{cor:toolbox:thm-simple-application} is the best \ac{VBR} arrival curve for $f$ at the output of the \ac{PEF}.
	
%%%%%%%%%%%%%%%% SECOND CASE
	\subsubsection{Case $d_2 - D_1 \le b/r$:}
	In this case, the leaky-bucket $\gamma_{2r,2b+r(D_1-d_1+D_2-d_2)}$ is always larger than $\gamma_{r,b+r(D_2-d_1)}$. Thus the application of Corollary~\ref{cor:toolbox:thm-simple-application} gives that the leaky-bucket $\alpha^* = \gamma_{r,b+r(D_2-d_1)}$ is an arrival curve for $f$ at the output of the \ac{PEF} in Figure~\ref{fig:toolbox:oac-prop-simplified-figure}.

	Using the same rationale as for the previous case, we can use a greedy source that generates packets with a long-term rate of $r$, thus any arrival curve for $f$ at the output of the \ac{PEF} must also have a long-term rate larger than $r$.

	Therefore, the proof of tightness needs only to exhibit a trajectory that creates a burst as big as $b+r(D_2-d_1)$.
	This is done as follows.

	\paragraph{Definition of several constants}
	We define
	\begin{equation}
		\chi^1 \triangleq \left\lceil \frac{r(D_1-d_1)}{b}\right\rceil \quad\text{ and }\quad \chi^2 \triangleq \left\lceil \frac{r(D_2-D_1)}{b}\right\rceil
	\end{equation}
	Note that both $\chi^1 \ge 1$ and $\chi^2 \ge 1$.
	We further consider a time instant $t_0$ such that $t_0 > D_2 - D_1$.

	\paragraph{Description of the traffic generation at the source}\label{par:appendix:tightness:second:traffic-gen}
	We classify the data units generated by the source into two categories: $I$, $B$.
	Category $B$ is then subdivided into subcategories $B^1$ and $B^2$. 
	The category of a data unit defines the role that the data unit has in the trajectory.
	This notion is only used in the proof and does not relate to any physical property of the data units.

	\ul{Category $I$}
	The source generates a unique data unit $I$ at absolute time $t_0$, of length $b$ (see Table~\ref{tab:proof:frer:oac:tight:second:initiators_generation}).
	
	\begin{table}\centering
		\caption{\label{tab:proof:frer:oac:tight:second:initiators_generation} Generation of the Data-Unit of Category $I$ in the Trajectory that Achieves the Tightness of Corollary~\ref{cor:toolbox:thm-simple-application}, when $d_2 - D_1 \le b/r$.}
		% !TeX spellcheck = en_US
% !TeX program = pdflatex
% !TeX root = proofs
\begin{tabular}{r|c|c}
	Data unit $m$ & Size, $\text{size}(m)$ & Generation time, $\mathcal{G}(m)$ \\
	\hline
	$I$ & $b$ & $t_0$ \\
\end{tabular}
	\end{table}

	\emph{Note:} The role of the data unit $I$ is to create the term $b$ of the burst $b+r(D_2-d_1)$.

	\ul{Category $B$}
	In addition, the source generates $\chi^1$ data units of subcategory $B^1$ and $\chi^2$ data units of subcategory $B^2$, as described in Table~\ref{tab:proof:frer:oac:tight:second:bursty_generation}.

	A possible output of the source when combining categories $I$ and $B$ is shown in Figure~\ref{fig:proof:frer:oac:tight:second:timeline:bursty_1}. In the proposed situation, we have $\chi^1=\chi^2=2$.
	Both subcategories $B^1$ and $B^2$ are made of two data units.
	The data units of $B^2$ are sent before $I$ whereas the data units of $B^1$ are sent after $I$.

	We note that, for any value of $\chi^1$, $\chi^2$,
	\begin{equation}\begin{aligned}\label{eq:tight:second:sums}
		&\sum_{k\in\llbracket 1,\chi^1 \rrbracket} \text{size}(B_k^1) = r(D_1-d_1) \\ \quad\text{ and }\quad &\sum_{k\in\llbracket 1,\chi^2 \rrbracket} \text{size}(B_k^2) = r(D_2-D_1) 
	\end{aligned}\end{equation}

	\begin{table}\centering
		\caption{\label{tab:proof:frer:oac:tight:second:bursty_generation} Generation of the Data-Units of Category $B$ in the Trajectory that Achieves the Tightness of Corollary~\ref{cor:toolbox:thm-simple-application}, when $d_2 - D_1 \le b/r$.}
		\resizebox*{\linewidth}{!}{\input{./figures/2021-09-frer-oac-tight-tab-bursty-second.tex}}
	\end{table}
	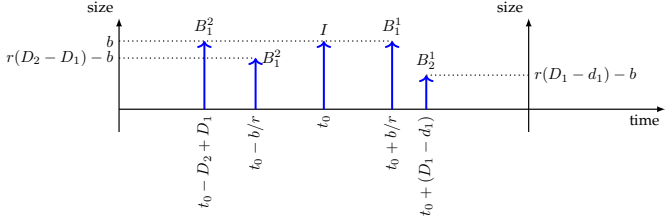
\begin{figure}
		\resizebox{\linewidth}{!}{% !TeX spellcheck = en_US
% !TeX program = pdflatex
% !TeX root = proofs

\begin{tikzpicture}
	\tikzstyle{ab} = [pos=0, rotate=90, anchor=east, black]
	\tikzstyle{pkn} = [black, pos=1,above]
	\tikzstyle{pk} = [->, blue, line width=1.25pt]
	\def\xdev{1.5cm}
	\def\yline{0cm}
	\def\ph{1.5cm}
	
	\draw[-latex] (1*\xdev,\yline) -- (9*\xdev, \yline) node[black, anchor=north east, pos=1] {time};
	\draw[-latex] (1*\xdev,\yline-0.5cm) -- (1*\xdev,\yline+\ph+0.5cm) node[pos=1, anchor=south east] {size};
	\draw[-latex] (7*\xdev,\yline-0.5cm) -- (7*\xdev,\yline+\ph+0.5cm) node[pos=1, anchor=south east] {size};
	% I
	\draw[pk]  (4*\xdev,\yline) -- (4*\xdev,\ph+\yline) node[ab] {$t_0$} node[pkn] {$I$};

	% B^2_1
	\draw[pk]  (3*\xdev,\yline) -- (3*\xdev,0.75*\ph+\yline) node[ab] {$t_0-b/r$} node[pkn, anchor=west] {$B^2_1$};
	% B^2_2
	\draw[pk]  (2.25*\xdev,\yline) -- (2.25*\xdev,\ph+\yline) node[ab] {$t_0-D_2+D_1$} node[pkn] {$B^2_1$};
	
	% B^1_1
	\draw[pk]  (5*\xdev,\yline) -- (5*\xdev,\ph+\yline) node[ab] {$t_0+b/r$} node[pkn] {$B^1_1$};
	% B^1_2
	\draw[pk]  (5.5*\xdev,\yline) -- (5.5*\xdev,0.5*\ph+\yline) node[ab] {$t_0+(D_1-d_1)$} node[pkn] {$B^1_2$};

	%b
	\draw[dotted] (1*\xdev, \ph) -- (5*\xdev,\ph) node[pos=0,left] {$b$};
	% r(D_1-d_1)	
	\draw[dotted] (5.5*\xdev, 0.5*\ph+\yline) -- (7*\xdev,0.5*\ph+\yline) node[pos=1,right] {$r(D_1-d_1)-b$};
	% r(D_2-d_2) - b
	\draw[dotted] (1*\xdev, 0.75*\ph) -- (3*\xdev,0.75*\ph) node[pos=0,left] {$r(D_2-D_1)-b$};
	
\end{tikzpicture}}
		\caption{\label{fig:proof:frer:oac:tight:second:timeline:bursty_1}Example of the source output in the trajectory achieving the tightness, focusing on categories $I$ and $B$.}		
	\end{figure}

	\paragraph{Properties of the traffic generation at the source}
	As for the previous case, we prove that Lemmas~\ref{lemma:frer:oac:tight:packet_sizes} and ~\ref{lemma:frer:oac:tight:min_dist} hold for the traffic described in \ref{par:appendix:tightness:second:traffic-gen}.
	This is clear for most pair of data units, let us show it for example for data units $B_{\chi^2-1}^2$ and $B_{\chi^2}^2$:
	\begin{equation*}\resizebox{\linewidth}{!}{$\begin{aligned}
		\mathcal{G}(B_{\chi^2}^2) - \mathcal{G}(B_{\chi^2-1}^2) &= t_0-\frac{b}{r} - t_0 +D_2-D_1 - (\chi^2-1)\frac{b}{r} + \frac{b}{r}\\
		& = D_2-D_1-(\chi^2-1)\frac{b}{r}\\
		& \ge \frac{\text{size}(B_{\chi^2}^2)}{r}
	\end{aligned}$}\end{equation*}
	Therefore, Lemma~\ref{lemma:fre:oac:tight:source_complies} also holds for the traffic described in \ref{par:appendix:tightness:second:traffic-gen}.
	The traffic described in the trajectory is $\gamma_{r,b}$-constrained.

	\paragraph{Description of the systems $S_1$, $S_2$}
	With the same notations and conventions as for the previous case, the systems $S_1$ and $S_2$ release the packets containing the different data units at the time instants shown in Table~\ref{tab:proof:frer:oac:tight:second:output_time}.

	\begin{table*}\centering
		\caption{\label{tab:proof:frer:oac:tight:second:output_time} Absolute Release Time for Each Packet at the Output of Each System $S_1$, $S_2$, in the Case $d_2 - D_1 \le b/r$.}
		\resizebox{\linewidth}{!}{\input{./figures/2021-09-frer-oac-tight-tab-release-time-second.tex}}
	\end{table*}

	\paragraph{Properties of the systems $S_1$, $S_2$}
	As for the previous case, we can also prove that Lemma~\ref{lemma:fre:oac:tight:path_delay_bounds} holds for the systems $S_1$, $S_2$ described above.
	This is clear from Table~\ref{tab:proof:frer:oac:tight:second:output_time} for most data units. For example, the delay of the packet transporting the data unit $B^2_{\chi^2}$ through $S_2$ is at least $d_2$ because, by assumption, $d_2-D_1\le\frac{b}{r}$. 
	
	\paragraph{Properties of the output cumulative function}
	In the trajectory of Table~\ref{tab:proof:frer:oac:tight:second:output_time}, all data units exit the \ac{PEF} at $t_0+D_1$.
	We hence have created a burst of size
	\begin{equation*}\begin{aligned}
		&\text{size}(I) + \sum_{k\in\llbracket 1,\chi^1 \rrbracket} \text{size}(B_k^1) + \sum_{k\in\llbracket 1,\chi^2 \rrbracket} \text{size}(B_k^2)\\
		&= b + r(D_1-d_1) + r(D_2-D_1)\quad&&\text{\Comment (\ref{eq:tight:second:sums})} \\
		&= b + r (D_2-d_1) \\
	\end{aligned}\end{equation*}
	Therefore, any curve that is an arrival curve of $f$ at the output of the \ac{PEF} should have a limit at $0$ at least larger than $b + r (D_2-d_1)$ and a long-term rate at least larger than $r$.
	Thus any such curve that is in addition concave on $\mathbb{R}^*+$ must hence be larger than the leaky-bucket arrival curve $\gamma_{r,b+r(D_2-d_1)}$.
	In particular, any \ac{VBR} arrival curve (concave on $\mathbb{R}^*+$ by definition) is larger than $\gamma_{r,b+r(D_2-d_1)}$.

\end{proof}
\subsection{Proof of Proposition~\ref{prop:toolbox:rto-vs-rbo}}\label{sec:appendix:toolbox:rto-vs-rbo}
\begin{proof}[Proof of Proposition~\ref{prop:toolbox:rto-vs-rbo}]\label{proof:appendix:toolbox:rto-vs-rbo}
    Consider the flow $f$ and two observation points $v,w$ such that $v$ is in vertex $n$, $w$ is in vertex $o$, $n$ is not an \ac{EP}-vertex of $\mathcal{G}(f)$, $o$ is a diamond ancestor of $n$ in $\mathcal{G}(f)$, and the flow $f$ is packetized at $v,w$.

    As $o$ is a diamond ancestor, it is not an \ac{EP}-vertex, thus each data unit of $f$ is observed at most once at $w$.
    As done in \cite{mohammadpourPacketReorderingTimeSensitive2020}, the $k$-th data unit of $f$ is defined as the data unit of $f$ that crosses $w$ in the $k$-th position.

    We note $E_k$ the arrival time of the $k$-th data unit of $f$ at $v$, with the convention that $E_k = +\infty$ if the $k$-th data unit of $f$ is lost for $n$.
    $E_k$ is correctly defined because $n$ is not an \ac{EP}-vertex, thus the $k$-th data unit of $f$ can cross the observation point $v$ at most once. Furthermore, $f$ is packetized at $v$, thus all the bits of the $k$-th data unit cross $v$ at the same time.

    Then the reordering offset of the $k$-th data unit of $f$ \cite[Eq.~(4)]{mohammadpourPacketReorderingTimeSensitive2020}, \cite{rfc4737} is defined by
    \begin{equation}
        \Pi_k = \sum_{j|j>k,E_j < E_k} l_j 
    \end{equation}
    with $l_j$ the size of the packet transporting the $j$-th data unit of $f$.

    Denote by $R$ the cumulative arrival function of flow $f$ at observation point $v$.
    By definition, $R$ is the number of bits of flow $f$ that cross $v$ over the time interval $[0,t[$.
    Thus for any non-lost data unit $k$, $R(E_k)$ is the number of bits of $f$ that cross $v$ strictly before\footnote{Here we use the traditional convention that cumulative functions are left-continuous. A discussion of this assumption is available in \cite[\S 1.1.1]{leboudecNetworkCalculusTheory2001}.} $E_k$.
    As $f$ is packetized at $v$, $R(E_k)$ is hence the sum of the length of the packets for all data units that arrived before the $k$-th data unit, excepted the $k$-th data unit itself.

    Thus $\Pi_k$ can be written
    \begin{equation*}\begin{aligned}
        \Pi_k &= \sum_{j|j>k,E_j < E_k} l_j \\
        & = R(E_k) - R(\min_{j>k} E_j)\\
        & \le \alpha_{f,v}(E_k - \min_{j>k} E_j)
    \end{aligned}\end{equation*}
    because $\alpha_{f,v}$ is an arrival curve of $f$ at $v$.
    By definition, $E_k - \min_{j>k} E_j$ is the reordering late offset of data unit $k$ that we denote by $\Lambda_k$ \cite[Eq.~(2)]{mohammadpourPacketReorderingTimeSensitive2020}.
    We hence obtain that for all $k$ such that $E_k < +\infty$,
    \begin{equation}\label{eq:rto-rbo}
        \Pi_k \le \alpha_{f,v}(\Lambda_k)
    \end{equation}
    The \ac{RTO} and the \ac{RBO} of flow $f$ are defined \cite[\S C]{mohammadpourPacketReorderingTimeSensitive2020} by 
    \begin{equation}
        \pi_{v}(f,o) \triangleq \sup_{k|E_k < +\infty}\Pi_k \quad\text{and}\quad \lambda_{v}(f,o) \triangleq \sup_{k|E_k < +\infty}\Lambda_k
    \end{equation}
    Equation (\ref{eq:rto-rbo}) is valid for any $k$ such that $E_k < +\infty$,  $\alpha_{f,v}$ is a wide-sense increasing function and $\sup_{k|E_k < +\infty}\Lambda_k$ is bounded by assumption.
    We hence obtain  $\pi_{v}(f,o) \le \alpha_{f,v}(\lambda_{v}(f,o))$.

\end{proof}
\subsection{Proof of Theorem~\ref{thm:toolbox:reordering}}\label{sec:appendix:thm:toolbox:reordering}
\begin{proof}[Proof of Theorem~\ref{thm:toolbox:reordering}]\label{proof:thm:toolbox:reordering}

The section of the network between the diamond ancestor $a$ and the vertex $n$ that contains the \ac{PEF} is a system (neither \ac{FIFO} nor lossless in general) with a jitter for $f$ bounded by  $D_f^{a\rightarrow n} - d_f^{a\rightarrow n}$.
We apply \cite[Thm~5]{mohammadpourPacketReorderingTimeSensitive2020} to obtain the result.
\end{proof}
\subsection{Proof of Theorem~\ref{thm:regulators:pfr-2d}}
\begin{proof}[Proof of Theorem~\ref{thm:regulators:pfr-2d}]\label{proof:regulators:pfr-2d}
    Applying Item 2/ of Theorem~\ref{thm:toolbox:pef-oac} with diamond ancestor $a$ gives that $\gamma_{r,b} \oslash \delta_{D-d} = \gamma_{r,b+r(D-d)}$ is an arrival-curve for $f$ at the input of the \ac{PFR}.
    From \cite[\S 1.7.4]{leboudecNetworkCalculusTheory2001}, a \ac{PFR} with concave shaping curve $\sigma$ is a network element that offers $\sigma$ as a service curve. The \ac{PFR} is \ac{FIFO} and lossless, thus we can apply \cite[Thm.~1.4.2]{leboudecNetworkCalculusTheory2001} and we obtain that $D-d$ (the maximal horizontal distance between the input arrival curve $\gamma_{r,b+r(D-d)}$ and the service curve $\gamma_{r,b}$) is an upper-bound on the delay of $f$ through the \ac{PFR}. Adding the already-known delay bounds for $\mathcal{S}$ gives those for $\mathcal{S}'$.
\end{proof}
\subsection{Proof of Theorem~\ref{thm:regulators:ir-instable}}
	\begin{proof}[Proof of Theorem~\ref{thm:regulators:ir-instable}]\label{proof:regulators:ir-instable}
		Consider a system defined by Figure~\ref{fig:regulators:ir-overview} and by Conditions~(a) to (c) of Theorem~\ref{thm:regulators:ir-instable}.
		Take any $r>0$, $b>0$ and $d_1,D_1,d_2,D_2$ such that Conditions~(d) to (f) of Theorem~\ref{thm:regulators:ir-instable} are met.
		We first describe the adversarial model applied when $D_1 < d_2$.
		\subsubsection{Adversarial model for the case $D_1 < d_2$}
		We exhibit an  adversarial model $\mathcal{M}_{D_1 < d_2}$ for the sources and for the paths $\{P_j\}_j$ such that Properties~\ref{item:thm:ir:first} to \ref{item:thm:ir:last} of Theorem~\ref{thm:regulators:ir-instable} hold for $\mathcal{M}_{D_1 < d_2}$.
		
		\paragraph{Constants of $\mathcal{M}_{D_1 < d_2}$}\label{par:adv:constants}
		We define
		\begin{equation}\label{eq:proof:adv:inter:j}
			J \triangleq d_2 - D_1			
		\end{equation}
		And
		\begin{equation}\label{eq:proof:adv:inter:dD}
			D \triangleq d_2 \qquad d \triangleq D_1		
		\end{equation}
		thus $d < D$. Note that $q \ge q_{\min}$ can be written		
		\begin{equation*}\label{eq:proof:adv:inter:qmin}
			q \ge q_{\min} = \left\lfloor \frac{2rJ}{b}+2\right\rfloor + 1
		\end{equation*}
		With $J>0$.
		Note that $q > \frac{2rJ}{b}+2$ thus $(q-2)\frac{b}{r} > 2J$. Therefore, take any $\epsilon$ such that
		\begin{equation}\label{eq:proof:adv:inter:epsilon}
			\min\left(\frac{b}{r} - \frac{2}{q-2} J,J\right) > \epsilon > 0
		\end{equation}
		We further define
		\begin{equation}\label{eq:proof:adv:inter:I}
			I \triangleq \max\left(\frac{q}{q-2}J,\frac{b}{r}\right)
		\end{equation}
		\begin{equation}\label{eq:proof:adv:inter:phi}
			\phi \triangleq I - J + \epsilon
		\end{equation}
		and
		\begin{equation}\label{eq:proof:adv:inter:tau}
			\tau \triangleq q \phi
		\end{equation}
		Finally, we consider a starting instant $x_1 > 0$ and for $i\in\llbracket1,q\rrbracket$, we define
		\begin{equation}\label{eq:proof:adv:inter:xi}
			x_i \triangleq (i-1) \phi + x_1
		\end{equation}
		
		\paragraph{Properties on the constants of $\mathcal{M}_{D_1 < d_2}$}\label{par:adv:constants-prop}
		For $q > 3$, $\frac{q}{q-2}>1$ thus by (\ref{eq:proof:adv:inter:I})
		\begin{equation}\label{eq:proof:adv:inter:i-greater-than-j}
			I > J > 0
		\end{equation}
		thus we also have $\phi > 0$ and $\tau > 0$ by (\ref{eq:proof:adv:inter:phi}) and (\ref{eq:proof:adv:inter:tau}).
		Furthermore,
		\begin{equation*}\begin{aligned}
			\tau - I &= q(I-J) + q\epsilon - I && \quad\text{\Comment{} by }(\ref{eq:proof:adv:inter:phi}),(\ref{eq:proof:adv:inter:tau}) \\
			& = (q-2) I -q J + q\epsilon + I && \\
			& \ge qJ - qJ + q\epsilon + I && \quad\text{\Comment{} by }(\ref{eq:proof:adv:inter:I})\\
			& > I && \quad\text{\Comment{} by (\ref{eq:proof:adv:inter:epsilon})}\\
		\end{aligned}\end{equation*}
		combined again with (\ref{eq:proof:adv:inter:I}), this gives
		\begin{equation}\label{eq:proof:adv:tauMoinsI}
			\tau - I > \frac{b}{r}
		\end{equation}
		For $\phi$, we first have
		\begin{equation}\label{eq:proof:adv:phiRel1}
			\phi < I
		\end{equation}
		because $\epsilon < J$ and
		\begin{equation}\label{eq:proof:adv:phiRel2}
			\phi < I + \frac{b}{r} - \frac{q}{q-2} J
		\end{equation}
		because $\epsilon < \frac{b}{r} - \frac{2}{q-2}J$. By (\ref{eq:proof:adv:inter:I}), $I$ can take only one of two values. If $I = \frac{b}{r}$, then (\ref{eq:proof:adv:phiRel1}) gives $\phi < \frac{b}{r}$. If $I = \frac{q}{q-2}J$, then (\ref{eq:proof:adv:phiRel2}) gives $\phi < \frac{b}{r}$. We hence prove
		\begin{equation}\label{eq:proof:adv:IvsBR}
			\phi < \frac{b}{r}
		\end{equation}

		\paragraph{Adversarial traffic generation at the source in $\mathcal{M}_{D_1 < d_2}$}\label{par:adv:source}

		For each $i \in \llbracket 1,q\rrbracket$, the source $a$ in Figure~\ref{fig:regulators:ir-overview} sends\footnote{If $b$ is larger than the maximal packet length, then the source sends several data units simultaneously such that the sum of their length equal $b$. In this case, $m_{i,k}^1$ and $m_{i,k}^2$ represent the set of these data units.} a data unit $m_{i,k}^1$, of size $b$ at the time instant $x_i + k\tau$ and $m_{j,k}^2$ of size $b$ at the time instant $x_i + k\tau+I$ for all $k\in\mathbb{N}$.

		Figure \ref{fig:proof:adv:timeline-local} presents the traffic at the output of the source, focusing on two successive flows: $f_i$ and $f_{i+1}$ (with $i~\le~q~-~1$). Their source profiles are periodic with a period $\tau$ and Figure~\ref{fig:proof:adv:timeline-local} focuses on the $k$-th period.
		For the flow $f_i$ (solid-blue data units), the source generates the data unit $m_{i,k}^1$ at time $x_i + k \tau$, then sends $m_{i,k}^2$ after a duration $I$ and it finally waits for the next period $(k+1)$ before it restarts the same profile and sends $m_{i,k+1}^1$.
		The source profile for flow $f_{i+1}$ (dashed-red data units) is identical, but shifted by $\phi$ with respect to the source profile for $f_i$, because $x_{i+1} = x_i + \phi$ by (\ref{eq:proof:adv:inter:xi}).
		By (\ref{eq:proof:adv:inter:phi}) and (\ref{eq:proof:adv:inter:epsilon}), $\phi < I$ thus $m_{i,k+1}^1$ is sent before $m_{i,k}^2$ as shown in the figure.
		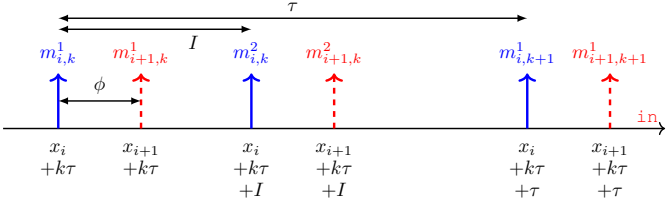
\begin{figure}\centering %
			\resizebox{\linewidth}{!}{% !TeX root=main.tex

\begin{tikzpicture}

	\tikzstyle{pkn} = [pos=1, anchor=south]
	\tikzstyle{pki} = [pos=0, anchor=north, black]
	\tikzset{
		f/.cd,
		1/.style={blue},
		2/.style={red, dashed},
		3/.style={green, dotted},
		4/.style={black, dashdotted},
		5/.style={black},
		6/.style={black},
		7/.style={black},
		8/.style={black},
		9/.style={black},
		10/.style={black},
		11/.style={black},
		12/.style={black},
		13/.style={black},
		14/.style={black},
	}
	\tikzset{
		role/.cd,
		1/.style={->, line width=1.25pt},
		2/.style={->>, line width=1.25pt}
	}

	\draw[->] (0,0) -- (12,0) node[pos=1, anchor=east, black] {};
	\node[red, anchor=south east] at (12,0) {\texttt{in}};
	\draw[f/1, role/1] (1,0) -- (1,1) node[pos=0, black, anchor=north] {\makecell{$x_{i}$\\$+k\tau$}}                   node[pkn] {$m_{i,k}^1$}     ;
	\draw[f/1, role/1] (4.5,0) -- (4.5,1) node[pos=0, black, anchor=north] {\makecell{$x_{i}$\\$+k\tau$\\$+I$}}         node[pkn] {$m_{i,k}^2$}     ;
	\draw[f/1, role/1] (9.5,0) -- (9.5,1) node[pos=0, black, anchor=north] {\makecell{$x_{i}$\\$+k\tau$\\$+\tau$}}      node[pkn] {$m_{i,k+1}^1$}   ;

	\draw[f/2, role/1] (2.5,0) -- (2.5,1) node[pos=0,   black, anchor=north] {\makecell{$x_{i+1}$\\$+k\tau$}}           node[pkn] {$m_{i+1,k}^1$}  ;
	\draw[f/2, role/1] (6,0) -- (6,1) node[pos=0,   black, anchor=north] {\makecell{$x_{i+1}$\\$+k\tau$\\$+I$}}         node[pkn] {$m_{i+1,k}^2$}  ;
	\draw[f/2, role/1] (11,0) -- (11,1) node[pos=0, black, anchor=north] {\makecell{$x_{i+1}$\\$+k\tau$\\$+\tau$}}      node[pkn] {$m_{i+1,k+1}^1$};
	
	\draw[latex-latex, line width=0.75pt] (1,0.5) -- (2.5,0.5) node[midway, above] {$\phi$};
    \draw[latex-latex, line width=0.75pt] (1,1.8) -- (4.5,1.8) node[pos=0.7, below] {$I$};
	\draw[latex-latex, line width=0.75pt] (1,2) -- (9.5,2) node[midway, above] {$\tau$};
\end{tikzpicture}} %
			\caption{\label{fig:proof:adv:timeline-local} Generation of data units for flows $f_i$ and $f_{i+1}$ ($i \le q-1$). Their traffic profile is periodic with period $\tau$. The source sends a data unit for $f_i$ at $x_i + k\tau$ for $k\in\mathbb{N}$, then it sends another data unit after a duration $I$ and finally restarts at the next period. The profile for $f_{i+1}$ is identical and shifted by $\phi$ with respect to the one of $f_i$ ($x_{i+1} = x_i + \phi$).}
		\end{figure}

		\paragraph{Properties on the traffic generation at the source in $\mathcal{M}_{D_1 < d_2}$}\label{par:adv:source-prop}
		By (\ref{eq:proof:adv:inter:I}) and (\ref{eq:proof:adv:tauMoinsI}), the minimum time elapsed at the source between any two data units of $f_i$ is larger than $b/r$, which shows that Property~\ref{item:thm:ir:source} of Theorem~\ref{thm:regulators:ir-instable} holds.

		\paragraph{Adversarial paths in  $\mathcal{M}_{D_1 < d_2}$}\label{par:adv:paths}
		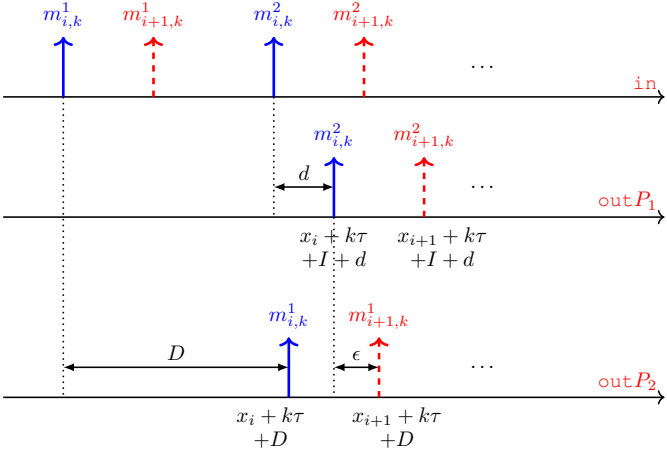
\begin{figure}\centering 
			\resizebox{\linewidth}{!}{% !TeX root=main.tex

\begin{tikzpicture}

	\tikzstyle{pkn} = [pos=1, anchor=south]
	\tikzstyle{pki} = [pos=0, anchor=north, black]
	\tikzset{
		f/.cd,
		1/.style={blue},
		2/.style={red, dashed},
		3/.style={green, dotted},
		4/.style={black, dashdotted},
		5/.style={black},
		6/.style={black},
		7/.style={black},
		8/.style={black},
		9/.style={black},
		10/.style={black},
		11/.style={black},
		12/.style={black},
		13/.style={black},
		14/.style={black},
	}
	\tikzset{
		role/.cd,
		1/.style={->, line width=1.25pt},
		2/.style={->>, line width=1.25pt}
	}

	\def\mxmax{11}
	%SOURCE
	\draw[->] (0,0) -- (\mxmax,0) node[pos=1, anchor=east, black] {};
	\node[red, anchor=south east] at (\mxmax,0) {\texttt{in}};
	\draw[f/1, role/1] (1,0) -- (1,1)       node[pkn] {$m_{i,k}^1$}     ;
	\draw[f/1, role/1] (4.5,0) -- (4.5,1)   node[pkn] {$m_{i,k}^2$}     ;

    \draw[f/2, role/1] (2.5,0) -- (2.5,1)     node[pkn] {$m_{i+1,k}^1$}  ;
	\draw[f/2, role/1] (6,0) -- (6,1)         node[pkn] {$m_{i+1,k}^2$}  ;
	\node at (8,0.5) {\ldots};

    %P1
	\draw[->] (0,-2) -- (\mxmax,-2) node[pos=1, anchor=east, black] {};
	\node[red, anchor=south east] at (\mxmax,-2) {\texttt{out}$P_1$};
	\draw[f/1, role/1] (5.5,-2) -- (5.5, -1)  				node[pkn] {$m_{i,k}^2$};
	\draw[f/2, role/1] (7,		-2) -- 	(7,		-1) 	  	node[pkn] {$m_{i+1,k}^2$};
	\node at (8,-1.5) {\ldots};

	\node[black, anchor=north] at (5.5,-2) {\makecell{$x_{i}+k\tau$\\$+I+d$}};
	\node[black, anchor=north] at (7.3,-2) {\makecell{$x_{i+1}+k\tau$\\$+I+d$}};  

	%P2
	\draw[->] (0,-5) -- (\mxmax,-5) node[pos=1, anchor=east, black] {};
	\node[red, anchor=south east] at (\mxmax,-5) {\texttt{out}$P_2$};
	\draw[f/1, role/1] (4.75,-5) -- (4.75, -4)  				node[pkn] {$m_{i,k}^1$};
	\draw[f/2, role/1] (6.25,		-5) -- 	(6.25,		-4) 	  	node[pkn] {$m_{i+1,k}^1$};
	\node at (8,-4.5) {\ldots};

	\node[black, anchor=north] at (4.45,-5) {\makecell{$x_{i}+k\tau$\\$+D$}};
	\node[black, anchor=north] at (6.55,-5) {\makecell{$x_{i+1}+k\tau$\\$+D$}};  

	\draw[line width=0.75pt, dotted] (1,0) -- (1,-5);
	\draw[line width=0.75pt, dotted] (4.5,0) -- (4.5,-2);
	\draw[latex-latex, line width=0.75pt] (1,-4.5) -- (4.75,-4.5) node[pos=0.5,above] {$D$};
	\draw[latex-latex, line width=0.75pt] (4.5,-1.5) -- (5.5,-1.5) node[pos=0.5,above] {$d$};

	\draw[line width=0.75pt, dotted] (5.5,-2) -- (5.5,-5);
	\draw[latex-latex, line width=0.75pt] (5.5,-4.5) -- (6.25,-4.5) node[pos=0.5,above]{$\epsilon$};
\end{tikzpicture}} %
			\caption{\label{fig:proof:adv:timeline-paths} Traffic profile for the flows $f_i$ and $f_{i+1}$ at the output of the two adversarial paths $P_1$ and $P_2$, in the model $\mathcal{M}_{D_1 < d_2}$. $P_1$ drops all $m^1$ data units whereas $P_2$ drops all $m^2$ data units.}
		\end{figure}
		\begin{itemize}[-]
			\item For any $k\in\mathbb{N}$ and any $i\in\llbracket 1,q \rrbracket$, path $P_1$ drops the packet containing the data unit $m^1_{i,k}$ and forwards the packet containing the data unit $m^2_{i,k}$ with a delay $d$.
			\item For any $k\in\mathbb{N}$ and any $i\in\llbracket 1,q \rrbracket$, path $P_2$ forwards the packet containing the data unit $m^1_{i,k}$ with a delay $D$ and drops the packet containing the data unit $m^2_{i,k}$.
			\item Any other path $P_j$ with $j \ge 3$ drops all packets.
		\end{itemize}
		Figure~\ref{fig:proof:adv:timeline-paths} shows the trajectory at the output the two adversarial paths, focusing on period $k$ and on flows $f_i$ and $f_{i+1}$. Path $P1$ drops the packets containing the data units $m_{i,k}^1$ and $m_{i+1,k}^1$. It forwards those that contain $m_{i,k}^2$ and $m_{i+1,k}^2$ with a delay $d$. 
		Similarly, $P_2$ drops $m_{i,k}^2$ and $m_{i+1,k}^2$ but forwards $m_{i,k}^1$ and $m_{i+1,k}^1$ with a delay $D$.

		\paragraph{Properties of the paths in $\mathcal{M}_{D_1 < d_2}$}\label{par:adv:paths-prop}
		The delay of the non-lost packets through $P_1$ [resp., through $P_2$] equals $d=D_1$ [resp., $D=d_2$] that belongs to $[d_1,D_1]$ [resp., to $[d_2,D_2]$ ]. Thus the adversarial paths meet Properties~\ref{item:thm:ir:path-delay} and \ref{item:thm:ir:both-fifo} of Theorem~\ref{thm:regulators:ir-instable}.

		\paragraph{Effect of the \acsp{PEF} in $\mathcal{M}_{D_1 < d_2}$}
		The set of parallel \acp{PEF} in Figure~\ref{fig:regulators:ir-overview} receive the sum of the two paths outputs. As per its model in Section~\ref{sec:system-model:function-model}, each \ac{PEF} forwards the first packet containing the data unit.
		For $i \in \llbracket 1,q\rrbracket$, $k \in \mathbb{N}$ and $w \in \{1,2\}$, we denote by $A_{i,k}^w$ the time instant at which the unique packet containing the data unit $m_{i,k}^w$ exits the set of parallel \acp{PEF}.

		By construction of the adversarial paths, for $i \in \llbracket 1,q\rrbracket$ and $k \in \mathbb{N}$, only path $P_1$ forwards a packet containing the data unit $m_{i,k}^2$, released $d$ after its emission by the source. Thus $m_{i,k}^2$ exits the \acp{PEF} as soon as the packet exits $P_1$. We obtain
		\begin{equation}\label{eq:proof:adv:A2}
			\forall i \in \llbracket 1,q \rrbracket, \forall k \in \mathbb{N} \quad A_{i,k}^2 = x_i + k \tau + I +d 
		\end{equation}
		Similarly with $m_{i,k}^1$ that is only forwarded by $P_2$,
		\begin{equation}\label{eq:proof:adv:A1}
			\forall i \in \llbracket 1,q \rrbracket, \forall k \in \mathbb{N} \quad A_{i,k}^1 = x_i + k \tau + D 
		\end{equation}
		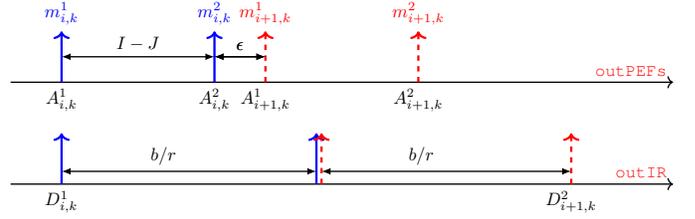
\begin{figure}\centering 
			\resizebox{\linewidth}{!}{% !TeX root=main.tex

\begin{tikzpicture}
	\tikzstyle{pkn} = [pos=1, anchor=south]
	\tikzstyle{pki} = [pos=0, anchor=north, black]
	\tikzset{
		f/.cd,
		1/.style={blue},
		2/.style={red, dashed},
		3/.style={green, dotted},
		4/.style={black, dashdotted},
		5/.style={black},
		6/.style={black},
		7/.style={black},
		8/.style={black},
		9/.style={black},
		10/.style={black},
		11/.style={black},
		12/.style={black},
		13/.style={black},
		14/.style={black},
	}
	\tikzset{
		role/.cd,
		1/.style={->, line width=1.25pt},
		2/.style={->>, line width=1.25pt}
	}
    \def\mxmax{13}

	%% INPUT
	\draw[->] (0,0) -- (\mxmax,0) node[pos=1, anchor=east, black] {};
    \node[red, anchor=south east] at (\mxmax,0) {\texttt{outPEFs}};
	\draw[f/1, role/1] (1,0) -- (1, 1)  			node[pkn] {$m_{i,k}^1$} node[pos=0, below, black] {$A_{i,k}^1$};
	\draw[f/1, role/1] (4,		0) -- 	(4,		1) 	node[pkn] {$m_{i,k}^2$} node[pos=0, below, black] {$A_{i,k}^2$}	;
	\draw[f/2, role/1] (5,0) -- (5,1)  				node[pkn] {$m_{i+1,k}^1$} node[pos=0, below, black] {$A_{i+1,k}^1$};
	\draw[f/2, role/1] (8,0) -- 	(8,1) 	  		node[pkn] {$m_{i+1,k}^2$} node[pos=0, below, black] {$A_{i+1,k}^2$};
	\draw[latex-latex] (1,0.5) -- (4,0.5) node[pos=0.5, above] {$I-J$};
	\draw[latex-latex] (4,0.5) -- (5,0.5) node[pos=0.5, above] {$\epsilon$};

	%% OUTPUT
	\def\myyy{-2}
	\draw[->] (0,\myyy) -- (\mxmax,\myyy) node[pos=1, anchor=east, black] {};
    \node[red, anchor=south east] at (\mxmax,\myyy) {\texttt{outIR}};
	\draw[f/1, role/1] (1,\myyy) -- (1,\myyy+1)  			node[pos=0, below, black] {$D_{i,k}^1$};
	\draw[f/1, role/1] (6,		\myyy) -- 	(6,		\myyy+1);
	\draw[f/2, role/1] (6.1,\myyy) -- (6.1,\myyy+1) ;
	\draw[f/2, role/1] (11,\myyy) -- 	(11,\myyy+1) 	  	 	node[pos=0, below, black] {$D_{i+1,k}^2$};
	\draw[latex-latex] (1,\myyy+0.25) -- (6,\myyy+0.25) node[pos=0.4, above] {$b/r$};
	\draw[latex-latex] (6.1,\myyy+0.25) -- (11,\myyy+0.25) node[pos=0.4, above] {$b/r$};
	\draw[latex-latex] (4,0.5) -- (5,0.5) node[pos=0.5, above] {$\epsilon$};

	%\draw[dotted] (5,0) -- (5,\myyy);
	%\draw[latex-latex] (5.0,\myyy+0.5) -- (6,\myyy+0.5) node[pos=0.5, above] {$\Delta$};
\end{tikzpicture}} %
			\caption{\label{fig:proof:adv:timeline-input-ir} Traffic profile for the flows $f_i$ and $f_{i+1}$ at the input of the \ac{IR} (above) and at its output (below). To ease the lecture, the scale is not the same as in Figures~\ref{fig:proof:adv:timeline-local} and \ref{fig:proof:adv:timeline-paths}.}
		\end{figure}

		The top line of Figure~\ref{fig:proof:adv:timeline-input-ir} shows the trajectory at the output of the \acp{PEF} focusing on flows $f_i$ and $f_{i+1}$ and on the $k$-th period of the profile.
		Note that \texttt{outPEFs} is the output of the system denoted by $S$ in Section~\ref{sec:regulators:ir} and is also the input of the \ac{IR} (Figure~\ref{fig:regulators:ir-overview}).

		\paragraph{Properties of the system $S$ between \texttt{in} and \texttt{outPEFs} in $\mathcal{M}_{D_1 < d_2}$}
		For $i \in \llbracket 1,q\rrbracket$ and $n \in \mathbb{N}$ we note that
		\begin{align*}
			A_{i,k}^2 - A_{i,k}^1 &= I + d - D && \text{ \Comment{} (\ref{eq:proof:adv:A2}) and (\ref{eq:proof:adv:A1})}\\
			& = I - J &&\text{ \Comment{} (\ref{eq:proof:adv:inter:j})}\\
			&> 0 && \text{ \Comment{} (\ref{eq:proof:adv:inter:i-greater-than-j})} 
		\end{align*}
		This proves that $m_{i,k}^2$ exit $S$ after $m_{i,k}^1$ for any $i \in \llbracket 1,q \rrbracket$ and any $k\in\mathbb{N}$. Also,
		\begin{align*}
			A_{i,k+1}^1 - A_{i,k}^2 &= \tau + D-d - I && \text{ \Comment{} (\ref{eq:proof:adv:A2}) and (\ref{eq:proof:adv:A1})}\\
			& = \tau - (I-J)  && \\
			& = (q-1)(I-J) + q\epsilon && \text{ \Comment{} (\ref{eq:proof:adv:inter:tau}) and (\ref{eq:proof:adv:inter:phi})}\\ 
			&> 0 && \text{ \Comment{} } q \ge 3 
		\end{align*}
		And this proves that for any $i\in\llbracket 1,q\rrbracket$ and any $k\in\mathbb{N}$, $m_{i,k}^2$ exits $S$ before $m_{i,k+1}^1$.
		Therefore, $S$ is \ac{FIFO} for $f_i$, for any $i\in\llbracket 1,q\rrbracket$.
		Furthermore, each data unit is transported through exactly one path (either $P_1$ or $P_2$), thus $S$ is also lossless. This proves that Property~\ref{item:thm:ir:lossless-fifo-per-flow} of Theorem~\ref{thm:regulators:ir-instable} holds.

		Last, we note that
		\begin{equation}\label{eq:proof:adv:inversion}
		\begin{aligned}
			A_{i+1,k}^1 - A_{i,k}^2 &= x_{i+1} - x_i + D -d -I && \text{ \Comment{} (\ref{eq:proof:adv:A2}) and (\ref{eq:proof:adv:A1})}\\
			&= \phi + J - I && \text{ \Comment{} (\ref{eq:proof:adv:inter:j}) and (\ref{eq:proof:adv:inter:xi})}\\
			&= \epsilon > 0 && \text{ \Comment{} (\ref{eq:proof:adv:inter:phi})}\\
		\end{aligned}
		\end{equation}
		Therefore, $m_{i+1,k}^1$, the first packet of the flow $f_{i+1}$ in the $k$-th period exits the \acp{PEF} $\epsilon$ seconds after the second packet of the flow $f_i$ in the $k$-th period, as described in Figure~\ref{fig:proof:adv:timeline-input-ir}.

		\paragraph{Output of the \ac{IR} in $\mathcal{M}_{D_1 < d_2}$}
		For $i\in\llbracket1,q\rrbracket$, $n\in\mathbb{N}$ and $w \in \{1,2\}$, we denote by $D_{i,k}^w$ the absolute time at which data unit $m_{i,k}^w$ leaves the \ac{IR}.

		The bottom line of Figure~\ref{fig:proof:adv:timeline-input-ir} shows the release time of the data units out of the \ac{IR}.
		Assume for example that the source has been idle for a while, then the regulator is empty and data unit $m_{i,k}^1$ can be released immediately without violating the shaping curve for $f_i$, thus $D_{i,k}^1 = A_{i,k}^1$.
		
		However, data unit $m_{i,k}^2$ arrives at the \ac{IR} too soon with respect to the shaping curve $\sigma_{f_i}$. 
		By applying the equations of the \ac{IR}~\cite{leboudecTheoryTrafficRegulators2018}, we note that the \ac{IR} must delay  $m_{i,k}^2$ and
		\begin{equation}\label{eq:adv:twoDSameFlow}
			\forall i \in \llbracket 1,q\rrbracket, \forall k \in\mathbb{N}, \quad D_{i,k}^2 \ge D_{i,k}^1 + b/r
		\end{equation}
		By~(\ref{eq:proof:adv:inversion}), data unit $m_{i+1,k}^1$ arrives after the data unit $m_{i,k}^2$.
		As the \ac{IR} looks only at the head-of-line packet and is itself a \ac{FIFO} system, we obtain
		\begin{equation}\label{eq:adv:twoDDiffFlow}
			\forall i \in \llbracket 1,q\rrbracket, \forall k \in \mathbb{N}, \quad D_{i+1,k}^1 \ge D_{i,k}^2
		\end{equation}

		Combining Equations (\ref{eq:adv:twoDSameFlow}) and (\ref{eq:adv:twoDDiffFlow}) gives, by induction,
		\begin{equation}\label{eq:firstDvsLastD} \forall k \in \mathbb{N}, \quad D^2_{q,k}\ge D^1_{1,k} + q\frac{b}{r}\end{equation}
		Now we note that
		\begin{align*}
			A^1_{1,k+1}&=x_1 + (k+1)\tau + D\quad &&\text{\Comment{} (\ref{eq:proof:adv:A1})}\\
			&=x_1 + k\tau + q\phi + D &&\text{\Comment{} (\ref{eq:proof:adv:inter:tau})}\\
			&=x_q+k\tau+\phi+D &&\text{\Comment{} (\ref{eq:proof:adv:inter:xi})}\\
			&=x_q + k\tau + I - J + \epsilon + D &&\text{\Comment{} (\ref{eq:proof:adv:inter:phi})}\\
			&=x_q+k\tau+I+d+\epsilon &&\text{\Comment{} (\ref{eq:proof:adv:inter:j})}\\
			&=A_{q,k}^2 + \epsilon &&\text{\Comment{} (\ref{eq:proof:adv:A2})}
		\end{align*}
		Therefore, the first data unit of the $(k+1)$-th period of $f_1$ arrives $\epsilon$ seconds after the second data unit of the $k$-th period of the last flow $f_q$. The \ac{IR} being \ac{FIFO}, we have
		\begin{equation}\label{eq:adv:aroundAPeriod}
		\forall k \in \mathbb{N},\quad D^1_{1,k+1} \ge D^2_{q,k}
		\end{equation}
		which, combined with (\ref{eq:adv:twoDDiffFlow}), gives
		\begin{equation}\label{eq:adv:loop}
			\forall k \in \mathbb{N},\quad D^1_{1,k+1} \ge D^1_{1,k} + q\frac{b}{r}
		\end{equation}
		At period $k=0$, the network is empty and $D^1_{1,0} = A^1_{1,0} = x_1$. The induction of (\ref{eq:adv:loop}) thus gives
		\begin{equation}\label{eq:adv:loopb}
			\forall k \in \mathbb{N},\quad D^1_{1,k}\ge x_1 + k q\frac{b}{r}
		\end{equation}
		And the delay, through the \ac{IR}, suffered by the first data unit of the $k$-th period of the first flow $f_1$ is
		\begin{align*}
			&D^1_{1,k} - A^1_{1,k}&&\\
			&\ge x_1 + k q\frac{b}{r} - x_1-k\tau-D&&\text{\Comment{} (\ref{eq:adv:loopb}) and (\ref{eq:proof:adv:A1})}\\
			&\ge -D + k q\left(\frac{b}{r} - \phi\right)&&\text{\Comment{} (\ref{eq:proof:adv:inter:tau})}\\
		\end{align*}
		By~(\ref{eq:proof:adv:IvsBR}), $\frac{b}{r} - \phi >0$. Thus the above delay lower-bound diverges as $k$ increases and Property \ref{item:thm:ir:undbounded} of the Theorem holds.
	
		\subsubsection{Adversarial model for the case $d_2 < D_1$}
		
		The adversarial model $\mathcal{M}_{d_2<D_1}$ follows the same principle as the adversarial model $\mathcal{M}_{D_1<d_2}$ described above.
		In the following, we detail only the differences.
		
		\paragraph{Constants of $\mathcal{M}_{d_2<D_1}$}\label{par:adv1:constants}
		By assumption, $D_1 - d_2 > 0$.
		Furthermore, $q_{\min}$ now equals $3$ and $q \ge q_{\min}$, $\left(\frac{q-2}{2}\right)\frac{b}{r} > 0$.

		We hence select $J$ such that
		\begin{equation}\label{eq:adv1:J}
			0 < J < \min\left(\frac{q-2}{2}\frac{b}{r},D_1 - d_2\right)
		\end{equation}
		And we re-define
		\begin{equation}\label{eq:adv1:D-d}
			D \triangleq D_1 \qquad d \triangleq D - J
		\end{equation}
		As $J < D_1 -d_2$=, $J >0$, and $D_2 \ge D_1$ by on the indexes, we obtain $D_2\ge D_1>d>d_2$ thus 
		\begin{equation}\label{eq:adv1:dInInterval}
			d \in [d_2,D_2]	
		\end{equation}
		
		By definition, $ \frac{2}{q-2} J < \frac{b}{r}$, thus we define $\epsilon$, $I$, $\phi$, $\tau$ and $x_i$ as in Appendix~\ref{par:adv:constants}, \emph{i.e.}, per Equations (\ref{eq:proof:adv:inter:epsilon}), (\ref{eq:proof:adv:inter:I}), (\ref{eq:proof:adv:inter:phi}), (\ref{eq:proof:adv:inter:tau}) and (\ref{eq:proof:adv:inter:xi}).

		\paragraph{Properties on the constants of $\mathcal{M}_{d_2<D_1}$}
		None of the properties established in Appendix~\ref{par:adv:constants-prop} depends on the definition of $J$, $d$ or $D$. They are all obtained thanks to the definitions of the other constants. As we re-use the same definitions, all the properties obtained in Appendix~\ref{par:adv:constants-prop} are also valid for $\mathcal{M}_{d_ 2 < D_1}$.

		\paragraph{Adversarial traffic generation at the source in $\mathcal{M}_{d_2 < D_1}$}
		The adversarial model $\mathcal{M}_{d_2 < D_1}$ uses the same traffic generation as $\mathcal{M}_{D_1 < d_2}$. It is described in Appendix~\ref{par:adv:source} and summarized in Figure~\ref{fig:proof:adv:timeline-local}.
	
		\paragraph{Properties on the traffic generation at the source in $\mathcal{M}_{d_2 < D_1}$}
		The traffic generation is not modified, thus the properties established in Appendix~\ref{par:adv:source-prop} also hold for $\mathcal{M}_{d_2<D_1}$. In particular, Property 1/ of Theorem~\ref{thm:regulators:ir-instable} holds.

		\paragraph{Adversarial paths in  $\mathcal{M}_{d_2 < D_1}$} With respect to the model $\mathcal{M}_{D_1 < d_2}$, the adversarial model $\mathcal{M}_{d_2 < D_1}$ simply flips the the roles of each paths. Specifically, 
		\begin{itemize}[-]
			\item For any $k\in\mathbb{N}$ and any $i\in\llbracket 1,q \rrbracket$, path $P_1$ forwards the packet containing the data unit $m^1_{i,k}$ with a delay $D$ and drops the packet containing the data unit $m^2_{i,k}$.
			\item For any $k\in\mathbb{N}$ and any $i\in\llbracket 1,q \rrbracket$, path $P_2$ drops the packet containing the data unit $m^1_{i,k}$ and forwards the packet containing the data unit $m^2_{i,k}$ with a delay $d$.
			\item Any other path $P_j$ with $j \ge 3$ drops all packets.
		\end{itemize}
		The output of both paths is shown in Figure~\ref{fig:proof:adv:timeline-paths-flip}.
		We can note the symmetry with Figure~\ref{fig:proof:adv:timeline-paths}.

		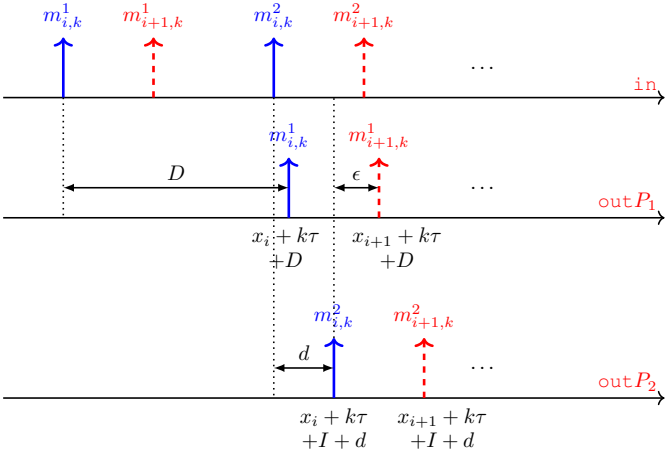
\begin{figure}\centering 
			\resizebox{\linewidth}{!}{% !TeX root=main.tex

\begin{tikzpicture}

	\tikzstyle{pkn} = [pos=1, anchor=south]
	\tikzstyle{pki} = [pos=0, anchor=north, black]
	\tikzset{
		f/.cd,
		1/.style={blue},
		2/.style={red, dashed},
		3/.style={green, dotted},
		4/.style={black, dashdotted},
		5/.style={black},
		6/.style={black},
		7/.style={black},
		8/.style={black},
		9/.style={black},
		10/.style={black},
		11/.style={black},
		12/.style={black},
		13/.style={black},
		14/.style={black},
	}
	\tikzset{
		role/.cd,
		1/.style={->, line width=1.25pt},
		2/.style={->>, line width=1.25pt}
	}

	\def\mxmax{11}
	%SOURCE
	\draw[->] (0,0) -- (\mxmax,0) node[pos=1, anchor=east, black] {};
	\node[red, anchor=south east] at (\mxmax,0) {\texttt{in}};
	\draw[f/1, role/1] (1,0) -- (1,1)       node[pkn] {$m_{i,k}^1$}     ;
	\draw[f/1, role/1] (4.5,0) -- (4.5,1)   node[pkn] {$m_{i,k}^2$}     ;

    \draw[f/2, role/1] (2.5,0) -- (2.5,1)     node[pkn] {$m_{i+1,k}^1$}  ;
	\draw[f/2, role/1] (6,0) -- (6,1)         node[pkn] {$m_{i+1,k}^2$}  ;
	\node at (8,0.5) {\ldots};

    %P1
	\draw[->] (0,-5) -- (\mxmax,-5) node[pos=1, anchor=east, black] {};
	\node[red, anchor=south east] at (\mxmax,-5) {\texttt{out}$P_2$};
	\draw[f/1, role/1] (5.5,-5) -- (5.5, -4)  				node[pkn] {$m_{i,k}^2$};
	\draw[f/2, role/1] (7,		-5) -- 	(7,		-4) 	  	node[pkn] {$m_{i+1,k}^2$};
	\node at (8,-4.5) {\ldots};

	\node[black, anchor=north] at (5.5,-5) {\makecell{$x_{i}+k\tau$\\$+I+d$}};
	\node[black, anchor=north] at (7.3,-5) {\makecell{$x_{i+1}+k\tau$\\$+I+d$}};  

	%P2
	\draw[->] (0,-2) -- (\mxmax,-2) node[pos=1, anchor=east, black] {};
	\node[red, anchor=south east] at (\mxmax,-2) {\texttt{out}$P_1$};
	\draw[f/1, role/1] (4.75,-2) -- (4.75, -1)  				node[pkn] {$m_{i,k}^1$};
	\draw[f/2, role/1] (6.25,		-2) -- 	(6.25,		-1) 	  	node[pkn] {$m_{i+1,k}^1$};
	\node at (8,-1.5) {\ldots};

	\node[black, anchor=north] at (4.7,-2) {\makecell{$x_{i}+k\tau$\\$+D$}};
	\node[black, anchor=north] at (6.55,-2) {\makecell{$x_{i+1}+k\tau$\\$+D$}};  

	\draw[line width=0.75pt, dotted] (1,0) -- (1,-2);
	\draw[line width=0.75pt, dotted] (4.5,0) -- (4.5,-5);
	\draw[latex-latex, line width=0.75pt] (1,-1.5) -- (4.75,-1.5) node[pos=0.5,above] {$D$};
	\draw[latex-latex, line width=0.75pt] (4.5,-4.5) -- (5.5,-4.5) node[pos=0.5,above] {$d$};

	\draw[line width=0.75pt, dotted] (5.5,-4) -- (5.5,-0);
	\draw[latex-latex, line width=0.75pt] (5.5,-1.5) -- (6.25,-1.5) node[pos=0.5,above]{$\epsilon$};
\end{tikzpicture}} %
			\caption{\label{fig:proof:adv:timeline-paths-flip} Traffic profile for the flows $f_i$ and $f_{i+1}$ at the output of the two adversarial paths $P_1$ and $P_2$, in the model $\mathcal{M}_{d_2 < D_1}$. $P_1$ drops all $m^2$ data units whereas $P_2$ drops all $m^2$ data units. With respect to Figure~\ref{fig:proof:adv:timeline-paths}, the roles of $P_1$ and $P_2$ have been exchanged.}
		\end{figure}

		\paragraph{Properties of the paths in $\mathcal{M}_{d_2 < D_1}$}
		The packets not lost in $P_1$ have the same delay through $P_1$ equal to $D$.
		Similarly, the packets not lost in $P_2$ have the same delay through $P_2$ equal to $d$.
		Thus both $P_1$ and $P_2$ are \ac{FIFO} and Property~\ref{item:thm:ir:both-fifo} of Theorem~\ref{thm:regulators:ir-instable} hold.
		
		Furthermore, by (\ref{eq:adv1:D-d}),  $D = D_1$ and by (\ref{eq:adv1:dInInterval}), $d\in[d_2,D_2]$.
		Thus Property~\ref{item:thm:ir:path-delay} holds.

		\paragraph{Effect of the \acsp{PEF} in $M_{d_2<D_1}$}
		As for the $\mathcal{M}_{D_1 < d_2}$ model, each data unit arrives in a unique packet at the \acp{PEF} ($m^1$ data units arrive only through $P_1$ and $m^2$ data units arrive only through $P^2$).
		Thus the \acsp{PEF} are transparent and forward the sum of both output traffic, \texttt{out}$P_1$ and \texttt{out}$P_2$ from Figure~\ref{fig:proof:adv:timeline-paths-flip}.
		We observe that the sum of them gives the same output as on the first line of Figure~\ref{fig:proof:adv:timeline-input-ir}.

		Therefore, all the remaining steps of the proof (properties on the system $S$, output of the \ac{IR} with diverging delays) can be followed as in the model $\mathcal{M}_{D_1<d_2}$.

		\subsubsection{Adversarial model for the case $d_2 = D_1$}
		If $d_2 = D_1$, then $q_{\min} = 3$.
		By Condition~(e) of Theorem~\ref{thm:regulators:ir-instable}, one of the two intervals $[d_1, D_1]$ or $[d_2,D_2]$ has a strictly positive length. 
		Assume for example that $d_2 < D_2$.
		Then we simply select $d_2'$ such that 
		\begin{equation}\label{eq:adv2:d2prime}
			d_2 < d_2' < \min\left(D_2,D_1+\frac{b}{2r}\right)
		\end{equation}
		We obtain    
		\begin{equation*}
			\left\lfloor \frac{2r\left|d_2'-D_1\right|^+}{b} + 2\right\rfloor + 1 = 3
		\end{equation*}
		because 
		\begin{equation*}
			\frac{2r\left|d_2'-D_1\right|^+}{b} < 1
		\end{equation*}
		by choice of $d_2'$.
		This means that we can apply model $\mathcal{M}_{d_2' < D_1}$ with parameters $q,r,b,d_1,d_2',D_1,D_2$ and the same number of flows ($q \ge 3$).
		This model will provide Properties~\ref{item:thm:ir:first} to \ref{item:thm:ir:last} of Theorem~\ref{thm:regulators:ir-instable} for the choice of parameters $q,r,b,d_1,d_2',D_1,D_2$, thus providing Properties~\ref{item:thm:ir:first} to \ref{item:thm:ir:last} of Theorem~\ref{thm:regulators:ir-instable} for the choice parameters $r,b,d_1,d_2,D_1,D_2$.
		
	\end{proof}

\subsection{Proof of Corollary~\ref{cor:regulators:ir-after-non-fifo}}
\begin{proof}[Proof of Corollary~\ref{cor:regulators:ir-after-non-fifo}]\label{proof:cor:regulators:ir-after-non-fifo}
    We simply construct $\mathcal{S}$ as a system containing a \acf{PRF}, two alternative paths $P_1$, $P_2$ and a set of \acp{PEF}, as in Figure~\ref{fig:regulators:ir-overview}. We then apply Theorem~\ref{thm:regulators:ir-instable} with lossless and \ac{FIFO} paths $P_1$ and $P_2$ that have both the same delay interval $[d_1,D_1] = [d_2,D_2] = [0,D_{\max}]$.
\end{proof}

\subsection{Proof of Theorem~\ref{thm:regulators:preof-for-free}}
\begin{proof}[Proof of Theorem~\ref{thm:regulators:preof-for-free}]\label{proof:thm:regulators:preof-for-free}
    We denote by $[d^\dagger, D^\dagger]$ the lower and upper delay bounds of the non-lost data units through the system $\mathcal{S}^\dagger$ between the output of $a$ and the output of the \ac{POF} (Figure~\ref{fig:regulators:preof-for-free}).

    The output of a vertex corresponds to the output of the packetizer, thus the flow aggregate $\mathcal{F}$ is packetized at the output of vertex $a$.
    We can hence define the packet sequence $(A,L,F)$ for the aggregate $\mathcal{F}$ as in \cite[\S II.A]{leboudecTheoryTrafficRegulators2018}: 
    
    \begin{itemize}
        \item  $A$ is the sequence of the arrival times at the observation point $a^*$ for the data units that belong to the flow aggregate $\mathcal{F}$. 
        $A$ is a wide-sense increasing sequence.
        \emph{I.e.,} $A_n$ is the arrival time at $a^*$ of the $n$-th data unit of the aggregate $\mathcal{F}$.
        \item $L$ is the sequence of packet length for the above data units.
        \emph{I.e.,} $L_n$ is the length of the packet that transports the $n$-th data unit that arrives at $a^*$ and belongs to $\mathcal{F}$.
        \item $F$ is the sequence of flow identifiers for the above data units.
        \emph{I.e.,} $F_n=f$ means that the $n$-th data unit of the aggregate $\mathcal{F}$ at $a^*$ belongs to flow $f$.
    \end{itemize}

    We also define the $\Pi^f$ regulator for each flow $f$ of $\mathcal{F}$ that corresponds to the shaping curve $\sigma_{f,n}$ of the \ac{IR} $\texttt{REG}_n(\mathcal{F},a)$ \cite[IV.A]{leboudecTheoryTrafficRegulators2018}.
    By configuration of $\texttt{REG}_n(\mathcal{F},a)$, each flow $f$ of the aggregate is $\sigma_{f,n}$-constrained thus $\Pi^f$-regular at $a^*$, the input of the systems $\mathcal{S}$, $\mathcal{S}^{\dagger}$ and $S'$.
    
    $-$ 
    If $\mathcal{S}$ is \textbf{lossless} for $\mathcal{F}$, we apply \cite[Theorem~4]{mohammadpourPacketReorderingTimeSensitive2020} and obtain $d^\dagger = d$ and $D^\dagger = D$. 
    
    Then, by definition of the \ac{POF} and considering its configuration $\texttt{POF}_n(\{f\},a)$, system $\mathcal{S}^\dagger$ is \ac{FIFO} and lossless for the aggregate $\mathcal{F}$ processed by the regulator.
    Therefore, applying~\cite[Theorem~5]{leboudecTheoryTrafficRegulators2018} gives $d' = d^\dagger$ and $D' = D^\dagger$.

    $-$ 
    If $\mathcal{S}$ is \textbf{not lossless} for $\mathcal{F}$, then the application of \cite[Theorem~4]{mohammadpourPacketReorderingTimeSensitive2020} gives $d^\dagger = d$ and $D^\dagger = D+T$. 

    Then, by definition of the \ac{POF} and considering its configuration $\texttt{POF}_n(\{f\},a)$, system $\mathcal{S}^\dagger$ is \ac{FIFO} but not lossless for the aggregate $\mathcal{F}$ processed by the regulator.

    Like in the proof of Proposition~\ref{prop:appendix:ac-after-lossy}, we decompose the packet sequence $(A,L,F)$ at the input of $\mathcal{S}^\dagger$ into the sub-sequences $(A_1,L_1,F_1)$ and $(A_2,L_2,F_2)$ that correspond respectively to the data units that are not lost inside  $\mathcal{S}^\dagger$ and to the data units that are lost inside  $\mathcal{S}^\dagger$.

    We consider the cumulative functions $R_{f,1}$ and $R_{f,2}$ of each flow $f$ that correspond to the sequence $(A_1,L_1,F_1)$ and $(A_2,L_2,F_2)$, respectively.
    Then, $R_f\triangleq R_{f,1}+R_{f,2}$, the overall cumulative function, corresponds to the sequence $(A,L,F)$ for flow $f$ thus $R_{f}$ is $\sigma_{f,n}$-constrained.
    Re-using an argument from the proof of Proposition~\ref{prop:appendix:ac-after-lossy}, the cumulative sub-function $R_{f,1}$ remains $\sigma_{f,n}$-constrained.
    Thus flow $f$ in sub-sequence $(A_1,L_1,F_1)$ remains $\Pi^f$-regular.

    Therefore, we apply \cite[Theorem~5]{leboudecTheoryTrafficRegulators2018} on the sub-sequence $(A_1,L_1,F_1)$ and we obtain that for the corresponding \ac{IR} output sequence $(D_1,L_1,F_1)$, the delay through $\mathcal{S'}$ verifies  $d' = d^\dagger$ and $D' = D^\dagger$.
    The delay of the non-lost data units through $\mathcal{S'}$ is hence within $[d',D'] = [d,D+T]$.
\end{proof}

% you can choose not to have a title for an appendix
% if you want by leaving the argument blank

% \begin{proof}
%     We simply construct $\mathcal{S}$ as a system containing a \acf{PRF}, two alternative paths $P_1$, $P_2$ that have both the same delay interval $[d_1,D_1] = [d_2,D_2] = [0,D_{\max}]$ and a set of \acp{PEF}, as in Figure~\ref{fig:regulators:ir-overview}. We then apply Theorem~\ref{thm:regulators:ir-instable}.
% \end{proof}

	% Can use something like this to put references on a page
	% by themselves when using endfloat and the captionsoff option.
	\ifCLASSOPTIONcaptionsoff
	\newpage
	\fi

	% trigger a \newpage just before the given reference
	% number - used to balance the columns on the last page
	% adjust value as needed - may need to be readjusted if
	% the document is modified later
	%\IEEEtriggeratref{8}
	% The "triggered" command can be changed if desired:
	%\IEEEtriggercmd{\enlargethispage{-5in}}
	
	% references section
	
	% can use a bibliography generated by BibTeX as a .bbl file
	% BibTeX documentation can be easily obtained at:
	% http://mirror.ctan.org/biblio/bibtex/contrib/doc/
	% The IEEEtran BibTeX style support page is at:
	% http://www.michaelshell.org/tex/ieeetran/bibtex/
	\bibliographystyle{ieeetr}
	% argument is your BibTeX string definitions and bibliography database(s)
	%\bibliography{IEEEabrv,../bib/paper}
	%
	% <OR> manually copy in the resultant .bbl file
	% set second argument of \begin to the number of references
	% (used to reserve space for the reference number labels box)

	%HERE_COMMIT_TAG
	% that's all folks
\end{document}